\theoremstyle{plain}
\newtheorem{theorem}{Theorem}[section]
\newtheorem{lemma}[theorem]{Lemma}
\newtheorem{proposition}[theorem]{Proposition}
\newtheorem{corollary}[theorem]{Corollary}
\theoremstyle{definition}
\theoremstyle{remark}
\newtheorem{claim}[theorem]{Claim}
\newtheorem{remark}[theorem]{Remark}
\numberwithin{equation}{section}
\newcommand{\indicator}[1]{\mathds{1}_{#1}}
\newcommand{\set}[1]{\left\{#1\right\}}
\newcommand{\ceil}[1]{\left\lceil#1\right\rceil}
\newcommand{\floor}[1]{\left\lfloor#1\right\rfloor}
\renewcommand{\P}{\mathbb{P}}
\newcommand{\R}{\mathbb{R}}
\newcommand{\V}{\mathbb{V}}
\newcommand{\Z}{\mathbb{Z}}
\newcommand{\cF}{\mathcal{F}}
\newcommand{\cG}{\mathcal{G}}
\newcommand{\cH}{\mathcal{H}}
\newcommand{\cL}{\mathcal{L}}
\newcommand{\cP}{\mathcal{P}}
\newcommand{\fa}{\mathfrak{a}}
\newcommand{\fA}{\mathfrak{A}}
\newcommand{\fb}{\mathfrak{b}}
\newcommand{\fc}{\mathfrak{c}}
\newcommand{\oned}{\mathrm{1D}}
\newcommand{\hor}{\mathrm{hor}}
\newcommand{\per}{\mathrm{per}}
\newcommand{\ver}{\mathrm{ver}}
\newcommand{\len}[1]{\mathrm{len}(#1)}
\newcommand{\comp}{\mathrm{comp}}
\newcommand{\Cov}{\mathrm{Cov}}
\newcommand{\eigenval}{x}
\newcommand{\chempot}{\lambda}
\newcommand\rect[2]{\mathrm{R}_{#1,#2}}
\newcommand\peierls[1]{\mathfrak{p}_{#1}}
\newcommand\es[1]{\overline{#1}}
\newcommand\xre[2]{X_{#1}#2}
\newcommand{\Area}[1]{\mathrm{Area}(#1)}
\newcommand{\Width}[1]{\mathrm{Width}(#1)}
\newcommand{\Height}[1]{\mathrm{Height}(#1)}
\newcommand{\Perimeter}[1]{\mathrm{Perimeter}(#1)}
\newcommand{\Phase}[1]{\mathrm{Phase}(#1)}
\newcommand{\interior}[1]{\mathrm{int}(#1)}
\let\emptyset\varnothing
\title{Extended regime of nematic order in an interacting monomer-dimer model of Heilmann and Lieb}
\author{Qidong He}
\date{}
\begin{document}

\maketitle

\begin{abstract}
    We revisit a two-dimensional model of liquid crystals introduced by Heilmann and Lieb (1979), which consists of a system of dimers on the square lattice at chemical potential $\chempot$, interacting via a hard-core repulsion and an attractive interaction of strength $-a<0$ between adjacent, colinear dimers.
    The model is conjectured to exhibit nematic order at low temperatures, in the sense of orientational symmetry breaking coupled with the absence of translational order, provided that $\chempot+a>0$.
    In this paper, we prove the conjecture under the additional condition that $3a>\chempot$, which corresponds physically to the regime where vacancies, as opposed to misaligned dimers, are the dominant mechanism for decorrelation, significantly extending the parameter regime under which the conjecture is known to hold.
    Our proof adapts the strategy of Hadas and Peled (2025) for proving the existence of a columnar phase in the hard-square model, combining a mesoscopic characterization of orientational order with the disagreement percolation method of van den Berg (1993) to prove the absence of translational order.
    To deal with the non-nearest neighbor interactions in the model, we also introduce an extension of the chessboard estimate applicable to finite products of periodic Gibbs measures.
\end{abstract}

% \tableofcontents

\section{Introduction}

Liquid crystal is a state of matter characterized microscopically by the presence of orientational order with partial or no positional order, resulting in properties intermediate between those of a liquid and a solid crystal~\cite{de1993physics}.
Mathematical models of liquid crystals generally fall into one of two complementary categories: macroscopic theories based on continuum mechanics, and molecular theories rooted in statistical mechanics~\cite{wang2021modelling}.
While macroscopic theories~\cite{oseen1933theory,frank1958liquid} often treat the origin of liquid crystalline phases phenomenologically, molecular theories aim to derive these phases directly from microscopic interactions.
Central to the latter approach are various toy models of hard-core particles which, following Onsager's seminal work on three-dimensional hard rods~\cite{onsager1949effects}, demonstrate that entropic effects alone can drive the transition from a disordered, isotropic phase to a liquid crystalline phase that favors global geometric alignment.

Recent simulations of lattice systems of hard-core particles have revealed a rich landscape of liquid crystalline phases dictated strictly by particle geometry.
In two dimensions, systems of $1\times k$ hard rods on the square lattice with discrete (horizontal or vertical) orientations are found~\cite{kundu2013nematic} to exhibit a \emph{nematic} phase at intermediate densities for $k\ge 7$ and proven so~\cite{disertori2013nematic} in the limit $k\gg1$.
Systems of $2\times 2$ hard squares on the square lattice have been shown~\cite{fernandes2007monte,nath2014multiple} and very recently proven~\cite{hadas2025columnar} to exhibit a \emph{columnar} phase at high densities.
In three dimensions, systems of $2\times 2\times 2$ hard cubes on the cubic lattice have been found via Monte Carlo simulations~\cite{vigneshwar2019phase} to exhibit four distinct phases, including a \emph{layered} phase at intermediate densities and a columnar phase at high densities.
Similar findings for other hard-core shapes and on different lattices have also been reported~\cite{thewes2020phase,mazel2025high,mandal2023phases,vigneshwar2017different}; see~\cite[Section 10]{hadas2025columnar} for a recent survey.
Despite the evident phenomenological richness, rigorous results in this area remain sparse compared to standard lattice spin models, largely due to difficulties in implementing rigorous perturbation methods such as Pirogov--Sinai theory~\cite{zahradnik1984alternate,cannon2024pirogov} for these systems, of which~\cite{disertori2013nematic} is a singular exception.

In this paper, we study a two-dimensional lattice model for liquid crystals introduced by Heilmann and Lieb in 1979~\cite[Model I]{heilmann1979lattice}.
In this model, particles are represented by dimers (edges) on the square lattice $\Z^2$, and the interactions between particles consist of a hard-core repulsion and an attractive interaction between adjacent, colinear dimers.
With $\chempot\in\R$ denoting the dimer chemical potential and $-a<0$ denoting the strength of the attractive interaction, Heilmann and Lieb proved, using reflection positivity arguments, that the model exhibits orientational symmetry breaking, in the sense of the existence of two Gibbs measures that respectively favor vertical and horizontal dimers, at low temperatures provided that $\chempot+a>0$, 
However, in view of the two defining qualities of a liquid crystal, they stopped short of proving the absence of translational symmetry breaking within each phase.
Regarding this omission, Jauslin and Lieb~\cite{jauslin2018nematic} wrote decades later that 
\begin{quote}
    ``The main difficulty with extending [Heilmann and Lieb's] method to prove the lack of translational order is that the correlation length of the system is very large\dots and the lack of order is only visible on that scale, and seems difficult to see using only reflection positivity.''
\end{quote}

Over the years, there have been two main mathematical works aimed specifically at completing the proof of liquid crystalline order in this model.
The first attempt is due to Alberici in 2016~\cite{alberici2016cluster}, who used the cluster expansion to prove the absence of translational symmetry breaking under the assumption of unequal horizontal and vertical dimer activities.
Shortly afterwards, in 2018, Jauslin and Lieb~\cite{jauslin2018nematic}, using Pirogov--Sinai theory with a sophisticated contour representation of the model, resolved the problem in the original setting of equal horizontal and vertical dimer activities.
Notwithstanding their success, the fly in the ointment is the \emph{extreme} parameter regime in which they worked: whereas it is natural to compare $\chempot$ and $a$ \emph{directly} from a physical point of view as in the original work of Heilmann and Lieb, their computation required an exceptionally strong attractive interaction in the sense that $\beta a\gg e^{\beta\chempot}\gg 1$. 
It has remained an open problem to bridge the vast gap between the regime considered by Heilmann and Lieb and that by Jauslin and Lieb.

This paper takes a step towards closing this gap.
Specifically, we prove using reflection positivity methods that the model exhibits nematic order at low temperatures in the parameter regime $\chempot+a>0,3a>\chempot$ (recall that $a>0$ is built into the model).
As in~\cite{heilmann1979lattice}, the condition $\chempot+a>0$ ensures that vacancies are suppressed at low temperatures, leading to densely packed dimer configurations from which any orientational order is easily visible from the orientations of the dimers themselves.
On the other hand, the constraint $3a>\chempot$ arises as a technical necessity in the computation (see the proofs of Proposition~\ref{prop:partition function no long sticks} and Proposition~\ref{prop:sealed rectangles strongly percolate}) but which has a clear physical interpretation that we elaborate on in Section~\ref{sec:origin of extra condition}.
Whether low-temperature nematic order persists over the entire regime $\chempot+a>0$, as Heilmann and Lieb conjectured, remains an open question.

In order to \emph{see the lack of translational order using reflection positivity}, as Jauslin and Lieb put it, we use a novel, \emph{mesoscopic} characterization of orientational order adapted from a recent work of Hadas and Peled~\cite{hadas2025columnar} on the existence of a \emph{columnar} phase in the hard-square model.
The characterization is based on the existence of \emph{long sticks} which, to avoid going into details, correspond to uninterrupted sequences of colinear dimers spanning distances on the same order of magnitude as the correlation length of the (one-dimensional) system.
Hence, whereas the proof of orientational symmetry breaking by reflection positivity in~\cite{heilmann1979lattice} was based on enumerating the \emph{local} arrangements of dimers on a $2\times 2$ square, ours is qualitatively more sophisticated, requiring control over the collective behavior of dimers on a mesoscopic scale.
This turns out to be well worth the effort, as it opens the door to proving the absence of translational symmetry breaking using a version of the \emph{disagreement percolation} method due to van den Berg~\cite{van1993uniqueness}.
The latter is based on bounding, intuitively, the extent of disagreement between independent samples of two Gibbs measures exhibiting the same kind of orientational order.
In this context, the mesoscopic characterization of orientational order comes in handy by allowing to explicitly construct and prove the prevalence of \emph{sealing patterns} that greatly constrain the propagation of disagreement components across mesoscopic distances.
From there, by showing that disagreement components do not percolate, we deduce the absence of translational symmetry breaking within the horizontal and vertical phases, completing the proof of nematic order.

Without further ado, let us describe the model and state the main results more precisely.

\subsection{Model and main results}
\label{sec:the model}

Let $\V$ be the edge set of the square lattice $\Z^2$.
We identify each element of $\V$ with its midpoint in $\R^2$, so that $\V\equiv[(1/2+\Z)\times\Z]\cup[\Z\times(1/2+\Z)]$.
A \textbf{dimer configuration} is a function $\sigma:\V\to\set{0,1}$ such that $\sigma(e)\sigma(e')=0$ whenever $e,e'\in\V$ are distinct and incident to the same vertex of $\Z^2$.
We identify each dimer configuration $\sigma$ with the set $\sigma^{-1}(\set{1})$ of non-adjacent edges of $\Z^2$.
Let $\Omega$ be the set of all dimer configurations on $\V$.

Let $\beta>0$ be the inverse temperature, $\chempot\in \R$ be the chemical potential, and $-a<0$ be the interaction strength.
The formal Hamiltonian of the monomer-dimer model, as Heilmann and Lieb introduced it in~\cite{heilmann1979lattice}, is
\begin{equation}
    \label{eqn:formal Hamiltonian}
    H(\sigma)= -\chempot\abs{\sigma}-a\sum_{\set{e,e'}\subset\sigma}\indicator{e\sim e'},
\end{equation}
where $\abs{\cdot}$ denotes the cardinality, and $e\sim e'$ means that $e,e'\in\V$ are colinear and separated by exactly one other edge of $\Z^2$.
The formal Hamiltonian~\eqref{eqn:formal Hamiltonian} is adapted to finite volumes in the usual way and gives rise to the usual finite-volume Gibbs measures as well as infinite-volume ones via the standard Dobrushin--Lanford--Ruelle formalism.
A detailed review of the definitions and constructions referenced above is given in Section~\ref{sec:the Hamiltonian}.

Our main result is as follows.

\begin{theorem}[Nematic order]
    \label{thm:main}
    Fix $\chempot+a>0,3a>\chempot$.
    There exist constants $\beta_0,C,c>0$ such that, for all $\beta\ge\beta_0$, the monomer-dimer model admits a unique Gibbs measure $\mu_\ver$ such that
    \begin{enumerate}
        \item \label{itm:main - invariance and extremality} $\mu_\ver$ is $\Z^2$-invariant and extremal.
        \item \label{itm:main - dimer probabilities} For all $e\in\V$,
        \begin{equation}
            \mu_\ver(\sigma(e)=1)=\begin{cases}
                \frac{1}{2}-\order{e^{-\frac{1}{2}\beta a}} & \text{if }e\text{ is vertical} \\ 
                \order{e^{-\beta a}} & \text{otherwise}
            \end{cases}.
        \end{equation}
        \item \label{itm:main - decay of correlations} Let $A,B\subset\V$ and $f,g:\Omega\to[-1,1]$ be such that $f$ depends only on the restriction of $\sigma$ to $A$ and $g$ only on the restriction of $\sigma$ to $B$.
        Then,
        \begin{equation}
            \Cov_{\mu_\ver}(f,g)\le \sum_{u\in A}\sup_{v\in B}\alpha_1(u,v),
        \end{equation}
        where, defining 
        $\ell_0:=e^{\frac{1}{2}\beta(\chempot+3a)}$,
        \begin{equation}
            \alpha_1((x_u,y_u),(x_v,y_v)):=Ce^{-c\abs{x_v-x_u}-c\ell_0^{-1}\abs{y_v-y_u}}.
        \end{equation}
    \end{enumerate}
    Let $\tau:\R^2\to\R^2$ be the reflection $(x,y)\mapsto(y,x)$.
    The push-forward of $\mu_\ver$ under $\tau$, denoted by $\mu_\hor$, is another Gibbs measure of the monomer-dimer model that satisfies analogous properties to the above.
    Furthermore, every periodic Gibbs measure of the model is a convex combination of $\mu_\ver$ and $\mu_\hor$.
\end{theorem}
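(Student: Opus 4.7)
The plan is to combine reflection positivity in its chessboard-estimate form with the disagreement percolation approach of van den Berg, adapting the strategy of Hadas--Peled for the hard-square model. I split the proof into three stages: (i) construct candidate periodic Gibbs measures $\mu_\ver$ and $\mu_\hor$ exhibiting orientational symmetry breaking via a mesoscopic order parameter built from \emph{long sticks}; (ii) prove uniqueness of $\mu_\ver$ (up to the $\tau$-symmetry) via disagreement percolation, in the process extracting the dimer probabilities and the anisotropic decay of correlations; and (iii) use uniqueness and extremality to classify all periodic Gibbs measures.

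For stage (i), I work on a large periodic torus and use reflection positivity to obtain the chessboard estimate for every periodic Gibbs measure $\mu$. The order parameter is the event $L_v(e)$ that the edge $e$ lies in an uninterrupted colinear run of vertical dimers of length on the mesoscopic scale $\ell_0=e^{\frac{1}{2}\beta(\chempot+3a)}$ (and analogously $L_h(e)$). A chessboard/partition-function estimate, using $\chempot+a>0$ to suppress vacancies and $a>\chempot/3$ to suppress coexistence of long horizontal and vertical sticks near $e$, should give $\mu(L_v(e))+\mu(L_h(e))\ge 1-o_\beta(1)$. Taking subsequential limits with horizontally/vertically biased boundary conditions then yields two distinct periodic Gibbs measures $\mu_\ver,\mu_\hor$, exchanged by $\tau$, with $\mu_\ver(L_v(e))\ge 1-o_\beta(1)$.

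For stage (ii), I couple $\mu_\ver$ with an arbitrary vertical-phase Gibbs measure $\nu$ (one satisfying $\nu(L_v(e))\ge 1-o_\beta(1)$) using a single-site disagreement coupling; the disagreement lives on a random subset of $\V$ whose connected components I must show are almost surely finite. The key geometric input is the construction of \emph{sealing patterns}: arrangements of long vertical sticks that cannot be crossed by a disagreement cluster, and which are local enough to be estimable by an extension of the chessboard estimate for the \emph{product} measure $\mu_\ver\otimes\nu$ (the tool highlighted in the abstract). Once sealing rectangles are shown to percolate strongly in the product, a standard Peierls-type argument bounds disagreement clusters and forces $\nu=\mu_\ver$. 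Applied to covariances $\Cov_{\mu_\ver}(f,g)$ between functions supported near $u$ and $v$, the same disagreement bound controls correlations by the probability of a disagreement path from $u$ to $v$; the anisotropic rates in item~\ref{itm:main - decay of correlations} arise because vertical sticks of length $\ell_0$ allow information to travel vertically at unit cost over distance $\ell_0$, but horizontal propagation must cross a sealing pattern after each step. The dimer probabilities in item~\ref{itm:main - dimer probabilities} then follow by a one-dimensional calculation inside a typical long vertical stick, where the defects (monomers and horizontal dimers) are suppressed by factors $e^{-\Theta(\beta a)}$.

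For stage (iii), extremality of $\mu_\ver$ (item~\ref{itm:main - invariance and extremality}) follows from item~\ref{itm:main - decay of correlations} by the standard equivalence between decay of covariances and triviality of the tail $\sigma$-algebra for translation-invariant measures. Given any periodic Gibbs measure $\mu$, its ergodic decomposition produces periodic extremal components, each of which inherits a definite value of the mesoscopic orientational order parameter; the dichotomy from stage (i) rules out the isotropic option, and the uniqueness from stage (ii) (applied to $\mu_\hor$ by the $\tau$-symmetry) identifies each component with $\mu_\ver$ or $\mu_\hor$.

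The main obstacle I expect is stage (ii), specifically proving that sealed rectangles percolate under the product chessboard estimate. The non-nearest-neighbor interaction $e\sim e'$ forces an enlarged reflection-positive unit cell and introduces boundary terms that must be absorbed; simultaneously, the sealing event must be \emph{local} enough for chessboard disorder estimates but \emph{global} enough to genuinely block disagreement across distance $\ell_0$. It is precisely this balancing act that forces the mesoscopic scale $\ell_0=e^{\frac{1}{2}\beta(\chempot+3a)}$ and the hypothesis $a>\chempot/3$, and where I would expect the bulk of the technical work (culminating in Propositions \ref{prop:partition function no long sticks} and \ref{prop:sealed rectangles strongly percolate}) to lie.
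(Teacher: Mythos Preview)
Your proposal is correct and follows essentially the same strategy as the paper: mesoscopic orientational order via chessboard estimates on sticks and properly divided rectangles (Section~\ref{sec:mesoscopic orientational symmetry breaking}), then disagreement percolation with sealed rectangles and the product-measure chessboard estimate (Section~\ref{sec:disagreement percolation}), then classification of periodic measures via ergodic decomposition. The only notable differences are that the paper inserts an intermediate refinement from properly divided \emph{squares} to thin \emph{rectangles} (Section~\ref{sec:two phases}, via a descending induction in the aspect ratio) before running disagreement percolation, and that extremality is read off directly from Theorem~\ref{thm:disagreement percolation} rather than deduced a posteriori from the decay of correlations.
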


\subsection{Proof overview and comparison with Hadas--Peled}

Our proof of Theorem~\ref{thm:main} follows the strategy of Hadas and Peled in~\cite{hadas2025columnar}.
It turns out that the monomer-dimer model and the hard-square model are similar enough that many elements of their argument find direct analogues in our setting.
Indeed, the monomer-dimer model coincides with the so-called \emph{oriented monomer model}, which they studied as a toy model before tackling the hard-square model itself, \emph{when a specific relation between the
dimer activity and dimer interaction energy is imposed}~\cite[Section 10.3]{hadas2025columnar}.
At the same time, the two models are different enough that adapting their methods to our setting is not trivial but requires many model-specific modifications and, in some cases, technical innovations (see Section~\ref{sec:infinite-volume chessboard estimate for product measures} for instance), as we explain below.
The reader is invited to compare our overview with that given in~\cite[Section 1.4]{hadas2025columnar}.

The first step is to prove a qualitatively stronger form of orientational symmetry breaking in the model at low temperatures.
To this end, we cover the square lattice with a grid of \emph{overlapping} squares of large side length $\ell_0^{1/4}\ll\ell\ll\ell_0^{1/2}$, where $\ell_0\gg1$ is the mesoscopic length scale defined in Theorem~\ref{thm:main}.
Our goal is to show that, with high probability, each square contains either (a) two adjacent columns fully packed with vertical dimers or (b) two adjacent rows fully packed with horizontal dimers that cut through its \emph{bulk}.
Following Hadas and Peled, we refer to such squares as being \textbf{vertically} or \textbf{horizontally properly divided}.
Importantly, these characterizations are mutually exclusive: a square cannot be both vertically and horizontally properly divided, as one easily checks.
As a side note, this mutual exclusivity depends critically on the topology of two dimensions: the same is categorically false in higher dimensions.
Hence, by jointly choosing the definition of the bulk and the overlap between adjacent squares in the grid, we ensure that connected components of horizontally and vertically properly divided squares are separated by \emph{interfaces} of squares that are not properly divided in either direction.
The probability of such interfaces is controlled by a chessboard-Peierls argument, using combinatorial estimates for dimer configurations not properly dividing \emph{any} large square \emph{anywhere}, thanks to reflection positivity.
In turn, this implies that properly divided squares percolate, which yields two Gibbs measures, respectively characterized by the percolation of horizontally and vertically properly divided squares.

Concerning the characterization of orientational order, the main difference between ours and that of Hadas and Peled originates from a simple geometric fact: a dimer has an intrinsic orientation (i.e., horizontal or vertical), but a square does not.
This makes characterizing orientational order in the hard-square model as tricky as it is ambiguous: for instance, should one say that the squares in the sublattice close-packing $2\Z\times 2\Z$ are horizontally or vertically arranged?
The way out, as proposed by Hadas and Peled, is to treat orientational order as a \emph{collective property} and inspect the \emph{offset} between adjacent squares: if two adjacent squares have a horizontal offset by one unit, then the separation between them can be \emph{unambiguously} characterized as horizontal, and vice versa.
Hence, their notion of properly divided squares is based on whether such offsets propagate through the bulk of a square, which they show occurs with high probability at high activities.
Returning to the monomer-dimer model, the inherent orientation of dimers allows us to bypass the consideration of offsets altogether and define properly divided squares directly in terms of the orientations of the dimers contained therein.
This requires significant modifications to the combinatorics of everywhere non-properly dividing dimer configurations, but the high-level structure of the argument remains the same.

The second step is to refine the above characterization of orientational order with respect to the shape of the regions being properly divided, from squares to rectangles.
Specifically, we show that under each ergodic Gibbs measure (with respect to a sparse sublattice of $\Z^2$), either (a) most long, horizontal rectangles are horizontally properly divided or (b) most long, vertical rectangles are vertically properly divided.
The proof of this fact relies on a decreasing induction argument, using the characterization proven above as the base case, and closely follows~\cite[Section 7]{hadas2025columnar}.
The main difference here is of a simplifying nature, due to the fact that exactly two pure phases (i.e., horizontal and vertical) are expected in the monomer-dimer model, whereas theirs has four.

In the third step, we use a version of the disagreement percolation method due to van den Berg~\cite{van1993uniqueness} to deduce the uniqueness of Gibbs measures exhibiting each type of orientational order, as was refined in the previous step, and derive decay of correlations estimates for each phase.
The overall strategy is very similar to that in~\cite[Section 8]{hadas2025columnar}, namely to use the refined characterization to prove, in a strong sense, the prevalence of mesoscopic \emph{patterns}, termed \textbf{sealed rectangles} following~\cite{hadas2025columnar}, that significantly constrain the propagation of disagreement components across mesoscopic distances.
However, the presence of non-nearest neighbor interactions in the monomer-dimer model, which are not found in the hard-square model, requires a more refined definition of sealed rectangles based on the \emph{joint} properties of two independently sampled dimer configurations.
In order to show that the joint property prevails, and as the main technical innovation of this paper that may be of broader use, we extend the \textbf{infinite-volume chessboard estimate} due to Hadas and Peled~\cite[Proposition 3.8]{hadas2025columnar} to apply to \emph{finite products} of periodic Gibbs measures.
This extension, combined with the formalism of strongly percolating sets in~\cite[Section 6]{hadas2025columnar}, allows to verify the uniqueness criterion of disagreement percolation and yields the desired decay of correlations estimates in terms of the probability of various disagreement paths.

In the fourth and final step, we characterize each phase \emph{microscopically} by estimating the density of dimers of each orientation.
Following a similar strategy to that in~\cite[Section 9]{hadas2025columnar}, we show that the vertical phase, in the sense of the prevalence of vertically properly divided mesoscopic rectangles, is dominated by randomly distributed vertical dimers on a microscopic level.
However, to simplify computation, we only \emph{bound} the next-to-leading-order corrections to the densities, as opposed to deriving exact asymptotics as in~\cite[Theorem 1.1(2)]{hadas2025columnar}.

Due to the structural similarities between our work and that of Hadas and Peled~\cite{hadas2025columnar}, we have chosen to adopt, to the extent possible, the notation, terminology, and formulation of statements therein to facilitate comparison between the two works.
For intermediate results that apply to both models with no or minimal modification, we will sometimes refer the reader to the corresponding parts in~\cite{hadas2025columnar} without reproducing the proofs here.

\subsection{The origin of the condition $3a>\chempot$}
\label{sec:origin of extra condition}

The conditions we impose on the parameters $\chempot$ and $a$ in this paper all have clear physical meanings.
The condition $a>0$, intrinsic to the model, ensures that alignment between adjacent, colinear dimers is energetically favored.
Meanwhile, as an equivalent rewriting of the Hamiltonian makes clear (see~\eqref{eqn:shifted Hamiltonian monomer-dimer}), the condition $\chempot+a>0$ ensures that vacancies are suppressed at low temperatures, leading to dense configurations where orientational order is readily visible.
In contrast, the condition $3a>\chempot$ may seem somewhat mysterious at first glance.

To understand its origin, it is instructive to consider the penalty factors associated with the two \emph{simplest} ``defects'' in an otherwise fully packed, perfectly aligned dimer configuration: an isolated vacancy and an isolated misaligned dimer.
In Proposition~\ref{prop:defect chasing}, we show that, at low temperatures, the former carries a penalty factor of $e^{-\frac{1}{2}\beta(\chempot+3a)}$, due not only to the creation of a vacancy but also the severance of two attractive links between dimers, whereas the latter carries a penalty factor of $e^{-3\beta a}$, as six attractive links are broken.
Comparing these penalty factors reveals a crossover at the half-line $3a=\chempot$.
In fact, the regime $3a>\chempot$ considered in this paper corresponds precisely to the case where isolated vacancies are energetically ``cheaper'' to create than misaligned dimers.

The analysis in this paper relies heavily on this energetic disparity.
From the study of one-dimensional monomer-dimer systems in Section~\ref{sec:one-dimensional systems}, where only dimers in one direction are allowed, to the proof of Proposition~\ref{prop:partition function mostly no long sticks}, which uses a surgery argument to sever \emph{long} sticks---sticks of length comparable to or greater than the correlation length---one by one by inserting vacancies, we have implicitly and repeatedly treated vacancies as the system's preferred way to decorrelate over long distances.
This, of course, is only valid in the regime $3a>\chempot$.
When $3a<\lambda$, misaligned dimers become the energetically cheaper defect, and our analysis breaks down by underestimating the partition function of the ``ground state'' configurations and overestimating the correlation length as well as the cost of long sticks.

We conclude that the condition $3a>\chempot$ is likely technical rather than fundamental to the existence of a nematic phase. 
Extending our result to the regime $3a<\chempot$ would likely require (a) a different characterization of ground states that yields a correlation length on the order of $e^{3\beta a}$ and (b) a modified definition of sticks that is more robust to the insertion of misaligned dimers.
These tasks do not seem trivial, since dimer misalignment is an intrinsically two-dimensional phenomenon and therefore does not lend itself to one-dimensional arguments such as the transfer matrix method.

\paragraph{Organization of the paper}

In Section~\ref{sec:preliminaries}, we introduce the necessary notation and definitions for the monomer-dimer model, review the reflection positivity property, and import the formalism of strongly percolating sets from~\cite{hadas2025columnar}.
In particular, the extension of the chessboard estimate to finite products of periodic Gibbs measures is done in Section~\ref{sec:infinite-volume chessboard estimate for product measures}.
In Section~\ref{sec:mesoscopic orientational symmetry breaking}, we formally introduce the notion of properly divided squares and prove that it leads to a mesoscopic kind of orientational symmetry breaking at low temperatures.
Section~\ref{sec:absence of translational symmetry breaking} is devoted to proving the absence of translational symmetry breaking and consists of two subsections.
In Section~\ref{sec:two phases}, we refine the characterization of orientational order in terms of properly divided rectangles, and in Section~\ref{sec:disagreement percolation}, we implement the disagreement percolation method to accomplish the main goal of the section.
Finally, in Section~\ref{sec:microscopic characterization of nematic order}, we characterize each phase microscopically by estimating the density of vertical and horizontal dimers, respectively.

\section{Preliminaries}
\label{sec:preliminaries}

\subsection{Basic definitions}
\label{sec:basic definitions}

\paragraph{Graphical constructions}

In this paper, we will consider three graphical constructions based on the square lattice $\Z^2$:
\begin{enumerate}
    \item The \textbf{dual square lattice}, denoted by $(\Z^2)^\ast$, has vertices corresponding to the faces of $\Z^2$ (with respect to its standard embedding in $\R^2$), and two vertices are connected by an edge if and only if their corresponding faces share an edge.
    We will identify $(\Z^2)^\ast$ with the shifted square lattice $(1/2,1/2)+\Z^2$.
    \item The \textbf{line graph} of $\Z^2$, denoted by $L(\Z^2)$, has vertices corresponding to the edges of $\Z^2$, and two vertices are connected by an edge if and only if their corresponding edges in $\Z^2$ share a vertex.
    We will identify the vertices of $L(\Z^2)$ with the midpoints of the edges of $\Z^2$; thus, $L(\Z^2)$ has vertex set $\V\equiv[(1/2+\Z)\times\Z]\cup[\Z\times(1/2+\Z)]$.
    \item An \emph{augmentation} of the line graph of $\Z^2$, denoted by $L^\ddag(\Z^2)$, is obtained from $L(\Z^2)$ by adding an edge between each pair of vertices of $L(\Z^2)$ corresponding to colinear edges of $\Z^2$ that are separated by exactly one other edge.
    We will continue to identify the vertices of $L^\ddag(\Z^2)$ with the midpoints of the edges of $\Z^2$ as before, and use the symbol $\ddag$ to denote the connectivity on $L^\ddag(\Z^2)$.
\end{enumerate}
Our motivation for introducing the \emph{augmented line graph} $L^\ddag(\Z^2)$ is that the monomer-dimer model is a Markov random field~\cite[(1.6)]{van1994disagreement} on $\V$ with respect to $\ddag$-connectivity---due to the presence of second-nearest neighbor interactions in the model (see~\eqref{eqn:broken-link potential}), the same is not true with respect to the usual connectivity on $L(\Z^2)$.

Furthermore, we will consider two distinct connectivities on $\Z^2$ itself.
The first, denoted by the symbol $\Box$, is the standard connectivity on $\Z^2$ induced by the $1$-norm on $\R^2$ that makes $\Z^2$ into the square lattice graph.
The second, denoted by the symbol $\boxtimes$, is induced by the sup-norm on $\R^2$ and is particularly useful for characterizing the connectivity of boundaries of $\Box$-connected subsets of $\Z^2$~\cite{timar2013boundary}.

\paragraph{Rectangles}

We use the word \textbf{rectangle} to refer to a closed, axis-parallel rectangle in $\R^2$ with corners on the dual square lattice: for $x,y\in 1/2+\Z$ and $K,L\in\Z_{>0}$, denote
\begin{equation}
    \rect{K\times L}{(x,y)}:=[x,x+K]\times[y,y+L]\subset\R^2.
\end{equation}
We also introduce the shorthand $\mathrm{R}_{K\times L}:=\rect{K\times L}{(-1/2,-1/2)}$.
We say that a rectangle is even if both its width and height are even.

\paragraph{Dimer configurations}

Recall that we have identified $\V=[(1/2+\Z)\times\Z]\cup[\Z\times(1/2+\Z)]$ with the edge set of $\Z^2$.
A \textbf{dimer} is a vertex $e\in\V$.
A \textbf{dimer configuration} on $\V$ is a function $\sigma:\V\rightarrow\set{0,1}$ with the property that no two vertices in $\sigma^{-1}(\set{1})$, as edges of $\Z^2$, are incident to the same vertex of $\Z^2$.
Intuitively, a dimer configuration is a set of non-adjacent edges (dimers) of $\Z^2$.
In line with this intuition, we identify the function $\sigma$ with the set $\sigma^{-1}(\set{1})$ of dimers present in the configuration.
Denote the set of all dimer configurations on $\V$ by $\Omega$.
Let $\Lambda\subset\R^2$ be a rectangle.
We introduce several restricted sets of dimer configurations corresponding to various boundary conditions on $R$:
\begin{enumerate}
    \item \emph{Periodic boundary conditions}.
    Define the set of $\Lambda$-periodic dimer configurations as
    \begin{equation}
        \Omega^\per_\Lambda:=\set{\sigma\in\Omega\mid\text{for all }e\in\V,\sigma(e)=\sigma(e+(\Width{\Lambda},0))=\sigma(e+(0,\Height{\Lambda}))}.
    \end{equation}
    \item \emph{Prescribed boundary conditions}.
    Let $\rho\in\Omega$.
    Define the set of dimer configurations in $\Lambda$ (more precisely, in $\V\cap\Lambda$) with $\rho$-boundary conditions as
    \begin{equation}
        \Omega^\rho_\Lambda:=\set{\sigma\in\Omega\mid\text{for all }e\in\V\setminus\Lambda,\sigma(e)=\rho(e)}.
    \end{equation}
    \item \emph{Vacant boundary conditions}.
    Define the set of dimer configurations in $\Lambda$ with vacant boundary conditions as
    \begin{equation}
        \Omega^0_\Lambda:=\set{\sigma\in\Omega\mid\text{for all }v\in\Z^2\setminus\Lambda\text{, }v\text{ is a vacancy in }\sigma}.
    \end{equation}
\end{enumerate}

\begin{remark}
    As a cautionary note, the way that the monomer-dimer model is defined on the edge set of $\Z^2$ may be a source of confusion for readers accustomed to lattice spin models where configurations are specified directly on the vertices.
    Our notation $\V$ for the edge set of $\Z^2$ is intended to mitigate this confusion.
\end{remark}

\subsection{The Hamiltonian and Gibbs measures}
\label{sec:the Hamiltonian}

Let $\sigma\in\Omega$.
Recall that we have identified $\sigma\equiv\sigma^{-1}(\set{1})$.
We say that two dimers $e,e'\in\sigma$ are \textbf{linked} if, as edges of $\Z^2$, they are colinear and separated by exactly one other edge called their \textbf{link}; in that case, we write $e\sim e'$.
Let $\beta>0$ be the inverse temperature, $\chempot\in\R$ be the dimer chemical potential, and $-a<0$ be the strength of attraction between linked dimers.
In the remainder of the paper, we will consider $\chempot$ and $a$ to be fixed and suppress them from the notation.
The formal Hamiltonian of the monomer-dimer model, as introduced by Heilmann and Lieb in~\cite[Model I]{heilmann1979lattice}, is
\begin{equation}
    H(\sigma):=-\chempot\abs{\sigma}-a\sum_{\set{e,e'}\subset\sigma}\indicator{e\sim e'},
\end{equation}
where $\abs{\sigma}$ is the number of dimers in the dimer configuration $\sigma$.

In this paper, we will study their model in the perturbative regime $\beta\gg1$.
To exploit the largeness of $\beta$, it is advantageous to shift the Hamiltonian by a constant multiple of the volume, as exemplified in the following formal rewriting, whose adaptation to finite volumes is straightforward (though it may lead to an inconsequential boundary term):
\begin{equation}
    \label{eqn:shifted Hamiltonian monomer-dimer}
    \frac{1}{2}(\chempot+a)\abs{\Z^2}+H(\sigma)
    =\frac{1}{2}(\chempot+a)(\abs{\Z^2}-2\abs{\sigma})+\frac{1}{2}a\sum_{e\in\sigma}\left(2-\sum_{e'\in\sigma}\indicator{e'\sim e}\right),
\end{equation}
where the terms on the RHS have the following intuitive interpretations:
\begin{enumerate}
    \item $\abs{\Z^2}-2\abs{\sigma}$ is the number of vacant vertices, or \textbf{vacancies} for short, which are vertices of $\Z^2$ not incident to any dimer in $\sigma$.
    \item For each $e\in\sigma$, $2-\sum_{e'\in\sigma}\indicator{e'\sim e}$ is the number of ends of $e$ not linked to another dimer in $\sigma$.
    Formally, we define a \textbf{broken link} as a vertex in $\V$ whose corresponding edge in $\Z^2$ is incident to exactly one dimer in $\sigma$ with which it is colinear.
    Hence, $\sum_{e\in\sigma}\left(2-\sum_{e'\in\sigma}\indicator{e'\sim e}\right)$ is the total number of broken links in $\sigma$.
\end{enumerate}

For clarity, we recast~\eqref{eqn:shifted Hamiltonian monomer-dimer} systematically using the standard notion of potentials (see~\cite[Section 6.3.2]{friedli2017statistical} or~\cite[(2.2)]{georgii2011gibbs}).
Let $B\subset\V$ be bounded (or $B\Subset\V$ for short).
\begin{enumerate}
    \item If $B$ has the form $\set{e_1,e_2,e_3,e_4}$ where all the $e_i$ share a common vertex of $\Z^2$, define the four-body, vacancy potential
    \begin{equation}
        \label{eqn:vacancy potential}
        \Phi_B(\sigma):=\frac{1}{2}(\chempot+a)\indicator{\sum_{i=1}^4\sigma(e_i)=0},
    \end{equation}
    which outputs $(\chempot+a)/2$ if and only if the shared vertex is a vacancy in $\sigma$.
    \item If $B$ has the form $\set{e_1,e_2,e_3}$ where the $e_i$ (ordered from bottom left to top right) are adjacent and colinear, define the three-body, broken-link potential
    \begin{equation}
        \label{eqn:broken-link potential}
        \Phi_B(\sigma):=\frac{1}{2}a\indicator{\sigma(e_1)+\sigma(e_3)=1},
    \end{equation}
    which outputs $a/2$ if and only if the middle edge $e_2$ is a broken link of the dimer $e_1$ or $e_3$ in $\sigma$.
    \item Otherwise, set $\Phi_B(\sigma):=0$.
\end{enumerate}
Thus, the shifted formal Hamiltonian can be expressed as
\begin{equation}
    \sum_{B\Subset\V}\Phi_B(\sigma),
\end{equation}
whose adaptation to finite volumes is standard (see~\cite[(6.24)]{friedli2017statistical} or~\cite[(2.3)]{georgii2011gibbs}): if $\Lambda$ is a rectangle and $\#$ denotes a boundary condition for $\Lambda$, then the finite-volume Hamiltonian for dimer configurations in $\Lambda$ with $\#$-boundary conditions is
\begin{equation}
    \label{eqn:finite-volume Hamiltonian}
    H^\#_{\Lambda}(\sigma):=\sum_{\substack{B\Subset\V\\ B\cap\Lambda\ne\emptyset}}\Phi_B(\sigma).
\end{equation}
As a special case, we note that if $\#$ denotes the periodic boundary conditions, then the sum over $B\Subset\V$ is instead restricted to a complete set of representatives of the equivalence classes of all $B\Subset\V$ under translations generated by $(\Width{\Lambda},0)$ and $(0,\Height{\Lambda})$.

\paragraph{Gibbs measures}

Let $\Lambda$ be a rectangle and $\#$ denote a boundary condition for $\Lambda$.
The weight of a dimer configuration $\sigma\in\Omega^\#_\Lambda$ is defined as
\begin{equation}
    w^\#_{\Lambda;\beta}(\sigma):=e^{-\beta H^\#_{\Lambda}(\sigma)}.
\end{equation}
Hence, penalty factors of $e^{-\frac{1}{2}\beta(\chempot+a)}$ and $e^{-\frac{1}{2}\beta a}$ are assigned to each vacancy and broken link, respectively.
The finite-volume Gibbs measure on $\Lambda$ with $\#$-boundary conditions is given by
\begin{equation}
    \mu^\#_{\Lambda;\beta}(\sigma):=\frac{w^\#_{\Lambda;\beta}(\sigma)}{Z^\#_{\Lambda;\beta}}\quad\text{for }\sigma\in\Omega^\#_\Lambda,
\end{equation}
where $Z^\#_{\Lambda;\beta}:=\sum_{\sigma\in\Omega^\#_\Lambda}w^\#_{\Lambda;\beta}(\sigma)$ is the partition function.
Furthermore, given $E\subset\Omega$, we define the $E$-constrained partition function as
\begin{equation}
    Z^\#_{\Lambda;\beta}(E):=\sum_{\sigma\in\Omega^\#_\Lambda\cap E}w^\#_{\Lambda;\beta}(\sigma).
\end{equation}
Lastly, we define infinite-volume Gibbs measures in the standard Dobrushin--Lanford--Ruelle formalism; see~\cite[Section 6.2.1]{friedli2017statistical} or~\cite[(2.9)]{georgii2011gibbs} for details.

\subsection{Comparison of boundary conditions}

In this subsection, we quote from~\cite{hadas2025columnar} two results comparing constrained partition functions and expectations under different boundary conditions.
Their proofs are based on standard corridor arguments and apply to the monomer-dimer model with little modification.

\begin{proposition}[{\cite[Proposition 2.1]{hadas2025columnar}}]
    \label{prop:comparison of boundary conditions - partition function}
    There exists a constant $C(\beta)\ge1$ such that the following holds.
    Let $\Lambda$ be a rectangle and $\rho\in\Omega$.
    Define a mapping $m^{\rho,\Lambda}:\Omega\to\Omega_\Lambda^\rho$ by taking any $\sigma\in\Omega$, setting $m^{\rho,\Lambda}(\sigma)\equiv\rho$ on $\V\setminus\interior{\Lambda}$, and deleting all dimers in $\V\cap\interior{\Lambda}$ that intersect another dimer.
    If $E\subset\Omega$, then, for any boundary condition $\#$ for $\Lambda$,
    \begin{equation}
        Z^\#_{\Lambda;\beta}(E)\le C(\beta)^{\Perimeter{\Lambda}}Z_{\Lambda;\beta}^\rho(m^{\rho,\Lambda}(E)).
    \end{equation}
\end{proposition}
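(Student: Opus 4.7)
The plan is to prove the inequality via three steps: verify that $m^{\rho,\Lambda}(\sigma) \in \Omega^\rho_\Lambda$; bound the weight ratio $w^\#_{\Lambda;\beta}(\sigma) / w^\rho_{\Lambda;\beta}(m^{\rho,\Lambda}(\sigma))$ by a boundary factor; and bound the cardinality of the preimage $(m^{\rho,\Lambda})^{-1}(\tau) \cap \Omega^\#_\Lambda$ for each $\tau \in \Omega^\rho_\Lambda$. Summing over $\sigma \in \Omega^\#_\Lambda \cap E$ and regrouping by image then yields the claim. For the first step, $\tau := m^{\rho,\Lambda}(\sigma)$ equals $\rho$ on $\V \setminus \interior{\Lambda}$ by construction, so it satisfies the $\rho$-boundary condition, while the deletion step removes every interior dimer that conflicts with another dimer (either another interior dimer, or a dimer of $\rho$ just outside), yielding a valid dimer configuration.

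For the weight comparison, the configurations $\sigma$ and $\tau$ may disagree only on $\V \setminus \interior{\Lambda}$, where $\tau = \rho$ but $\sigma$ is determined by $\#$, and on those interior dimer sites near $\partial \Lambda$ at which the deletion step acted. Both regions lie within a bounded-width neighborhood of $\partial \Lambda$. Consequently, any cell $B \Subset \V$ on which the potentials $\Phi_B(\sigma)$ and $\Phi_B(\tau)$ of~\eqref{eqn:vacancy potential}--\eqref{eqn:broken-link potential} disagree must also intersect this neighborhood, and there are at most $O(\Perimeter{\Lambda})$ such cells. Since each $|\Phi_B|$ is bounded by $\max(\chempot + a, a)/2$, we obtain $|H^\#_\Lambda(\sigma) - H^\rho_\Lambda(\tau)| = O(\Perimeter{\Lambda})$ and hence $w^\#_{\Lambda;\beta}(\sigma) \le C_1(\beta)^{\Perimeter{\Lambda}} w^\rho_{\Lambda;\beta}(\tau)$ for some $C_1(\beta) \ge 1$. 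For the preimage count, any two preimages $\sigma_1, \sigma_2 \in \Omega^\#_\Lambda$ of the same $\tau$ must coincide with $\tau$ on every interior dimer site whose neighborhood avoids $\partial \Lambda$ (where no deletion occurs), and they agree with one another on $\V \setminus \Lambda$ via the shared boundary condition $\#$. Hence $\sigma_1, \sigma_2$ can differ only on an $O(\Perimeter{\Lambda})$-sized band around $\partial \Lambda$, which admits at most $C_2^{\Perimeter{\Lambda}}$ valid dimer configurations. Setting $C(\beta) := C_1(\beta) \cdot C_2$ gives the desired bound.

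I expect the main obstacle to be the careful bookkeeping of the boundary band and the enumeration of admissible configurations on it, particularly because the broken-link potential~\eqref{eqn:broken-link potential} is three-body and hence range-two, so the band of possible disagreement must be taken thick enough to include every cell touching $\partial \Lambda$. Additionally, the slight asymmetry between $\V \setminus \Lambda$ in the definition of $\Omega^\rho_\Lambda$ and $\V \setminus \interior{\Lambda}$ in the construction of $m^{\rho,\Lambda}$ must be reconciled, since some dimer sites lie exactly on $\partial \Lambda$; these sites get overwritten by $\rho$ in the image yet are only constrained indirectly in a preimage. Once these boundary effects are made explicit, both the weight comparison and the preimage enumeration reduce to routine counting that yields a constant $C(\beta)$ depending only on $\beta, \chempot, a$.
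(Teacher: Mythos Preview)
Your proposal is correct and follows the standard strategy for such comparison results: the map $m^{\rho,\Lambda}$ alters $\sigma$ only in an $O(1)$-width band around $\partial\Lambda$, which bounds both the Hamiltonian discrepancy and the preimage multiplicity by quantities of the form $C(\beta)^{\Perimeter{\Lambda}}$. The paper does not give its own proof of this proposition but imports it directly from~\cite[Proposition~2.1]{hadas2025columnar}, remarking only that the argument carries over with little modification; your sketch is precisely the kind of argument that reference contains, and the care you flag regarding the range-two broken-link potential and the $\V\setminus\Lambda$ versus $\V\setminus\interior{\Lambda}$ asymmetry is exactly the bookkeeping needed to make it go through in the present model.
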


\begin{proposition}[{\cite[Proposition 2.2]{hadas2025columnar}}]
    \label{prop:comparison of boundary conditions - expectation}
    There exists a constant $C(\beta)\ge 1$ such that the following holds.
    Let $\Lambda_0\subset\Lambda\subset\Lambda_1$ be rectangles such that the Euclidean distance from $\Lambda_0$ to $\R^2\setminus\Lambda$ is at least $2$. 
    Let $f:\Omega\to\R_{\ge 0}$ be $\Lambda_0$-local.
    Then, for any Gibbs measure $\mu$,
    \begin{equation}
        \mu(f)\le C(\beta)^{\Perimeter{\Lambda}}\mu^\per_{\Lambda_1;\beta}(f).
    \end{equation}
\end{proposition}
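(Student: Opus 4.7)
The plan is to reduce the inequality to a boundary-insensitivity estimate on the single rectangle $\Lambda$ and derive the latter from two applications of Proposition~\ref{prop:comparison of boundary conditions - partition function}. The distance-$2$ buffer in the hypothesis enters exclusively through a geometric check that the trimming map from that proposition does not disturb the configuration on $\Lambda_0$.

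First I would use the DLR equations on both sides. On the left, the infinite-volume measure satisfies
\[
\mu(f)\;=\;\int\mu^\rho_{\Lambda;\beta}(f)\,d\mu(\rho)\;\le\;\sup_{\rho\in\Omega}\mu^\rho_{\Lambda;\beta}(f),
\]
while conditioning $\mu^\per_{\Lambda_1;\beta}$ on the restriction of the configuration to $\V\setminus\Lambda$ (within the periodic identification) yields
\[
\mu^\per_{\Lambda_1;\beta}(f)\;=\;\int\mu^{\xi}_{\Lambda;\beta}(f)\,d\mu^\per_{\Lambda_1;\beta}(\xi)\;\ge\;\inf_{\xi}\mu^{\xi}_{\Lambda;\beta}(f).
\]
It therefore suffices to prove the uniform bound $\mu^\rho_{\Lambda;\beta}(f)\le C(\beta)^{\Perimeter{\Lambda}}\mu^{\rho'}_{\Lambda;\beta}(f)$ for all $\rho,\rho'\in\Omega$, up to absorbing a constant factor into $C(\beta)$.

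To obtain this boundary-insensitivity, I would use the $\Lambda_0$-locality of $f$ to write $f(\sigma)=F(\sigma|_{\V\cap\Lambda_0})$ with $F\ge 0$, and decompose
\[
Z^\#_{\Lambda;\beta}(f)\;=\;\sum_s F(s)\,Z^\#_{\Lambda;\beta}(E_s),\qquad E_s:=\set{\sigma\in\Omega:\sigma|_{\V\cap\Lambda_0}=s}.
\]
Proposition~\ref{prop:comparison of boundary conditions - partition function} with $\#=\rho$ and target $\rho'$, applied to each $E_s$, gives
\[
Z^\rho_{\Lambda;\beta}(E_s)\;\le\;C(\beta)^{\Perimeter{\Lambda}}\,Z^{\rho'}_{\Lambda;\beta}(m^{\rho',\Lambda}(E_s)),
\]
and the same proposition with the roles of $\rho$ and $\rho'$ swapped, applied to $E=\Omega$, gives $Z^{\rho'}_{\Lambda;\beta}\le C(\beta)^{\Perimeter{\Lambda}}Z^\rho_{\Lambda;\beta}$. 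The crucial point is the inclusion $m^{\rho',\Lambda}(E_s)\subseteq E_s$: an edge $e$ with midpoint in $\Lambda_0$ and an edge $e'$ with midpoint outside $\interior{\Lambda}$ have midpoints at Euclidean distance at least $2$ by the hypothesis (using that $\Lambda$ is closed so this distance agrees with $d(\Lambda_0,\R^2\setminus\interior{\Lambda})$), whereas two edges of $\Z^2$ sharing a vertex have midpoints at distance exactly $1$. Hence the trimming step in the definition of $m^{\rho',\Lambda}$ cannot delete any edge with midpoint in $\Lambda_0$, so the restriction to $\V\cap\Lambda_0$ is preserved. Summing in $s$ with the nonnegative weights $F(s)$ and dividing by the partition-function comparison yields the insensitivity bound.

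The main obstacle is really this geometric verification that $m^{\rho',\Lambda}$ preserves restrictions to $\Lambda_0$, which is where the range-$1$ nature of the hard-core constraint couples with the distance buffer in the hypothesis. Everything else is standard bookkeeping: a pair of DLR conditionings to set up the reduction, two applications of Proposition~\ref{prop:comparison of boundary conditions - partition function} to handle the boundary corrections, and absorption of a multiplicative constant into $C(\beta)$ at the end.
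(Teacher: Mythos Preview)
The paper does not supply its own proof of this proposition; it is quoted from \cite{hadas2025columnar} with the remark that ``their proofs carry over to our setting with little modification.'' Your argument is correct and is exactly the standard reduction one expects there: DLR on both sides to reduce to a boundary-insensitivity estimate on $\Lambda$, then two applications of Proposition~\ref{prop:comparison of boundary conditions - partition function} together with the geometric check that $m^{\rho',\Lambda}$ does not touch $\V\cap\Lambda_0$. One harmless inaccuracy: two edges of $\Z^2$ sharing a vertex have midpoints at distance \emph{at most} $1$ (it is $1/\sqrt{2}$ when they meet at a right angle), not exactly $1$; since you only need the distance to be strictly less than $2$, this does not affect the argument.
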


\subsection{Reflection positivity and the chessboard estimate}

In this subsection, we formulate the reflection positivity of the monomer-dimer model and use it to set up the chessboard estimate and an infinite-volume extension thereof due to Hadas and Peled~\cite[Proposition 3.8]{hadas2025columnar}.
While much of the material here is either standard or directly imported from~\cite{hadas2025columnar}, we will give a further extension of the chessboard estimate applicable to \emph{finite products} of periodic Gibbs measures that will be useful in Section~\ref{sec:disagreement percolation}.

\subsubsection{Standard results}

We begin with some notation.
Let $R=\rect{K\times L}{(x_0,y_0)}$ be a rectangle.
Define the \textbf{grid} of $R$ as $G^R:=(x_0+K\Z)\times(y_0+L\Z)$, and denote its origin-shifted version by $\cL^R:=K\Z\times L\Z$.
Let $T^R$ be the group generated by the reflections of $R$ across the vertical and horizontal lines that pass through points in $G^R$.
Thus, for each $v\in G^R$, there exists a unique isometry in $T^R$ that maps $R$ to $\rect{K\times L}{v}$; we denote this isometry by $\tau_{R,v}$.
Given $f:\Omega\to\R$ and $\tau\in T^R$, define $(\tau f)(\sigma):=f(\sigma\circ \tau)$.
Given a Gibbs measure $\mu$ and an isometry $\tau:\R\to\R$, denote by $\tau\mu$ the push-forward of $\mu$ under $\tau$.
Given $(x,y)\in\Z^2$, let $\eta_{(x,y)}:\R\to\R$ denote the translation $v\mapsto v+(x,y)$.

First, we show that the monomer-dimer model is \emph{reflection positive} with respect to reflections across the vertices in $\V$.

\begin{lemma}[Reflection positivity]
    Let $R=\rect{K\times L}{(x_0,y_0)}$ be a rectangle and $f:\Omega\to\R$ be $R$-local.
    Then,
    \begin{equation}
        \mu_{\Lambda;\beta}^\per(f\cdot\tau f)\ge 0
    \end{equation}
    if $(\Lambda,\tau)$ is either $(\rect{2K\times L}{(-1/2,-1/2)},\tau_{R,(x_0+K,y_0)})$ or $(\rect{K\times 2L}{(-1/2,-1/2)},\tau_{R,(x_0,y_0+L)})$.
\end{lemma}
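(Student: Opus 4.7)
The approach is the standard one for proving reflection positivity on a lattice torus: identify a clean splitting of the edge set across the reflection line, verify that the Hamiltonian admits a ``Gaussian-domination'' decomposition
\begin{equation}
    H = H_+ + \tau H_+ - \sum_\alpha C_\alpha \cdot \tau C_\alpha
\end{equation}
with each summand of $H_+$ and each $C_\alpha$ depending only on the positive side of the splitting together with the reflection boundary, and then invoke the Taylor-expansion trick. We will treat the first case $\tau = \tau_{R,(x_0+K,y_0)}$ on $\Lambda = \rect{2K \times L}{(-1/2,-1/2)}$; the second is identical by $90^\circ$ rotation. The periodic boundary conditions produce two parallel reflection lines $\pi := \{x = x_0 + K\}$ and $\pi' := \{x = x_0\}$ at half-integer $x$-coordinates, identified with each other by $\tau$. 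Partition $\V_\Lambda = V_+ \sqcup V_- \sqcup V_0$, where $V_0$ is the set of horizontal edges whose midpoints lie on $\pi \cup \pi'$ and $V_\pm$ are the remaining edges strictly on each side, so that $\tau$ swaps $V_+$ with $V_-$ and fixes $V_0$ pointwise.

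The key step is to check that $H^\per_\Lambda$ decomposes in the required form. First, every vertex of $\Z^2$ has integer $x$-coordinate and therefore lies strictly in one side, so its hard-core constraint involves only edges in $V_+ \cup V_0$ or in $V_- \cup V_0$; consequently, for any fixed $\sigma_0 := \sigma|_{V_0}$, the set of valid configurations in $\Omega^\per_\Lambda$ factors as a product over $V_+$ and $V_-$. Next, classify the potentials from~\eqref{eqn:vacancy potential}--\eqref{eqn:broken-link potential}. The four-body vacancy potential is $(V_\pm \cup V_0)$-local since its four edges are incident to a common vertex of $\Z^2$; the three-body vertical broken-link potential is also $(V_\pm \cup V_0)$-local since its three edges share an integer $x$-midpoint which can never lie on $\pi \cup \pi'$; and the three-body horizontal broken-link potential $\Phi_B$ with $B = \{e_1, e_2, e_3\}$ is side-local unless its middle edge $e_2$ lies on $\pi \cup \pi'$, in which case $e_1$ and $e_3 = \tau e_1$ lie on opposite sides. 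For each such cross potential, expand
\begin{equation}
    \Phi_B(\sigma) = \tfrac{a}{2}\sigma(e_1) + \tfrac{a}{2}\sigma(e_3) - a\,\sigma(e_1)\sigma(e_3) = \tfrac{a}{2}\sigma(e_1) + \tfrac{a}{2}(\tau\sigma)(e_1) - C_\alpha(\sigma)\,(\tau C_\alpha)(\sigma),
\end{equation}
where $C_\alpha(\sigma) := \sqrt{a}\,\sigma(e_1^\alpha)$ is $(V_+ \cup V_0)$-local. Absorbing the single-side linear terms into $H_+$ and $\tau H_+$, the remaining cross terms sum to $-\sum_\alpha C_\alpha \cdot \tau C_\alpha$, yielding the decomposition displayed above.

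The remainder is the routine Taylor-expansion argument: writing $e^{-\beta H^\per_\Lambda} = e^{-\beta H_+}\, e^{-\beta \tau H_+}\,\prod_\alpha e^{\beta C_\alpha \tau C_\alpha}$, expanding each factor on the right as a power series in a multi-index $\mathbf{k} = (k_\alpha)$, swapping the sums, and using the state-space factorization established above, one expresses $Z^\per_{\Lambda;\beta} \cdot \mu^\per_{\Lambda;\beta}(f \cdot \tau f)$ as a sum of non-negative squares indexed by $(\sigma_0, \mathbf{k})$, whence the conclusion. The genuinely delicate point in this plan is the analysis of the horizontal broken-link potential straddling the reflection line: unlike the vacancy and vertical broken-link potentials, it involves edges on both sides of the line, and reflection positivity would fail were it not for the fact that its indicator $\indicator{\sigma(e_1) + \sigma(e_3) = 1}$ expands into a symmetric quadratic polynomial in $\sigma(e_1), \sigma(e_3)$ whose cross term has precisely the required $-C \cdot \tau C$ form. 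This is the feature of the Heilmann--Lieb interaction that makes it compatible with reflection positivity despite being non-nearest-neighbor on $L(\Z^2)$.
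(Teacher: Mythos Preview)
Your proof is correct and follows essentially the same approach as the paper: verify the standard decomposition criterion $-H^\per_\Lambda = A + \tau A + \sum_\alpha C_\alpha \cdot \tau C_\alpha$ by classifying the potentials according to whether they cross the reflection line, and note that only the horizontal broken-link potential with middle edge on the line requires the $C_\alpha\cdot\tau C_\alpha$ term. The only cosmetic difference is in the algebraic expansion of that crossing potential: you write $\indicator{\sigma(e_1)+\sigma(e_3)=1}=\sigma(e_1)+\sigma(e_3)-2\sigma(e_1)\sigma(e_3)$ and take a single $C_\alpha=\sqrt{a}\,\sigma(e_1)$, whereas the paper retains $e_2$ and uses two terms $C_\alpha\propto\indicator{\sigma(e_1)=\sigma(e_2)}$ and $\indicator{\sigma(e_1)\ne\sigma(e_2)}$; both rely on $a>0$ and are equivalent.
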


\begin{proof}
    We treat the case that $(\Lambda,\tau)=(\rect{2K\times L}{(-1/2,-1/2)},\tau_{R,(x_0+K,y_0)})$; the other case is similar.
    Let
    \begin{align}
        \fA^+{}&:=\set{\vartheta v\mid\vartheta\in T^\Lambda, v\in R}, \\
        \fA^-{}&:=\set{\vartheta v\mid\vartheta\in T^\Lambda, v\in\rect{K\times L}{(x_0+K,y_0)}}.
    \end{align}
    By a standard criterion for reflection positivity (see~\cite[Corollary 5.4]{biskup2009reflection} or~\cite[Lemma 10.8]{friedli2017statistical}), it suffices to write 
    \begin{equation}
        -H_\Lambda^\per=A+\tau A+\sum_{\alpha}C_\alpha\cdot\tau C_\alpha,
    \end{equation}
    where $A,C_\alpha$ are $\fA^+$-local functions.
    Recall the potential representation of the Hamiltonian~\eqref{eqn:finite-volume Hamiltonian}.
    We let each $\Phi_B$ as in~\eqref{eqn:vacancy potential} contribute to $-(A+\tau A)$.
    For each $\Phi_B$ as in~\eqref{eqn:broken-link potential}, if $B\subset\fA^+$ or $B\subset\fA^-$, we let it contribute to $-(A+\tau A)$.
    Otherwise, $B$ crosses a horizontal or vertical line passing through points in $G^R$.
    In this case, writing
    \begin{equation}
        \Phi_B(\sigma)=\frac{1}{2}a-\frac{1}{2}a\left[\indicator{\sigma(e_1)=\sigma(e_2)}\indicator{\sigma(e_3)=\sigma(e_2)}+\indicator{\sigma(e_1)\ne\sigma(e_2)}\indicator{\sigma(e_3)\ne\sigma(e_2)}\right], 
    \end{equation}
    we let the first term contribute to $-(A+\tau A)$ and the second term to $-\sum_\alpha C_\alpha\cdot\tau C_\alpha$, using that $a>0$.
    The proof is now complete.
\end{proof}

If $\Lambda$ is a rectangle such that $2\Width{R}\mid\Width{\Lambda}$ and $2\Height{R}\mid\Height{\Lambda}$, we say that $R$ is a \textbf{block} of $\Lambda$.
In that case, set $T^R_\Lambda:=T^R/\cL^\Lambda$.
Given an $R$-local function $f:\Omega\to\R$, define its \textbf{$(R,\Lambda)$-chessboard seminorm} by 
\begin{equation}
    \norm{f}_{R\mid\Lambda}:=\mu_{\Lambda;\beta}^\per\left(\prod_{\tau\in T_\Lambda^R}\tau f\right)^{1/\abs{T_\Lambda^R}};
\end{equation}
the name ``seminorm'' is justified by the fact that $\norm{\cdot}_{R\mid\Lambda}$ indeed satisfies the standard properties of a seminorm~\cite[Proposition 3.3]{hadas2025columnar}.
A standard consequence of reflection positivity is the \emph{chessboard estimate}, which asserts that the expectation of a product of local functions of a particular form under $\mu_{\Lambda;\beta}^\per$ factorizes as an upper bound into the product of their chessboard seminorms.
We do not prove the chessboard estimate here but refer the reader to~\cite{biskup2009reflection} and~\cite{friedli2017statistical} for pedagogical references.

\begin{proposition}[Chessboard estimate]
    \label{prop:chessboard estimate}
    Let $R$ be a block of $\Lambda$.
    Let $A\subset T_\Lambda^R$ and $(f_\tau)_{\tau\in A}$ be $R$-local functions.
    Then,
    \begin{equation}
        \mu_{\Lambda;\beta}^\per\left(\prod_{\tau\in A}\tau f_\tau\right)\le\prod_{\tau\in A}\norm{f_\tau}_{R\mid\Lambda}.
    \end{equation}
\end{proposition}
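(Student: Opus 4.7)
The plan is to prove Proposition~\ref{prop:chessboard estimate} by the classical Fröhlich--Simon--Spencer iteration: repeatedly apply Cauchy--Schwarz across reflection planes internal to $\Lambda$ until each function $f_\tau$ has been ``reflected out'' to tile the whole torus. The starting point is to upgrade the reflection positivity bound $\mu^\per_{\Lambda;\beta}(F\cdot\vartheta F)\ge 0$ to a bilinear Cauchy--Schwarz statement: for any $R$-local $F,G$ and any of the two admissible reflections $\vartheta$ from the previous lemma, the form $(F,G)\mapsto\mu^\per_{\Lambda;\beta}(F\cdot\vartheta G)$ is a symmetric positive semidefinite sesquilinear form on $R$-local functions (symmetry uses $\vartheta$-invariance of $\mu^\per_{\Lambda;\beta}$, which follows from the $T^R$-invariance of the potentials in $H_\Lambda^\per$). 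This yields
\begin{equation}
    \mu^\per_{\Lambda;\beta}(F\cdot\vartheta G)^2\le \mu^\per_{\Lambda;\beta}(F\cdot\vartheta F)\cdot\mu^\per_{\Lambda;\beta}(G\cdot\vartheta G).
\end{equation}

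Next, I would set up the inductive step. Extend $(f_\tau)_{\tau\in A}$ to $(g_\tau)_{\tau\in T_\Lambda^R}$ by $g_\tau:=1$ for $\tau\notin A$; it suffices to prove $\mu^\per_{\Lambda;\beta}\bigl(\prod_\tau \tau g_\tau\bigr)\le\prod_\tau\norm{g_\tau}_{R\mid\Lambda}$ because $\norm{1}_{R\mid\Lambda}=1$. Enumerate the vertical and horizontal reflection planes through $G^R$ that lie in the interior of $\Lambda$; each such plane $\pi$ induces an involution $\vartheta_\pi\in T_\Lambda^R$ and a partition $T_\Lambda^R=T_\pi^+\sqcup\vartheta_\pi T_\pi^+$. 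Factor
\begin{equation}
    \prod_{\tau\in T_\Lambda^R}\tau g_\tau \;=\; \Bigl(\prod_{\tau\in T_\pi^+}\tau g_\tau\Bigr)\cdot\vartheta_\pi\Bigl(\prod_{\tau\in T_\pi^+}\tau g_{\vartheta_\pi\tau}\Bigr) \;=\; F_\pi\cdot\vartheta_\pi F_\pi',
\end{equation}
where both $F_\pi$ and $F_\pi'$ are local to the ``$+$'' side of $\pi$. Applying the Cauchy--Schwarz bound above replaces the original expectation by the geometric mean of two expectations in which the block content is symmetric under $\vartheta_\pi$: in one, the block at $\vartheta_\pi\tau R$ holds (a reflection of) $g_\tau$ rather than $g_{\vartheta_\pi\tau}$, and vice versa in the other.

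Finally, I would iterate: pick a sequence of planes whose reflections generate $T_\Lambda^R$, and at each stage apply the Cauchy--Schwarz step to every factor produced by the previous stage. Since $T_\Lambda^R$ is generated by reflections across internal planes, a finite number of rounds transforms $\mu^\per_{\Lambda;\beta}\bigl(\prod_\tau \tau g_\tau\bigr)$ into a weighted geometric mean in which each $g_\tau$ appears with its full orbit tiling the torus, giving exactly $\prod_{\tau\in T_\Lambda^R}\norm{g_\tau}_{R\mid\Lambda}$. I expect the main obstacle to be the bookkeeping in this last step: one must verify, round by round, that the exponents combine correctly (each $g_\tau$ ends up raised to the $1/\abs{T_\Lambda^R}$ power inside the chessboard seminorm and with multiplicity $1$ in the outer product), which is most cleanly done by an induction on the number of planes used, carrying as the invariant that after $k$ rounds the expectation is bounded by a geometric mean of $2^k$ expectations indexed by the elements of a Cayley-ball in $T_\Lambda^R$. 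Standard references (e.g.\ \cite[Theorem 10.11]{friedli2017statistical}, \cite[Theorem 5.8]{biskup2009reflection}) carry out exactly this bookkeeping, and the argument transfers verbatim to our setting once the Cauchy--Schwarz inequality above is in hand.
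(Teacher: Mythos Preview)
Your proposal is correct and is exactly the standard Fr\"ohlich--Lieb / Fr\"ohlich--Simon--Spencer iteration; in fact the paper does not give its own proof of this proposition at all but simply refers the reader to \cite{biskup2009reflection} and \cite{friedli2017statistical}, the very references you invoke at the end. So your outline is strictly more detailed than what the paper provides, and nothing further is needed.
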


We also have the following recursive version of the chessboard estimate, taken from~\cite{hadas2025columnar}.

\begin{lemma}[Recursive chessboard estimate, {\cite[Lemma 3.4]{hadas2025columnar}}]
    \label{lem:recursive chessboard estimate}
    Let $R$ and $S$ be blocks of $\Lambda$, and suppose that the corners of $S$ are in $G^R$.
    Let $A\subset T^R$ be such that $\cup_{\tau\in A}\tau R\subset S$ and $(f_\tau)_{\tau\in A}$ be $R$-local functions.
    Then,
    \begin{equation}
        \norm{\prod_{\tau\in A}\tau f_\tau}_{S\mid\Lambda}\le\prod_{\tau\in A}\norm{f_\tau}_{R\mid\Lambda}.
    \end{equation}
\end{lemma}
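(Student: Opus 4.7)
The plan is to unpack the $(S,\Lambda)$-chessboard seminorm on the left-hand side and then apply the ordinary chessboard estimate, Proposition~\ref{prop:chessboard estimate}, one scale lower, at the block $R$. By the definition of $\norm{\cdot}_{S\mid\Lambda}$,
\begin{equation*}
    \norm{\prod_{\tau\in A}\tau f_\tau}_{S\mid\Lambda}^{\abs{T_\Lambda^S}}=\mu_{\Lambda;\beta}^\per\left(\prod_{\vartheta\in T_\Lambda^S}\prod_{\tau\in A}(\vartheta\tau)f_\tau\right),
\end{equation*}
so it suffices to bound the right-hand side by $\prod_{\tau\in A}\norm{f_\tau}_{R\mid\Lambda}^{\abs{T_\Lambda^S}}$ and then extract the $\abs{T_\Lambda^S}$-th root.

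To make this reduction rigorous, I would check two geometric points. First, because the corners of $S$ lie in $G^R$, the reflection axes generating $T^S$ are lines of $G^R$; composing any $\vartheta\in T^S$ with any $\tau\in T^R$ therefore yields another element of $T^R$, and each $(\vartheta\tau)f_\tau$ is $R$-local with support in $\vartheta\tau R$. Second, the hypothesis $\bigcup_{\tau\in A}\tau R\subset S$ together with the fact that the copies $\vartheta S$ tile $\Lambda$ modulo $\cL^\Lambda$ implies that the tiles $\vartheta\tau R$, as $(\vartheta,\tau)$ ranges over $T_\Lambda^S\times A$, are pairwise disjoint in $\Lambda/\cL^\Lambda$. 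In particular, the assignment $(\vartheta,\tau)\mapsto\vartheta\tau$ is injective as a map into $T_\Lambda^R$.

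With both points established, applying Proposition~\ref{prop:chessboard estimate} to the index set $A':=\set{\vartheta\tau\mid\vartheta\in T_\Lambda^S,\,\tau\in A}\subset T_\Lambda^R$ with the $R$-local functions $g_{\vartheta\tau}:=f_\tau$ yields
\begin{equation*}
    \mu_{\Lambda;\beta}^\per\left(\prod_{(\vartheta,\tau)\in T_\Lambda^S\times A}(\vartheta\tau)f_\tau\right)\le\prod_{(\vartheta,\tau)\in T_\Lambda^S\times A}\norm{f_\tau}_{R\mid\Lambda}=\prod_{\tau\in A}\norm{f_\tau}_{R\mid\Lambda}^{\abs{T_\Lambda^S}},
\end{equation*}
and taking the $\abs{T_\Lambda^S}$-th root closes the argument.

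The proof is essentially a book-keeping exercise; there is no analytic obstacle, since all the work involving reflection positivity has already been done inside Proposition~\ref{prop:chessboard estimate}. The only genuine subtlety is the injectivity of $(\vartheta,\tau)\mapsto\vartheta\tau$ into $T_\Lambda^R$ together with the correct multiplicity count on the right-hand side, and this is precisely where the hypothesis that the corners of $S$ lie in $G^R$ enters: without it the reflections in $T^S$ would not preserve the $R$-grid, the tiles $\vartheta\tau R$ could collide, and one could not reassemble the product as one indexed by a subset of $T_\Lambda^R$.
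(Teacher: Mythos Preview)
Your proposal is correct and follows the standard route: expand the $(S,\Lambda)$-seminorm, use that the corners of $S$ lying in $G^R$ forces $T^S\subset T^R$, verify injectivity of $(\vartheta,\tau)\mapsto\vartheta\tau$ into $T_\Lambda^R$ via the tiling of $\Lambda$ by the $\vartheta S$, and apply Proposition~\ref{prop:chessboard estimate} at scale $R$. The paper does not give its own proof of this lemma---it quotes the result from Hadas--Peled---but your argument is exactly the one intended, and it matches in structure the paper's proof of the closely related Lemma~\ref{lem:recursive chessboard estimate for off-grid events}.
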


\begin{proof}
    The assumptions imply that each $(\kappa,\tau)\in T_\Lambda^S\times A$ gives rise to a unique element $\kappa\tau\in T_\Lambda^R$.
    By Proposition~\ref{prop:chessboard estimate}, 
    \begin{equation}
        \norm{\prod_{\tau\in A}\tau f_\tau}_{S\mid\Lambda}^{\abs{T_\Lambda^S}}
        =\mu_{\Lambda;\beta}^\per\left(\prod_{\kappa\in T_\Lambda^S}\kappa\left(\prod_{\tau\in A}\tau f_\tau\right)\right)
        =\mu_{\Lambda;\beta}^\per\left(\prod_{\kappa\in T_\Lambda^S}\prod_{\tau\in A}\kappa\tau f_\tau\right)
        \le\prod_{\tau\in A}\norm{f_\tau}_{R\mid\Lambda}^{\abs{T_\Lambda^S}},
    \end{equation}
    which completes the proof.
\end{proof}

\subsubsection{Finite products of reflection positive measures}
\label{sec:infinite-volume chessboard estimate for product measures}

The goal of this sub-subsection is to prove an extension of the \textbf{infinite-volume chessboard estimate} of Hadas and Peled~\cite[Proposition 3.8]{hadas2025columnar} applicable to finite products of periodic Gibbs measures.
We first recall their result, noting that there is a small but immaterial difference in the definition of rectangles between~\cite{hadas2025columnar} and ours.

\begin{proposition}[{\cite[Proposition 3.8]{hadas2025columnar}}]
    The following holds for the hard-square model.
    For any rectangle $R=\rect{K\times L}{(x_0,y_0)}$ and $R$-local function $f:\Omega\to\R$, define 
    \begin{equation}
        \norm{f}_R:=\limsup_{n\to\infty}\norm{f}_{R\mid\mathrm{R}_{n!\times n!}}.
    \end{equation}
    Let $R$ be a rectangle, $A\subset T^R$ be finite, and $(f_\tau)_{\tau\in A}$ be $R$-local functions.
    Then, for all periodic Gibbs measures $\mu$,
    \begin{equation}
        \mu\left(\prod_{\tau\in A}\tau f_\tau\right)\le\prod_{\tau\in A}\norm{f_\tau}_R.
    \end{equation}
\end{proposition}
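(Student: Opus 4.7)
My plan is to bound $\mu(F)$, where $F := \prod_{\tau \in A} \tau f_\tau$, by transferring to the finite-volume periodic Gibbs measure on $\Lambda_n := \mathrm{R}_{n! \times n!}$ via Proposition~\ref{prop:comparison of boundary conditions - expectation}, applying the finite-volume chessboard estimate (Proposition~\ref{prop:chessboard estimate}) there, and then passing to the limit $n \to \infty$. The factorial growth of $n!$ is precisely what guarantees that, for all $n$ sufficiently large, $R$ is a block of $\Lambda_n$, every fixed period of $\mu$ divides $n!$, and all elements of the finite set $A$ are distinct modulo $\cL^{\Lambda_n}$. I treat the case $f_\tau \ge 0$ throughout, so $F \ge 0$; the signed case follows by decomposing each $f_\tau$ into positive and negative parts.

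Fix a rectangle $\Lambda_0 \supset \bigcup_{\tau \in A} \tau R$ containing the support of $F$ and a slightly larger rectangle $\Lambda$ with $\mathrm{dist}(\Lambda_0, \R^2 \setminus \Lambda) \ge 2$; both are independent of $n$. Chaining Propositions~\ref{prop:comparison of boundary conditions - expectation} and~\ref{prop:chessboard estimate},
\begin{equation}
    \mu(F) \le C(\beta)^{\Perimeter{\Lambda}} \mu^\per_{\Lambda_n; \beta}(F) \le C(\beta)^{\Perimeter{\Lambda}} \prod_{\tau \in A} \|f_\tau\|_{R \mid \Lambda_n},
\end{equation}
and taking $\limsup_{n \to \infty}$ yields $\mu(F) \le C(\beta)^{\Perimeter{\Lambda}} \prod_{\tau \in A} \|f_\tau\|_R$. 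This is the target inequality apart from the boundary prefactor, which must be removed.

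To eliminate the prefactor, I first reduce to $\mu$ extremal via ergodic decomposition under the period lattice of $\mu$, which is legitimate because the target inequality is preserved under convex combinations. For extremal $\mu$, I apply the basic estimate to a spatial replica $\prod_{j=1}^{k^2} F \circ \eta_{v_j}$, where the translations $v_j$ form a $k \times k$ square array with spacing $d$ chosen so that each $\eta_{v_j}$ lies in $T^R$ (producing $k^2$ factors of $\prod_\tau \|f_\tau\|_R$ in the chessboard estimate) and is also a multiple of the period of $\mu$ (so that $\mu$ is $\eta_{v_j}$-invariant). The enclosing rectangle has perimeter $\order{kd}$, so the basic estimate gives
\begin{equation}
    \mu\left(\prod_{j=1}^{k^2} F \circ \eta_{v_j}\right) \le C(\beta)^{\order{kd}} \left(\prod_{\tau \in A} \|f_\tau\|_R\right)^{k^2}.
\end{equation}
K-mixing of extremal short-range Gibbs measures gives $\mu(\prod_j F \circ \eta_{v_j}) \to \mu(F)^{k^2}$ as $d \to \infty$ with $k$ fixed; combining with the estimate, taking $k^2$-th roots, and choosing a diagonal $d = d(k) \to \infty$ with $d(k)/k \to 0$ yields $\mu(F) \le \prod_{\tau \in A} \|f_\tau\|_R$. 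The main obstacle is the interplay between these two limits: the mixing property and the prefactor cancellation impose opposing constraints on the growth of $d$ relative to $k$, and it is delicate when $\prod_\tau \|f_\tau\|_R < 1$ (the regime of interest). Resolving this likely requires a quantitative mixing rate or, alternatively, identifying $\mu$ directly with a weak subsequential limit of finite-volume periodic states (which then automatically satisfies the chessboard estimate).
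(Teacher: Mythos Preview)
Your overall architecture---transfer to the torus via Proposition~\ref{prop:comparison of boundary conditions - expectation}, apply the finite-volume chessboard estimate, and then kill the boundary prefactor by replicating $F$ over a large array of translates---matches the paper's. The gap is in how the prefactor is killed. Your replication step needs $\mu\bigl(\prod_{j} F\circ\eta_{v_j}\bigr)\to\mu(F)^{k^2}$, a mixing statement. But the ergodic decomposition of a periodic Gibbs measure yields \emph{ergodic} periodic Gibbs measures, not extremal ones; an ergodic periodic Gibbs measure can be a nontrivial average of extremal states permuted by translations (think of the even/odd hard-core phases), and such an average is ergodic yet not mixing. Conversely, the extremal decomposition may produce non-periodic states, for which the translation invariance you use fails. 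So the appeal to ``K-mixing of extremal short-range Gibbs measures'' is not justified in this reduction, and as you yourself note, even granting mixing the double limit is uncontrolled without rates. The positive/negative decomposition for signed $f_\tau$ is also not innocuous, since $\norm{\cdot}_R$ is subadditive rather than additive.

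The paper avoids mixing entirely. It first reduces to the single-function statement $\mu(f)\le\norm{f}_R$ (the product then follows from the infinite-volume recursive estimate, Lemma~\ref{lem:infinite-volume recursive chessboard estimate for product measures}, applied to a rectangle $S\supset\bigcup_{\tau\in A}\tau R$). For a single $f$, it sets $g:=1+\epsilon f$, replicates over a square array $A_n$ of translates lying in the period lattice, and applies Jensen for the concave map $x\mapsto x^{1/|A_n|}$:
\[
\sqrt[|A_n|]{\mu\Bigl(\prod_{\tau\in A_n}\tau g\Bigr)}\;\ge\;\mu\Bigl(\sqrt[|A_n|]{\prod_{\tau\in A_n}\tau g}\Bigr).
\]
The left side is at most $\norm{g}_R\le 1+\epsilon\norm{f}_R$ by Claim~\ref{clm:infinite-volume chessboard estimate for product measures - limsup bound} (this is where the prefactor $C(\beta)^{\Perimeter{\cdot}/|A_n|}$ disappears in the limit). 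The right side is the expectation of a geometric mean of translates of $g$; because $g$ takes values near $1$, a second-order Taylor expansion together with \emph{linearity of expectation and periodicity only} (Claim~\ref{clm:infinite-volume chessboard estimate for product measures - Taylor expansion}) shows it equals $\mu(g)+\order{\epsilon^2}=1+\epsilon\mu(f)+\order{\epsilon^2}$. Sending $\epsilon\to0$ yields $\mu(f)\le\norm{f}_R$. No mixing, no diagonal limit, and the perturbation $g=1+\epsilon f$ handles signed $f$ for free.
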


A direct analogue of their result for the monomer-dimer model will follow as a special case of a more general statement which we now formulate.
Let $k\ge 1$.
An element $(\sigma_i)_{i=1}^k\in\Omega^k$ may be thought of as a joint configuration that naturally specifies a mapping $\V\to\set{0,1}^k$ via $e\mapsto(\sigma_i(e))_{i=1}^k$.
Thus, given a rectangle $R$, the notion of $R$-locality and transformation under each reflection $\tau\in T^R$ extend naturally to functions $\Omega^k\rightarrow\R$.
We equip $\Omega^k$ with the product $\sigma$-algebra $\cF^{\otimes k}$, and use the symbol $\otimes$ to denote the product of probability measures on $(\Omega,\cF)$.

Let $\Lambda$ be rectangle, $R$ be a block of $\Lambda$, and $f:\Omega^k\to\R$ be $R$-local.
Define
\begin{equation}
    \norm{f}_{R\mid\Lambda}:=(\otimes_{i=1}^k \mu_{\Lambda;\beta}^\per)\left(\prod_{\tau\in T_\Lambda^R}\tau f\right)^{1/\abs{T_\Lambda^R}}.
\end{equation}
We note that the product measure $\otimes_{i=1}^k\mu_{\Lambda;\beta}^\per$ remains reflection positive with respect to reflections across the vertices in $\V$, so the quantity above satisfies the same seminorm properties as before and the natural analogues of Proposition~\ref{prop:infinite-volume chessboard estimate for product measures} and Lemma~\ref{lem:recursive chessboard estimate} hold.
Next, define
\begin{equation}
    \label{eqn:infinite-volume chessboard seminorm}
    \norm{f}_R:=\limsup_{n\to\infty}\norm{f}_{R\mid\mathrm{R}_{n!\times n!}}.
\end{equation}
The main result of this sub-subsection is the following extension of~\cite[Proposition 3.8]{hadas2025columnar} in the context of the monomer-dimer model.

\begin{proposition}[Infinite-volume chessboard estimate for product measures]
    \label{prop:infinite-volume chessboard estimate for product measures}
    Let $R$ be a rectangle. 
    Let $A\subset T^R$ be finite and $(f_\tau)_{\tau\in A}$ be $R$-local functions from $\Omega^k$ to $\R$.
    Then, given periodic Gibbs measures $(\mu_i)_{i=1}^k$,
    \begin{equation}
        (\otimes_{i=1}^k \mu_i)\left(\prod_{\tau\in A}\tau f_\tau\right)\le\prod_{\tau\in A}\norm{f_\tau}_R.
    \end{equation}
\end{proposition}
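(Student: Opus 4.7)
The plan is to adapt the proof of the single-measure infinite-volume chessboard estimate~\cite[Proposition 3.8]{hadas2025columnar} to the product-measure setting, exploiting the fact that all of the structural ingredients that the Hadas--Peled argument relies upon---reflection positivity, translation invariance, the DLR equations, and the chessboard estimates---are preserved by finite tensor products of Gibbs measures.

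The first step is to verify that the product measure $\otimes_{i=1}^k \mu^\per_{\Lambda;\beta}$ on $(\Omega^k,\cF^{\otimes k})$ is itself reflection positive under reflections in $T^R$; this is a standard fact about tensor products of reflection positive measures, essentially because the defining Cauchy--Schwarz-type inequality passes to tensor products of Hilbert spaces. As an immediate consequence, both the finite-volume chessboard estimate (Proposition~\ref{prop:chessboard estimate}) and the recursive chessboard estimate (Lemma~\ref{lem:recursive chessboard estimate}) extend verbatim to $R$-local functions $\Omega^k \to \R$ and to the product periodic measures $\otimes_i \mu^\per_{\Lambda;\beta}$, with the chessboard seminorm $\norm{\cdot}_{R\mid\Lambda}$ interpreted as in~\eqref{eqn:infinite-volume chessboard seminorm}.

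The second step is to reduce the proposition to a \emph{single-site} estimate. Setting $F := \prod_{\tau\in A} \tau f_\tau$ and picking a rectangle $S$ with corners in $G^R$ such that $\cup_{\tau\in A}\tau R\subset S$, the product-measure recursive chessboard gives $\norm{F}_{S\mid\Lambda}\le\prod_{\tau\in A}\norm{f_\tau}_{R\mid\Lambda}$ for any $\Lambda$ having $S$ as a block; taking $\limsup$ along $\Lambda_n=\mathrm{R}_{n!\times n!}$ then yields $\norm{F}_S\le\prod_\tau \norm{f_\tau}_R$. It therefore suffices to show that, for any rectangle $S$, any $S$-local $F:\Omega^k\to\R$, and any periodic Gibbs measures $(\mu_i)_{i=1}^k$,
\begin{equation}
    \Bigl(\otimes_{i=1}^k\mu_i\Bigr)(F)\le\norm{F}_S.
\end{equation}

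Finally, to establish the reduced single-site estimate, I would imitate the amplification strategy of Hadas--Peled. For large $\Lambda_1=\mathrm{R}_{n!\times n!}$ containing a buffered super-rectangle $\Lambda_0\supset S$, choose $N=\Theta(\abs{\Lambda_1}/\abs{\Lambda_0})$ well-separated translates $\eta_{v_j}\Lambda_0\subset\Lambda_1$ arranged on a regular sub-lattice. Applying Proposition~\ref{prop:comparison of boundary conditions - expectation} in each factor $\mu_i$, followed by the chessboard estimate on the product periodic measure, yields
\begin{equation}
    \Bigl(\otimes_{i=1}^k\mu_i\Bigr)\Bigl(\prod_{j=1}^N\eta_{v_j} F\Bigr)\le C(\beta)^{k\Perimeter{\Lambda_1}}\norm{F}_{S\mid\Lambda_1}^N.
\end{equation}
The main obstacle---and the heart of the proof---is to lower-bound the left-hand side by essentially $[(\otimes_i\mu_i)(F)]^N$. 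This amounts to showing that distant translates of $F$ are approximately independent under $\otimes_i\mu_i$, which can be handled by first decomposing each $\mu_i$ into its ergodic components and then invoking spatial mixing of the resulting ergodic product measures. Once this is in place, taking $N$-th roots and using $N\gg\Perimeter{\Lambda_1}$ kills the $C(\beta)^{k\Perimeter{\Lambda_1}/N}$ factor as $n\to\infty$, leaving $(\otimes_i\mu_i)(F)\le\limsup_{n\to\infty}\norm{F}_{S\mid\Lambda_1}=\norm{F}_S$, as desired.
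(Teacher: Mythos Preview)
Your reduction to the single-site estimate via the recursive chessboard (your Step~2) is exactly what the paper does. The gap is in Step~3, specifically in the lower bound you identify as ``the heart of the proof.''

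Your proposal to obtain $(\otimes_i\mu_i)(\prod_j\eta_{v_j}F)\gtrsim[(\otimes_i\mu_i)(F)]^N$ via ergodic decomposition and ``spatial mixing of the resulting ergodic product measures'' does not work as stated. Periodic-ergodic Gibbs measures need not be mixing; ergodicity only gives convergence of spatial \emph{averages}, not approximate factorization of \emph{products} of distant observables. Moreover, any quantitative mixing statement for these measures is precisely the kind of thing the paper is ultimately trying to prove, so invoking it here risks circularity. A secondary issue: your upper bound already uses Proposition~\ref{prop:comparison of boundary conditions - expectation}, which requires $F\ge 0$, so the argument as written does not cover general $F$.

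The paper circumvents all of this with a trick that uses only translation invariance and Jensen's inequality. Replace $F$ by $g:=1+\epsilon F$ (so $g>0$ for small $\epsilon$), and observe that by Jensen,
\[
\sqrt[N]{(\otimes_i\mu_i)\Bigl(\prod_{j}\eta_{v_j}g\Bigr)}\ \ge\ (\otimes_i\mu_i)\Bigl(\sqrt[N]{\prod_{j}\eta_{v_j}g}\Bigr).
\]
Now Taylor-expand the geometric mean: $\sqrt[N]{\prod_j(1+\epsilon\,\eta_{v_j}F)}=1+\frac{\epsilon}{N}\sum_j\eta_{v_j}F+O(\epsilon^2)$, whose expectation under the translation-invariant product measure is $1+\epsilon(\otimes_i\mu_i)(F)+O(\epsilon^2)$. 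Combined with the upper bound $\sqrt[N]{(\otimes_i\mu_i)(\prod_j\eta_{v_j}g)}\le C(\beta)^{k\,\mathrm{Perimeter}/N}\norm{g}_{S\mid\Lambda}$ and the subadditivity $\norm{g}_S\le 1+\epsilon\norm{F}_S$, sending $N\to\infty$ and then $\epsilon\to 0$ gives $(\otimes_i\mu_i)(F)\le\norm{F}_S$. This is the content of Lemma~\ref{lem:infinite-volume chessboard estimate for product measures - single function} and Claims~\ref{clm:infinite-volume chessboard estimate for product measures - limsup bound}--\ref{clm:infinite-volume chessboard estimate for product measures - Taylor expansion} in the paper; the missing idea in your proposal is precisely this $g=1+\epsilon F$ linearization, which converts the multiplicative lower bound you need into an additive one where translation invariance alone suffices.
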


We follow the same proof strategy as in~\cite{hadas2025columnar}, with suitable modifications to accommodate the product measures, starting with the following lemma.

\begin{lemma}
    \label{lem:infinite-volume chessboard estimate for product measures - single function}
    Let $R$ be a rectangle and $f:\Omega^k\to\R$ be $R$-local.
    Then, given periodic Gibbs measures $(\mu_i)_{i=1}^k$,
    \begin{equation}
        (\otimes_{i=1}^k \mu_i)(f)\le\norm{f}_R.
    \end{equation}
\end{lemma}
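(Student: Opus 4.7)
The plan is to adapt the proof of the single-measure infinite-volume chessboard estimate \cite[Proposition 3.8]{hadas2025columnar} to the product setting, noting that all three essential ingredients extend cleanly. First, the finite-volume product measure $\bigotimes_{i=1}^k \mu^\per_{\Lambda;\beta}$ inherits reflection positivity with respect to the reflections in $T^R$ from its factors: writing an $R$-local function $F:\Omega^k\to\R$ in terms of a tensor decomposition $F=\sum_\alpha c_\alpha\prod_i f_{\alpha,i}$ and using that the Schur product of positive semidefinite Gram matrices $(\mu^\per_{\Lambda;\beta}(f_{\alpha,i}\cdot\tau f_{\beta,i}))_{\alpha,\beta}$ is positive semidefinite shows that the product measure is reflection positive. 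Consequently, the finite-volume chessboard and recursive chessboard estimates (Proposition \ref{prop:chessboard estimate} and Lemma \ref{lem:recursive chessboard estimate}) apply verbatim to $R$-local functions $\Omega^k\to\R$, and the chessboard seminorm $\norm{\cdot}_{R\mid\Lambda}$ has the usual seminorm properties. Second, iterating Proposition \ref{prop:comparison of boundary conditions - expectation} across the $k$ coordinates yields, for suitable $\Lambda_0\subset\Lambda\subset\Lambda_1$ and $\Lambda_0$-local $f:\Omega^k\to\R_{\ge 0}$,
\begin{equation*}
    \Bigl(\bigotimes_{i=1}^k \mu_i\Bigr)(f) \le C(\beta)^{k\Perimeter{\Lambda}}\,\Bigl(\bigotimes_{i=1}^k \mu^\per_{\Lambda_1;\beta}\Bigr)(f).
\end{equation*}

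Assuming without loss of generality that $f\ge 0$, I would fix $\Lambda_n=\mathrm{R}_{n!\times n!}$ and for each $n$ choose a family $A_n\subset T^R$ of reflections with $\bigcup_{\tau\in A_n}\tau R\subset\Lambda_n$ and $\abs{A_n}=\Theta((n!)^2)$, whereas $\Perimeter{\Lambda_n}=\Theta(n!)$. The key trick, following \cite[Proposition 3.8]{hadas2025columnar}, is to apply the comparison estimate to the heavily tiled function $g_n:=\prod_{\tau\in A_n}\tau f$ rather than $f$ itself. On the one hand, combining the product comparison estimate with the finite-volume chessboard estimate bounds
\begin{equation*}
    \Bigl(\bigotimes_i\mu_i\Bigr)(g_n)\le C(\beta)^{k\Perimeter{\Lambda_n'}}\norm{f}_{R\mid\Lambda_n'}^{\abs{A_n}}
\end{equation*}
for a slightly enlarged $\Lambda_n'\supset\Lambda_n$. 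On the other hand, iterated Cauchy--Schwarz applied to $\bigotimes_i\mu_i$ (using its reflection positivity, see below) yields $(\bigotimes_i\mu_i)(f)^{\abs{A_n}}\le(\bigotimes_i\mu_i)(g_n)$. Combining these two bounds and taking the $\abs{A_n}$-th root gives
\begin{equation*}
    \Bigl(\bigotimes_i\mu_i\Bigr)(f) \le C(\beta)^{k\Perimeter{\Lambda_n'}/\abs{A_n}}\norm{f}_{R\mid\Lambda_n'}.
\end{equation*}
Since $\Perimeter{\Lambda_n'}/\abs{A_n}=O(1/n!)\to 0$ as $n\to\infty$, the prefactor tends to $1$, while $\norm{f}_{R\mid\Lambda_n'}$ converges along the designated subsequence to $\norm{f}_R$, yielding the desired bound.

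The main obstacle is establishing reflection positivity of $\bigotimes_i\mu_i$ at the level of the infinite-volume product, which is required for the iterated Cauchy--Schwarz step. Since RP of $\bigotimes_i\mu_i$ reduces, via the Schur product argument above, to RP of each $\mu_i$, it suffices to show that each periodic Gibbs measure $\mu_i$ of the (reflection-invariant) monomer-dimer model is reflection positive. The defining inequality $\mu(h\cdot\tau h)\ge 0$ is linear in $\mu$ and preserved under weak limits and convex combinations, so it follows once one verifies that each periodic $\mu_i$ lies in the closed convex hull of weak limits of finite-volume periodic Gibbs measures $\mu^\per_{\Lambda_n;\beta}$ along subsequences whose periods are compatible with that of $\mu_i$. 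For periodic Gibbs measures of a translation- and reflection-invariant model, this decomposition is standard but will require some care in our setting. The remaining arguments transfer from \cite[Proposition 3.8]{hadas2025columnar} with only cosmetic modifications.
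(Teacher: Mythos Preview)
Your approach diverges from the paper's in a way that introduces a genuine gap. The paper does \emph{not} use reflection positivity of the infinite-volume product measure $\otimes_i\mu_i$, precisely because establishing it is problematic. Instead, the paper takes $A_n$ to consist of \emph{translations} $\{\eta_s : s\in\{M,\dots,nM\}^2\}$ (not reflections), sets $g:=1+\epsilon f$ for small $\epsilon$, and chains together: the comparison estimate to pass from $\otimes_i\mu_i$ to $\otimes_i\mu^{\per}_{\Lambda_m;\beta}$ on the tiled function $\prod_\tau\tau g$; the finite-volume chessboard estimate for the latter; \emph{Jensen's inequality} for the concave map $x\mapsto x^{1/\abs{A_n}}$ to get $(\otimes_i\mu_i)(\sqrt[\abs{A_n}]{\prod_\tau\tau g})$ below; and a first-order Taylor expansion using $M\Z^2$-invariance of $\otimes_i\mu_i$ to recover $(\otimes_i\mu_i)(g)=1+\epsilon(\otimes_i\mu_i)(f)+O(\epsilon^2)$. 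Sending $\epsilon\to 0$ finishes. The only structural input on $\otimes_i\mu_i$ is periodicity, which is immediate.

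Your iterated Cauchy--Schwarz step $(\otimes_i\mu_i)(f)^{\abs{A_n}}\le(\otimes_i\mu_i)(\prod_{\tau\in A_n}\tau f)$ genuinely requires reflection positivity of each $\mu_i$, and your proposed justification---that each periodic Gibbs measure lies in the closed convex hull of weak limits of $\mu^{\per}_{\Lambda_n;\beta}$---is false. In this very model at low temperature, $\mu_\ver$ is a periodic Gibbs measure, but every $\mu^{\per}_{\Lambda_n;\beta}$ is invariant under the diagonal reflection $(x,y)\mapsto(y,x)$, hence so is any weak limit or convex combination thereof; yet $\mu_\ver$ is not (it maps to $\mu_\hor$). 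So $\mu_\ver$ is outside that closed convex hull, and your route to RP for arbitrary periodic $\mu_i$ collapses. Whether such measures are nonetheless RP is a separate, nontrivial question that you have not addressed; the paper's Jensen-plus-translation argument sidesteps it entirely.
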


The proof of Lemma~\ref{lem:infinite-volume chessboard estimate for product measures - single function} relies on the following claims.

\begin{claim}
    \label{clm:infinite-volume chessboard estimate for product measures - change of measures}
    Let $\Lambda_0\subset\Lambda\subset\Lambda_1$ be rectangles such that the Euclidean distance from $\Lambda_0$ to $\R^2\setminus\Lambda$ is at least $2$. 
    Let $f:\Omega^k\to\R_{\ge 0}$ be $\Lambda_0$-local.
    Let $(\mu_i)_{i=1}^k$ be periodic Gibbs measures.
    Then, with the same constant $C(\beta)\ge 1$ as in Proposition~\ref{prop:comparison of boundary conditions - expectation},
    \begin{equation}
        (\otimes_{i=1}^k \mu_i)(f)\le C(\beta)^{k\Perimeter{\Lambda}}(\otimes_{i=1}^k \mu_{\Lambda_1;\beta}^\per)(f).
    \end{equation}
\end{claim}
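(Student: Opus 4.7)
The plan is to prove the claim by applying Proposition~\ref{prop:comparison of boundary conditions - expectation} coordinate-by-coordinate, swapping each infinite-volume factor $\mu_i$ for the finite-volume periodic measure $\mu^\per_{\Lambda_1;\beta}$ at the cost of a single factor of $C(\beta)^{\Perimeter{\Lambda}}$. Since the total number of swaps is $k$, this yields the claimed factor $C(\beta)^{k\Perimeter{\Lambda}}$.

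More concretely, I would proceed by induction on $j\in\set{0,1,\dots,k}$, showing that
\begin{equation}
    (\otimes_{i=1}^k \mu_i)(f)\le C(\beta)^{j\Perimeter{\Lambda}}\left((\otimes_{i=1}^j\mu^\per_{\Lambda_1;\beta})\otimes(\otimes_{i=j+1}^k\mu_i)\right)(f).
\end{equation}
The base case $j=0$ is trivial. For the inductive step from $j$ to $j+1$, I would fix the configurations $(\sigma_i)_{i\ne j+1}$ and view the integrand as a function of $\sigma_{j+1}$ alone. Because $f$ is $\Lambda_0$-local as a function on $\Omega^k$, the resulting function $\sigma_{j+1}\mapsto f(\sigma_1,\dots,\sigma_k)$ is nonnegative and $\Lambda_0$-local on $\Omega$. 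Since $\mu_{j+1}$ is a periodic (in particular, Gibbs) measure, Proposition~\ref{prop:comparison of boundary conditions - expectation} applies and gives
\begin{equation}
    \mu_{j+1}\left(f(\sigma_1,\dots,\sigma_k)\right)\le C(\beta)^{\Perimeter{\Lambda}}\mu^\per_{\Lambda_1;\beta}\left(f(\sigma_1,\dots,\sigma_k)\right)
\end{equation}
pointwise in the remaining variables. Integrating this inequality against the product of the other factors using Fubini closes the induction and yields the claim at $j=k$.

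The only point that needs a line of justification is that partial integration preserves $\Lambda_0$-locality: after integrating out $\sigma_{j+1}$, the resulting function of $(\sigma_i)_{i\ne j+1}$ depends only on their restrictions to $\Lambda_0$ because $f$ itself depends only on restrictions to $\Lambda_0$. Apart from this bookkeeping, there is no real obstacle — the argument is a routine Fubini-plus-single-variable-comparison reduction, and the precise form of the interactions (or the non-nearest-neighbor broken-link potential~\eqref{eqn:broken-link potential}) plays no role beyond ensuring that Proposition~\ref{prop:comparison of boundary conditions - expectation} is available for each factor.
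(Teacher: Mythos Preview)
Your proof is correct and in fact more direct than the paper's. The paper argues first for indicators of measurable rectangles $B = B_1 \times \cdots \times B_k$, where the product measure factorizes and Proposition~\ref{prop:comparison of boundary conditions - expectation} applies to each factor separately; it then extends to indicators of arbitrary measurable sets via the outer-measure characterization from Carath\'eodory's extension theorem, to nonnegative simple functions by linearity, and finally to general nonnegative measurable functions by monotone convergence. Your Fubini-plus-coordinate-swap induction bypasses all of this machinery: since each single-variable section $\sigma_{j+1}\mapsto f(\sigma_1,\dots,\sigma_k)$ is itself $\Lambda_0$-local and nonnegative, Proposition~\ref{prop:comparison of boundary conditions - expectation} applies directly at each coordinate without any measure-theoretic approximation. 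Both arguments ultimately invoke the single-measure comparison $k$ times, but yours does so in one pass and makes the exponent $k\Perimeter{\Lambda}$ transparent. (In fact, since $\Lambda_0$ contains only finitely many edges, any $\Lambda_0$-local $f\ge 0$ is already a finite nonnegative combination of indicators of cylinder rectangles, so the Carath\'eodory and monotone-convergence steps in the paper's proof are not actually needed here.)
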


\begin{proof}
    If $f$ is the indicator function of a measurable rectangle, the product measure factorize into a product of measures, and the inequality follows from Proposition~\ref{prop:comparison of boundary conditions - expectation} for all $m>m_0$: note that the Euclidean distance between $\Lambda_{m_0}$ and $\R^2\setminus\Lambda_m$ is indeed at least $2$ when $m>m_0\ge 2$.
    Next, if $f$ is the indicator function of a measurable subset $B\subset\Omega^k$, the inequality follows from the construction of product measures via Carath\'eodory's extension theorem, i.e., the identity
    \begin{equation}
        \begin{multlined}
            (\otimes_{i=1}^k \mu_i)(B)\\
            =\inf\set{\sum_{j=1}^\infty(\otimes_{i=1}^k \mu_i)(B_j)\mid B_j\text{ is a finite union of measurable rectangles}, B\subset\cup_{j=1}^\infty B_j}
        \end{multlined}
    \end{equation}
    and an analogue for $(\otimes_{i=1}^k \mu_{\Lambda_m}^\per)(B)$.
    By linearity, the inequality extends to non-negative simple functions, and finally to all non-negative measurable functions by the monotone convergence theorem.
\end{proof}

\begin{claim}
    \label{clm:infinite-volume chessboard estimate for product measures - limsup bound}
    Let $(\mu_i)_{i=1}^k$ be periodic Gibbs measures, $R$ be a rectangle, and $g:\Omega^k\to\R_{\ge 0}$ be $R$-local. 
    Let $(A_n)_{n\ge 1}$ be non-empty, finite subsets of $T^R$.
    Suppose that $\abs{A_n}\to\infty$ as $n\to\infty$, and, denoting by $R_n$ the smallest rectangle containing $\cup_{\tau\in A_n}\tau R$, that $\Perimeter{R_n}/\abs{A_n}\to 0$ as $n\to\infty$.
    Then,
    \begin{equation}
        \limsup_{n\to\infty}\sqrt[\abs{A_n}]{(\otimes_{i=1}^k \mu_i)\left(\prod_{\tau\in A_n}\tau g\right)}
        \le\norm{g}_R.
    \end{equation}
\end{claim}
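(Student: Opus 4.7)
The proof adapts the argument for the single-measure case in~\cite{hadas2025columnar}. The central observation, and the main subtle point, is that the change-of-measures penalty in Claim~\ref{clm:infinite-volume chessboard estimate for product measures - change of measures} depends only on the perimeter of the intermediate rectangle $\Lambda$, not on the enclosing volume $\Lambda_1$. This decoupling allows us to take $\Lambda$ only slightly larger than $R_n$---so that its perimeter stays comparable to $\Perimeter{R_n}$---while simultaneously taking $\Lambda_1$ to be a factorial-sized rectangle $\mathrm{R}_{m!\times m!}$ with $m$ arbitrarily large, as required to approach the limsup appearing in the definition of $\norm{g}_R$.

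Concretely, I would fix $\epsilon>0$ and choose $M$ such that $\norm{g}_{R\mid\mathrm{R}_{m!\times m!}}\le\norm{g}_R+\epsilon$ for all $m\ge M$, which is possible by~\eqref{eqn:infinite-volume chessboard seminorm}. For each $n$, set $\Lambda_0^{(n)}:=R_n$, let $\Lambda^{(n)}$ denote the rectangle obtained by enlarging $R_n$ by $2$ on each side, and choose $m(n)\ge M$ large enough so that $\mathrm{R}_{m(n)!\times m(n)!}$ strictly contains $\Lambda^{(n)}$ and has $R$ as a block. Since $R_n$ contains $\abs{A_n}$ non-overlapping copies of $R$, we have $\Perimeter{R_n}\to\infty$, so such $m(n)$ exists (and in fact $m(n)\to\infty$). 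Apply Claim~\ref{clm:infinite-volume chessboard estimate for product measures - change of measures} to the $\Lambda_0^{(n)}$-local, non-negative function $f:=\prod_{\tau\in A_n}\tau g$ with the triple $(\Lambda_0^{(n)},\Lambda^{(n)},\mathrm{R}_{m(n)!\times m(n)!})$, followed by the chessboard estimate (Proposition~\ref{prop:chessboard estimate}) for the product measure $\otimes_{i=1}^k\mu_{\mathrm{R}_{m(n)!\times m(n)!};\beta}^\per$, which remains reflection positive as noted above and for which the elements of $A_n$ inject into $T^R_{\mathrm{R}_{m(n)!\times m(n)!}}$ since $R_n\subsetneq\mathrm{R}_{m(n)!\times m(n)!}$ prevents wrap-around identifications. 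This yields
\begin{equation*}
    (\otimes_{i=1}^k\mu_i)\left(\prod_{\tau\in A_n}\tau g\right)\le C(\beta)^{k\Perimeter{\Lambda^{(n)}}}\norm{g}_{R\mid\mathrm{R}_{m(n)!\times m(n)!}}^{\abs{A_n}}.
\end{equation*}

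Taking $\abs{A_n}$-th roots and using $m(n)\ge M$, the right-hand side is at most $C(\beta)^{k\Perimeter{\Lambda^{(n)}}/\abs{A_n}}(\norm{g}_R+\epsilon)$. Since $\Perimeter{\Lambda^{(n)}}=\Perimeter{R_n}+O(1)$ and $\Perimeter{R_n}/\abs{A_n}\to 0$ by hypothesis, the prefactor tends to $1$. Taking $\limsup_{n\to\infty}$ and then letting $\epsilon\to 0$ completes the proof. The hard part, as flagged at the outset, is arranging that $m(n)$ can diverge (so that the limsup defining $\norm{g}_R$ is approached) while the change-of-measures exponent remains of order $o(\abs{A_n})$; this is precisely what the perimeter-only dependence of the change-of-measures bound delivers.
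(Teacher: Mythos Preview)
Your proof is correct and follows essentially the same route as the paper's: apply the change-of-measures estimate (Claim~\ref{clm:infinite-volume chessboard estimate for product measures - change of measures}) with an intermediate rectangle only slightly larger than $R_n$, then the chessboard estimate on the factorial-sized torus, and conclude using $\Perimeter{R_n}/|A_n|\to 0$. The only cosmetic difference is that the paper takes $\limsup_{m\to\infty}$ first (for each fixed $n$) and then $\limsup_{n\to\infty}$, whereas you pre-select a threshold $M$ via an $\epsilon$-argument; these are equivalent ways of extracting the limsup defining $\norm{g}_R$.
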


\begin{proof}
    Fix $n\ge 1$.
    For $m\ge 2$, define $\Lambda_m:=\rect{m!\times m!}{(-(m!+1)/2,-(m!+1)/2)}$.
    Let $R_n'\supset R_n$ be the smallest rectangle such that the Euclidean distance from $R_n$ to $\R^2\setminus R_n'$ is at least $2$, and $m(n)\ge 2$ be such that $R_n'\subset\Lambda_{m(n)}$.
    Suppose that $m\ge m(n)$.
    Since $\prod_{\tau\in A_n}\tau g$ is $R_n$-local, using Claim~\ref{clm:infinite-volume chessboard estimate for product measures - change of measures}, we get that
    \begin{equation}
        (\otimes_{i=1}^k \mu_i)\left(\prod_{\tau\in A_n}\tau g\right)\le C(\beta)^{k\Perimeter{R_n'}}(\otimes_{i=1}^k \mu_{\Lambda_m;\beta}^\per)\left(\prod_{\tau\in A_n}\tau g\right).
    \end{equation} 
    When $m\ge m(n)$ is so large that $R$ is a block of $\Lambda_m$, by the usual chessboard estimate applied to $\otimes_{i=1}^k \mu_{\Lambda_m;\beta}^\per$, we have that
    \begin{equation}
        (\otimes_{i=1}^k \mu_{\Lambda_m;\beta}^\per)\left(\prod_{\tau\in A_n}\tau g\right)
        \le\norm{g}_{R\mid\Lambda_m}^{\abs{A_n}}.
    \end{equation}
    Hence,
    \begin{equation}
        \sqrt[\abs{A_n}]{(\otimes_{i=1}^k \mu_i)\left(\prod_{\tau\in A_n}\tau g\right)}
        \le C(\beta)^{k\Perimeter{R_n'}/\abs{A_n}}\norm{g}_{R\mid\Lambda_m}.
    \end{equation}
    Using that $\Perimeter{R_n'}=\Perimeter{R_n}+16$, taking limits superior of both sides as $m\to\infty$ and then as $n\to\infty$ completes the proof.
\end{proof}

\begin{claim}
    \label{clm:infinite-volume chessboard estimate for product measures - Taylor expansion}
    Let $M\in\Z_{>0}$, $S_n\subset M\Z^2$ be finite, and $A_n:=\set{\eta_s\mid s\in S_n}$ where $\eta_s:\R^2\to\R^2$ is the translation $u\mapsto u+s$.
    Let $0<\epsilon<1/2$ and $g:\Omega^k\to[1-\epsilon,1+\epsilon]$ be measurable.
    If $\otimes_{i=1}^k \mu_i$ is $M\Z^2$-invariant, then
    \begin{equation}
        (\otimes_{i=1}^k \mu_i)\left(\sqrt[\abs{A_n}]{\prod_{\tau\in A_n}\tau g}\right)
        =(\otimes_{i=1}^k \mu_i)(g)+\order{\epsilon^2}.
    \end{equation}
\end{claim}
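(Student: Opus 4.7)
The plan is a pointwise Taylor expansion followed by an application of translation invariance. Write $g = 1 + h$ with $h : \Omega^k \to [-\epsilon, \epsilon]$ measurable; note that each reflection/translation in $T^R$ acts on functions via $(\tau g)(\sigma_1,\dots,\sigma_k) = g(\sigma_1 \circ \tau,\dots,\sigma_k \circ \tau)$, so $\log(\tau g) = \tau(\log g)$ pointwise wherever $g > 0$, which is guaranteed since $g \ge 1 - \epsilon > 1/2$.

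First I would rewrite the geometric mean using the logarithm:
\begin{equation}
    \sqrt[\abs{A_n}]{\prod_{\tau\in A_n}\tau g}
    = \exp\left(\frac{1}{\abs{A_n}}\sum_{\tau\in A_n}\tau(\log g)\right).
\end{equation}
Since $g$ takes values in $[1-\epsilon, 1+\epsilon] \subset [1/2, 3/2]$, Taylor's theorem for $\log(1+x)$ around $x=0$ yields a constant $C > 0$ (independent of everything) such that $|\log g - h| \le C\epsilon^2$ pointwise, and hence also $|\tau(\log g) - \tau h| \le C\epsilon^2$ pointwise for every $\tau$. Averaging over $\tau \in A_n$ gives
\begin{equation}
    \left|\frac{1}{\abs{A_n}}\sum_{\tau\in A_n}\tau(\log g) - \frac{1}{\abs{A_n}}\sum_{\tau\in A_n}\tau h\right| \le C\epsilon^2.
\end{equation}
Since the bracketed average of $\tau h$ is itself bounded by $\epsilon$ in absolute value, the whole exponent lies in a fixed compact set (say $[-2\epsilon, 2\epsilon]$ for small $\epsilon$), so a second application of Taylor's theorem to $\exp$ around $0$ gives, pointwise,
\begin{equation}
    \exp\left(\frac{1}{\abs{A_n}}\sum_{\tau\in A_n}\tau(\log g)\right)
    = 1 + \frac{1}{\abs{A_n}}\sum_{\tau\in A_n}\tau h + \order{\epsilon^2},
\end{equation}
with the $\order{\epsilon^2}$ uniform in the joint configuration and in $n$.

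Finally I would integrate against $\otimes_{i=1}^k \mu_i$ and exploit the $M\Z^2$-invariance hypothesis: since each $\tau \in A_n$ is a translation by an element of $M\Z^2$, we have $(\otimes_{i=1}^k \mu_i)(\tau h) = (\otimes_{i=1}^k \mu_i)(h)$, whence
\begin{equation}
    (\otimes_{i=1}^k \mu_i)\left(\sqrt[\abs{A_n}]{\prod_{\tau\in A_n}\tau g}\right)
    = 1 + (\otimes_{i=1}^k \mu_i)(h) + \order{\epsilon^2}
    = (\otimes_{i=1}^k \mu_i)(g) + \order{\epsilon^2},
\end{equation}
as desired. The only mildly subtle point is keeping the two Taylor error terms uniform in the joint configuration, which is what the hypothesis $\epsilon < 1/2$ (bounding $g$ away from $0$ and keeping the exponent in a compact set) is there to guarantee; everything else is routine calculus combined with the invariance of the product measure under $A_n$.
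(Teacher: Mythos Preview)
Your proof is correct and follows exactly the approach the paper indicates: it omits its own proof, stating only that the claim ``is by a direct application of the Taylor expansion'' and referring to \cite[Proof of Claim 3.11]{hadas2025columnar} for details. Your pointwise expansion $\log g=h+O(\epsilon^2)$, followed by averaging, re-exponentiating, and invoking $M\Z^2$-invariance to replace each $(\otimes_i\mu_i)(\tau h)$ by $(\otimes_i\mu_i)(h)$, is precisely that argument carried out in full.
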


The proof of Claim~\ref{clm:infinite-volume chessboard estimate for product measures - Taylor expansion} is by a direct application of the Taylor expansion and does not depend on the specifics of the model or the measure, so we omit it here and refer the reader to~\cite[Proof of Claim 3.11]{hadas2025columnar} for details.
We are now ready to prove Lemma~\ref{lem:infinite-volume chessboard estimate for product measures - single function}.

\begin{proof}[Proof of Lemma~\ref{lem:infinite-volume chessboard estimate for product measures - single function}]
    Let $(\mu_i)_{i=1}^k$ be periodic Gibbs measures.
    By taking the intersection of full-rank sublattices with respect to which each $\mu_i$ is invariant, we find a full-rank sublattice $\cL\subset\Z^2$ such that $\otimes_{i=1}^k \mu_i$ is $\cL$-invariant.
    Let $M\in\Z_{>0}$ be such that $M\Z^2\subset\cL\cap2\cL^R$.
    For each $n\in\Z_{>0}$, let $S_n:=\set{M,\dots,nM}^2$ and $A_n$ be as in Claim~\ref{clm:infinite-volume chessboard estimate for product measures - Taylor expansion}. 
    For $0<\epsilon<\frac{1}{2\max\abs{f}}$, let $g:=1+\epsilon f$.
    Then,
    \begin{equation}
        \begin{multlined}
            1+\epsilon\norm{f}_R
            \ge\norm{g}_R
            \ge\limsup_{n\to\infty}\sqrt[n^2]{(\otimes_{i=1}^k \mu_i)\left(\prod_{\tau\in A_n}\tau g\right)}
            \ge\limsup_{n\to\infty}(\otimes_{i=1}^k \mu_i)\left(\sqrt[n^2]{\prod_{\tau\in A_n}\tau g}\right)
            \\
            \ge 1+\epsilon(\otimes_{i=1}^k \mu_i)(f)+\order{\epsilon^2},
        \end{multlined}
    \end{equation}
    where we used Claim~\ref{clm:infinite-volume chessboard estimate for product measures - limsup bound} in the second, Jensen's inequality in the third inequality, and Claim~\ref{clm:infinite-volume chessboard estimate for product measures - Taylor expansion} in the last.
    Taking $\epsilon\to0$ completes the proof.
\end{proof}

To prove Proposition~\ref{prop:infinite-volume chessboard estimate for product measures}, we will also need the following extension of Lemma~\ref{lem:recursive chessboard estimate}, which follows directly by applying the definition~\eqref{eqn:infinite-volume chessboard seminorm}.

\begin{lemma}
    \label{lem:infinite-volume recursive chessboard estimate for product measures}
    Let $R$ and $S$ be blocks of $\Lambda$, and suppose that the corners of $S$ are in $G^R$.
    Let $A\subset T^R$ be such that $\cup_{\tau\in A}\tau R\subset S$ and $(f_\tau)_{\tau\in A}$ be $R$-local functions.
    Then,
    \begin{equation}
        \norm{\prod_{\tau\in A}\tau f_\tau}_{S}\le\prod_{\tau\in A}\norm{f_\tau}_{R}.
    \end{equation}
\end{lemma}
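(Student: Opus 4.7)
The plan is to derive the stated inequality by applying the finite-volume recursive chessboard estimate (Lemma~\ref{lem:recursive chessboard estimate}) to the reflection-positive product measure and then passing to the $\limsup$ that defines $\norm{\cdot}_R$. As noted in the paragraph above~\eqref{eqn:infinite-volume chessboard seminorm}, any finite product $\otimes_{i=1}^k\mu^\per_{\Lambda;\beta}$ is reflection positive across vertices of $\V$ (being a tensor product of measures each of which is), so the proof of Lemma~\ref{lem:recursive chessboard estimate} applies verbatim with $\mu^\per_{\Lambda;\beta}$ replaced by the product, and the same inequality holds for the $(R\mid\Lambda)$- and $(S\mid\Lambda)$-chessboard seminorms built from the product measure.

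The key specialization is then to set $\Lambda=\mathrm{R}_{n!\times n!}$. Since the corners of $S$ lie in $G^R$, the side lengths of $R$ divide those of $S$, so for all sufficiently large $n$ the quantities $2\Width{R},2\Height{R},2\Width{S},2\Height{S}$ all divide $n!$ and hence both $R$ and $S$ are blocks of $\mathrm{R}_{n!\times n!}$. For each such $n$, the product-measure version of Lemma~\ref{lem:recursive chessboard estimate} gives
\begin{equation}
    \norm{\prod_{\tau\in A}\tau f_\tau}_{S\mid\mathrm{R}_{n!\times n!}}\le\prod_{\tau\in A}\norm{f_\tau}_{R\mid\mathrm{R}_{n!\times n!}}.
\end{equation}
I would then take $\limsup_{n\to\infty}$ of both sides. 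By the definition~\eqref{eqn:infinite-volume chessboard seminorm}, the left-hand side becomes exactly $\norm{\prod_{\tau\in A}\tau f_\tau}_{S}$.

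For the right-hand side, I would invoke the elementary fact that for any finite index set $A$ and non-negative sequences $(a^{(n)}_\tau)_{n\ge 1}$, one has $\limsup_{n}\prod_{\tau\in A}a^{(n)}_\tau\le\prod_{\tau\in A}\limsup_{n}a^{(n)}_\tau$ (a consequence of $\sup_{n'\ge n}\prod_\tau a^{(n')}_\tau\le\prod_\tau\sup_{n'\ge n}a^{(n')}_\tau$ followed by letting $n\to\infty$). This applies here because each chessboard seminorm $\norm{f_\tau}_{R\mid\mathrm{R}_{n!\times n!}}$ is non-negative, being a $\abs{T^R_{\mathrm{R}_{n!\times n!}}}$-th root of an expectation made non-negative by iterated reflection positivity. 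Combining the two sides yields $\norm{\prod_{\tau\in A}\tau f_\tau}_S\le\prod_{\tau\in A}\norm{f_\tau}_R$, as desired.

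There is no real obstacle in this argument; the author's claim that the lemma ``follows directly by applying the definition'' is accurate. The only items worth noting explicitly are the reflection positivity of product measures (immediate from the tensor-product form of the reflection operator) and the $\limsup$-of-product inequality, both of which are routine.
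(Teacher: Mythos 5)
Your proof is correct and follows exactly the route the paper intends. The paper remarks that this lemma ``follows directly by applying the definition'' of $\norm{\cdot}_R$, which is precisely your two-step argument: apply the finite-volume recursive chessboard estimate (valid for the reflection-positive product measure, as the paper notes) at $\Lambda=\mathrm{R}_{n!\times n!}$, then pass to the $\limsup$ using the elementary inequality for the $\limsup$ of a finite product of bounded nonnegative sequences.
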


We now deduce Proposition~\ref{prop:infinite-volume chessboard estimate for product measures}.

\begin{proof}[Proof of Proposition~\ref{prop:infinite-volume chessboard estimate for product measures}]
    Let $S$ be a rectangle whose corners lie on $G^R$ and which contains $\cup_{\tau\in A}\tau R$.
    Combining Lemmas~\ref{lem:infinite-volume chessboard estimate for product measures - single function} and~\ref{lem:infinite-volume recursive chessboard estimate for product measures}, we get that
    \begin{equation}
        (\otimes_{i=1}^k \mu_i)\left(\prod_{\tau\in A}\tau f_\tau\right)
        \le\norm{\prod_{\tau\in A}\tau f_\tau}_S
        \le\prod_{\tau\in A}\norm{f_\tau}_R,
    \end{equation}
    as required.
\end{proof}

\subsubsection{Recursive chessboard estimates for off-grid events}

We conclude this section by proving the following extension of Lemma~\ref{lem:recursive chessboard estimate} applicable in cases where the corners of $S$ do not lie on $G^R$ but the events in question enjoy additional symmetries.
It will only be used in the proof of Lemma~\ref{lem:sealed rectangles - no horizontal dimers}.

\begin{lemma}
    \label{lem:recursive chessboard estimate for off-grid events}
    Let $R$ and $S$ be blocks of $\Lambda$.
    Suppose that $G^R$ is invariant under $T_\Lambda^S$.
    Let $A\subset T^R$ be such that $\cup_{\tau\in A}\tau R\subset S$ and $(E_\tau)_{\tau\in A}$ be $R$-local events that are invariant under reflections through the horizontal and vertical lines passing through the center of $R$.
    Then,
    \begin{equation}
        \norm{\bigcap_{\tau\in A}\tau E_\tau}_{S\mid\Lambda}\le\prod_{\tau\in A}\norm{E_\tau}_{R\mid\Lambda}.
    \end{equation}
    Consequently,
    \begin{equation}
        \norm{\bigcap_{\tau\in A}\tau E_\tau}_{S}\le\prod_{\tau\in A}\norm{E_\tau}_{R}.
    \end{equation}
\end{lemma}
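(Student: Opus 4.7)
The plan is to reduce the off-grid seminorm on $S$ to an ordinary chessboard estimate on $R$ over $\Lambda$, using the hypothesized central symmetry of each $E_\tau$ to absorb the orientation ambiguity created by reflections in $T^S$ that are not themselves in $T^R$. I start from the unfolding
\[
\norm{\bigcap_{\tau\in A}\tau E_\tau}_{S\mid\Lambda}^{\abs{T_\Lambda^S}}
= \mu^\per_{\Lambda;\beta}\!\left(\prod_{\vartheta\in T_\Lambda^S}\prod_{\tau\in A}\vartheta\tau\,\indicator{E_\tau}\right),
\]
which combines the definition of the seminorm with $\indicator{\bigcap_\tau \tau E_\tau}=\prod_\tau \tau\,\indicator{E_\tau}$. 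The key algebraic observation is that, for each $\vartheta\in T^S$ and $\tau\in T^R$ with $\tau R\subset S$, the hypothesis that $G^R$ is invariant under $T_\Lambda^S$ produces a unique $v\in G^R$ with $\vartheta\tau R=\rect{K\times L}{v}$, and since $\tau_{R,v}$ is the unique element of $T^R$ mapping $R$ onto this cell, the difference $s:=\tau_{R,v}^{-1}\vartheta\tau$ stabilizes $R$ setwise and therefore lies in the Klein four-group generated by the two central reflections of $R$. The symmetry hypothesis forces $s\,\indicator{E_\tau}=\indicator{E_\tau}$, giving $\vartheta\tau\,\indicator{E_\tau}=\tau'_{\vartheta,\tau}\,\indicator{E_\tau}$ with $\tau'_{\vartheta,\tau}:=\tau_{R,v}\in T^R$.

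After this substitution, standard bookkeeping finishes the proof. As $(\vartheta,\tau)$ ranges over $T_\Lambda^S\times A$, the cells $\vartheta\tau R=\tau'_{\vartheta,\tau}R$ are pairwise distinct on the torus $\Lambda$---distinct $\vartheta$ land them in distinct $S$-cells tiling $\Lambda$, while distinct $\tau\in A$ within a fixed $\vartheta S$ yield distinct $R$-cells---so the $\abs{T_\Lambda^S}\cdot\abs{A}$ elements $\tau'_{\vartheta,\tau}$ are pairwise distinct in $T_\Lambda^R$. Applying Proposition~\ref{prop:chessboard estimate} to the displayed product yields
\[
\mu^\per_{\Lambda;\beta}\!\left(\prod_{\vartheta,\tau}\tau'_{\vartheta,\tau}\,\indicator{E_\tau}\right)\le\prod_{\tau\in A}\norm{E_\tau}_{R\mid\Lambda}^{\abs{T_\Lambda^S}},
\]
and taking $\abs{T_\Lambda^S}$-th roots gives the finite-volume inequality. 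For the infinite-volume consequence, I would specialize to $\Lambda_n:=\mathrm{R}_{n!\times n!}$, which is eventually a common block of $R$ and $S$, and pass to $\limsup$ using that $A$ is finite and each $\norm{E_\tau}_{R\mid\Lambda_n}$ is bounded in $[0,1]$: extract a subsequence realizing the $\limsup$ on the left, then a further subsequence along which each of the finitely many factors on the right converges.

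The hard part, such as it is, is the algebraic identification in the first paragraph---recognizing that when the corners of $S$ lie off $G^R$, the obstruction to $\vartheta\tau$ being an element of $T^R$ is precisely the Klein four-group stabilizer of $R$ in the Euclidean group. Once this is noted, the symmetry hypothesis on the events is tailored exactly to neutralize the obstruction, and everything else collapses into a routine invocation of the ordinary chessboard estimate at the $R$ scale.
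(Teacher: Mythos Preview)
Your proof is correct and follows essentially the same route as the paper's own argument: unfold the $S$-seminorm, use the $T_\Lambda^S$-invariance of $G^R$ together with the central symmetry of each $E_\tau$ to replace $\vartheta\tau$ by an element of $T_\Lambda^R$, check injectivity of $(\vartheta,\tau)\mapsto\tau'_{\vartheta,\tau}$, and apply the ordinary chessboard estimate. Your explicit identification of the obstruction as the Klein four-group stabilizer of $R$ is a helpful elaboration of what the paper leaves implicit; for the infinite-volume statement, your subsequence extraction is more than needed, since for bounded nonnegative sequences one has $\limsup\prod_\tau a_{\tau,n}\le\prod_\tau\limsup a_{\tau,n}$ directly.
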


\begin{proof}
    We have that
    \begin{equation}
        \label{eqn:recursive chessboard estimate for off-grid events - S norm}
        \norm{\bigcap_{\tau\in A}\tau E_\tau}_{S\mid\Lambda}^{\abs{T_\Lambda^S}}
        =\mu_{\Lambda;\beta}^\per\left(\bigcap_{\kappa\in T_\Lambda^S}\bigcap_{\tau\in A}\kappa\tau E_\tau\right).
    \end{equation}
    Since $G^R$ is invariant under $T_\Lambda^S$, for each $\kappa\in T_\Lambda^S$ and $\tau\in A$, there exists $t_{\kappa,\tau}\in T^R_\Lambda$ such that $\kappa\tau R=t_{\kappa,\tau}R$.
    Moreover, by the invariance assumption on $E_\tau$, we have that $\kappa\tau E_\tau=t_{\kappa,\tau} E_\tau$.
    Finally, since $\cup_{\tau\in A}\tau R\subset S$, the mapping $(\kappa,\tau)\mapsto t_{\kappa,\tau}$ is injective.
    Therefore, by Proposition~\ref{prop:chessboard estimate}
    \begin{equation}
        \label{eqn:recursive chessboard estimate for off-grid events - chessboard estimate}
        \mu_{\Lambda;\beta}^\per\left(\bigcap_{\kappa\in T_\Lambda^S}\bigcap_{\tau\in A}\kappa\tau E_\tau\right)
        =\mu_{\Lambda;\beta}^\per\left(\bigcap_{\kappa\in T_\Lambda^S}\bigcap_{\tau\in A}t_{\kappa,\tau}E_\tau\right)
        \le\prod_{\tau\in A}\norm{E_\tau}_{R\mid\Lambda}^{\abs{T_\Lambda^S}}.
    \end{equation}
    Combining~\eqref{eqn:recursive chessboard estimate for off-grid events - S norm} and~\eqref{eqn:recursive chessboard estimate for off-grid events - chessboard estimate} proves the first claim.
    The second claim follows directly by taking limits superior as in the definition~\eqref{eqn:infinite-volume chessboard seminorm}.
\end{proof}

\subsection{Strongly percolating sets}
\label{sec:strongly percolating sets}

A recurrent theme in our analysis is to take a local property of dimer configurations---the existence of long dimer chains through the bulk of a fixed-size rectangle, for instance---and quantify its prevalence in typical dimer configurations via percolation-type arguments.
In this section, we import the notion of strongly percolating sets and their properties from~\cite[Section 6]{hadas2025columnar}, which are well-suited for this purpose.

We start with some notation.
Let $E\subset\Omega$ be an event, which should be thought of as a local property of dimer configurations.
Define the random set $\es{E}:\Omega\to 2^{\Z^2}$ by 
\begin{equation}
    \es{E}(\sigma):=\set{(x,y)\in\Z^2\mid\sigma\in\eta_{(x,y)}E},
\end{equation}
where we recall that $\eta_{(x,y)}$ denotes the translation by $(x,y)$.
Thus, $\es{E}$ is the random set of locations at which the local property $E$ holds.
Given $K,L\in\Z_{>0}$, define the random set
\begin{equation}
    \xre{K\times L}{\es{E}}:=\set{(x,y)\in\Z^2\mid(Kx,Ly)\in\es{E}},
\end{equation}
which tracks the manifestation of the local property $E$ on the scale of the rectangular grid $K\Z\times L\Z$.
The above definitions generalize naturally to \emph{joint} properties of multiple, independently sampled dimer configurations, i.e., events $E\subset\Omega^k$, $k\ge 2$.

We now define strongly percolating sets.
Fix $\epsilon_0:=1/21$ throughout the paper.
Let $\epsilon\ge 0$.
A random set $B\subset\Z^2$ on a probability space $(\Omega,\cF,\P)$ is \textbf{$\epsilon$-rare} if, for all finite $A\subset\Z^2$, $\P(A\subset B)\le\epsilon^{\abs{A}}$.
A random set $\Psi\subset\Z^2$ on $(\Omega,\cF,\P)$ is \textbf{$\epsilon$-strongly percolating} if either $\epsilon\ge\epsilon_0$ or there exists an $\epsilon$-rare set $B$ such that $\Psi$ $\P$-almost surely contains an infinite $\Box$-component of $\Z^2\setminus B$.
Define
\begin{equation}
    \peierls{\P}(\Psi):=\inf\set{\epsilon\ge 0\mid\Psi\text{ is }\epsilon\text{-strongly percolating}}.
\end{equation}
We mention here that the quantity $\peierls{\P}(\Psi)$ may be thought of as a measure of the size of $\Psi$, with smaller values indicating larger random sets; see~\cite[Section 6.1]{hadas2025columnar} for further discussion.

The following results are direct imports from~\cite[Section 6]{hadas2025columnar}.

\begin{lemma}[{\cite[Lemma 6.2]{hadas2025columnar}}]
    \label{lem:Peierls argument}
    If $\epsilon<\epsilon_0$ and $B$ is $\epsilon$-rare with respect to a probability measure $\P$, then $\Z^2\setminus B$ $\P$-almost surely contains a unique infinite $\Box$-component.
\end{lemma}

\begin{corollary}[{\cite[Corollary 6.3]{hadas2025columnar}}]
    \label{cor:strong percolation of complement of rare set}
    If $\epsilon\ge 0$ and $B$ is $\epsilon$-rare with respect to a probability measure $\P$, then $\Z^2\setminus B$ is $\epsilon$-strongly percolating, and in particular $\peierls{\P}(\Z^2\setminus B)\le\epsilon$.
\end{corollary}

\begin{lemma}[{\cite[Lemma 6.4]{hadas2025columnar}}]
    \label{lem:quantitative Peierls argument}
    Let $u\in\Z^2$, $d\ge 0$, and $\Psi\subset\Z^2$ be a random set on a probability space $(\Omega,\cF,\P)$.
    Let $E$ be the event that there exists a $\boxtimes$-path in $\Z^2\setminus\Psi$ starting at $u$ and ending at some point in $\set{v\in\Z^2\mid\norm{v-u}_\infty\ge d}$.
    Then, 
    \begin{equation}
        \P(E)\le\left(\frac{\peierls{\P}(\Psi)}{\epsilon_0}\right)^{d+1}.
    \end{equation}
\end{lemma}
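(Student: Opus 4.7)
The plan is to split into two regimes. When $\peierls{\P}(\Psi)\ge\epsilon_0$, the right-hand side $(\peierls{\P}(\Psi)/\epsilon_0)^{d+1}$ is at least $1$, so the bound holds trivially since $\P(E)\le 1$. Otherwise, fix any $\epsilon\in(\peierls{\P}(\Psi),\epsilon_0)$; the infimum definition of $\peierls{\P}(\Psi)$ then yields an $\epsilon$-rare random set $B$ such that $\Psi$ almost surely contains an infinite $\Box$-component $C$ of $\Z^2\setminus B$. The goal in this regime is to prove $\P(E)\le(\epsilon/\epsilon_0)^{d+1}$ and then let $\epsilon\downarrow\peierls{\P}(\Psi)$.

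The core of the argument is a Peierls-style enumeration. On the event $E$ there exists a self-avoiding $\boxtimes$-path $P=(v_0,\dots,v_n)$ with $v_0=u$, $n\ge d$, and $P\subset\Z^2\setminus\Psi$; since $C\subset\Psi$ almost surely, this forces $P\cap C=\emptyset$ as well. The plan is to attach to each such $P$, deterministically, a witness set $A(P)\subset B$ of cardinality at least $d+1$. Each vertex of $P$ either lies in $B$ itself or in some non-$C$ $\Box$-component of $\Z^2\setminus B$; whenever $P$ takes a diagonal $\boxtimes$-step that jumps between two distinct $\Box$-components of $\Z^2\setminus B$, both orthogonal intermediates of that step are forced to lie in $B$ and can be appended to $A(P)$. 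If instead $P$ stays within a single non-$C$ $\Box$-component throughout, planarity, combined with the fact that the $\Box$-outer boundary of that component is contained in $B$, forces that boundary to be a $\boxtimes$-circuit of length at least $d+1$; either way one obtains the required witness.

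The conclusion follows by a union bound over possible witnesses, using $\P(A(P)\subset B)\le\epsilon^{|A(P)|}\le\epsilon^{d+1}$ from $\epsilon$-rareness of $B$. The combinatorial count is controlled by enumerating self-avoiding $\boxtimes$-paths of length $d$ issuing from $u$ together with the $\boxtimes$-circuits of the alternative case; the choice $\epsilon_0=1/21$ is calibrated precisely so that this combinatorial growth rate is absorbed into the factor $(\epsilon/\epsilon_0)^{d+1}$. Finally, taking $\epsilon\downarrow\peierls{\P}(\Psi)$ yields the stated inequality.

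The main obstacle is the topological step producing $A(P)$, specifically handling paths that make long excursions inside very large non-$C$ $\Box$-components of $\Z^2\setminus B$ without themselves meeting $B$ at many sites. In such situations I would invoke the standard $\Box/\boxtimes$ duality for planar connectivity in the spirit of Tim\'ar's boundary-connectedness results, which guarantees that any non-$C$ $\Box$-component visited by $P$ is enclosed by a $\Box$-connected boundary contained in $B$, of length bounded below by the diameter of the excursion, thereby providing enough elements to populate $A(P)$.
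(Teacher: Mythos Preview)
The paper does not supply its own proof of this lemma; it is imported verbatim from \cite[Lemma~6.4]{hadas2025columnar}, so there is no in-paper argument to compare against directly. Your overall architecture---trivialize when $\peierls{\P}(\Psi)\ge\epsilon_0$, otherwise pick $\epsilon\in(\peierls{\P}(\Psi),\epsilon_0)$, extract an $\epsilon$-rare $B$ with an infinite $\Box$-component $C\subset\Psi$, note that any $\boxtimes$-path $P$ in $\Z^2\setminus\Psi$ lies in $\Z^2\setminus C$, and search for a $\boxtimes$-connected witness inside $B$---is the correct Peierls framework and matches the approach of \cite{hadas2025columnar}.

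However, the witness construction as you describe it has a genuine gap. Your two named cases (either $P$ makes diagonal jumps between distinct finite $\Box$-components, harvesting the two orthogonal intermediates, or $P$ stays inside a single finite component, yielding its $\boxtimes$-connected outer boundary in $B$) are not exhaustive and do not combine to give $|A(P)|\ge d+1$ in general: if $P$ has $d+1$ vertices split between two large finite components with a single diagonal transition, your recipe produces only two $B$-vertices. The actual argument must incorporate the outer boundaries of \emph{all} visited finite components and show that their union with the $B$-vertices along $P$ is a single $\boxtimes$-connected subset of $B$ of size at least $d+1$ that is pinned relative to $u$ for enumeration purposes. There is also a mismatch between your witness and your union bound: you propose to enumerate ``self-avoiding $\boxtimes$-paths of length $d$ issuing from $u$,'' but the diagonal intermediates you collect do not form such a path, so the combinatorial count does not correspond to the object you constructed. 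These are precisely the details that make the specific constant $\epsilon_0=1/21$ work; your final paragraph correctly identifies Tim\'ar-type $\Box/\boxtimes$ duality as the key topological input, but the bookkeeping needed to close the argument is carried out in \cite[Section~6]{hadas2025columnar} rather than here.
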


\begin{lemma}[{\cite[Lemma 6.6]{hadas2025columnar}}]
    \label{lem:intersection of strongly percolating sets}
    For all $k\ge 1$ and random sets $\Psi_1,\dots,\Psi_k\subset\Z^2$ on a common probability space $(\Omega,\cF,\P)$,
    \begin{equation}
        \peierls{\P}(\cap_{i=1}^k\Psi_i)\le k\sqrt[k]{\max_{1\le i\le k}\peierls{\P}(\Psi_i)}.
    \end{equation}
\end{lemma}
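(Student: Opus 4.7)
The plan is to show that for any $\epsilon>\max_i \peierls{\P}(\Psi_i)$, the intersection $\cap_{i=1}^k\Psi_i$ is $(k\epsilon^{1/k})$-strongly percolating, from which the claim follows by taking the infimum over such $\epsilon$. The case $k\epsilon^{1/k}\ge\epsilon_0$ is trivial, so I may assume $k\epsilon^{1/k}<\epsilon_0$. For each $i$, the definition of $\peierls{\P}(\Psi_i)$ supplies an $\epsilon$-rare random set $B_i$ together with an infinite $\Box$-component $C_i$ of $\Z^2\setminus B_i$ that is almost surely contained in $\Psi_i$. The natural candidate witness for $\cap_i\Psi_i$ is $B:=\cup_{i=1}^k B_i$.

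The rarity of $B$ is the arithmetic heart of the proof. For any finite $A\subset\Z^2$, $A\subset B$ holds iff there is a function $f:A\to\set{1,\dots,k}$ with $v\in B_{f(v)}$ for every $v\in A$; a union bound over the at most $k^{\abs{A}}$ such functions yields $\P(A\subset B)\le k^{\abs{A}}\max_f \P(\cap_{i=1}^k\set{A_i\subset B_i})$, where $A_i:=f^{-1}(i)$. Applying H\"older's inequality with equal exponents $k,\dots,k$ to the joint probability, $\P(\cap_i\set{A_i\subset B_i})\le\prod_i\P(A_i\subset B_i)^{1/k}\le\prod_i\epsilon^{\abs{A_i}/k}=\epsilon^{\abs{A}/k}$. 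Combining these estimates, $\P(A\subset B)\le(k\epsilon^{1/k})^{\abs{A}}$, which is exactly the $(k\epsilon^{1/k})$-rarity of $B$.

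It remains to verify that $\cap_i\Psi_i$ almost surely contains an infinite $\Box$-component of $\Z^2\setminus B$. The topological observation is that any infinite $\Box$-component $C$ of $\Z^2\setminus B=\cap_i(\Z^2\setminus B_i)$ is $\Box$-connected and avoids each $B_i$, hence lies within some $\Box$-component of $\Z^2\setminus B_i$; that component is infinite because $C$ is, and therefore coincides with $C_i$ by uniqueness of the infinite $\Box$-component in the complement of a sufficiently rare random set. Hence $C\subset C_i\subset\Psi_i$ for every $i$, so $C\subset\cap_i\Psi_i$. Both the existence of such a $C$ and the uniqueness invoked above are standard percolation-theoretic consequences of $(k\epsilon^{1/k})$-rarity at $k\epsilon^{1/k}<\epsilon_0$, obtainable from the same Peierls-type argument that underlies Lemma~\ref{lem:quantitative Peierls argument}.

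The main obstacle I anticipate is this final topological step: the H\"older computation is routine and explains the exact form $k\sqrt[k]{\cdot}$ of the bound, but passing from ``$B$ is rare'' to ``$\Z^2\setminus B$ has a unique infinite $\Box$-component contained in each $C_i$'' requires a Peierls estimate on the topology of low-density complements. Once uniqueness and existence of the infinite component are in hand, the matching of infinite components across the $k$ layers is immediate and the proof closes.
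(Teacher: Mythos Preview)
The paper does not supply its own proof of this lemma; it is imported verbatim from \cite[Lemma 6.6]{hadas2025columnar} as one of several black-box results on strongly percolating sets. So there is no in-paper argument to compare against.

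That said, your approach is the natural one and is almost certainly what the cited reference does. The H\"older-plus-union-bound computation giving $(k\epsilon^{1/k})$-rarity of $B=\cup_i B_i$ is correct and indeed explains the precise form of the bound. Your reduction $k\epsilon^{1/k}<\epsilon_0\Rightarrow\epsilon<\epsilon_0$ (needed to guarantee the witnessing sets $B_i$ exist) is fine. The only point requiring care, which you flag yourself, is the topological closing step: you need both that $\Z^2\setminus B$ has an infinite $\Box$-component and that the infinite $\Box$-component of each $\Z^2\setminus B_i$ is unique, so that any infinite component of $\Z^2\setminus B$ is forced into each $C_i$. Both facts are standard consequences of $\epsilon$-rarity at level below $\epsilon_0=1/21$ via the same Peierls counting that underlies Lemma~\ref{lem:quantitative Peierls argument}, and are established in \cite[Section 6]{hadas2025columnar}. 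With those in hand your argument is complete.
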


\begin{lemma}[{\cite[Lemma 6.7]{hadas2025columnar}}]
    \label{lem:strongly percolating grid events}
    Let $E\subset\Omega$ be an event.
    Let $k,K,l,L\in\Z_{>0}$ be such that $k\mid K$ and $l\mid L$.
    Denote $H:=\set{\eta_{(x,y)}\mid (x,y)\in(k\Z\times l\Z)\cap([0,K)\cap[0,L))}$ and $r:=\abs{H}=\frac{KL}{kl}$.
    Denote $\Psi:=\xre{k\times l}{\es{E}}$, $\Psi'_\eta:=\xre{K\times L}{\es{\eta E}}$ for each $\eta\in H$, and $\Psi':=\cap_{\eta\in H}\Psi'_\eta$.
    Then,
    \begin{enumerate}
        \item \label{itm:strongly percolating grid events - subgrid over grid} $\peierls{\P}(\Psi')\le r\peierls{\P}(\Psi)$.
        \item $\peierls{\P}(\Psi)\le \sqrt[r]{\peierls{\P}(\Psi')}$.
        \item \label{itm:strongly percolating grid events - grid over subgrid}  $\peierls{\P}(\Psi)\le \sqrt[r]{r\sqrt[r]{\max_{\eta\in H}\peierls{\P}(\Psi'_\eta)}}$.
        \item \label{itm:strongly percolating grid events - chessboard bound} If $E\subset\Omega$ is $R$-local for $R=\rect{K\times L}{(x,y)}$ and is invariant under reflections through the vertical and horizontal lines passing through the center of $R$, then, for all periodic Gibbs measures $\mu$,
        \begin{equation}
            \peierls{\mu}(\Psi)\le\sqrt[r]{\norm{\Omega\setminus E}_R}.
        \end{equation}
    \end{enumerate}
\end{lemma}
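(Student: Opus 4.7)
The plan is to prove the four parts in succession, all grounded in the structural relationship between $\Psi$ at the fine scale $k\times l$ and the sets $\Psi'_\eta$ at the coarse scale $K\times L$. Writing $\eta=\eta_{(a,b)}$ with $(a,b)\in(k\Z\cap[0,K))\times(l\Z\cap[0,L))$ and $\phi_\eta(x,y):=((Kx+a)/k,(Ly+b)/l)$, one checks that $(x,y)\in\Psi'_\eta$ iff $\phi_\eta(x,y)\in\Psi$, and that $\set{\phi_\eta(\Z^2)\mid\eta\in H}$ partitions the fine lattice $\Z^2$ into $r$ disjoint sublattices. Consequently, each coarse point $(x,y)$ is in bijection with a $(K/k)\times(L/l)$ block of $r$ fine points, and $(x,y)\in\Psi'$ iff every fine point in this block lies in $\Psi$. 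A key geometric fact is that coarse-$\Box$-adjacent coarse points correspond to fine-$\Box$-connected fine blocks.

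For part 1, given a witness $B$ of $\epsilon$-strong percolation of $\Psi$, I would set $B':=\set{(x,y)\in\Z^2\mid\phi_\eta(x,y)\in B\text{ for some }\eta\in H}$. A union bound over the $r^{\abs{A}}$ fine transversals of a finite coarse set $A\subset B'$ yields $\P(A\subset B')\le(r\epsilon)^{\abs{A}}$, so $B'$ is $(r\epsilon)$-rare. Any infinite coarse $\Box$-component of $\Z^2\setminus B'$ assembles a fine-$\Box$-connected, infinite set of blocks entirely inside $\Z^2\setminus B$, which therefore lies in the infinite fine component of $\Z^2\setminus B$ and hence in $\Psi$; the corresponding coarse points then lie in $\Psi'$, as required. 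Part 2 proceeds by the reverse construction: given $\epsilon$-rare $B'$, define $B$ as the preimage of $B'$ under coarsening. A fine subset of size $\abs{A}$ contained in $B$ covers at least $\lceil\abs{A}/r\rceil$ coarse points of $B'$, so $B$ is $\sqrt[r]{\epsilon}$-rare; the same block-assembly argument transfers the infinite coarse component of $\Z^2\setminus B'$ to an infinite fine component of $\Z^2\setminus B\subset\Psi$. Part 3 then combines part 2 with Lemma~\ref{lem:intersection of strongly percolating sets} applied to $\Psi'=\cap_{\eta\in H}\Psi'_\eta$.

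The main obstacle is part 4, which requires a chessboard estimate on events sitting at off-grid positions with respect to $G^R$. For finite $A\subset\Z^2$ at fine scale, I partition $A=\cup_{\eta\in H}A_\eta$ according to the residue of $(ku,lv)$ modulo $(K,L)$. For each $\eta=\eta_{(a,b)}$, the events $\set{\sigma\notin\eta_{(ku,lv)}E\mid(u,v)\in A_\eta}$ lie at translates of $\eta R$ by vectors in $2K\Z\times 2L\Z$ (shifted by the constant offset $(a,b)$), so the reflection-invariance of $E$ through the center of $R$ lets me invoke Lemma~\ref{lem:recursive chessboard estimate for off-grid events}---equivalently, reflect the off-grid copies back onto $G^R$ without changing the seminorm---to obtain $\P(A_\eta\subset\Z^2\setminus\Psi)\le\norm{\Omega\setminus E}_R^{\abs{A_\eta}}$. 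Since $A\subset\Z^2\setminus\Psi$ implies $A_\eta\subset\Z^2\setminus\Psi$ for every $\eta$, taking the largest $\eta$-class (of size at least $\abs{A}/r$ by pigeonhole) yields $\P(A\subset\Z^2\setminus\Psi)\le\norm{\Omega\setminus E}_R^{\abs{A}/r}=(\sqrt[r]{\norm{\Omega\setminus E}_R})^{\abs{A}}$. Setting $B:=\Z^2\setminus\Psi$ establishes strong percolation provided $\sqrt[r]{\norm{\Omega\setminus E}_R}<\epsilon_0$, via standard stochastic domination arguments producing an infinite $\Box$-component in $\Psi$ almost surely; the complementary case $\sqrt[r]{\norm{\Omega\setminus E}_R}\ge\epsilon_0$ is immediate from the definition of strong percolation.
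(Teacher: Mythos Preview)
The paper does not give its own proof of this lemma; it is quoted verbatim as a ``direct import'' from \cite{hadas2025columnar}, so there is no in-paper argument to compare against. Your sketch is sound in structure and captures the right mechanism---the fine/coarse bijection via $\phi_\eta$ and block assembly for Parts~1--2, Lemma~\ref{lem:intersection of strongly percolating sets} for Part~3, and a residue-class pigeonhole combined with the chessboard estimate for Part~4.

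Two corrections in Part~4. First, for $(u,v)\in A_\eta$ the points $(ku,lv)$ range over $(a,b)+K\Z\times L\Z$, not $(a,b)+2K\Z\times 2L\Z$; this distinction is precisely why the reflection-invariance hypothesis is needed, since only translations in $2K\Z\times 2L\Z$ lie in $T^{\eta R}$. The invariance of $E$ under the central reflections of $R$ lets you replace each $\eta_{(mK,nL)}$ acting on $\Omega\setminus\eta E$ by the unique element of $T^{\eta R}$ carrying $\eta R$ to $\eta_{(mK,nL)}\eta R$ without changing the event. Second, the tool you need here is not Lemma~\ref{lem:recursive chessboard estimate for off-grid events} (which bounds an $S$-seminorm by $R$-seminorms) but rather Proposition~\ref{prop:infinite-volume chessboard estimate for product measures} with $k=1$ and base rectangle $\eta R$: once the events are rewritten as $\tau(\Omega\setminus\eta E)$ for distinct $\tau\in T^{\eta R}$, that proposition gives
\[
\mu\Bigl(\bigcap_{(u,v)\in A_\eta}\eta_{(ku,lv)}(\Omega\setminus E)\Bigr)\le\|\Omega\setminus\eta E\|_{\eta R}^{|A_\eta|}=\|\Omega\setminus E\|_R^{|A_\eta|},
\]
the last equality by translation-invariance of $\mu^{\per}_{\mathrm{R}_{n!\times n!}}$. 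With these fixes your argument is complete.
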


\begin{proposition}[{\cite[Proposition 6.8]{hadas2025columnar}}]
    \label{prop:splitting strongly percolating sets}
    Let $\P$ be an $\cL$-ergodic probability measure on $\Omega$ for some lattice $\cL\subset K\Z\times L\Z$.
    Let $E,F\subset\Omega$ be events.
    If
    \begin{equation}
        E\cap\eta_{(K,0)}F=E\cap\eta_{(0,L)}F=\eta_{(K,0)}E\cap F=\eta_{(0,L)}E\cap F=\emptyset,
    \end{equation}
    then 
    \begin{equation}
        \min\set{\peierls{\P}(\xre{K\times L}{\es{E}}),\peierls{\P}(\xre{K\times L}{\es{F}})}\le\peierls{\P}(\xre{K\times L}{\es{E\cup F}}).
    \end{equation}
\end{proposition}

We conclude this subsection with results concerning strong percolation under product measures.

\begin{lemma}
    \label{lem:strong percolation under independent coupling}
    Let $\mu,\mu'$ be probability measures on a common measurable space $(\Omega,\cF)$.
    Let $\Psi\subset\Z^2$ be a random set on the product probability space $(\Omega\times\Omega,\cF\otimes\cF,\mu\otimes\mu')$.
    Suppose that there exists a random set $\tilde{\Psi}\subset\Z^2$ on $(\Omega,\cF,\mu)$ such that
    \begin{equation}
        \Psi(\sigma,\sigma')=\tilde{\Psi}(\sigma)\text{ for }(\mu\otimes\mu')\text{-almost all }(\sigma,\sigma')\in\Omega\times\Omega.
    \end{equation}
    Then, $\peierls{\mu\otimes\mu'}(\Psi)\le\peierls{\mu}(\tilde{\Psi})$.
\end{lemma}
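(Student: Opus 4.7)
The plan is to prove the inequality by a direct lifting argument: given any rare set witnessing $\epsilon$-strong percolation for $\tilde\Psi$ under $\mu$, pull it back to the product space to obtain a rare set witnessing $\epsilon$-strong percolation for $\Psi$ under $\mu\otimes\mu'$.

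First, fix $\epsilon>\peierls{\mu}(\tilde\Psi)$; it suffices to show that $\Psi$ is $\epsilon$-strongly percolating under $\mu\otimes\mu'$, and then take $\epsilon\downarrow\peierls{\mu}(\tilde\Psi)$. If $\epsilon\ge\epsilon_0$, this is immediate from the definition, so assume $\epsilon<\epsilon_0$. Then by definition there exists a random set $B\subset\Z^2$ on $(\Omega,\cF,\mu)$ that is $\epsilon$-rare under $\mu$ and such that $\tilde\Psi$ $\mu$-a.s.\ contains an infinite $\Box$-component of $\Z^2\setminus B$.

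Next, I define the lifted set $B'\subset\Z^2$ on the product space $(\Omega\times\Omega,\cF\otimes\cF,\mu\otimes\mu')$ by $B'(\sigma,\sigma'):=B(\sigma)$. For any finite $A\subset\Z^2$, Fubini gives
\begin{equation}
    (\mu\otimes\mu')(A\subset B')=\mu(A\subset B)\le\epsilon^{\abs{A}},
\end{equation}
so $B'$ is $\epsilon$-rare under $\mu\otimes\mu'$. Moreover, let $F\subset\Omega$ be the $\mu$-full-measure event that $\tilde\Psi$ contains an infinite $\Box$-component of $\Z^2\setminus B$, and let $G\subset\Omega\times\Omega$ be the $(\mu\otimes\mu')$-full-measure event on which $\Psi(\sigma,\sigma')=\tilde\Psi(\sigma)$. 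On $(F\times\Omega)\cap G$, which has full $\mu\otimes\mu'$-measure, we have $\Psi=\tilde\Psi$ and $B'=B$ as subsets of $\Z^2$, so $\Psi$ contains an infinite $\Box$-component of $\Z^2\setminus B'$. This shows $\Psi$ is $\epsilon$-strongly percolating under $\mu\otimes\mu'$, completing the argument.

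I don't anticipate any real obstacle here; the statement is essentially a measure-theoretic observation that $\epsilon$-strong percolation is preserved under taking products with an irrelevant factor, once one notes that the notion of $\epsilon$-rareness only constrains finite-dimensional marginals. The only mild care required is tracking the two different almost-sure events (the one coming from $\tilde\Psi$'s strong percolation, and the one coming from the hypothesis $\Psi=\tilde\Psi$) and intersecting them.
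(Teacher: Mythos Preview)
Your proof is correct and follows essentially the same approach as the paper's: lift the $\epsilon$-rare witness $B$ for $\tilde\Psi$ to the product space via $B'(\sigma,\sigma'):=B(\sigma)$, check rareness by Fubini, and conclude. You are slightly more explicit than the paper in tracking the two full-measure events $F$ and $G$ and intersecting them, but otherwise the arguments coincide.
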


\begin{proof}
    Suppose that $\tilde{\Psi}$ is $\epsilon$-strongly percolating.
    Suppose in addition that $\epsilon<\epsilon_0$.
    Then, there exists an $\epsilon$-rare set $\tilde{B}\subset\Z^2$ such that $\tilde{\Psi}$ $\mu$-almost surely contains an infinite $\Box$-component of $\Z^2\setminus \tilde{B}$.
    Define a random set $B(\sigma,\sigma'):=\tilde{B}(\sigma)$.
    It is easy to check that $B$ is $\epsilon$-rare and is such that $\Psi$ $(\mu\otimes\mu')$-almost surely contains an infinite $\Box$-component of $\Z^2\setminus B$.
    The proof is complete by taking the infimum over all such $\epsilon\ge 0$.
\end{proof}

\begin{lemma}
    \label{lem:strongly percolating grid events - chessboard bound for product measures}
    Item~\ref{itm:strongly percolating grid events - chessboard bound} of Lemma~\ref{lem:strongly percolating grid events} applies to finite products of periodic Gibbs measures.
    Namely, under the setting of Lemma~\ref{lem:strongly percolating grid events}, if $E\subset\Omega^n$ is $R$-local for $R=\rect{K\times L}{(x,y)}$ and is invariant under reflections through the vertical and horizontal lines passing through the center of $R$, then, given periodic Gibbs measures $(\mu_i)_{i=1}^n$,
        \begin{equation}
            \peierls{\otimes_{i=1}^n\mu_i}(\Psi)\le\sqrt[r]{\norm{\Omega\setminus E}_R}.
        \end{equation}
\end{lemma}

\begin{proof}
    Let $\epsilon:=\norm{\Omega\setminus E}_R$.
    By Corollary~\ref{cor:strong percolation of complement of rare set}, it suffices to show that the random set $B:=\xre{k\times l}{\es{E^c}}$ is $\sqrt[r]{\epsilon}$-rare with respect to $\otimes_{i=1}^n\mu_i$.
    Let $A\subset\Z^2$ be finite.
    By the pigeonhole principle, there exists $\eta=\eta_{(x_0,y_0)}\in H$ such that, with $G:=(\frac{K}{k}\Z+\frac{x_0}{k})\times(\frac{L}{l}\Z+\frac{y_0}{l})$, $\abs{A\cap G}\ge\abs{A}/r$.
    By Proposition~\ref{prop:infinite-volume chessboard estimate for product measures}, the random set $\xre{K\times L}\es{E^c}$ is $\epsilon$-rare with respect to $\eta(\otimes_{i=1}^n\mu_i)$.
    By the bijection $m:\Z^2\to G$, $(x,y)\mapsto(\frac{K}{k}x+\frac{x_0}{k},\frac{L}{l}y+\frac{y_0}{l})$, $B\cap G$ is $\epsilon$-rare with respect to $\otimes_{i=1}^n\mu_i$, so that
    \begin{equation}
        (\otimes_{i=1}^n\mu_i)(A\subset B)
        \le(\otimes_{i=1}^n\mu_i)(A\cap G\subset B\cap G)
        \le\epsilon^{\abs{A\cap G}}
        \le\epsilon^{\abs{A}/r},
    \end{equation}
    as required.
\end{proof}

\section{Mesoscopic orientational symmetry breaking}
\label{sec:mesoscopic orientational symmetry breaking}

In this section, we establish orientational symmetry breaking in the monomer-dimer model at low temperatures.
Our characterization of orientational order is different from that used by Heilmann and Lieb in~\cite{heilmann1979lattice} in that ours is mesoscopic rather than microscopic, and instead draws inspiration from the characterization of Hadas and Peled~\cite{hadas2025columnar} for the hard-square model.
However, we will introduce major modifications to the latter characterization to account for the geometric distinction between dimers and squares as well as the extra attractive interaction in the monomer-dimer model.
Hence, while the overall strategy of our proof is similar to that of~\cite{hadas2025columnar}, many technical components here are new.

\paragraph{Sticks}

Let $\sigma\in\Omega$.
A \textbf{stick edge} of $\sigma$ is an edge $e$ of the dual square lattice $(\Z^2)^\ast$ such that both endpoints of the edge of $\Z^2$ it bisects are incident to dimers in $\sigma$ having the same orientation as $e$. 
We note that, unlike Hadas and Peled, we do not require that the latter two dimers be offset by one unit, since the orientations of these dimers are already evident.
We define a \textbf{stick} as a maximal contiguous run of stick edges of $\sigma$.
It is easy to check that horizontal and vertical stick edges cannot intersect.
By extension, horizontal and vertical sticks cannot intersect either.

\paragraph{Rectangles divided by sticks}

Following Hadas and Peled, we introduce the notion of \emph{division} of rectangles by sticks.
Let $R=\rect{K\times L}{(x,y)}$ be a rectangle; recall that this assumes that $x,y\in1/2+\Z$ and $K,L\in\Z_{>0}$.
Consider a vertical segment with endpoints $(x_1,y_1)$ and $(x_1,y_2)$, where $x_1\in1/2+\Z$ and $-\infty\le y_1<y_2\le\infty$.
We say that this segment \textbf{vertically divides} $R$ if $x<x_1<x+K$ (note the exclusion at the endpoints) and $y_1\le y<y+L\le y_2$.
That a horizontal segment horizontally divides $R$ is defined analogously.
If we do not wish to specify the orientation of the segment, we will simply say that it \textbf{divides} $R$.

Using that rectangles are defined to be closed subsets of $\R^2$ with corners on $(1/2+\Z)^2$ and recalling the identification $\V\equiv[(1/2+\Z)\times\Z]\cup[\Z\times(1/2+\Z)]$, it is straightforward to see that the event that a rectangle $R$ is divided by a stick is $R$-local.

\paragraph{Main results}

A mesoscopic rectangle is a rectangle $R$ such that $\Width{R},\Height{R}\ll\ell_0$ but $\Area{R}\gg\ell_0$, where $\ell_0:=e^{\frac{1}{2}\beta(\chempot+3a)}$ is a mesoscopic length scale defined computationally in Section~\ref{sec:one-dimensional partition function bounds}.
The first main result of this section is an upper bound on the chessboard seminorm of the event that a mesoscopic rectangle $R$ is not divided by a stick through its ``bulk'' in the sense made precise below.

\begin{theorem}
    \label{thm:mesoscopic rectangles are divided by sticks}
    There exist constants $\beta_0,c>0$ such that, for all $\beta\ge\beta_0$ and rectangles $S\subset R$ satisfying
    \begin{equation}
        \label{eqn:mesoscopic rectangle dimensions}
        c^{-1}\le\Width{R},\Height{R}\le c\ell_0,
    \end{equation}
    \begin{equation}
        \label{eqn:inner rectangle dimensions}
        \Width{S}\ge(1-c)\Width{R},\quad\text{and}\quad
        \Height{S}\ge(1-c)\Height{R},
    \end{equation}
    it holds that
    \begin{equation}
        \label{eqn:mesoscopic rectangles are divided by sticks}
        \norm{\text{no stick divides both $R$ and $S$}}_R\le e^{-c\ell_0^{-1}\Area{R}}.
    \end{equation}
\end{theorem}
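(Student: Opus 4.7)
The plan is to reduce the bound to a one-dimensional partition function estimate via the recursive chessboard estimate. First, by monotonicity of the chessboard seminorm,
\[
\|\text{no stick divides both $R$ and $S$}\|_R \le \|E_{\ver}\|_R,
\]
where $E_{\ver}$ denotes the weaker event that no \emph{vertical} stick divides both $R$ and $S$; the horizontal case yields an analogous bound by symmetry, so it suffices to work with $E_{\ver}$. Under $E_{\ver}$, every vertical line whose $x$-coordinate lies inside the horizontal extent of $S$ must contain a defect (vacancy or broken link) within the vertical extent of $R$.

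The second step is to localize this condition onto a thin strip. Introduce $R_0$ of width $2$ and height $\Height{R}$, positioned so that both $R$ and $R_0$ are blocks of $\Lambda=\mathrm{R}_{n!\times n!}$ for all sufficiently large $n$, and so that the corners of $R$ lie in $G^{R_0}$. Define the $R_0$-local event $F$ that neither of the two vertical lines through $R_0$ forms an unbroken stick spanning the full height of $R_0$. If $E_{\ver}$ holds on $R$, then $F$ holds on each of the $\sim c\Width{R}$ horizontal translates $\tau R_0\subset R$ whose two vertical lines fall inside $S$'s horizontal extent. Applying Lemma~\ref{lem:recursive chessboard estimate} to this intersection and passing to the $\limsup$ as $n\to\infty$ yields
\[
\|E_{\ver}\|_R \le \|F\|_{R_0}^{c\Width{R}}.
\]

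To close the argument, I would invoke a 1D partition function estimate (to be established separately as Proposition~\ref{prop:partition function no long sticks}) giving $\|F\|_{R_0}\le 1-c' e^{-\Height{R}/\ell_0}$. Using $1-x\le e^{-x}$ and taking $c$ in the hypothesis $\Height{R}\le c\ell_0$ small enough that $e^{-\Height{R}/\ell_0}\ge c''\Height{R}/\ell_0$, this gives
\[
\|E\|_R \le \exp\bigl(-c'''\ell_0^{-1}\Width{R}\Height{R}\bigr)=e^{-c'''\ell_0^{-1}\Area{R}},
\]
as required.

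The main obstacle is the 1D partition function estimate itself. Quantifying the probability that a 1D monomer-dimer chain of length $\sim\Height{R}$ admits an unbroken stick spanning its entire length requires a transfer-matrix analysis that carefully balances the two types of destabilizing defects: vacancies (weight $e^{-\beta(\chempot+a)/2}$) and broken links (weight $e^{-\beta a/2}$). The hypothesis $a>\chempot/3$ enters precisely at this stage, ensuring that broken links dominate the defect weight and producing the correlation length $\ell_0=e^{\beta(\chempot+3a)/2}$ as the operative scale. A secondary technical point is that $F$ concerns two adjacent columns rather than a single one; the hard-core and 4-body vacancy potentials couple the two columns, and one must verify that the probability that both lines simultaneously fail to be sticks remains comparable to the single-column bound, which can be absorbed into constants after a careful accounting of the potentials.
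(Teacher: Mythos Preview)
Your first step is fatal: passing from ``no stick divides both $R$ and $S$'' to $E_{\ver}$ (``no \emph{vertical} stick divides'') gives a valid upper bound on the seminorm, but $\norm{E_{\ver}}_R$ is \emph{not} small---it is essentially $1$. The reason is that any configuration fully packed with \emph{horizontal} dimers has no vertical sticks whatsoever and carries weight $1$ (no vacancies, no broken links). When you disseminate $E_{\ver}$ across the torus in the definition of $\norm{\cdot}_R$, the entire horizontal phase survives, so the constrained partition function is comparable to the full partition function and the ratio, raised to the power $\Area{R}/\Area{\Lambda_n}\to 0$, tends to $1$. The same issue propagates to your strip event $F$: fully horizontal configurations satisfy the disseminated $F$ with zero energy cost, so $\norm{F}_{R_0}$ cannot be bounded below $1$ by any amount proportional to $\Height{R}/\ell_0$. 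Your claimed $1\mathrm{D}$ estimate $\norm{F}_{R_0}\le 1-c'e^{-\Height{R}/\ell_0}$ is simply false.

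A related symptom: you assert that under $E_{\ver}$ each vertical line through $S$ must contain a vacancy or broken link in $R$. This is not true; the line could be interrupted by a horizontal dimer with no nearby defect at all. The paper's approach is forced to keep \emph{both} orientations in play. It works directly with the event that no stick of \emph{either} orientation divides $R$ and $S$; once disseminated, this forbids long sticks of both orientations everywhere, so \emph{no} ground state survives and defects are genuinely forced. This is what the configuration-graph machinery (Proposition~\ref{prop:partition function no long sticks}) captures: it bounds the partition function restricted to $E_M$, the event that \emph{all} sticks are short, by tracking how the absence of long vertical \emph{and} horizontal sticks forces a density of vacancies and broken links. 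The two-orientation interplay (Proposition~\ref{prop:defect chasing}, Corollary~\ref{cor:component number bound}) is where the assumption $a>\chempot/3$ actually enters---not in a $1\mathrm{D}$ estimate as you suggest.
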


Fix an integer $N>2$ such that $\frac{2}{N}\le c_{\eqref{eqn:mesoscopic rectangles are divided by sticks}}$ for the rest of the paper.
For a rectangle $R$ with width and height divisible by $N$, denote by $R^-$ the rectangle concentric to $R$ with dimensions equal to $1-2/N$ times those of $R$.
Thus, $R$ and $R^-$ satisfy the assumption~\eqref{eqn:inner rectangle dimensions} of Theorem~\ref{thm:mesoscopic rectangles are divided by sticks}.
We say that a stick \textbf{properly divides} $R$ if it divides both $R$ and $R^-$.
Following~\cite{hadas2025columnar}, we define a mesoscopic grid to track which rectangles are properly divided by a stick of either orientation.
Given $K,L\in\Z_{>0}$ and $\sigma\in\Omega$, define 
\begin{align}
    \Psi_\ver^{K\times L}(\sigma){}&:=\set{(x,y)\in\Z^2\mid \text{a vertical stick of }\sigma\text{ properly divides }\rect{KN\times LN}{(xK-1/2,yL-1/2)}}, \\
    \Psi_\hor^{K\times L}(\sigma){}&:=\set{(x,y)\in\Z^2\mid \text{a horizontal stick of }\sigma\text{ properly divides }\rect{KN\times LN}{(xK-1/2,yL-1/2)}}.
\end{align}
Vertices in $\Psi_\ver^{K\times L}(\sigma)$ and $\Psi_\hor^{K\times L}(\sigma)$ cannot be adjacent in $\Z^2$, otherwise the overlap between their corresponding rectangles and the definition of proper division would imply the intersection of sticks of different orientations, which is impossible.
Hence, $\Psi_\ver^{K\times L}(\sigma)$ and $\Psi_\hor^{K\times L}(\sigma)$ are separated by interfaces corresponding to rectangles which are not properly divided by sticks.
Combining Theorem~\ref{thm:mesoscopic rectangles are divided by sticks} with a chessboard-Peierls argument, we obtain the following.

\begin{theorem}
    \label{thm:mesoscopic orientational order}
    There exist constants $\beta_0,C,c>0$ such that, for all $\beta\ge\beta_0$, $C\ell_0^{1/2}<b<c\ell_0$, and periodic Gibbs measures $\mu$, it holds that
    \begin{equation}
        \mu(\text{exactly one of $\Psi_\ver^{b\times b}$ and $\Psi_\hor^{b\times b}$ has an infinite component})=1.
    \end{equation}
\end{theorem}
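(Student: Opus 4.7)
The plan is to combine Theorem~\ref{thm:mesoscopic rectangles are divided by sticks} with the strongly percolating sets framework of Section~\ref{sec:strongly percolating sets} to produce an infinite $\Box$-component in $\Psi_\ver^{b\times b} \cup \Psi_\hor^{b\times b}$, and then to extract the dichotomy from the geometric incompatibility between the two sets. Let $R := \rect{Nb\times Nb}{(-1/2,-1/2)}$ and let $E$ be the $R$-local event that some stick (of either orientation) properly divides $R$, i.e., divides both $R$ and $R^-$. Since $E$ is invariant under both reflections through the center of $R$ and $\Psi_\ver^{b\times b} \cup \Psi_\hor^{b\times b} = \xre{b\times b}{\es{E}}$, combining Theorem~\ref{thm:mesoscopic rectangles are divided by sticks} (with $S = R^-$) and Lemma~\ref{lem:strongly percolating grid events}\eqref{itm:strongly percolating grid events - chessboard bound} (with $k=l=b$, $K=L=Nb$, so $r=N^2$) yields, for every periodic Gibbs measure $\mu$ and $\beta$ sufficiently large,
\begin{equation*}
    \peierls{\mu}\bigl(\Psi_\ver^{b\times b} \cup \Psi_\hor^{b\times b}\bigr) \;\le\; \sqrt[N^2]{e^{-c\ell_0^{-1}(Nb)^2}} \;=\; e^{-cb^2/\ell_0}.
\end{equation*}
For $C$ sufficiently large this is strictly below $\epsilon_0$ whenever $b > C\ell_0^{1/2}$, so the union is $\epsilon$-strongly percolating for some $\epsilon < \epsilon_0$, and in particular $\mu$-almost surely contains an infinite $\Box$-component.

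As noted in the paragraph preceding the theorem, no vertex of $\Psi_\ver^{b\times b}(\sigma)$ is $\Box$-adjacent to a vertex of $\Psi_\hor^{b\times b}(\sigma)$, so every $\Box$-component of their union is contained entirely in one of the two sets. The previous paragraph therefore yields the ``at least one has an infinite $\Box$-component'' half of the dichotomy $\mu$-almost surely. To close the ``at most one'' half, which I expect to be the main obstacle, I would first strengthen the observation to $\boxtimes$-non-adjacency: a direct computation from the definition of proper division shows that if $(x,y) \in \Psi_\ver^{b\times b}(\sigma)$ and $(x',y') \in \Psi_\hor^{b\times b}(\sigma)$ with $(x,y) \ne (x',y')$ and $\max(|x-x'|,|y-y'|) \le 1$, then the properly dividing vertical and horizontal sticks are both forced to traverse the at least $(N-1)b \times (N-1)b$ overlap of the corresponding $Nb\times Nb$ rectangles and therefore must intersect, contradicting that stick edges of different orientations cannot be incident.

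Finally, assume for contradiction that some configuration $\sigma$ has infinite $\Box$-components $C_\ver \subset \Psi_\ver^{b\times b}(\sigma)$ and $C_\hor \subset \Psi_\hor^{b\times b}(\sigma)$. Inflating each lattice vertex to its closed unit dual-lattice square, $\hat{C}_\ver, \hat{C}_\hor \subset \R^2$ become disjoint (by $\boxtimes$-non-adjacency), unbounded, closed, connected polyominoes. A planar topology argument---if $\R^2 \setminus (\hat{C}_\ver \cup \hat{C}_\hor)$ had only bounded connected components, then some sufficiently large circle $\partial B_r$ would lie entirely in $\hat{C}_\ver \cup \hat{C}_\hor$ and, being connected, would be forced into one polyomino, confining the other to a bounded region---produces an unbounded connected component $U$ of this complement. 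An interpolation along $U$, carefully navigating around the (necessarily finite) additional $\Box$-components of $\Psi_\ver^{b\times b}(\sigma) \cup \Psi_\hor^{b\times b}(\sigma)$ not contained in $C_\ver \cup C_\hor$, then extracts an infinite $\boxtimes$-connected subset of $\Z^2 \setminus (\Psi_\ver^{b\times b}(\sigma) \cup \Psi_\hor^{b\times b}(\sigma))$. This contradicts the $\mu$-almost sure absence of such an infinite $\boxtimes$-path guaranteed by Lemma~\ref{lem:quantitative Peierls argument} applied to $\Psi = \Psi_\ver^{b\times b} \cup \Psi_\hor^{b\times b}$, combined with a union bound over starting vertices, completing the proof.
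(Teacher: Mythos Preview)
Your overall strategy matches the route the paper points to (it defers the proof to \cite[Theorem~5.2]{hadas2025columnar}): feed Theorem~\ref{thm:mesoscopic rectangles are divided by sticks} into Lemma~\ref{lem:strongly percolating grid events}\eqref{itm:strongly percolating grid events - chessboard bound} to get strong percolation of $\Psi_\ver^{b\times b}\cup\Psi_\hor^{b\times b}$, use non-adjacency to place the resulting infinite $\Box$-component entirely in one of the two sets, and use the Peierls bound of Lemma~\ref{lem:quantitative Peierls argument} on the complement for the ``at most one'' half. The first two steps are clean and correct.

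The last paragraph, however, has a genuine gap. Two issues: (i) the circle argument does not establish that $\R^2\setminus(\hat C_\ver\cup\hat C_\hor)$ has an unbounded component --- ``all components bounded'' does not force any single circle to lie inside $\hat C_\ver\cup\hat C_\hor$, since the bounded components may march off to infinity; (ii) more seriously, even granting an unbounded $U\subset\R^2\setminus(\hat C_\ver\cup\hat C_\hor)$, you need an infinite $\boxtimes$-path in $\Z^2\setminus(\Psi_\ver^{b\times b}\cup\Psi_\hor^{b\times b})$, not merely in $\Z^2\setminus(C_\ver\cup C_\hor)$. Your ``navigating around the necessarily finite additional $\Box$-components'' is exactly the unjustified step: nothing you have shown rules out further infinite $\Box$-components of $\Psi_\ver^{b\times b}$ or $\Psi_\hor^{b\times b}$ sitting inside $U$ and obstructing the navigation.

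A clean repair is to work with $\Psi:=\Psi_\ver^{b\times b}\cup\Psi_\hor^{b\times b}$ directly rather than with the chosen pair $C_\ver,C_\hor$. Since $\Psi_\ver^{b\times b}$ and $\Psi_\hor^{b\times b}$ are not $\Box$-adjacent, every $\Box$-component of $\Psi$ lies entirely in one of them; hence if both sets have infinite $\Box$-components then $\Psi$ itself has at least two distinct infinite $\Box$-components. The standard planar duality (two infinite $\Box$-clusters in a subset of $\Z^2$ force an infinite $\boxtimes$-cluster in its complement; this is the content behind the reference~\cite{timar2013boundary} cited in Section~\ref{sec:basic definitions}) then yields an infinite $\boxtimes$-component of $\Z^2\setminus\Psi$, contradicting Lemma~\ref{lem:quantitative Peierls argument} applied with $d\to\infty$. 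This sidesteps both the $\R^2$ topology and the need to control ``other'' components, and is presumably how the Hadas--Peled argument runs.
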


Theorem~\ref{thm:mesoscopic orientational order} implies the existence of two Gibbs measures related by the reflection $(x,y)\mapsto (y,x)$, characterized respectively by the percolation of $\Psi^{b\times b}_\ver$ and $\Psi^{b\times b}_\hor$.
Hence, we accomplish the goal of establishing orientational symmetry breaking on a mesoscopic scale, in the sense of the prevalence of mesoscopic squares properly divided by sticks of one orientation over the other.

\begin{corollary}
    \label{cor:mesoscopic orientational symmetry breaking}
    Under the setting of Theorem~\ref{thm:mesoscopic orientational order}, let $\tau$ be the reflection $(x,y)\mapsto(y,x)$, and $E_\ver$ and $E_\hor$ be the events that $\Psi_\ver^{b\times b}$ and $\Psi_\hor^{b\times b}$ have an infinite $\Box$-component, respectively.
    The monomer-dimer model admits at least two $b\Z^2$-invariant Gibbs measures $\mu_\ver$ and $\mu_\hor=\tau\mu_\ver$ such that $\mu_\ver(E_\ver)=1$ and $\mu_\ver(E_\hor)=0$, and vice versa for $\mu_\hor$.
\end{corollary}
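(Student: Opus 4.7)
The plan is to construct a sufficiently symmetric periodic Gibbs measure $\mu$ and then obtain $\mu_\ver$ and $\mu_\hor$ by conditioning $\mu$ on the events $E_\ver$ and $E_\hor$, respectively.

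First, I would take $L_n\to\infty$ with $b\mid L_n$ and extract a weakly convergent subsequence of the finite-volume periodic Gibbs measures $\mu^\per_{\Lambda_n;\beta}$ on the square volumes $\Lambda_n:=\rect{L_n\times L_n}{(-1/2,-1/2)}$. Since $\Lambda_n$ is a square and the Hamiltonian~\eqref{eqn:shifted Hamiltonian monomer-dimer} is invariant under $\tau$, each $\mu^\per_{\Lambda_n;\beta}$ is $\tau$-invariant and invariant under the full torus-translation group. The weak limit $\mu$ is therefore $\tau$-invariant and $\Z^2$-invariant, and it satisfies the DLR equations by the standard weak-limit argument based on the finite range of the potentials~\eqref{eqn:vacancy potential}--\eqref{eqn:broken-link potential}. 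In particular, $\mu$ is a periodic Gibbs measure.

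Next, I would apply Theorem~\ref{thm:mesoscopic orientational order} to $\mu$ to obtain $\mu(E_\ver\cup E_\hor)=1$ and $\mu(E_\ver\cap E_\hor)=0$. Because $\tau$ swaps horizontal and vertical sticks and preserves $\Box$-connectivity on $\Z^2$, it maps $\Psi^{b\times b}_\ver$ to $\Psi^{b\times b}_\hor$, so $\tau E_\ver=E_\hor$. The $\tau$-invariance of $\mu$ then gives $\mu(E_\ver)=\mu(E_\hor)=\tfrac{1}{2}$, and I would define
\begin{equation}
    \mu_\ver:=\mu(\cdot\mid E_\ver),\qquad \mu_\hor:=\mu(\cdot\mid E_\hor).
\end{equation}

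The verification reduces to three short checks. (i) \emph{Tail property.} The set $\Psi^{b\times b}_\ver(\sigma)\subset\Z^2$ is pointwise determined by a local expression in $\sigma$, so altering $\sigma$ on any bounded region changes $\Psi^{b\times b}_\ver$ only on a bounded subset of $\Z^2$, which cannot affect the existence of an infinite $\Box$-component. Hence $E_\ver\in\cF_{\Lambda^c}$ for every bounded $\Lambda$, and is a tail event; the same holds for $E_\hor$. The standard fact that conditioning a Gibbs measure on a positive-probability tail event preserves the DLR property then gives that $\mu_\ver,\mu_\hor$ are Gibbs measures. (ii) \emph{Translation invariance.} The events $E_\ver,E_\hor$ are invariant under all $\Z^2$-translations, so conditioning on them preserves the $\Z^2$-invariance of $\mu$; in particular, $\mu_\ver,\mu_\hor$ are $b\Z^2$-invariant. (iii) \emph{Reflection symmetry and values.} A direct computation using $\tau E_\ver=E_\hor$, $\mu(E_\ver)=\mu(E_\hor)$, and $\tau$-invariance of $\mu$ gives $\tau\mu_\ver=\mu_\hor$, while the asserted values $\mu_\ver(E_\ver)=1,\mu_\ver(E_\hor)=0$ (and symmetrically for $\mu_\hor$) are immediate from the definitions and the $\mu$-a.s.\ disjointness of $E_\ver$ and $E_\hor$. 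The inequality $\mu_\ver\ne\mu_\hor$ then follows automatically.

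The genuine mathematical content is already contained in Theorem~\ref{thm:mesoscopic orientational order}; the corollary is a routine consequence once a $\tau$- and $\Z^2$-invariant periodic Gibbs measure is in hand. I therefore expect no significant obstacle beyond the bookkeeping needed to verify that $E_\ver$ and $E_\hor$ are tail and translation-invariant, both of which follow mechanically from the locality of the defining property of $\Psi^{b\times b}_\ver$.
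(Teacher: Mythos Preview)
Your approach is correct and is essentially the standard argument that the paper defers to (it cites \cite[Corollary~5.3]{hadas2025columnar} rather than writing out a proof). One small imprecision: you assert that $E_\ver$ and $E_\hor$ are invariant under \emph{all} $\Z^2$-translations, but from the definition of $\Psi_\ver^{b\times b}$ this is only immediate for translations in $b\Z^2$, since the grid of reference rectangles has spacing $b$; a translation by $(1,0)$ shifts the underlying sticks but not the sampling grid. This does not matter for the corollary, which only claims $b\Z^2$-invariance of $\mu_\ver$ and $\mu_\hor$, and that follows directly since $E_\ver,E_\hor$ are manifestly $b\Z^2$-invariant.
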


The bulk of this section is dedicated to the proof of Theorem~\ref{thm:mesoscopic rectangles are divided by sticks}, which is specifically tailored to the monomer-dimer model.
Once Theorem~\ref{thm:mesoscopic rectangles are divided by sticks} is proven,  Theorem~\ref{thm:mesoscopic orientational order} and Corollary~\ref{cor:mesoscopic orientational symmetry breaking} follow straightforwardly by applying the strong percolation machinery of Section~\ref{sec:strongly percolating sets}, and we give a short proof of each at the end of this section.

\subsection{One-dimensional systems}
\label{sec:one-dimensional systems}

In this subsection, we compute several quantities related to the partition function of one-dimensional monomer-dimer systems under various boundary conditions, using the transfer matrix method.

We define one-dimensional monomer-dimer systems analogously to their two-dimensional counterparts.
Specifically, we consider $\Z$ in its usual graphical representation as an infinite path, and identify its edge set $\V^{\oned}$ with the set of midpoints of its edges, i.e., $1/2+\Z$.
Given $L\ge 1$, define the one-dimensional rectangle $\mathrm{R}_L:=[-1/2,L-1/2]\subset\R$.
Dimer configurations on $\V^\oned$, boundary conditions on a one-dimensional rectangle $R$, and their associated finite-volume Hamiltonian and Gibbs measure are defined analogously to those in two dimensions and denoted using the same notation with the superscript $\oned$ added.
In particular, the vacancy potential~\eqref{eqn:vacancy potential} becomes a two-body interaction in one dimension, while the broken link potential~\eqref{eqn:broken-link potential} remains three-body.

To construct the transfer matrix, we picture a sliding window spanning a width of two vertices in $\V^\oned$ moving from left to right.
There are exactly three possible states within the window: the left vertex is occupied by a dimer, the right vertex is occupied, or neither one is.
Denote these states by $\ket{l}$, $\ket{r}$, and $\ket{0}$, respectively.
In the $(\ket{0},\ket{l},\ket{r})$-basis, the transfer matrix is
\begin{equation}
    T=\begin{bmatrix}
        e^{-\frac{1}{2}\beta(\chempot+a)} & 0 & e^{-\frac{1}{2}\beta(\chempot+2a)} \\
        e^{-\frac{1}{2}\beta a} & 0 & 1 \\
        0 & 1 & 0
    \end{bmatrix}.
\end{equation}

\subsubsection{Expansions for the eigenvalues of $T$}
\label{sec:eigenvalue expansions}

The eigenvalues of $T$ are the roots of the cubic polynomial
\begin{equation}
    \label{eqn:characteristic polynomial}
    p(\eigenval):=\det(\eigenval I-T)
    =\eigenval^3-e^{-\frac{1}{2}\beta(\chempot+a)}\eigenval^2-\eigenval+e^{-\frac{1}{2}\beta(\chempot+a)}-e^{-\frac{1}{2}\beta(\chempot+3a)}.
\end{equation}
To obtain usable expressions for the eigenvalues, we treat the last term of $p(\eigenval)$ as a small perturbation and introduce the auxiliary polynomial
\begin{equation}
    p_\epsilon(\eigenval):=\eigenval^3-e^{-\frac{1}{2}\beta(\chempot+a)}\eigenval^2-\eigenval+e^{-\frac{1}{2}\beta(\chempot+a)}+\epsilon,
\end{equation}
where $\epsilon\in[-e^{-\frac{1}{2}\beta(\chempot+3a)},0]$.
When $\epsilon=0$, $p_\epsilon(\eigenval)=(\eigenval-1)(\eigenval-e^{-\frac{1}{2}\beta(\chempot+a)})(\eigenval+1)$ has roots $\eigenval_1(0)=1$, $\eigenval_2(0)=e^{-\frac{1}{2}\beta(\chempot+a)}$, and $\eigenval_3(0)=-1$.
By the implicit function theorem, each root $\eigenval_i(0)$, $i=1,2,3$, extends to an analytic function $\eigenval_i(\epsilon)$ satisfying $p_\epsilon(\eigenval_i(\epsilon))\equiv0$ for small $\abs{\epsilon}$, which admits the expansion
\begin{equation}
    \eigenval_i(\epsilon)=\eigenval_i(0)+\eigenval_i'(0)\epsilon+\order{\epsilon^2}.
\end{equation}
Setting $p_\epsilon(\eigenval_i(\epsilon))'=0$ and solving for $\eigenval_i'(0)$ yield
\begin{equation}
    \eigenval_1'(0)=-\frac{1}{2(1-e^{-\frac{1}{2}\beta(\chempot+a)})},\quad
    \eigenval_2'(0)=\frac{1}{1-e^{-\beta(\chempot+a)}},\quad
    \eigenval_3'(0)=-\frac{1}{2(1+e^{-\frac{1}{2}\beta(\chempot+a)})}.
\end{equation}
Hence, there exists a constant $\beta_0>0$ such that, for all $\beta\ge\beta_0$, the eigenvalues of $T$ admit the expansions
\begin{align}
    \label{eqn:lambda_1 expansion}
    \eigenval_1&{}=1+\frac{1}{2}e^{-\frac{1}{2}\beta(\chempot+3a)}+\order{e^{-\beta(\chempot+2a)}}, \\
    \label{eqn:lambda_2 expansion}
    \eigenval_2&{}=e^{-\frac{1}{2}\beta(\chempot+a)}-e^{-\frac{1}{2}\beta(\chempot+3a)}+\order{e^{-\beta(\chempot+2a)}}, \\
    \label{eqn:lambda_3 expansion}
    \eigenval_3&{}=-1+\frac{1}{2}e^{-\frac{1}{2}\beta(\chempot+3a)}+\order{e^{-\beta(\chempot+2a)}}.
\end{align}

\subsubsection{The partition function of one-dimensional systems}
\label{sec:one-dimensional partition function bounds}

Using the eigenvalue expansions derived in Section~\ref{sec:eigenvalue expansions}, we now fulfill the goal of this subsection.
We start by defining a length scale that will play a recurrent role in this paper.
Recall that the \textbf{correlation length}~\cite[Section 15.1]{takahashi1999thermodynamics} of the one-dimensional monomer-dimer system is defined as 
\begin{equation}
    \xi:=\frac{1}{\log\abs{\eigenval_1/\eigenval_3}}=\frac{1}{\log(1+e^{-\frac{1}{2}\beta(\chempot+3a)}+\order{e^{-\beta(\chempot+2a)}})}.
\end{equation}
We use the leading-order term of $\xi$ to define the length scale
\begin{equation}
    \ell_0:=e^{\frac{1}{2}\beta(\chempot+3a)}.
\end{equation}

The first quantity we compute is, up to a factor of $e^{-\frac{1}{2}\beta(\chempot+a)}$, the partition function on a rectangle of even length with vacant boundary conditions.
We prove that the unique configuration in which the rectangle is fully packed with dimers dominates the partition function, provided that the length of the rectangle is small compared to $\ell_0$.

\begin{proposition}
    \label{prop:partition function vacant boundary}
    There exists a constant $\beta_0>0$ such that, for all $\beta\ge\beta_0$, $0<c<1/2$, and even $2\le L\le c\ell_0-1$, it holds that
    \begin{equation}
        \bra{0}T^{L+1}\ket{0} = \left[1+\order{c^2}\right]e^{-\frac{1}{2}\beta(\chempot+3a)}+\order{e^{-\beta(\chempot+2a)}}.
    \end{equation}
\end{proposition}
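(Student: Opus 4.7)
The plan is to exploit the spectral decomposition of the $3\times 3$ transfer matrix $T$. Since $T$ has three distinct eigenvalues, I can write
$$\bra{0}T^{L+1}\ket{0}=\sum_{i=1}^{3}A_i\,\eigenval_i^{L+1}$$
for coefficients $A_i$ depending on the left and right eigenvectors of $T$; equivalently, via the partial-fraction expansion of the generating function $\sum_{n\ge 0}\bra{0}T^n\ket{0}\,z^n=(1-z^2)/(z^3p(1/z))$ with $p$ as in~\eqref{eqn:characteristic polynomial}, I obtain the closed form
$$A_i=\frac{\eigenval_i^2-1}{p'(\eigenval_i)}=\frac{\eigenval_i^2-1}{3\eigenval_i^2-2e^{-\frac{1}{2}\beta(\chempot+a)}\eigenval_i-1}.$$
Abbreviating $\alpha:=e^{-\frac{1}{2}\beta(\chempot+a)}$ and $\delta:=\ell_0^{-1}=e^{-\frac{1}{2}\beta(\chempot+3a)}$ and recording the convenient identity $\alpha\delta=e^{-\beta(\chempot+2a)}$, substitution of the eigenvalue expansions~\eqref{eqn:lambda_1 expansion}--\eqref{eqn:lambda_3 expansion} into this formula yields
$$A_1=\tfrac{\delta}{2}+\order{\alpha\delta},\qquad A_3=-\tfrac{\delta}{2}+\order{\alpha\delta},\qquad A_2=1+\order{\alpha\delta}.$$

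Next I would extract the main term from the joint contributions of $\eigenval_1$ and $\eigenval_3$. The key observation is that $\eigenval_1\approx-\eigenval_3$ and $A_1\approx-A_3$; since $L$ is even, $L+1$ is odd, and hence $\eigenval_3^{L+1}=-|\eigenval_3|^{L+1}$. A standard logarithmic Taylor expansion combined with the hypothesis $(L+1)\delta\le c<1$ yields $\eigenval_1^{L+1}=\exp((L+1)\delta/2)+\order{e^{-\beta(\chempot+2a)}}$ and $|\eigenval_3|^{L+1}=\exp(-(L+1)\delta/2)+\order{e^{-\beta(\chempot+2a)}}$, so that
$$A_1\eigenval_1^{L+1}+A_3\eigenval_3^{L+1}=\tfrac{\delta}{2}\bigl(\eigenval_1^{L+1}+|\eigenval_3|^{L+1}\bigr)+\order{e^{-\beta(\chempot+2a)}}=\delta\cosh\!\left(\tfrac{(L+1)\delta}{2}\right)+\order{e^{-\beta(\chempot+2a)}}.$$
Since $(L+1)\delta/2\le c/2<1/4$, the elementary bound $\cosh(x)=1+\order{x^2}$ for bounded $x$ then gives the claimed main term $\delta\cosh((L+1)\delta/2)=[1+\order{c^2}]\delta$.

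It remains to bound the $\eigenval_2$ contribution $A_2\eigenval_2^{L+1}$. Since $A_2=1+\order{\alpha\delta}$, $|\eigenval_2|\le\alpha$ for $\beta$ large, and $L+1\ge 3$, this term is $\order{\alpha^{L+1}}$ and I would absorb it into the $\order{e^{-\beta(\chempot+2a)}}$ error in the parameter regime of interest. The main technical obstacle I foresee is the careful bookkeeping of error terms when raising $\eigenval_1$ and $|\eigenval_3|$ to the $(L+1)$-th power, since the subleading $\order{e^{-\beta(\chempot+2a)}}$ corrections in these eigenvalues can in principle be amplified by a factor of $L+1$ as large as $c\ell_0=c/\delta$; verifying that the resulting error remains $\order{e^{-\beta(\chempot+2a)}}$ uses the identity $\alpha\delta=e^{-\beta(\chempot+2a)}$ crucially, so that the amplification by $1/\delta$ is offset by a factor of $\delta$ in the absolute error.
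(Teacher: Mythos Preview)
Your proposal is correct and takes essentially the same approach as the paper's proof. The paper explicitly diagonalizes $T=PDP^{-1}$, computes the first column of $P^{-1}$ by cofactors, and arrives at the identical expansion $\bra{0}T^{L+1}\ket{0}=[\tfrac{\delta}{2}+\order{\alpha\delta}]\eigenval_1^{L+1}+[1+\order{\alpha\delta}]\eigenval_2^{L+1}+[-\tfrac{\delta}{2}+\order{\alpha\delta}]\eigenval_3^{L+1}$, then expands $\eigenval_1^{L+1}$ and $\eigenval_3^{L+1}$ binomially; your route via the generating function $\bra{0}(I-zT)^{-1}\ket{0}=(1-z^2)/\det(I-zT)$ and the residue formula $A_i=(\eigenval_i^2-1)/p'(\eigenval_i)$ is a clean alternative that lands on the same coefficients, and your $\cosh$ packaging of the main term is equivalent to the paper's binomial expansion.

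One minor point of bookkeeping: the intermediate claim $\eigenval_1^{L+1}=\exp((L+1)\delta/2)+\order{e^{-\beta(\chempot+2a)}}$ is not correct as stated---the additive error at this stage is only $\order{\alpha}$, since the $\order{\alpha\delta}$ correction in $\log\eigenval_1$ gets amplified by $L+1\le c/\delta$. You already identify this in your final paragraph, and the point is that this $\order{\alpha}$ error becomes $\order{\alpha\delta}=\order{e^{-\beta(\chempot+2a)}}$ only after multiplication by $A_1\sim\delta/2$; it would be cleaner to state the error for $\eigenval_1^{L+1}$ as $\order{\alpha}$ and then absorb the factor of $\delta$ explicitly.
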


\begin{proof}
    It is straightforward to compute a set of eigenvectors of $T$, which allows us to diagonalize $T=PDP^{-1}$, where $D=\mathrm{diag}(\eigenval_1,\eigenval_2,\eigenval_3)$, and 
    \begin{equation}
        P=\begin{bmatrix}
            e^{-\frac{1}{2}\beta(\chempot+2a)} & e^{-\frac{1}{2}\beta(\chempot+2a)} & e^{-\frac{1}{2}\beta(\chempot+2a)} \\
            \eigenval_1(\eigenval_1-e^{-\frac{1}{2}\beta(\chempot+a)}) & \eigenval_2(\eigenval_2-e^{-\frac{1}{2}\beta(\chempot+a)}) & \eigenval_3(\eigenval_3-e^{-\frac{1}{2}\beta(\chempot+a)}) \\
            \eigenval_1-e^{-\frac{1}{2}\beta(\chempot+a)} & \eigenval_2-e^{-\frac{1}{2}\beta(\chempot+a)} & \eigenval_3-e^{-\frac{1}{2}\beta(\chempot+a)}
        \end{bmatrix}.
    \end{equation}
    Note that $P^{-1}\ket{0}$ is precisely the first column of $P^{-1}$, which we now compute using the adjoint method.
    By an elementary computation, we find that
    \begin{equation}
        \det(P)=-e^{-\frac{1}{2}\beta(\chempot+2a)}(\eigenval_1-\eigenval_2)(\eigenval_2-\eigenval_3)(\eigenval_3-\eigenval_1),
    \end{equation}
    and that the first column of $P^{-1}$ is given by
    \begin{equation}
        -\frac{1}{e^{-\frac{1}{2}\beta(\chempot+2a)}}
        \begin{bmatrix}
            \frac{(\eigenval_2-e^{-\frac{1}{2}\beta(\chempot+a)})(\eigenval_3-e^{-\frac{1}{2}\beta(\chempot+a)})}{(\eigenval_3-\eigenval_1)(\eigenval_1-\eigenval_2)} \\
            \frac{(\eigenval_3-e^{-\frac{1}{2}\beta(\chempot+a)})(\eigenval_1-e^{-\frac{1}{2}\beta(\chempot+a)})}{(\eigenval_1-\eigenval_2)(\eigenval_2-\eigenval_3)} \\
            \frac{(\eigenval_1-e^{-\frac{1}{2}\beta(\chempot+a)})(\eigenval_2-e^{-\frac{1}{2}\beta(\chempot+a)})}{(\eigenval_2-\eigenval_3)(\eigenval_3-\eigenval_1)}
        \end{bmatrix}
        =-\frac{1}{e^{-\frac{1}{2}\beta(\chempot+2a)}}
        \begin{bmatrix}
            -\frac{1}{2}e^{-\frac{1}{2}\beta(\chempot+3a)}+\order{e^{-\beta(\chempot+2a)}} \\
            -1+\order{e^{-\beta(\chempot+2a)}} \\
            \frac{1}{2}e^{-\frac{1}{2}\beta(\chempot+3a)}+\order{e^{-\beta(\chempot+2a)}}
        \end{bmatrix}.
    \end{equation}
    Hence, 
    \begin{equation}
        \label{eqn:partition function vacant boundary - expansion}
        \begin{multlined}
        \bra{0}T^{L+1}\ket{0}=\left[\frac{1}{2}e^{-\frac{1}{2}\beta(\chempot+3a)}+\order{e^{-\beta(\chempot+2a)}}\right]\eigenval_1^{L+1}
        +\left[1+\order{e^{-\beta(\chempot+2a)}}\right]\eigenval_2^{L+1}
        \\
        +\left[-\frac{1}{2}e^{-\frac{1}{2}\beta(\chempot+3a)}+\order{e^{-\beta(\chempot+2a)}}\right]\eigenval_3^{L+1}.
        \end{multlined}
    \end{equation}
    Finally, we use the binomial approximation to expand $\eigenval_1^{L+1}$ and $\eigenval_3^{L+1}$ (using the assumption that $L\le c\ell_0-1$ is even) and obtain
    \begin{equation}
        \label{eqn:eigenvalue power expansions}
        \begin{split}
            \eigenval_1^{L+1}{}&=1+\frac{1}{2}e^{-\frac{1}{2}\beta(\chempot+3a)}(L+1)+\order{e^{-\beta(\chempot+3a)}(L+1)^2},\\
            \eigenval_3^{L+1}{}&=-1+\frac{1}{2}e^{-\frac{1}{2}\beta(\chempot+3a)}(L+1)+\order{e^{-\beta(\chempot+3a)}(L+1)^2}.
        \end{split}
    \end{equation} 
    Combining~\eqref{eqn:partition function vacant boundary - expansion} and~\eqref{eqn:eigenvalue power expansions} yields the proposition.
\end{proof}

The second quantity we derive is a lower bound on the partition function on a rectangle of even length with fully packed boundary conditions, the boundary conditions on $\mathrm{R}_L$ prescribed by the dimer configuration $\rho(e):=\indicator{e-1/2\text{ is even}}$ and denoted by the superscript $1$.

\begin{proposition}
    \label{prop:partition function magnetized boundary}
    For all $\beta>0$ and even $L\ge 4$,
    \begin{equation}
        Z^{1,\oned}_{\mathrm{R}_L;\beta}\ge 1+\frac{1}{16}e^{-\beta(\chempot+3a)}L^2.
    \end{equation}
\end{proposition}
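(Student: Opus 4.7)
The plan is to bound $Z^{1,\oned}_{\mathrm{R}_L;\beta}$ below by the sum of the weights of the ground state and an explicit family of minimal-energy defect configurations.

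First, I will identify the \emph{ground state}, namely the configuration on $\mathrm{R}_L$ with dimers at edges $1/2, 5/2, \ldots, L-3/2$, which extends the boundary pattern of $\rho$ seamlessly. Every vertex of $\Z$ is then covered by a dimer, and every non-dimer edge $e$ has both distance-$2$ colinear neighbors $e \pm 2$ also non-dimers, so this configuration has no vacancies and no broken links and its shifted weight equals $1$.

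Next, for each pair of integers $(k,m)$ with $k \geq 0$, $m \geq 1$, and $k + m \leq L/2 - 1$, I will construct an \emph{A--B--A configuration}: use the ground-state pattern on the left up to vertex $2k - 1$; leave vertex $2k$ vacant; place $m$ shifted dimers at edges $2k + 3/2, 2k + 7/2, \ldots, 2k + 2m - 1/2$ (covering vertices $2k+1$ through $2k+2m$); leave vertex $2k + 2m + 1$ vacant; and resume the ground-state pattern on the right up to vertex $L - 1$. A direct inspection at each of the two domain walls shows that such a configuration contains exactly $2$ vacancies and $4$ broken links (the edges $2k \pm 1/2$ and $2k + 2m + 1 \pm 1/2$), so its shifted weight equals $e^{-\beta((\chempot + a) + 2a)} = e^{-\beta(\chempot + 3a)} = \ell_0^{-2}$.

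The admissible triples $(k, m, L/2 - 1 - k - m)$ are in bijection, via $m' := m - 1$, with compositions of $L/2 - 2$ into three non-negative parts, of which there are $\binom{L/2}{2} = L(L-2)/8$. Summing the contributions,
\begin{equation}
    Z^{1,\oned}_{\mathrm{R}_L;\beta} \geq 1 + \frac{L(L-2)}{8\ell_0^2} \geq 1 + \frac{L^2}{16\ell_0^2},
\end{equation}
where the last inequality holds exactly when $L \geq 4$.

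The main technical point will be the broken-link bookkeeping at the two domain walls in the degenerate cases $k = 0$ or $k + m = L/2 - 1$, where a vacancy sits adjacent to the boundary of $\Lambda$. There, one of the distance-$2$ colinear neighbors of a ``broken'' interior edge lies in $\V \setminus \Lambda$ and is supplied by a dimer of $\rho$; the periodic structure of $\rho$ makes the verification that the broken-link count remains $4$ entirely routine.
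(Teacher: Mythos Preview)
Your proposal is correct and follows essentially the same approach as the paper: identify the fully packed ground state of weight $1$, then lower-bound the partition function by additionally summing the weights of the $\binom{L/2}{2}=L(L-2)/8$ configurations with exactly two non-adjacent vacancies (your A--B--A configurations), each of weight $\ell_0^{-2}$, and finally use $L(L-2)/8\ge L^2/16$ for $L\ge 4$. The paper's proof is simply a terser statement of the same count.
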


\begin{proof}
    The fully packed configuration has weight $1$.
    Moreover, there are $L(L-2)/8$ configurations with exactly two non-adjacent vacancies, each having weight $e^{-\beta(\chempot+3a)}$.
    The proposition follows after bounding $L(L-2)\ge L^2/2$.
\end{proof}

The remaining results concern the two-dimensional monomer-dimer model, which follow from the one-dimensional results above.

\begin{proposition}
    \label{prop:partition function periodic boundary}
    For all $0<c<\frac{1}{2}$, there exists $\beta_0>0$ such that, for all $\beta\ge\beta_0$ and rectangles $\Lambda$ with either even width or even height,
    \begin{equation}
        \label{eqn:partition function periodic boundary}
        Z_{\Lambda;\beta}^{\per}\ge (1+ce^{-\frac{1}{2}\beta(\chempot+3a)})^{\Area{\Lambda}}.
    \end{equation}
\end{proposition}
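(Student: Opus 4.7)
The plan is to lower bound $Z^{\per}_{\Lambda;\beta}$ by the contribution of dimer configurations whose dimers are all horizontal. Without loss of generality, assume that $W:=\Width{\Lambda}$ is even (otherwise $\Height{\Lambda}$ is even and the symmetric argument using vertical dimers applies). Among horizontal-only configurations, every vertical edge carries $\sigma=0$, so each broken-link potential~\eqref{eqn:broken-link potential} supported on a vertical triple contributes nothing, and each vacancy potential~\eqref{eqn:vacancy potential} at a vertex $(i,j)$ reduces to an indicator depending only on the two horizontal edges incident to $(i,j)$. The weight therefore factorizes completely over rows---the vertical periodic identification introduces no cross-row coupling, because no dimer straddles two rows---yielding
\begin{equation*}
Z^{\per}_{\Lambda;\beta} \;\geq\; Z^{\per}_{\Lambda;\beta}(\text{horizontal only}) \;=\; \bigl(Z^{\per,\oned}_{\mathrm{R}_W;\beta}\bigr)^{\Height{\Lambda}}.
\end{equation*}

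Next I would estimate the one-dimensional periodic partition function using the transfer matrix $T$ of Section~\ref{sec:one-dimensional partition function bounds}. Interpreting the 1D periodic system on $W$ edges as a cyclic chain of sliding two-edge windows gives $Z^{\per,\oned}_{\mathrm{R}_W;\beta}=\mathrm{Tr}(T^W)=\eigenval_1^W+\eigenval_2^W+\eigenval_3^W$. Because $W$ is even, all three summands are non-negative, so $Z^{\per,\oned}_{\mathrm{R}_W;\beta}\geq\eigenval_1^W$.

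Finally, the eigenvalue expansion~\eqref{eqn:lambda_1 expansion} gives $\log\eigenval_1 = \tfrac{1}{2}\ell_0^{-1}+O(\ell_0^{-2})$, so for any fixed $c<1/2$ one may choose $\beta_0$ large enough that $\log\eigenval_1\geq c\ell_0^{-1}$ whenever $\beta\geq\beta_0$. Combining the three steps yields
\begin{equation*}
Z^{\per}_{\Lambda;\beta}\;\geq\;\eigenval_1^{W\cdot\Height{\Lambda}}\;\geq\;e^{c\ell_0^{-1}\Area{\Lambda}},
\end{equation*}
as desired. The only point requiring careful verification is the row-by-row factorization under the horizontal-only restriction: this reduces to checking that each term of the potential representation~\eqref{eqn:vacancy potential}--\eqref{eqn:broken-link potential} either vanishes identically on horizontal configurations or is supported within a single row, so that no interaction couples two distinct rows through the vertical periodic boundary.
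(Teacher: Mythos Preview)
Your proof is correct and follows essentially the same route as the paper's: restrict to horizontal-only configurations to factorize over rows, identify each row's periodic partition function as $\trace(T^W)$, and bound this below by $\eigenval_1^W$ using the evenness of $W$ together with the expansion~\eqref{eqn:lambda_1 expansion}. In fact your write-up is more careful than the paper's, which writes the factorization step as an equality $Z_{\Lambda;\beta}^{\per}=(\trace(T^{\Width{\Lambda}}))^{\Height{\Lambda}}$ where only the inequality $\ge$ actually holds.
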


\begin{proof}
    Let $0<c<\frac{1}{2}$.
    By the expansion~\eqref{eqn:lambda_1 expansion}, there exists $\beta_0>0$ such that $\eigenval_1\ge 1+ce^{-\frac{1}{2}\beta(\chempot+3a)}$ for all $\beta\ge\beta_0$.
    Assuming without loss of generality that $\Width{\Lambda}$ is even, we have that
    \begin{equation}
        Z_{\Lambda;\beta}^{\per}
        =(\trace(T^{\Width{\Lambda}}))^{\Height{\Lambda}}
        \ge\eigenval_1^{\Area{\Lambda}}
        \ge (1+ce^{-\frac{1}{2}\beta(\chempot+3a)})^{\Area{\Lambda}},
    \end{equation}
    as required.
\end{proof}

\begin{corollary}
    \label{cor:local defect chessboard seminorms}
    Let $v$ be a vertex and $e$ be an edge of $\Z^2$, and denote by $R_v$ and $R_e$ the $1\times 1$ and $2\times 1$ rectangles containing $v$ and $e$, respectively.
    Under the setting of Proposition~\ref{prop:partition function periodic boundary},
    \begin{enumerate}
        \item \label{itm:vacancy chessboard seminorm} $\norm{v\text{ is a vacancy}}_{R_v}\le e^{-\frac{1}{2}\beta(\chempot+a)}$,
        \item \label{itm:broken link chessboard seminorm} $\norm{e\text{ is a broken link}}_{R_e}\le 6e^{-\frac{1}{2}\beta a}$.
    \end{enumerate}
\end{corollary}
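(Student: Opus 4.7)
Both bounds will follow from direct computations of the chessboard-tiled events combined with the trivial partition-function lower bound $Z^\per_{\Lambda;\beta}\ge 1$ (a cleaner alternative to invoking Proposition~\ref{prop:partition function periodic boundary}).

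For part~\ref{itm:vacancy chessboard seminorm}, the plan is to observe that the orbit of $v$ under the reflection group $T^{R_v}$ is all of $\Z^2$, since $R_v$ is the $1\times 1$ square centered at $v$ and reflections through lines in its grid generate every integer translation together with sign flips. Consequently $|T^{R_v}_\Lambda|=|\Lambda|$, and the tiled indicator $\prod_{\tau\in T^{R_v}_\Lambda}\tau\indicator{v\text{ is a vacancy}}$ is the indicator of the empty dimer configuration on $\Lambda$. That configuration activates the vacancy potential~\eqref{eqn:vacancy potential} at every one of its $|\Lambda|$ vertices and no broken-link potential, giving Boltzmann weight $e^{-\beta(\chempot+a)|\Lambda|/2}$; dividing by $Z^\per_{\Lambda;\beta}\ge 1$, extracting the $|\Lambda|$-th root, and taking the limsup yields the bound.

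For part~\ref{itm:broken link chessboard seminorm}, I would decompose the event that $e$ is a broken link into six mutually exclusive $R_e$-local sub-events $F_1,\dots,F_6$, indexed by (i) which of the two colinear neighbors of $e$ in $\Z^2$ carries the required dimer (two choices, $\sigma(e_1)=1$ versus $\sigma(e_3)=1$) and (ii) the state of the endpoint of $e$ left uncovered by that dimer (three choices: a vertical dimer extending up, a vertical dimer extending down, or a vacancy, the hard-core constraint ruling out any colinear horizontal dimer). By the sublinearity of the chessboard seminorm applied to $\indicator{e\text{ is a broken link}}=\sum_{i=1}^{6}\indicator{F_i}$, it suffices to prove $\norm{F_i}_{R_e}\le e^{-\beta a/2}$ for each $i$, which accounts for the factor of $6$.

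For each $F_i$ the tiled event $\prod_{\tau}\tau F_i$ is in fact rigid: unfolding the $T^{R_e}$-orbits of the seven edges specified in $F_i$ produces a unique configuration on $\Lambda$ consisting of horizontal dimer strips, one in every fourth column on every row, with the intervening two-column-wide bands either fully packed with vertical dimers (when $F_i$ places a vertical dimer at the free vertex) or entirely vacant (when $F_i$ leaves a vacancy there). A direct inspection then shows that in every such configuration, the middle horizontal edge of each tile is the unique broken link of that tile and no vertical broken link is present, so the Hamiltonian is at least $\frac{a}{2}|T^{R_e}_\Lambda|$, with any vacancy contributions only improving the bound thanks to $\chempot+a>0$. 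The resulting weight is at most $e^{-\beta a|T^{R_e}_\Lambda|/2}$; dividing by $Z^\per_{\Lambda;\beta}\ge 1$, extracting the $|T^{R_e}_\Lambda|$-th root, and taking the limsup gives the desired estimate. The main bookkeeping, and the only mildly delicate step, is the orbit analysis identifying the six chessboard configurations and verifying the broken-link count; this is elementary but slightly tedious.
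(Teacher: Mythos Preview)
Your proposal is correct and follows essentially the same approach as the paper. The paper's proof is terse---for Item~\ref{itm:vacancy chessboard seminorm} it just says the bound follows from the definition together with $Z^\per_{\Lambda;\beta}\ge 1$, and for Item~\ref{itm:broken link chessboard seminorm} it simply asserts that there are six dimer configurations on $R_e$ making $e$ a broken link, each with $R_e$-seminorm at most $e^{-\frac{1}{2}\beta a}$, and invokes subadditivity---while you spell out the orbit analysis and the resulting tiled configurations explicitly; the underlying decomposition and bounds are identical.
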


\begin{proof}
    Item~\ref{itm:vacancy chessboard seminorm} follows immediately from the definition of the chessboard seminorm and Proposition~\ref{prop:partition function periodic boundary}, which we use to lower bound the partition function with periodic boundary conditions by $1$.
    To prove Item~\ref{itm:broken link chessboard seminorm}, we assume without loss of generality that $e$ is horizontal.
    There are six possible dimer configurations on $R_e$ that make $e$ into a broken link, each having $R_e$-chessboard seminorm no greater than $e^{-\frac{1}{2}\beta a}$.
    Item~\ref{itm:broken link chessboard seminorm} then follows by the subadditivity of the chessboard seminorm.
\end{proof}

\subsection{Configurations without long sticks}
\label{sec:configurations without long sticks}

Recall that the main goal of this section is to prove Theorem~\ref{thm:mesoscopic rectangles are divided by sticks}, which requires bounding the chessboard seminorm of the event that a mesoscopic rectangle $R$ is not divided by a stick through its bulk $S$.
The chessboard seminorm is defined as the limit superior of finite-volume chessboard seminorms, with respect to which the event in question is \emph{disseminated} globally via the transformations in $T^R_{\mathrm{R}_{n!\times n!}}$.
A remarkable property of this disseminated event is as follows.
For $M\ge 1$, let $E_M\subset\Omega$ consist of all dimer configurations $\sigma$ such that every stick of $\sigma$ has length at most $M$.
It turns out the disseminated event is, up to boundary effects that will be handled separately, contained in $E_M$ for all $M\ge 2\max\set{\Width{R},\Height{R}}$.
The constraint on the lengths of sticks is much easier to work with than the original one that no rectangle is proper divided by sticks.
Indeed, in this subsection, we prove the following bound on the $E_M$-constrained partition function under vacant boundary conditions.

\begin{proposition}
    \label{prop:partition function no long sticks}
    Suppose that $3a>\chempot$.
    There exist constants $\beta_0,c,C>0$ such that for all $\beta\ge\beta_0$, $c^{-1}\le M\le c\ell_0$, and even rectangles $\Lambda$, it holds that
    \begin{equation}
        \label{eqn:partition function no long sticks}
        Z_{\Lambda;\beta}^0(E_M)\le e^{2\beta(\chempot+a)} e^{-C\frac{\Area{\Lambda'}}{M}},
    \end{equation}
    where $\Lambda':=\rect{(\Width{\Lambda}+2)\times(\Height{\Lambda}+2)}{(x_0-1,y_0-1)}$.
\end{proposition}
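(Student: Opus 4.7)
The plan is to show that any configuration $\sigma \in E_M \cap \Omega^0_\Lambda$ must contain a density of defects---vacancies, broken links, and associated local patterns forced by the absence of long sticks---on the order of $1/M$ per lattice site, which, combined with the one-dimensional partition function estimates of Section~\ref{sec:one-dimensional partition function bounds}, yields exponential decay in $\Area{\Lambda}/M$. The combinatorial lower bound proceeds as follows: for each horizontal dual line at height $y + 1/2$ inside $\Lambda$, the constraint that no horizontal stick at this height exceeds length $M$ forces the set of columns $x$ at which both $(x,y)$ and $(x,y+1)$ are incident to horizontal dimers to decompose into runs of length at most $M$. Consequently, at least $c\Width{\Lambda}/M$ columns are \emph{break columns}, i.e., columns at which at least one of the two primal vertices is a vacancy or is incident to a vertical dimer. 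Summing over all heights, and invoking an analogous bound from the vertical-stick constraint, yields at least $c\Area{\Lambda}/M$ break-column incidences in total.

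The next step is to bound $Z^0_{\Lambda;\beta}(E_M)$ by decomposing $\Lambda$ along these break columns. Between consecutive cuts, the configuration splits into narrow horizontal slabs that can be controlled by the one-dimensional partition function with vacant boundary conditions via Proposition~\ref{prop:partition function vacant boundary}, together with the eigenvalue expansions of Section~\ref{sec:eigenvalue expansions}. The weight of each break column---a combination of vacancy penalties $e^{-\beta(\chempot+a)/2}$, broken-link penalties $e^{-\beta a/2}$, and the suboptimality of the neighbouring 1D partition functions---provides the per-defect exponential cost. Summing over all admissible arrangements of break columns subject to the density bound and balancing against the combinatorial entropy of their placements produces the claimed decay, with the prefactor $e^{2\beta(\chempot+a)}$ absorbing a bounded number of boundary vacancy contributions arising just outside $\Lambda$ under vacant boundary conditions.

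I expect the main obstacle to be quantifying the defect cost in break columns that contain only vertical dimers, which incur no direct Hamiltonian penalty per se. Resolving this requires tracking how such vertical-dimer breaks propagate upward and downward in the configuration: each must be capped by vacancies, broken links, or yet further orientation mismatches, which in turn trigger additional defects. The assumption $a > \chempot/3$ enters here to guarantee that such cheap breaks cannot be stacked arbitrarily and that the average cost per break column is at least $(\chempot + 3a)/2$, matching the mesoscopic length scale $\ell_0 = e^{\beta(\chempot+3a)/2}$ and closing the comparison between defect energy and placement entropy.
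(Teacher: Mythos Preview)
Your proposal has a genuine structural gap. The row-by-row counting of ``break columns'' is sound as a density lower bound on stick interruptions, but it does not translate into a bound on the partition function in the way you describe. The step ``decompose $\Lambda$ along these break columns'' into pieces controlled by one-dimensional partition functions fails in two dimensions: break columns at different dual heights do not align, and vertical dimers couple adjacent rows through the broken-link potential, so the weight $w^0_{\Lambda;\beta}(\sigma)$ does not factorize over any such slab decomposition. Relaxing the inter-row constraints to force a factorization would only \emph{increase} the partition function, which is the wrong direction for the upper bound you need. Moreover, ``summing over all admissible arrangements of break columns subject to the density bound'' is not a well-posed combinatorial problem: the break-column patterns across different rows are determined by a single two-dimensional configuration $\sigma$, and a vertical dimer can generate many break-column incidences at zero Hamiltonian cost, so the entropy/energy balance cannot be run row by row.

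The paper's approach avoids this by encoding $\sigma$ in a planar \emph{configuration graph} $G_\sigma$ on the dual lattice with edges labeled by stick, broken link, or vacancy. The entropy of configurations in $E_M$ is controlled by first passing to the \emph{compressed} graph $\comp(G_\sigma)$ (each stick collapsed to a single edge), bounding the number of compressed graphs with $v$ vacancies and $b$ broken links by $C^{v+b}$ via planarity, and bounding the number of decompressions by $M^{k(G_\sigma)-2}$, where $k$ is the number of horizontal and vertical sub-components. The sub-component count $k$ is then bounded in terms of $v$ and $b$ by a ``defect-chasing'' argument (Proposition~\ref{prop:defect chasing} and Corollary~\ref{cor:component number bound}) showing that every horizontal--vertical interface forces either a vacancy and two broken links or six broken links; this is the precise mechanism replacing your informal ``tracking how vertical-dimer breaks propagate.'' Finally, the constraint $(2M+1)b+(8M+5)v\ge\Area{\Lambda'}$ (Proposition~\ref{prop:more sticks more defects}) supplies the density of defects. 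The sum over $(v,b)$ then reduces to a geometric series, and the hypothesis $a>\chempot/3$ enters exactly to make the summand $Ce^{-\frac{1}{2}\beta a}M^{1/6}$ small in the case $b\ge 2v$ of Corollary~\ref{cor:component number bound}.
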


\subsubsection{Configuration graphs}

Let $\Lambda=\rect{\Width{\Lambda}\times\Height{\Lambda}}{(x_0,y_0)}$ be an even rectangle.
We encode each dimer configuration $\sigma\in\Omega^0_\Lambda$ in its \textbf{configuration graph} $G_\sigma$, which is a \emph{directed} subgraph of $(\Z^2)^\ast$ whose edges are each assigned a label from the set $\set{\hor,\ver}\times\set{s,b,v,bv}$.
Specifically, the vertex set of $G_\sigma$ is the set $V_\sigma$ of vertices of $(\Z^2)^\ast$ that intersect the rectangle $\Lambda'$ as defined in Proposition~\ref{prop:partition function no long sticks}.
The edge set $E_\sigma$ of $G_\sigma$ is the set of edges of $(\Z^2)^\ast$ connecting vertices in $V_\sigma$ which are not crossed by a dimer or the link between two dimers in $\sigma$.
We direct each edge in $E_\sigma$ to point either rightward or upward depending on its orientation, and assign label as follows.
The first component of the label reflects the orientation of the edge, so that horizontal edges are assigned $\hor$ and vertical edges $\ver$.
The second component is determined by the following procedure:
\begin{enumerate}
    \item every edge that is adjacent to a vacancy in $\sigma$ is assigned the label $v$;
    \item every edge that is crossed by a broken link of $\sigma$ is assigned the label $b$, or $bv$ if it has already been assigned the label $v$;
    \item the remaining edges are assigned the label $s$---these are precisely the stick edges.
\end{enumerate}
See Figure~\ref{fig:configuration graph example} for an example of the configuration graph associated to a dimer configuration.

\begin{figure}
    \centering
    \includegraphics[width=0.54\textwidth]{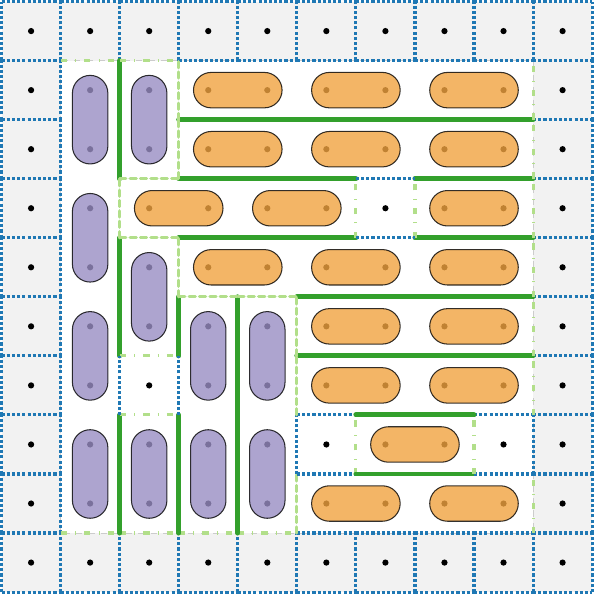}
    \caption{The configuration graph associated to a dimer configuration in $\Omega^0_{\mathrm{R}_{8\times 8}}$. The correspondence between edge labels and edge styles is as follows: solid green edges correspond to the label $s$, dotted blue edges to $v$, dashed light green edges to $b$, and dash-dotted light green edges to $bv$ (color online)}
    \label{fig:configuration graph example}
\end{figure}

We say that a vacancy of $\sigma$ belongs to $G_\sigma$ if all four edges of $(\Z^2)^\ast$ surrounding the vacancy belong to $E_\sigma$, and that a broken link of $\sigma$ belongs to $G_\sigma$ if the edge of $(\Z^2)^\ast$ crossing the broken link belongs to $E_\sigma$.
Denote the number of vacancies and broken links belonging to $G_\sigma$ by $v(G_\sigma)$ and $b(G_\sigma)$, respectively.
Define the \textbf{weight} of $G_\sigma$ as
\begin{equation}
    w^0_{\Lambda;\beta}(G_\sigma):=(e^{-\frac{1}{2}\beta(\chempot+a)})^{v(G_\sigma)}(e^{-\frac{1}{2}\beta a})^{b(G_\sigma)}.
\end{equation}
We note the identity
\begin{equation}
    \label{eqn:configuration graph weight identity}
    w^0_{\Lambda;\beta}(G_\sigma)=(e^{-\frac{1}{2}\beta(\chempot+a)})^4 w^0_{\Lambda,\beta}(\sigma),
\end{equation}
where the power of $4$ reflects the fact that the vacancies at the four corners of $\Lambda'$ do not contribute to weight of $\sigma$.

\subsubsection{Properties of configuration graphs}

Here, we establish basic properties of configuration graphs.
When speaking of the connectivity on a configuration graph, we always mean the \emph{weak} connectivity on directed graphs, by which two vertices are connected if, ignoring all edge directions, they are connected by a path.

\begin{proposition}
    \label{prop:configuration graph is connected}
    For all $\sigma\in\Omega^0_\Lambda$, the configuration graph $G_\sigma$ is connected.
\end{proposition}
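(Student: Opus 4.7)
The goal is to analyze the dual edges of $(\Z^2)^\ast$ with both endpoints in $V_\sigma$ that are crossed by a dimer or a link of $\sigma$ (and thus excluded from $E_\sigma$); for brevity I call these the ``removed'' edges. The plan is to show that the removed edges form a topologically benign subgraph that cannot disconnect $G_\sigma$.

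The key local observation is: at each primal vertex $v \in \Z^2$, at most two of the four incident primal edges are dimers or links, and if exactly two, they are colinear. Indeed, the dimer hard-core constraint permits $v$ to lie in at most one dimer (which fixes the orientation at $v$), and any link incident to $v$ must then be the colinear edge on the opposite side of $v$ from that dimer; the two perpendicular edges at $v$ can be neither a dimer (since $v$ already lies in one) nor a link (since $v$ is not incident to any dimer colinear with them). Translating to the dual, the four incident dual edges at any $d \in V_\sigma$ correspond to the four primal edges bordering the unit face at $d$; applying the observation at each of the four primal corners of this face shows that the set of removed boundary edges is a matching in the 4-cycle of boundary edges, so at most two of them are removed, and if two, they must be the opposite pair of primal sides. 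The corresponding pair of dual edges at $d$ is then colinear in the planar embedding. Consequently, the subgraph of $(\Z^2)^\ast$ of removed dual edges has maximum degree two with colinear incidence at degree-two vertices, and therefore decomposes into a disjoint union of straight horizontal and vertical paths; in particular, it is acyclic.

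Next, since $\sigma \in \Omega^0_\Lambda$, every dimer and every link of $\sigma$ has both endpoints in $\Lambda$, so all removed dual edges lie strictly in the interior of $V_\sigma$ and are disjoint from the outer boundary cycle of $(\Z^2)^\ast$ restricted to $\Lambda'$. Thus this outer boundary cycle belongs entirely to $E_\sigma$, and all boundary dual vertices of $V_\sigma$ are mutually connected in $G_\sigma$. The proof then reduces to showing that every interior dual vertex is connected to this outer boundary in $G_\sigma$.

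This last step is a planar-topological fact: removing an acyclic edge set that lies strictly in the interior of a planar graph embedded in a disk cannot disconnect the graph, since any potential separation would require the removed edges to contain either a closed cycle (ruled out by acyclicity) or a path joining two points of the outer boundary (ruled out by interior confinement). I expect this step to be the main obstacle to make fully rigorous, although a hands-on approach is available: each maximal straight segment of removed horizontal (resp.\ vertical) dual edges corresponds to a maximal vertical (resp.\ horizontal) dimer-link chain in the primal, and the maximality at each end of the chain guarantees that the dual edges immediately past each endpoint of the segment lie in $E_\sigma$, so the segment can always be navigated around perpendicularly. Iterating this bypass over a finite sequence of removed segments yields a path from any interior dual vertex to the outer boundary, completing the proof.
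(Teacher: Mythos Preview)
Your proof is correct but follows a different route from the paper. You give a global structural characterization: at every dual vertex at most two incident dual edges are removed, and if two they are colinear, so the removed set is a disjoint union of straight segments strictly interior to $\Lambda'$; connectivity then follows from the planar-duality fact that an edge set disconnects a connected plane graph iff its dual edge set contains a cycle, and here the dual of the removed set is precisely the collection of primal dimer/link edges, which by the same local observation is a forest of straight primal paths. The paper instead argues locally: for any removed dual edge it takes the $\ddag$-component $\gamma$ of (say) horizontal dimers containing the obstructing dimer and observes that the boundary of the smallest dual rectangle enclosing $\gamma$ lies entirely in $E_\sigma$, furnishing an explicit detour between the two endpoints. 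The paper's argument is shorter and fully self-contained; yours yields a cleaner global picture of the removed set but, as you rightly flag, needs the duality statement to finish cleanly---your iterated-bypass sketch can run into further removed segments and does not obviously terminate. One small inaccuracy that does not affect the argument: a straight horizontal segment of removed dual edges corresponds to a \emph{row} of parallel vertical primal dimer/link edges, not a single linked chain.
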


\begin{proof}
    It suffices to show that every pair of vertices in $V_\sigma$ that are adjacent in $(\Z^2)^\ast$ are connected in $G_\sigma$.
    Suppose that the pair are not already connected by an edge in $E_\sigma$.
    Then, the edge of $(\Z^2)^\ast$ connecting the pair is crossed by either a dimer $e$ or the link between two dimers $e,e'$ in $\sigma$.
    Without loss of generality, we may assume that $e$ is horizontal.
    Consider the $\ddag$-component $\gamma$ of horizontal dimers in $\sigma$ that contains $e$.
    The boundary of the smallest rectangle with corners on $(\Z^2)^\ast$ that encloses $\gamma$ contains a path in $G_\sigma$ joining the pair.
\end{proof}

Let $\sigma\in\Omega^0_\Lambda$.
Define $G_{\sigma,\ver}$ as the graph obtained from $G_\sigma$ by removing all of its horizontal stick edges.
Each nontrivial connected component of $G_{\sigma,\ver}$ is called a \textbf{vertical sub-component} of $G_\sigma$.
Let $k_\ver(G_\sigma)$ denote the number of vertical sub-components of $G_\sigma$.
Define $G_{\sigma,\hor}$, horizontal sub-components of $G_\sigma$, and $k_\hor(G_\sigma)$ analogously.
Note that we may continue to say that a vacancy or a broken link of $\sigma$ \textbf{belongs} to a vertical (or horizontal) sub-component of $G_\sigma$ without ambiguity---in the case of a vacancy, this follows from the fact that all four edges of $(\Z^2)^\ast$ bounding a vacancy necessarily belong to the same vertical (or horizontal) sub-component.
Finally, we set 
\begin{equation}
    \label{eqn:total component number}
    k(G_\sigma):=k_\ver(G_\sigma)+k_\hor(G_\sigma).
\end{equation}

\begin{proposition}
\label{prop:defect chasing}
Suppose that $\sigma\in\Omega^0_\Lambda$ contains at least one dimer. 
Let $A$ be a vertical sub-component and $B$ be a horizontal sub-component of $G_\sigma$.
If $A$ and $B$ intersect, then either (a) two broken links and one vacancy or (b) six broken links belong to both $A$ and $B$.
\end{proposition}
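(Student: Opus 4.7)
The plan is to localize the intersection $A\cap B$ to a single shared vertex $p$ and carry out a careful local analysis at $p$. To do so, I would first establish the following key local dichotomy: at any dual vertex $p$ of $(\Z^2)^\ast$, all stick edges incident to $p$ must have the same orientation. This follows from the hard-core constraint: a vertical dual stick at $p$ and a horizontal dual stick at $p$ would force the shared primal corner of the bisected primal edges to carry two dimers of perpendicular orientations, which is impossible. Consequently, since $p \in A\cap B$ means $p$ is incident both to an edge in $G_{\sigma,\ver}$ (belonging to the vertical sub-component $A$) and to an edge in $G_{\sigma,\hor}$ (belonging to the horizontal sub-component $B$), at least one of these must be a non-stick dual edge (labeled $v$, $b$, or $bv$), and such an edge automatically lies in $A\cap B$ as it belongs to both $G_{\sigma,\ver}$ and $G_{\sigma,\hor}$.

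Next, I would perform a case analysis based on whether a vacancy appears among the four primal corners of the face at $p$ (or more generally in the immediate neighborhood forced by the previous step). If a vacancy is present, the four dual edges surrounding it all lie in $E_\sigma$ with labels in $\set{v, bv\nobreak}$; they form a $4$-cycle contained in both $G_{\sigma,\ver}$ and $G_{\sigma,\hor}$, hence entirely in $A\cap B$. A short analysis of the dimer orientations at the three non-vacancy corners, combined with the constraint that $p$ must carry a non-stick edge of each ``type'' needed by $A$ and $B$, forces at least two of these four dual edges to also be broken links (i.e.\ carry the label $bv$). This yields case (a): one vacancy together with two broken links common to $A$ and $B$. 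In the vacancy-free case, every primal vertex near $p$ is covered by a dimer, so the presence of vertical-stick structure (from $A$) and horizontal-stick structure (from $B$) in the same locality forces a transition between vertical and horizontal dimer orientations as one walks around $p$. Each location at which consecutive dimers fail to be colinear produces a primal edge which, not being a dimer or a link, is necessarily a broken link; tracing such a walk around the local configuration produces at least six such broken links, each lying on a non-stick dual edge in both $A$ and $B$. This yields case (b).

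The main obstacle is the second case. Making sure the minimum count really is six (and not, say, four) requires careful bookkeeping of the dimer orientations around $p$ and its neighboring primal vertices: one must verify that every attempt to ``save'' a broken link by reorienting a dimer either reintroduces the conflict elsewhere or (via the orientation lemma of paragraph one) splits $A$ from $B$ at $p$, contradicting their intersection. A secondary subtlety, in both cases, is ensuring that each defect we produce genuinely belongs to both $A$ and $B$ in the formal ``belongs to'' sense defined for configuration graphs, rather than being separated from one of the two sub-components by a stick edge of the opposite orientation; this is handled by always identifying the defect with a specific non-stick dual edge (or the $4$-cycle around a vacancy), which by construction lies in $G_{\sigma,\ver}\cap G_{\sigma,\hor}$.
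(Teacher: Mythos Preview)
Your purely local analysis at a single shared dual vertex $p$ has a genuine gap in case (a), and case (b) remains a sketch rather than a proof. For case (a), the claim that at least two of the four dual edges around the vacancy must carry the label $bv$ is false. Place an isolated vacancy at the origin and cover each of its four primal neighbours by a dimer \emph{perpendicular} to the edge joining it to the origin, in a pinwheel pattern (e.g.\ $(1,0)$ by the vertical dimer $\{(1,0),(1,-1)\}$, $(0,1)$ by the horizontal dimer $\{(0,1),(1,1)\}$, etc.). Each primal edge incident to the origin then has zero colinear incident dimers, so none is a broken link: all four surrounding dual edges are labeled $v$, not $bv$. The $v$-labeled $4$-cycle already lies in both $G_{\sigma,\ver}$ and $G_{\sigma,\hor}$, so your ``constraint that $p$ must carry a non-stick edge of each type'' is satisfied by these $v$-edges and forces nothing further. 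The two broken links guaranteed by the proposition sit one step further out (here at $(-1/2,1)$ and $(1/2,-1)$) and are simply not visible from the $4$-cycle alone.

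The paper's argument is global rather than local, and this is what makes the counts work. If $A$ and $B$ share a vacancy, one passes to the entire $\boxtimes$-component $\gamma$ of vacancies containing it and locates the two broken links at extremal positions of $\gamma$ (leftmost vacancy in the top row, rightmost in the bottom row), not on the $4$-cycle of any single vacancy. If $A$ and $B$ share no vacancy, they must share a broken link; instead of tracing locally around $p$, one takes the boundary of the union of the $2\times 1$ rectangles containing \emph{all} horizontal dimers in $\sigma$. This boundary is an even-degree subgraph of $(\Z^2)^\ast$, so the dual edge crossing the shared broken link extends to a simple cycle lying entirely in it; every edge of that cycle is non-stick, hence labeled $b$ (no vacancies are shared), and the cycle has length at least six because it encloses at least one $2\times 1$ block. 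Your intuition that the vacancy-free count is ``the main obstacle'' is correct: without this global cycle structure there is no clean mechanism forcing six rather than, say, four.
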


\begin{proof}
     We split into two cases.
     
     Suppose first that $A$ and $B$ share a vacancy $v$.
     Consider the $\boxtimes$-component $\gamma$ of the vacancies of $\sigma$ that contains $v$.
     Note that $A$ and $B$ share all the vacancies of $\gamma$.
     If $\gamma$ is bounded, the leftmost vacancy in the top row of $\gamma$ and the rightmost vacancy in its bottom row are each incident or adjacent to a broken link.
     Otherwise, $\gamma$ is unbounded, and the dimers occupying the leftmost vertex in the top row and the rightmost vertex in the bottom row of $\Z^2\setminus\gamma$ (which exist because we have assumed that $\abs{\sigma}\ge 1$) are each incident to a broken link.
     To summarize, $A$ and $B$ share at least two broken links in addition.

     Otherwise, $A$ and $B$ do not share a vacancy, so they must share a broken link.
     The endpoints of this broken link are each occupied by a horizontal and a vertical dimer.
     Consider the union $\gamma$ of the $2\times1$ rectangles containing each horizontal dimer in $\sigma$.
     The boundary of $\gamma$, viewed as a subgraph of $(\Z^2)^\ast$ in the natural way, has all even degrees.
     Thus, the edge of $(\Z^2)^\ast$ crossing the broken link may be extended to a simple cycle in the boundary of $\gamma$.
     A priori, every edge in this cycle has label $b$, $v$, or $bv$, but the latter two possibilities are ruled out since $A$ and $B$ do not share a vacancy.
     Since the cycle has a length of at least six, the proof is complete.
\end{proof}

\begin{corollary}
\label{cor:component number bound}
If $\sigma\in\Omega^0_\Lambda$, then 
\begin{equation}
    k(G_\sigma)\le
    \begin{cases}
        \frac{1}{2}b(G_\sigma)+1 & b(G_\sigma)< 2v(G_\sigma) \\
        \frac{2}{3}v(G_\sigma)+\frac{1}{6}b(G_\sigma)+1 & \mathrm{otherwise}
    \end{cases}.
\end{equation}
\end{corollary}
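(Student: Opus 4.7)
The plan is to bound $k(G_\sigma)$ via an auxiliary graph whose vertex count equals $k(G_\sigma)$ and whose edge count is constrained by the defect-chasing proposition. Define a bipartite graph $H$ whose vertex set is the disjoint union of all vertical and horizontal sub-components of $G_\sigma$, and whose edges connect a vertical sub-component $A$ and a horizontal sub-component $B$ whenever $A\cap B\ne\emptyset$. Then $\abs{V(H)}=k(G_\sigma)$.

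First, I would show that $H$ is connected. Since $G_\sigma$ is connected (Proposition~\ref{prop:configuration graph is connected}), for any two sub-components $S_1,S_2$ there is a path in $G_\sigma$ from a vertex of $S_1$ to one of $S_2$. Walking along such a path and tracking which sub-components each successive vertex belongs to---noting that a non-stick edge ($v$, $b$, or $bv$) simultaneously lies in one vertical and one horizontal sub-component, while a stick edge lies in a single sub-component of its own orientation---produces a walk in $H$ from $S_1$ to $S_2$. Hence $\abs{E(H)}\ge k(G_\sigma)-1$.

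Next, I would count edges of $H$ via a defect budget. Let $n_a$ and $n_b$ denote the number of edges of $H$ witnessed by cases (a) and (b) of Proposition~\ref{prop:defect chasing}, respectively. The key observation is that each vacancy (respectively broken link) belonging to $G_\sigma$ lies in exactly one vertical and one horizontal sub-component---for a vacancy, all four bounding edges of $(\Z^2)^\ast$ have labels in $\set{v,bv}$ and hence lie in the same pair of sub-components---and therefore contributes to at most one edge of $H$. Summing the lower bounds supplied by Proposition~\ref{prop:defect chasing} over all edges in $H$ gives $n_a\le v(G_\sigma)$ and $2n_a+6n_b\le b(G_\sigma)$.

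Finally, I would optimize: since $n_a$ consumes fewer broken links per edge of $H$ than $n_b$, one saturates $n_a$ before using $n_b$. If $b(G_\sigma)<2v(G_\sigma)$, the broken-link constraint binds first with $n_a=b(G_\sigma)/2$, $n_b=0$, giving $n_a+n_b\le b(G_\sigma)/2$; otherwise $n_a=v(G_\sigma)$ saturates first and $n_b\le(b(G_\sigma)-2v(G_\sigma))/6$, giving $n_a+n_b\le\tfrac{2}{3}v(G_\sigma)+\tfrac{1}{6}b(G_\sigma)$. Combined with $k(G_\sigma)-1\le n_a+n_b$, this yields the two cases of the corollary. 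The main obstacle is the careful verification that $H$ is connected, particularly at vertices of $G_\sigma$ touched only by stick edges of a single orientation (so lying in a sub-component of only one orientation); a case analysis along a path in $G_\sigma$, using that each such vertex is joined within its stick run to another vertex incident to a defect-labeled edge, handles this.
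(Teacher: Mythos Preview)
Your proposal is correct and follows essentially the same route as the paper: build the bipartite auxiliary graph on sub-components, use connectedness of $G_\sigma$ to get at least $k(G_\sigma)-1$ edges, invoke Proposition~\ref{prop:defect chasing} to classify edges into types (a) and (b), derive the constraints $n_a\le v(G_\sigma)$ and $2n_a+6n_b\le b(G_\sigma)$ from the fact that each defect belongs to a unique vertical/horizontal pair, and optimize. The paper's $s,t$ are your $n_a,n_b$. One small omission: the paper disposes of the case $\abs{\sigma}=0$ separately before invoking Proposition~\ref{prop:defect chasing} (which requires at least one dimer); you should do the same.
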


\begin{proof}
    The case that $\sigma$ contains no dimers is clear, so we assume otherwise.
    Consider an auxiliary bipartite graph $G'_\sigma$ whose vertices are the horizontal and vertical sub-components of $H$, wherein a horizontal and a vertical sub-component are connected by an edge if and only if they intersect.
    Then, $G'_\sigma$ has $k(G_\sigma)$ vertices by definition and at least $k(G_\sigma)-1$ edges since $G_\sigma$ is connected by Proposition~\ref{prop:configuration graph is connected}.
    Among the edges of $G'_\sigma$, suppose that exactly $s$ connect sub-components sharing at least two broken links and one vacancy.
    By Proposition~\ref{prop:defect chasing}, the remaining edges of $G'_\sigma$ connect sub-components sharing at least six broken links; denote the number of these edges by $t$.
    Thus, $s+t\ge k(G_\sigma)-1$.
    By counting the number of vacancies and broken links associated to the horizontal (or vertical) sub-components of $H$, we deduce that
    \begin{equation}
    \label{eqn:component number bound - constraints}
        v(G_\sigma)\ge s,\ b(G_\sigma)\ge 2s+6t.
    \end{equation}
    Since $k(G_\sigma)\le s+t+1$, maximizing $s+t+1$ under the constraints of~\eqref{eqn:component number bound - constraints} (and that $s,t\ge 0$) yields the corollary.
\end{proof}

Lastly, we show that upper bounds on the lengths of sticks induce lower bounds on the number of vacancies and broken links in the corresponding configurations.

\begin{proposition}
    \label{prop:more sticks more defects}
    Let $\sigma\in\Omega^0_\Lambda$.
    If $G_\sigma$ does not contain any stick of length greater than $M$, then $(2M+1)b(G_\sigma)+(8M+5)v(G_\sigma)\ge\Area{\Lambda'}$.
\end{proposition}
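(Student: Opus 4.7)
The plan is to prove Proposition~\ref{prop:more sticks more defects} via a weight-attribution argument on the faces of the configuration graph $G_\sigma$, viewed as a subgraph of the dual square lattice restricted to $\Lambda'$. Every primal vertex in $\Lambda'\cap\Z^2$ belongs to a unique face of $G_\sigma$: either a single-cell vacancy face of area $1$, or a chain face of dimensions $2n\times 1$ (respectively $1\times 2n$) corresponding to a maximal horizontal (respectively vertical) dimer chain of length $n$ and having area $2n$. Hence $\Area{\Lambda'}=\sum_F \abs{F}$ where $F$ ranges over the faces of $G_\sigma$.

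The attribution scheme distributes the weight capacity of each defect among adjacent faces. Specifically, each vacancy $u\in G_\sigma$ sends weight $1$ to its own vacancy face and weight $2M+1$ to each of the (up to four) faces adjacent to $u$ via its four surrounding dual edges, totaling $8M+5$. Each broken link $b\in G_\sigma$ sends weight $(2M+1)/2$ to each of the (up to two) faces on either side of its bisecting dual edge, totaling $2M+1$. Summing over all defects,
\[
\sum_F \mathrm{weight}(F)\le (2M+1)b(G_\sigma)+(8M+5)v(G_\sigma).
\]

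The crux is to verify $\mathrm{weight}(F)\ge \abs{F}$ for each face $F$. For vacancy faces this is immediate from the one-unit self-attribution. For a chain face $F$ of length $n$, the boundary comprises $2$ end broken links and $4n$ top-and-bottom dual edges labeled $s$, $b$, $v$, or $bv$. Let $P$, $Q$, $R$ denote the counts of $b$-, $v$-, and $bv$-edges on the top and bottom boundaries combined. The no-long-stick hypothesis forces every maximal run of $s$-edges to have length at most $M$, yielding $P+Q+R\ge 2(2n-M)/(M+1)$. Combining the weight contributions of $2M+1$ from the two end broken links and $(2M+1)(P/2+Q+3R/2)$ from the boundary non-stick edges, and using $P/2+Q+3R/2\ge (P+Q+R)/2$, gives
\[
\mathrm{weight}(F)\ge (2M+1)\left[1+\tfrac{1}{2}(P+Q+R)\right]\ge \frac{(2M+1)(2n+1)}{M+1}\ge 2n,
\]
where the last inequality is equivalent to $2Mn+2M+1\ge 0$. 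The degenerate case $2n\le M$ is handled by the $2M+1\ge 2n$ contribution of the two end broken links alone.

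The main obstacle I anticipate is the careful bookkeeping in the chain-face weight bound: distinguishing the $b$-, $v$-, and $bv$-labels and their differing contributions to the attributed weight; verifying that the distribution indeed totals $(2M+1)b(G_\sigma)+(8M+5)v(G_\sigma)$ despite boundary defects whose adjacent ``faces'' may lie partly outside $\Lambda'$ (where the wasted attribution only improves the inequality); and confirming that the configuration minimizing the attributed weight under the stick-length constraint has $Q=R=0$, so that the bound $P+Q+R\ge 2(2n-M)/(M+1)$ tightly controls the weight. This overall strategy replaces per-vertex charging arguments with a uniform attribution in which the constants $2M+1$ and $8M+5$ emerge naturally from the perimeter structure of chain and vacancy faces, respectively.
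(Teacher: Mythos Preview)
Your proof is correct and takes a genuinely different route from the paper's. The paper argues globally: it lower-bounds the total number of stick edges by $2d-4v(G_\sigma)-b(G_\sigma)$ via a ``candidate edge'' count (each dimer contributes four candidates, each shared by at most two dimers, minus those killed by vacancies and broken links), divides by $M$ to lower-bound the number of sticks, and then upper-bounds the number of stick endpoints by $2b(G_\sigma)+8v(G_\sigma)$ using that each $b/v/bv$-labeled dual edge can terminate at most two sticks. Substituting $d=\tfrac{1}{2}(\Area{\Lambda'}-v(G_\sigma))$ gives the stated inequality in one line.

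Your approach is instead a local charging scheme on the bounded faces of $G_\sigma$, and the constants arise for a structurally different reason: $8M+5=1+4(2M+1)$ from a vacancy's self-face plus its four neighbors, and $2M+1=2\cdot(2M+1)/2$ from the two sides of a broken link. The face classification into $1\times1$ vacancy cells and $2n\times1$ chain rectangles is valid (the dual edges above and below each dimer or link are always in $E_\sigma$, since the perpendicular primal edges they cross share a vertex with a dimer of the wrong orientation), and your per-side stick-run bound is correct because any maximal $s$-run along a chain boundary is contained in a stick. The bookkeeping concerns you flag all resolve favorably: every vacancy in $\Lambda'$ belongs to $G_\sigma$ (all four surrounding dual vertices lie in $V_\sigma$), the end edges of a chain face always cross genuine broken links strictly interior to $\Lambda'$, and no vacancy contributes to the same chain face via two distinct boundary edges. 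Your argument makes the constants transparent and localizes the estimate face-by-face; the paper's global count is shorter but less illuminating about where $8M+5$ and $2M+1$ come from.
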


\begin{proof}
    Let $\sigma\in\Omega^0_\Lambda$ and denote the number of dimers in $\sigma$ by $d=\frac{1}{2}[\Area{\Lambda'}-v(G_\sigma)]$.

    First, we estimate the number of stick edges in $G_\sigma$.
    We say that a dual edge (i.e., an edge of $(\Z^2)^\ast$) is a \emph{candidate} if it runs adjacent and parallel to a dimer in $\sigma$.
    Each dimer in $\sigma$ contributes exactly four candidates, and each candidate is shared by (or runs adjacent and parallel to) at most two dimers, so there are at least $2d$ candidates.
    Since every broken link and vacancy in $\sigma$ prevents at most one or four distinct candidates from becoming stick edges, the total number of stick edges in $G_\sigma$ is at least $2d-4v(G_\sigma)-b(G_\sigma)$ (which is trivial if the quantity is negative).

    By the assumption that no stick in $G_\sigma$ has length greater than $M$, the number of sticks in $G_\sigma$ is at least $\frac{2d-4v(G_\sigma)-b(G_\sigma)}{M}$.
    Next, observe that each endpoint of a stick must belong to a dual edge labeled $b$, $v$, or $bv$. 
    Moreover, by the non-intersecting property of sticks, every such dual edge can contribute an endpoint to at most two sticks; otherwise, two sticks would intersect at the same endpoint of the dual edge, which is a contradiction.  
    Since the number of dual edges labeled one of $b$, $v$, and $bv$ is at most $b(G_\sigma)+4v(G_\sigma)$, the number of distinct stick endpoints that they contribute is at most $2b(G_\sigma)+8v(G_\sigma)$.
    We conclude, again using the non-intersecting property of sticks, that 
    \begin{equation}
    2b(G_\sigma)+8v(G_\sigma)\ge\frac{2d-4v(G_\sigma)-b(G_\sigma)}{M},
    \end{equation}
    which yields the proposition after rearranging the terms.
\end{proof}

\subsubsection{Compressed configuration graphs}

Observe that every internal vertex of a stick has degree two.
For $\sigma\in\Omega^0_\Lambda$, define the \textbf{compressed} version of its configuration graph, denoted by $\comp(G_\sigma)$, as the graph obtained from $G_\sigma$ by replacing every stick with a single edge, still labeled $(\hor,s)$ or $(\ver,s)$ depending on the orientation of the stick edges it replaced, connecting the same endpoints.
Vertical and horizontal sub-components, as well as the associated numbers $k_\ver(\cdot)$, $k_\hor(\cdot)$, and $k(\cdot)$, are defined for compressed configuration graphs in the same way as for configuration graphs.
Taken from~\cite{hadas2025columnar}, the following results about compressed configuration graphs continue to hold in our context.

\begin{proposition}[{\cite[Proposition 4.13]{hadas2025columnar}}]
    \label{prop:same component numbers}
    Let $\sigma\in\Omega^0_\Lambda$.
    There are exactly $k_\ver(G_\sigma)$ and $k_\hor(G_\sigma)$ connected components in $\comp(G_\sigma)_\ver$ and $\comp(G_\sigma)_\hor$, respectively, all of which are nontrivial.
    Thus, $k(G_\sigma)=k(\comp(G_\sigma))$.
\end{proposition}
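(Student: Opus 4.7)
The plan is to prove the claim about vertical sub-components (the horizontal case being identical by symmetry) by exhibiting a bijection between vertical sub-components of $G_\sigma$ and connected components of $\comp(G_\sigma)_\ver$, with all such components nontrivial. The identity $k(G_\sigma) = k(\comp(G_\sigma))$ then follows by summing the vertical and horizontal counts.

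The crucial geometric input is the following lemma, which I would establish first: if $v$ is an interior vertex of a vertical stick in $G_\sigma$, then neither of the two horizontal dual edges at $v$ lies in $E_\sigma$. The argument is short: both vertical dual edges at $v$ being stick edges forces each of the four corners of the face of $\Z^2$ centered at $v$ to be incident to a vertical dimer colinear with the stick. By the hard-core constraint, each vertical edge of $\Z^2$ bounding this face on the left or right is then either a dimer itself (when the two adjacent vertical dimers coincide with it) or a link between two consecutive colinear vertical dimers (when they extend outward). In either case, the two horizontal dual edges at $v$ are crossed by a dimer or a link and are therefore excluded from $E_\sigma$. As a consequence, interior vertices of a vertical stick are \emph{private} to the stick in $G_{\sigma,\ver}$: they are incident only to the two stick edges belonging to the stick.

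Given this lemma, compression---understood as contracting each vertical stick to a single labeled edge between its endpoints---induces a well-defined map $\phi$ from vertical sub-components of $G_\sigma$ to subgraphs of $\comp(G_\sigma)_\ver$ that preserves connectivity away from stick interiors. The image $\phi(C)$ of any vertical sub-component $C$ is nontrivial, since $C$ contains either a non-stick edge that survives compression unchanged or a vertical stick that becomes a single $s$-edge. Conversely, the reverse operation of expanding each vertical $s$-edge of $\comp(G_\sigma)_\ver$ back into its original stick pulls every nontrivial connected component of $\comp(G_\sigma)_\ver$ back to a vertical sub-component of $G_\sigma$, thus establishing the bijection. The main obstacle will be the case analysis in the geometric lemma, but this is elementary once the hard-core condition and the definition of a link are carefully applied; the remaining steps amount to formal graph-theoretic manipulations.
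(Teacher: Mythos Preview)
Your argument is correct and follows the natural route; the paper itself does not spell out a proof but defers to the cited reference, noting only that connectedness of $G_\sigma$ (Proposition~\ref{prop:configuration graph is connected}) is what makes the argument carry over. Your geometric lemma---that an interior vertex of a stick is incident to no dual edges other than the two stick edges of that stick---is exactly the right observation, and your dichotomy (the bounding primal edge is either a dimer or the link between two colinear dimers) is sound.

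One small point: your bijection, as written, runs between vertical sub-components of $G_\sigma$ and \emph{nontrivial} components of $\comp(G_\sigma)_\ver$, which already yields $k_\ver(G_\sigma)=k_\ver(\comp(G_\sigma))$ and hence the ``Thus'' clause, but does not by itself establish the phrase ``all of which are nontrivial.'' To close this, note that a vertex $v$ of $\comp(G_\sigma)$ isolated in $\comp(G_\sigma)_\ver$ would have, back in $G_\sigma$, only horizontal stick edges incident to it. But your lemma (together with its symmetric version) implies that at any interior dual vertex, two perpendicular dual edges cannot both be absent from $E_\sigma$; hence the degree in $G_\sigma$ is at least two, and if exactly two the edges are parallel---forcing $v$ to be interior to a horizontal stick, contradicting $v\in\comp(G_\sigma)$. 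Boundary vertices of $\Lambda'$ are handled by the vacant boundary condition, which guarantees incident $v$-labeled edges. This rules out trivial components and completes the statement in full.
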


\begin{proof}
    It is straightforward to see that the connected components of $\comp(G_\sigma)_\ver$ are in one-to-one correspondence with the compressed versions of the vertical sub-components of $G_\sigma$.
    The same holds for the horizontal sub-components.
\end{proof}

For $M\ge 1$, define $\cG^0_{\Lambda;M}$ as the set of configuration graphs associated to dimer configurations in $\Omega^0_\Lambda\cap E_M$. 

\begin{proposition}[{\cite[Lemma 4.14]{hadas2025columnar}}]
    \label{prop:uncompressed configuration graphs count}
    Let $M\ge 1$ and $\sigma\in\Omega^0_\Lambda$.
    Then,
    \begin{equation}
        \abs{\set{G\in\cG^0_{\Lambda;M}\mid \comp(G)=\comp(G_\sigma)}}\le M^{k(G_\sigma)-2}.
    \end{equation}
\end{proposition}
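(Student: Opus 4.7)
The plan is to count the $G\in\cG^0_{\Lambda;M}$ with $\comp(G)=\comp(G_\sigma)$ by viewing $\comp(G_\sigma)$ as encoding the combinatorial skeleton of $G_\sigma$ while forgetting the lengths of its constituent sticks. Reconstruction of $G$ from $\comp(G)$ then amounts to specifying, for each stick in the skeleton, a length in $\set{1,\ldots,M}$ consistent with an embedding of $G$ in $(\Z^2)^\ast$. Since the proposition is quoted directly from Hadas and Peled~\cite[Lemma 4.14]{hadas2025columnar}, the goal is to verify that their argument carries over unchanged to our richer edge labeling $\set{s,b,v,bv}$, as the passage preceding the statement already promises.

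First I would check that compression preserves all structural data of $G_\sigma$ other than the lengths of its sticks: the perpendicular dual edges incident to interior stick vertices, which are discarded by compression, are forced to carry no defect label---otherwise they would obstruct the formation of the stick at that vertex---so the entire reconstruction is parameterized by the stick lengths alone.

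Next I would parameterize the stick lengths efficiently via a spanning-tree argument on the bipartite auxiliary graph whose vertices are the $k(G_\sigma)$ sub-components of $G_\sigma$ (equivalently, those of $\comp(G_\sigma)$ by Proposition~\ref{prop:same component numbers}) and whose edges record pairwise intersections. Connectedness of $G_\sigma$ (Proposition~\ref{prop:configuration graph is connected}) ensures that this auxiliary graph is connected and hence admits a spanning tree with $k(G_\sigma)-1$ edges; each such edge corresponds to a stick whose length parameterizes the relative position of the two incident sub-components. After accounting for a global positional constraint imposed by the embedding into $(\Z^2)^\ast$, which pins one of these lengths, at most $k(G_\sigma)-2$ of them remain independent, yielding the bound $M^{k(G_\sigma)-2}$.

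The main obstacle, and the reason a plain import from~\cite{hadas2025columnar} must be verified rather than asserted, is that our edge labels are richer than theirs: in addition to the stick label $s$ and vacancy label $v$, we carry the broken-link labels $b$ and $bv$. The key point to check is that these additional labels do not introduce extra freedom in the reconstruction, which follows because the labels are determined by the local dimer pattern once the positions of sticks, vacancies, and broken links are specified by $\comp(G_\sigma)$.
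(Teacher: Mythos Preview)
Your overall shape is right—the count $M^{k(G_\sigma)-2}$ comes from a spanning-tree argument in which $k-2$ stick lengths remain free, each taking values in $\{1,\dots,M\}$—but you have misidentified the auxiliary graph, and with it the meaning of its edges.

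You invoke the bipartite graph $G'_\sigma$ from Corollary~\ref{cor:component number bound}, whose vertices are the sub-components and whose edges record \emph{intersections} between a horizontal and a vertical sub-component. Those intersections occur at shared vertices incident to edges labeled $b$, $v$, or $bv$; they are \emph{not} sticks. Your sentence ``each such edge corresponds to a stick whose length parameterizes the relative position of the two incident sub-components'' is therefore false as written. A stick, by contrast, connects two sub-components of the \emph{same} orientation: a vertical stick lies entirely inside one vertical sub-component, but its two endpoints belong to (possibly different) \emph{horizontal} sub-components, and its length is precisely the $y$-difference of those endpoints. Knowing the position of one sub-component in $G'_\sigma$ and the intersection vertex with its neighbor does not, by itself, pin down the neighbor's free coordinate, so the bipartite spanning tree does not directly yield the parameterization you claim.

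The argument that does work (and is the one in~\cite[Lemma~4.14]{hadas2025columnar}) uses two separate auxiliary graphs. Let $H_\hor$ have the $k_\hor$ horizontal sub-components as vertices, with an edge for each vertical stick joining two of them; define $H_\ver$ analogously. One checks that every endpoint of a vertical stick is incident to a horizontal non-stick edge and hence lies in a horizontal sub-component, so $H_\hor$ is well defined; connectedness of $G_\sigma$ (Proposition~\ref{prop:configuration graph is connected}) then makes $H_\hor$ connected. Each horizontal sub-component is rigid in the $y$-direction (all its vertical edges have unit length), so a spanning tree of $H_\hor$ with $k_\hor-1$ edges determines all $y$-coordinates from the boundary horizontal sub-component via $k_\hor-1$ stick lengths; the remaining vertical sticks have forced lengths. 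The symmetric count in $H_\ver$ gives $k_\ver-1$ free horizontal stick lengths. Altogether $(k_\hor-1)+(k_\ver-1)=k(G_\sigma)-2$ free lengths, each in $\{1,\dots,M\}$, which is the bound. Your observations about compression discarding no defect data and about the richer label set $\{s,b,v,bv\}$ adding no freedom are correct and are exactly what makes the import from~\cite{hadas2025columnar} go through.
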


\begin{proof}
    Notice that every configuration graph is uniquely recovered from its compressed version by assigning suitable lengths to the sticks (and unit length to all other edges).
    Define an assignment of lengths to be \textbf{$M$-valid} if it gives rise to a configuration graph and every stick is assigned a length no greater than $M$.
    It suffices to show that there are at most $M^{k_\ver(\comp(G_\sigma))-1}$ restrictions of $M$-valid assignments to the horizontal sticks of $\comp(G_\sigma)$; indeed, combined with an analogous bound for horizontal sticks and Proposition~\ref{prop:same component numbers}, this yields the proposition after recalling~\eqref{eqn:total component number}.

    Observe that, in an $M$-valid assignment, the vector sum of the assigned lengths of edges of any cycle in $\comp(G_\sigma)$ is zero.
    Consider a spanning forest of $\comp(G_\sigma)_\ver$, which may be extended to a spanning tree of $\comp(G_\sigma)$ by appending $k_\ver(\comp(G_\sigma))-1$ horizontal stick edges.
    We claim that the lengths assigned to these appended edges uniquely determine the lengths assigned to all other horizontal stick edges in an $M$-valid assignment.
    Indeed, any horizontal stick edge not in the spanning tree closes a cycle in the spanning tree, so its length is uniquely determined by the observation above.
\end{proof}

\subsubsection{Proof of Proposition~\ref{prop:partition function no long sticks}}
\label{sec:partition function no long sticks}

We are now ready to prove Proposition~\ref{prop:partition function no long sticks}.
In view of~\eqref{eqn:configuration graph weight identity}, our goal is to bound
\begin{equation}
    Z_{\Lambda;\beta}^0(E_M)
    =\sum_{\sigma\in \Omega_\Lambda^0\cap E_M}w^0_{\Lambda;\beta}(\sigma)
    =e^{2\beta(\chempot+a)}\sum_{G\in\cG^0_{\Lambda;M}}(e^{-\frac{1}{2}\beta(\chempot+a)})^{v(G)}(e^{-\frac{1}{2}\beta a})^{b(G)}.
\end{equation}
We introduce the following shorthand: for $v,b\in\Z_{\ge 0}$, write 
\begin{equation}
    \cH_{v,b}:=\set{\comp(G)\mid G\in\cG^0_{\Lambda;M},\ v(G)=v,\text{ and }b(G)=b}.
\end{equation}
Combining Propositions~\ref{prop:same component numbers} and~\ref{prop:uncompressed configuration graphs count}, we get that 
\begin{equation}
    \label{eqn:partition function no long sticks - initial bound}
    \begin{multlined}
        \sum_{G\in\cG^0_{\Lambda;M}}(e^{-\frac{1}{2}\beta(\chempot+a)})^{v(G)}(e^{-\frac{1}{2}\beta a})^{b(G)}
        \le \sum_{v,b}\sum_{H\in\cH_{v,b}}\sum_{\substack{G\in\cG^0_{\Lambda;M}\\\comp(G)=H}}(e^{-\frac{1}{2}\beta(\chempot+a)})^{v}(e^{-\frac{1}{2}\beta a})^{b} 
        \\
        \le \sum_{v,b}\sum_{H\in\cH_{v,b}}(e^{-\frac{1}{2}\beta(\chempot+a)})^{v}(e^{-\frac{1}{2}\beta a})^{b}M^{k(H)-2}.
    \end{multlined}
\end{equation}

\begin{claim}
    \label{clm:exponential growth of compressed configuration graphs}
    There exists a constant $C>1$ such that, for all $v,b\ge 0$, $M\ge 1$, and even rectangles $\Lambda$, $\abs{\cH_{v,b}}\le C^{v+b}$.
\end{claim}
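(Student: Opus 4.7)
The plan is to injectively encode each $H \in \cH_{v,b}$ using $O(v+b)$ bits, which immediately yields $|\cH_{v,b}| \le C^{v+b}$ for an appropriately chosen constant $C > 1$. The first step is to show that $H$ has $O(v+b)$ edges in total. The defect edges (those with labels $v$, $b$, or $bv$) number at most $4v + b$ directly, since each vacancy belonging to $G_\sigma$ contributes its four surrounding dual edges and each broken link contributes the dual edge crossing it. For stick edges, I would adapt the argument used inside the proof of Proposition~\ref{prop:more sticks more defects}: each endpoint of a stick in $G_\sigma$ is incident to a defect-labeled dual edge, either because the stick's continuation in $G_\sigma$ must carry a non-$s$ label (otherwise the stick is not maximal), or, for a boundary endpoint on $\partial V_\sigma$, because the surrounding primal vertices are boundary vacancies and induce a $v$-label on an incident dual edge. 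By the non-intersecting property of sticks, each defect edge serves as an endpoint for at most two sticks, so the number of sticks in $H$ is at most $4v + b$, and hence $|E(H)| \le 8v + 2b$.

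Given the edge-count bound, the combinatorial encoding proceeds via a canonical traversal. By Proposition~\ref{prop:configuration graph is connected} (connectedness of $G_\sigma$) together with Proposition~\ref{prop:same component numbers} (which transfers connectedness to the compressed graph), $H$ is connected. I would fix a canonical root of $H$, say the lexicographically smallest dual vertex incident to a defect edge, and record a depth-first traversal of $H$. Each edge is traversed at most twice, so the traversal produces a sequence of length $O(v+b)$, and at each step a constant amount of information suffices: the direction of travel (four options), the edge label (one of $s, v, b, bv$), and whether the edge is a tree or back edge. This provides an $O(v+b)$-bit encoding of the combinatorial structure of $H$ relative to its root.

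The subtle point is that, to recover $H$ as an embedded subgraph of $\Lambda'$, one must also record the absolute position of the root, which a priori costs $O(\log|\Lambda'|)$ bits. Here the vacant boundary condition is essential: since every vertex of $\Z^2$ in the band $\Lambda' \setminus \Lambda$ is a vacancy, and since no dimer or link-between-two-dimers can cross the surrounding dual edges (all involved primal vertices are vacant or at most incident to dimers perpendicular to the edge in question), each boundary vacancy belongs to $G_\sigma$. This yields the unconditional lower bound $v \ge 2\Width{\Lambda} + 2\Height{\Lambda} + 4 \ge 4\sqrt{|\Lambda'|/4}$, so $|\Lambda'| = O(v^2)$ and the position of the root is recorded in $O(\log v) = O(v+b)$ bits. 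Combining the three ingredients gives an $O(v+b)$-bit injection, hence $|\cH_{v,b}| \le C^{v+b}$. The main obstacle is precisely this position-encoding step, which would fail for any boundary condition that does not force a linear-in-perimeter contribution to $v$; the vacant boundary condition built into $\Omega^0_\Lambda$ is exactly what makes the bound scale-free in $\Lambda$.
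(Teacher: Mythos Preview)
Your approach—encoding $H$ via a DFS traversal—parallels the paper's use of Tur\'an's exponential bound on planar graphs, since both rest on the fact that a connected planar graph with $O(v+b)$ vertices and edges admits an $O(v+b)$-bit description. Your edge-count bound and the connectedness argument are correct.

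However, there is a gap in the encoding step. You record, for each edge traversed, ``the direction of travel (four options), the edge label, and whether the edge is a tree or back edge.'' This suffices for defect edges, which have unit length: direction plus current position determines the other endpoint. But an $s$-edge in $H$ is a compressed stick of a priori arbitrary length, so knowing only its lattice direction does not determine its other endpoint. Your encoding therefore does not injectively determine $H$ as an embedded object: two compressed graphs with the same defect-edge skeleton and the same pattern of $s$-edges but different stick lengths receive the same code. Recording the root position (your step 3) does not repair this, and recording all $s$-edge lengths explicitly would cost $O((v+b)\log v)$ bits, which is too much for the stated bound.

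The paper sidesteps this by counting only the \emph{abstract} edge-labeled planar structure of $H$, which is bounded by $C^{v+b}$ via Tur\'an's result; the residual freedom in the $s$-edge lengths is exactly what Proposition~\ref{prop:uncompressed configuration graphs count} absorbs into the factor $M^{k-2}$. In other words, the positional issue you identify is real, but its resolution is the division of labor between the Claim and Proposition~\ref{prop:uncompressed configuration graphs count}, not an additional $O(\log v)$-bit tag on the root. Your observation that $v \gtrsim \Perimeter{\Lambda'}$ is correct but ultimately not needed here.
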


\begin{proof}
    It is a fact that the number of planar graphs on $n$ vertices grows at most exponentially with $n$~\cite{turan1984succinct}.
    Moreover, the number of edges in a planar graph on $n$ vertices cannot exceed a constant multiple of $n$.
    Finally, the number of vertices in a compressed configuration graph $\comp(G)$ cannot exceed a constant multiple of $v(G)+b(G)$ since each vertex in $\comp(G)$ is incident to at least one edge labeled $v$, $b$, or $bv$.
    The claim follows.
\end{proof}

We will use Claim~\ref{clm:exponential growth of compressed configuration graphs}, Corollary~\ref{cor:component number bound}, and Proposition~\ref{prop:more sticks more defects} to bound the last sum in~\eqref{eqn:partition function no long sticks - initial bound}, which requires us to split into cases depending on whether $b<2v$.
To treat these cases uniformly, we rely on the following elementary claim.

\begin{claim}
    \label{clm:double exponential sum bound}
    Let $c_1,c_2,A>0$.
    For all $0<x,y<1$,
    \begin{equation}
        \sum_{\substack{m,n\in\Z_{\ge 0}\\c_1m+c_2n\ge A}}x^m y^n
        \le\frac{x^{\frac{A}{2c_1}}+y^{\frac{A}{2c_2}}}{(1-x)(1-y)}.
    \end{equation}
\end{claim}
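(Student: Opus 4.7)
The plan is to argue by a simple union bound on the constraint $c_1 m + c_2 n \ge A$, followed by standard geometric series estimates.

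First, I would observe that if $c_1 m + c_2 n \ge A$, then at least one of the two summands must carry at least half of this mass: either $c_1 m \ge A/2$, i.e., $m \ge A/(2c_1)$, or $c_2 n \ge A/2$, i.e., $n \ge A/(2c_2)$. Consequently, the set of pairs $(m,n) \in \Z_{\ge 0}^2$ satisfying $c_1 m + c_2 n \ge A$ is contained in the union $\{m \ge A/(2c_1)\} \cup \{n \ge A/(2c_2)\}$, and the original sum is bounded above by
\begin{equation}
\sum_{m \ge A/(2c_1)} \sum_{n \ge 0} x^m y^n \;+\; \sum_{n \ge A/(2c_2)} \sum_{m \ge 0} x^m y^n.
\end{equation}

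Next, I would evaluate each of these two double sums as a product of geometric series. For the first,
\begin{equation}
\sum_{m \ge A/(2c_1)} \sum_{n \ge 0} x^m y^n = \frac{x^{\lceil A/(2c_1) \rceil}}{(1-x)(1-y)} \le \frac{x^{A/(2c_1)}}{(1-x)(1-y)},
\end{equation}
where the last inequality uses that $0 < x < 1$ and $\lceil A/(2c_1) \rceil \ge A/(2c_1)$. An identical computation for the second sum gives the bound $y^{A/(2c_2)}/[(1-x)(1-y)]$. Adding the two yields the claim.

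There is no real obstacle here; the argument is entirely elementary and the decomposition based on ``at least one of two terms dominates'' is the only idea needed. The only point worth flagging is the rounding step, where one must remember that raising $x \in (0,1)$ to a larger exponent produces a smaller number, so replacing the ceiling by the real value of $A/(2c_i)$ in the exponent is a valid upper bound.
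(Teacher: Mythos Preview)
Your proposal is correct and follows essentially the same approach as the paper: split the constraint $c_1 m + c_2 n \ge A$ into the two cases $m \ge A/(2c_1)$ and $n \ge A/(2c_2)$, then sum the resulting geometric series. If anything, you are slightly more careful than the paper in explicitly handling the ceiling when passing from the integer sum to the real exponent.
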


\begin{proof}
    We split the domain of summation and bound
    \begin{equation}
        \sum_{\substack{m,n\in\Z_{\ge 0}\\c_1m+c_2n\ge A}}x^m y^n
        \le\sum_{m\ge\frac{A}{2c_1}}x^m\sum_{n\ge 0}y^n
        +\sum_{m\ge 0}x^m\sum_{n\ge\frac{A}{2c_2}} y^n
        \le\frac{x^{\frac{A}{2c_1}}+y^{\frac{A}{2c_2}}}{(1-x)(1-y)},
    \end{equation}
    as required.
\end{proof}

Suppose first that $b\le 2v$.
After using Claim~\ref{clm:exponential growth of compressed configuration graphs} and Corollary~\ref{cor:component number bound} and making the change of variables $m=2v-b,n=b$, we need to bound
\begin{equation}
    \begin{multlined}
        \sum_{\substack{v,b\ \mathrm{admissible}\\b\le 2v}}(Ce^{-\frac{1}{2}\beta(\chempot+a)})^{v}(Ce^{-\frac{1}{2}\beta a})^{b}M^{\frac{1}{2}b-1}
        \\
        =M^{-1}\sum_{\substack{m,n\in\Z_{\ge 0}\\m,n\ \mathrm{admissible}}}(C^{\frac{1}{2}}e^{-\frac{1}{4}\beta(\chempot+a)})^{m}(C^{\frac{3}{2}}e^{-\frac{1}{4}\beta(\chempot+3a)}M^{\frac{1}{2}})^{n},
    \end{multlined}
\end{equation}
where the word ``admissible'' means that $v,b$ (or $m,n$) satisfy the bound given in Proposition~\ref{prop:more sticks more defects}.
By taking $\beta$ large and $c$ small (where $M\le c\ell_0$), we may ensure that both $C^{\frac{1}{2}}e^{-\frac{1}{4}\beta(\chempot+a)}$ and 
\begin{equation}
    C^{\frac{3}{2}}e^{-\frac{1}{4}\beta(\chempot+3a)}M^{\frac{1}{2}}
    \le C^{\frac{3}{2}}e^{-\frac{1}{4}\beta(\chempot+3a)}(ce^{\frac{1}{2}\beta(\chempot+3a)})^{\frac{1}{2}}
    = C^{\frac{3}{2}}c^{\frac{1}{2}}
\end{equation}
are less than $\frac{1}{2}$.
Now, Proposition~\ref{prop:more sticks more defects} implies that, under the current change of variables,
\begin{equation}
    \label{eqn:linear constraint case 1}
    (4M+5/2)m+(2M+7/2)n\ge\Area{\Lambda'},
\end{equation}
so Claim~\ref{clm:double exponential sum bound} applies.

Otherwise, using Claim~\ref{clm:exponential growth of compressed configuration graphs} and Corollary~\ref{cor:component number bound} and making the change of variables $m=v,n=b-2v$, we need to bound
\begin{equation}
    \label{eqn:partition function no long sticks - case 2 sum}
    \begin{multlined}
        \sum_{\substack{v,b\ \mathrm{admissible}\\b\ge 2v}}(Ce^{-\frac{1}{2}\beta(\chempot+a)})^{v}(Ce^{-\frac{1}{2}\beta a})^{b}M^{\frac{2}{3}v+\frac{1}{6}b-1}
        \\
        =M^{-1}\sum_{\substack{m,n\in\Z_{\ge 0}\\m,n\ \mathrm{admissible}}}(C^3e^{-\frac{1}{2}\beta(\chempot+3a)}M)^{m}(Ce^{-\frac{1}{2}\beta a}M^{\frac{1}{6}})^{n}.
    \end{multlined}
\end{equation}
By taking $\beta$ large and $c$ small (where $M\le c\ell_0$) and using that $3a>\chempot$ (we note that this is the only place in the current proof where we use this assumption), we may ensure that both
\begin{equation}
    C^3e^{-\frac{1}{2}\beta(\chempot+3a)}M
    \le C^3e^{-\frac{1}{2}\beta(\chempot+3a)}(ce^{\frac{1}{2}\beta(\chempot+3a)})
    = C^3c
\end{equation} 
and
\begin{equation}
    Ce^{-\frac{1}{2}\beta a}M^{\frac{1}{6}}
    \le Ce^{-\frac{1}{2}\beta a}(ce^{\frac{1}{2}\beta(\chempot+3a)})^{\frac{1}{6}}
    = Cc^{\frac{1}{6}}e^{-\frac{1}{12}\beta(3a-\chempot)}
\end{equation}
are less than $\frac{1}{2}$.
Again, Proposition~\ref{prop:more sticks more defects} yields a linear constraint on $m$ and $n$ similar to~\eqref{eqn:linear constraint case 1}, so Claim~\ref{clm:double exponential sum bound} applies.

The proposition follows by combining the two cases and absorbing constant factors with $M^{-1}\le c$, after decreasing $c$ as needed.

\subsection{Configurations mostly without long sticks}
\label{sec:configurations mostly without long sticks}

In this subsection, we address the boundary effects mentioned in the introduction to Section~\ref{sec:configurations without long sticks}.
Let $A$ be a set of horizontal and vertical segments with endpoints on $(\Z^2)^\ast$.
Let $E_{M,A}\subset\Omega$ consist of all dimer configurations $\sigma$ such that every stick of $\sigma$ that is of length greater than $M$ is fully contained in a segment in $A$.

\begin{proposition}
    \label{prop:partition function mostly no long sticks}
    There exists a constant $C>0$ such that, for all $\beta>0$, $M\ge C$, even rectangles $\Lambda$, and sets of segments $A$ as above,
    \begin{equation}
        \label{eqn:partition function mostly no long sticks}
        Z^0_{\Lambda;\beta}(E_{M,A})\le\exp{\frac{Ce^{\beta(\chempot+3a)}}{M^3}\len{A}}Z^0_{\Lambda;\beta}(E_M).
    \end{equation}
\end{proposition}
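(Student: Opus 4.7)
My plan is to compare $Z^0_{\Lambda;\beta}(E_{M,A})$ to $Z^0_{\Lambda;\beta}(E_M)$ via a polymer-type expansion. Classify each $\sigma \in E_{M,A}$ by its set $\cL(\sigma)$ of long sticks (i.e., sticks of length exceeding $M$), which by hypothesis are all contained in segments of $A$. Then
\begin{equation*}
    Z^0_{\Lambda;\beta}(E_{M,A}) = \sum_\cL Z_\cL, \quad Z_\cL := \sum_{\sigma : \cL(\sigma) = \cL} w^0_{\Lambda;\beta}(\sigma),
\end{equation*}
where $\cL$ ranges over finite families of pairwise non-intersecting potential long sticks contained in $A$. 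The plan is to bound $Z_\cL \le Z^0_{\Lambda;\beta}(E_M) \prod_{\gamma \in \cL} z(\gamma)$ for activities $z(\gamma)$ satisfying $\sum_{\gamma \subset A,\, \len{\gamma} > M} z(\gamma) \le C e^{\beta(\chempot+3a)}/M^3 \cdot \len{A}$. This suffices since $\sum_\cL \prod_\gamma z(\gamma) \le \exp(\sum_\gamma z(\gamma))$ via the standard Mayer-type bound for polymer models with pairwise non-intersection.

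To establish the single-polymer bound, I define a surgery map. For each $\gamma \in \cL$ of length $L > M$, introduce defects at break positions spaced $M$ apart along $\gamma$ by a local dimer modification---removing or shifting one or two nearby dimers---that splits $\gamma$ into sub-sticks of length at most $M$. After performing this surgery on all $\gamma \in \cL$, the resulting $\sigma'$ lies in $E_M$; recording the original local dimer arrangement at each break site as a bounded-size decoration makes the map $\sigma \mapsto (\sigma', \text{decoration})$ injective. Summing over preimages yields $Z_\cL \le Z^0_{\Lambda;\beta}(E_M) \prod_\gamma z(\gamma)$, where $z(\gamma)$ is the product of the decoration cardinality and the weight ratio $w^0_{\Lambda;\beta}(\sigma)/w^0_{\Lambda;\beta}(\sigma')$, the latter bounded by $e^{O(\beta) \cdot L/M}$ arising from the inverse defect weights at each break.

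The main technical work lies in bounding $\sum_{\gamma \subset A,\, \len{\gamma} > M} z(\gamma)$. Since each segment $\alpha \in A$ of length $\ell_\alpha$ supports at most $\ell_\alpha$ starting positions for a long stick of given length $L$, the sum factorizes as $\sum_\alpha \ell_\alpha \sum_{L > M} z(L) \le \len{A} \cdot \sum_{L > M} z(L)$, so it remains to show $\sum_{L > M} z(L) \le C e^{\beta(\chempot+3a)}/M^3$. Achieving this requires a sharp activity estimate that leverages the one-dimensional transfer matrix computations in Section~\ref{sec:one-dimensional partition function bounds}---particularly Propositions~\ref{prop:partition function vacant boundary} and~\ref{prop:partition function magnetized boundary}---to quantify the interplay between defect placement, stick length, and the correlation length $\ell_0 = e^{\frac{1}{2}\beta(\chempot+3a)}$. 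The hard part will be the joint construction and analysis of the surgery and the activity bound: the surgery must be canonical and invertible (especially near stick endpoints and when two long sticks lie close together), the decoration must be sufficiently small, and the activity must decay fast enough in $L$ to produce the power-law $1/M^3$ scaling that combines with the $e^{\beta(\chempot+3a)} = \ell_0^2$ prefactor to give the claimed exponent.
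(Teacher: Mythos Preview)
Your polymer-surgery approach has a genuine gap: the surgery runs in the wrong direction. Breaking a long stick by inserting defects at $\sim L/M$ sites necessarily \emph{decreases} the weight, since each break introduces vacancies or broken links carrying factors $e^{-\frac{1}{2}\beta(\chempot+a)}$ or $e^{-\frac{1}{2}\beta a}$. Consequently the activity obeys $z(\gamma)\ge w(\sigma)/w(\sigma')\ge e^{c\beta\,\lfloor L/M\rfloor}\ge e^{c\beta}$ for every long stick, and already the contribution of sticks with $M<L\le 2M$ (one break) to $\sum_{\gamma\subset A,\ \len{\gamma}>M} z(\gamma)$ is of order $\len{A}\cdot e^{c\beta}$, vastly larger than the target $C\ell_0^2\len{A}/M^3$. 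No decoration bookkeeping repairs this: the exponent in~\eqref{eqn:partition function mostly no long sticks} is small precisely because long sticks are entropically \emph{rare}, not because they are cheap to destroy.

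The paper's proof inverts the logic. Rather than mapping configurations with long sticks into $E_M$ at a weight cost, it builds a one-to-\emph{many} map $\sigma\mapsto m(\sigma)=\Omega^\sigma_{J_\sigma}$ that resamples the fully-packed column inside a long stick, and it does so along a telescoping filtration $F_0=E_{M,A}\supset\cdots\supset F_N=E_M$ indexed by a covering of $A$ by $N\le 3\len{A}/M$ segments of length $\lceil M/2\rceil$. The entropy gain $Z^0(m(\sigma))\ge(1+M^2/(C\ell_0^2))\,w(\sigma)$, supplied by Proposition~\ref{prop:partition function magnetized boundary}, yields $Z^0(F_{i-1})/Z^0(F_i)\le 1+C\ell_0^2/M^2$ at each step, and the product of $N$ such factors gives~\eqref{eqn:partition function mostly no long sticks}. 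Thus Proposition~\ref{prop:partition function magnetized boundary} enters as a \emph{lower} bound on the resampled partition function, not as a control on defect weights; your proposal invokes the one-dimensional estimates but with the opposite intent, and the missing idea is this reversal from a weight-losing surgery to an entropy-gaining resampling.
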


\begin{proof}
    Let $M$, $A$, and $\Lambda$ be as in the statement of the proposition.
    We say that a stick or segment is \emph{long} if its length is greater than $M$.
    Without loss of generality, we assume that every segment in $A$ is long and fully contained in $\Lambda$ but does not lie on its boundary.
    We fix horizontal and vertical segments $I_1,\dots,I_N$ of length $\ceil{\frac{M}{2}}$ with endpoints on $(\Z^2)^\ast$ such that
    \begin{equation}
        \label{eqn:partition function mostly no long sticks - segment covering bound}
        N\le\frac{3\len{A}}{M}
    \end{equation}
    and the union of these segments coincides with the union of the segments in $A$.
    Thus, a long stick contained in a segment in $A$ must fully contain one of $I_1,\dots,I_N$.
    We refer the reader to~\cite{hadas2025columnar} for justification of the above assumptions and claims.

    For $1\le i\le N$, let $D_i$ be the event that $I_i$ is not contained in a long stick.
    For $0\le k\le N$, define $F_k:=E_{M,A}\cap\bigcap_{i=1}^k D_i$.
    Our goal is to show that there exists a constant $C>0$ such that
    \begin{equation}
        \label{eqn:partition function mostly no long sticks - telescoping product bound}
        \max_{1\le i\le N}\frac{Z_{\Lambda;\beta}^0(F_{i-1})}{Z_{\Lambda;\beta}^0(F_i)}\le 1+\frac{Ce^{\beta(\chempot+3a)}}{M^2}\quad\text{for all }M>C.
    \end{equation}
    Once this is proven, combining the telescoping product
    \begin{equation}
        \frac{Z_{\Lambda;\beta}^0(E_{M,A})}{Z_{\Lambda;\beta}^0(E_{M})}=\prod_{i=1}^N \frac{Z_{\Lambda;\beta}^0(F_{i-1})}{Z_{\Lambda;\beta}^0(F_i)}
    \end{equation}
    with~\eqref{eqn:partition function mostly no long sticks - segment covering bound} and~\eqref{eqn:partition function mostly no long sticks - telescoping product bound} yields the proposition.

    To prove~\eqref{eqn:partition function mostly no long sticks - telescoping product bound}, we follow the same strategy as in~\cite{hadas2025columnar} by constructing, for each $1\le i\le N$, a mapping $m:F_{i-1}\setminus F_i\rightarrow 2^{F_{i-1}}$ satisfying the following properties:
    \begin{enumerate}
        \item \label{itm:partition function mostly no long sticks - injectivity} if $\sigma_1,\sigma_2\in F_{i-1}\setminus F_i$ are distinct, then $m(\sigma_1)\cap m(\sigma_2)=\emptyset$;
        \item \label{itm:partition function mostly no long sticks - mapping weight bound} for all $M\ge C$ and $\sigma\in(F_{i-1}\setminus F_i)\cap\Omega^0_\Lambda$,
        \begin{equation}
            \label{eqn:partition function mostly no long sticks - mapping weight bound}
            Z_{\Lambda;\beta}^0(m(\sigma))\ge \left(1+\frac{M^2}{Ce^{\beta(\chempot+3a)}}\right)w^0_{\Lambda;\beta}(\sigma).
        \end{equation}
    \end{enumerate}
    Once this is done, summing over $\sigma\in(F_{i-1}\setminus F_i)\cap\Omega^0_\Lambda$ in~\eqref{eqn:partition function mostly no long sticks - mapping weight bound} for each $1\le i\le N$ and using that $F_i\subset F_{i-1}$ yields~\eqref{eqn:partition function mostly no long sticks - telescoping product bound}.

    In constructing the mapping $m$, we deviate from~\cite{hadas2025columnar}.
    Fix $1\le i\le N$, and assume without loss of generality that $I_i$ is vertical.
    Let $J$ be the $1\times\len{I_i}$ rectangle whose right side coincides with $I_i$.
    Let $\sigma\in F_{i-1}\setminus F_{i}$.
    Since $\sigma\not\in D_i$, the maximal vertical path in $\Z^2$ consisting of edges that intersect $J$ is fully packed with vertical dimers in $\sigma$.
    Let $J_\sigma\subset J$ be the set of vertical edges fully contained in $J$ and which are disjoint from the top and bottom dimers in $\sigma$ that intersect $J$.
    Define $m(\sigma):=\Omega^{\sigma}_{J_\sigma}$.
    We verify that the mapping $m$ satisfies the required properties in a series of claims.

    \begin{claim}
        For all $\sigma\in F_{i-1}\setminus F_i$, $m(\sigma)\subset F_{i-1}$.
    \end{claim}

    \begin{proof}
        Let $\sigma\in F_{i-1}\setminus F_i$ and $\sigma'\in m(\sigma)$.
        It suffices to prove that every stick edge in $G_{\sigma'}$ is a stick edge in $G_\sigma$.
        Let $e$ be a stick edge in $G_{\sigma'}$.
        If $e$ is horizontal, then it is also a stick edge in $G_\sigma$ because the surgery on $J_\sigma$ does not affect the classification of dual edges adjacent and parallel to horizontal dimers.
        Otherwise, $e$ is vertical, so it bounds two vertical dimers in $\sigma'$ on opposite sides.
        If neither of these dimers intersects $J_\sigma$, then they both belong to $\sigma$, so $e$ is a stick edge in $G_\sigma$.
        Otherwise, exactly one of these dimer, say $e$, intersects $J_\sigma$, while the other dimer belongs to $\sigma$.
        Since $J_\sigma$ is fully packed with vertical dimers in $\sigma$, $e$ must again bound a (possibly different) vertical dimer in $\sigma$ on the same side as $e$.
        Therefore, $e$ bounds vertical dimers in $\sigma$ on both sides, hence a stick edge in $G_\sigma$.
    \end{proof}

    \begin{claim}
        The mapping $m$ satisfies Property~\ref{itm:partition function mostly no long sticks - injectivity}.
    \end{claim}

    \begin{proof}
        By contraposition, suppose that $\sigma_1,\sigma_2\in F_{i-1}\setminus F_i$ are such that $m(\sigma_1)\cap m(\sigma_2)\ne\emptyset$.
        Then, $\sigma_1$ and $\sigma_2$ coincide on $\V\setminus(J_{\sigma_1}\cap J_{\sigma_2})$.
        To see that they must also coincide on $J_{\sigma_1}\cap J_{\sigma_2}$, we first observe that, by construction, both $\V\setminus J_{\sigma_1}$ and $\V\setminus J_{\sigma_2}$ contain the top vertical edge $e$ of $\Z^2$ that is fully contained in $J$.
        Thus, $\sigma_1$ and $\sigma_2$ agree on $e$.
        Now, since $\sigma_1,\sigma_2\not\in D_i$, the maximal vertical path in $\Z^2$ consisting of edges fully contained in $J$ is fully packed with vertical dimers in both $\sigma_1$ and $\sigma_2$.
        In particular, whether $e$ is occupied by a vertical dimer in $\sigma_1$ and $\sigma_2$ fully determines the restriction of $\sigma_1$ and $\sigma_2$ to $J_{\sigma_1}\cup J_{\sigma_2}$, on which they therefore must agree.
        This completes the proof of the claim.
    \end{proof}

    \begin{claim}
        The mapping $m$ satisfies Property~\ref{itm:partition function mostly no long sticks - mapping weight bound}.
    \end{claim}

    \begin{proof}
        Let $\sigma\in (F_{i-1}\setminus F_i)\cap\Omega^0_\Lambda$.
        Define $\sigma_0$ as the configuration obtained from $\sigma$ by removing all dimers intersecting $J_\sigma$.
        We have the identity
        \begin{equation}
            \label{eqn:weight identity}
            w^0_{\Lambda;\beta}(\sigma_0+\tilde{\sigma})=w^0_{\Lambda;\beta}(\sigma)w^{1,\oned}_{J_\sigma;\beta}(\tilde{\sigma})\quad\text{for all }\tilde{\sigma}\in\Omega^{1,\oned}_{J_\sigma}.
        \end{equation}
        Indeed, referring to the potentials defined in Section~\ref{sec:the model}, we have that
        \begin{equation}
            H^0_\Lambda(\sigma_0+\tilde{\sigma})
            =\sum_{\substack{\Delta\cap\Lambda\ne\emptyset\\\Delta\cap J_\sigma=\emptyset}}\Phi_\Delta(\sigma_0+\tilde{\sigma})
            +\sum_{\Delta\cap J_\sigma\ne\emptyset}\Phi_\Delta(\sigma_0+\tilde{\sigma})
            =\sum_{\substack{\Delta\cap\Lambda\ne\emptyset\\\Delta\cap J_\sigma=\emptyset}}\Phi_\Delta(\sigma_0)
            +H_{J_\sigma}^{1,\oned}(\tilde{\sigma}),
        \end{equation}
        where we further rewrite
        \begin{equation}
            \sum_{\substack{\Delta\cap\Lambda\ne\emptyset\\\Delta\cap J_\sigma=\emptyset}}\Phi_\Delta(\sigma_0)
            =\sum_{\substack{\Delta\cap\Lambda\ne\emptyset\\\Delta\cap J_\sigma=\emptyset}}\Phi_\Delta(\sigma)
            +\sum_{\Delta\cap J_\sigma\ne\emptyset}\Phi_\Delta(\sigma)
            =H^0_\Lambda(\sigma).
        \end{equation}
        using that $\sigma$ and $\sigma_0$ coincide on $\V\setminus J_\sigma$ and noting that the last sum is identically zero.
        Thus, summing over $\tilde{\sigma}\in\Omega^{1,\oned}_{J_\sigma}$ in~\eqref{eqn:weight identity} yields that
        \begin{equation}
            Z_{\Lambda;\beta}^0(m(\sigma))=w^0_{\Lambda;\beta}(\sigma)Z_{J_\sigma;\beta}^1\ge w^0_{\Lambda;\beta}(\sigma)\left[1+\frac{1}{16}e^{-\beta(\chempot+3a)}\len{J_\sigma}^2\right]
        \end{equation}
        where we used Proposition~\ref{prop:partition function magnetized boundary} in the last inequality.
        Recalling that $\len{J_\sigma}\ge\ceil{\frac{M}{2}}-4$ completes the proof.
    \end{proof}

    The proof of the proposition is now complete.
\end{proof}

\subsection{Proof of Theorem~\ref{thm:mesoscopic rectangles are divided by sticks}}

In this subsection, we deduce Theorem~\ref{thm:mesoscopic rectangles are divided by sticks} from Propositions~\ref{prop:partition function no long sticks} and~\ref{prop:partition function mostly no long sticks}.

Let $c>0$ be a constant to be determined later.
Let $S\subset R$ be rectangles as in the statement of Theorem~\ref{thm:mesoscopic rectangles are divided by sticks}, and $\Lambda$ be any even rectangle of which $R$ is a block.
Let $E_{R,S}$ be the event that for all $T^R$, no stick divides both $\tau R$ and $\tau S$.
Thus, denoting by $f$ the indicator function that no stick divides both $R$ and $S$, the indicator function of $E_{R,S}$ on $\Omega^\per_\Lambda$ is precisely $\prod_{\tau\in T^R_\Lambda}\tau f$.

By Proposition~\ref{prop:comparison of boundary conditions - partition function}, there exists a constant $C(\beta)>0$ such that
\begin{equation}
    \label{eqn:mesoscopic rectangles are divided by sticks - boundary condition comparison}
    Z^\per_{\Lambda;\beta}(E_{R,S})\le C(\beta)^{\Perimeter{\Lambda}}Z^0_{\Lambda;\beta}(m^{0,\Lambda}(E_{R,S})).
\end{equation}
Provided that 
\begin{equation}
    M\ge 2\max\set{\Width{R},\Height{R}},
\end{equation}
choosing $A$ to be the set of integer translates of the sides of $\Lambda$ which are contained in $\Lambda$ and disjoint from $\cup_{\tau\in T^R_\Lambda}\interior{\tau S}$ ensures that 
\begin{equation}
    \label{eqn:mesoscopic rectangles are divided by sticks - event inclusion}
    m^{0,\Lambda}(E_{R,S})\subset E_{M,A}.
\end{equation}
We note that the assumptions~\eqref{eqn:mesoscopic rectangle dimensions} and~\eqref{eqn:inner rectangle dimensions} and the choice of $A$ imply that
\begin{equation}
    \len{A}\le 6c\Area{\Lambda}.
\end{equation}
See~\cite[Proof of Lemma 4.1]{hadas2025columnar} for justifications of the above claims.

Combining Propositions~\ref{prop:partition function periodic boundary} (bounding the partition function with periodic boundary conditions by $1$),~\ref{prop:partition function no long sticks}, and~\ref{prop:partition function mostly no long sticks}, we get that
\begin{equation}
    \label{eqn:mesoscopic rectangles are divided by sticks - pre-chessboard seminorm bound}
    \frac{Z_{\Lambda;\beta}^0(E_{M,A})}{Z_{\Lambda;\beta}^\per}
    \le\exp{2\beta(\chempot+a)
    +\frac{C_{\eqref{eqn:partition function mostly no long sticks}}e^{\beta(\chempot+3a)}}{M^3}\len{A}
    -\frac{C_{\eqref{eqn:partition function no long sticks}}}{M}\Area{\Lambda'}
    }.
\end{equation}
Next, combining~\eqref{eqn:mesoscopic rectangles are divided by sticks - boundary condition comparison},~\eqref{eqn:mesoscopic rectangles are divided by sticks - event inclusion}, and~\eqref{eqn:mesoscopic rectangles are divided by sticks - pre-chessboard seminorm bound}, we get that
\begin{equation}
    \norm{f}_R
    =\limsup_{n\to\infty}\mu_{\mathrm{R}_{n!\times n!};\beta}^\per(E_{R,S})^{\frac{\Area{R}}{(n!)^2}}
    \le \exp{\left(\frac{6cC_{\eqref{eqn:partition function mostly no long sticks}}e^{\beta(\chempot+3a)}\ell_0}{M^3}
    -\frac{C_{\eqref{eqn:partition function no long sticks}}\ell_0}{M}\right)\ell_0^{-1}\Area{R}}.
\end{equation}
Fixing $M=c_{\ref{prop:partition function no long sticks}}\ell_0$, the proof is complete by taking $c>0$ sufficiently small.

\subsection{Deduction of Theorem~\ref{thm:mesoscopic orientational order} and Corollary~\ref{cor:mesoscopic orientational symmetry breaking}}

In this subsection, we deduce Theorem~\ref{thm:mesoscopic orientational order} and Corollary~\ref{cor:mesoscopic orientational symmetry breaking} from Theorem~\ref{thm:mesoscopic rectangles are divided by sticks}.

\begin{proof}[Proof of Theorem~\ref{thm:mesoscopic orientational order}] 
    Throughout the proof, let $\epsilon:=e^{-c_{\eqref{eqn:mesoscopic rectangles are divided by sticks}}b^2\ell_0^{-1}}$.
    We will prove that $\Psi^{b\times b}$ is $\epsilon$-strongly percolating (with respect to $\mu$).
    Once this is done, taking $C$ so large that $e^{-c_{\eqref{eqn:mesoscopic rectangles are divided by sticks}}C^2}<\epsilon_0=1/21$, we deduce using Lemma~\ref{lem:Peierls argument} that $\Psi^{b\times b}$ $\mu$-almost surely contains a unique infinite $\Box$-component $I$.
    In fact, $I$ is contained in either $\Psi_\ver^{b\times b}$ or $\Psi_\hor^{b\times b}$, otherwise there exist $\Box$-adjacent $u\in I\cap\Psi_\ver^{b\times b}$ and $v\in I\cap\Psi_\hor^{b\times b}$, which is impossible as it implies an intersection between a vertical and a horizontal stick.
    The proof is then complete.

    To prove that $\Psi^{b\times b}$ is $\epsilon$-strongly percolating, it suffices by Corollary~\ref{cor:strong percolation of complement of rare set} to show that $B:=\Z^2\setminus\Psi^{b\times b}$ is $\epsilon$-rare.
    Let $A\subset\Z^2$ be finite.
    By the pigeonhole principle, there exists $v\in\set{0,\dots,N-1}^2$ such that $A':=(v+N\Z^2)\cap A$ satisfies $\abs{A'}\ge\abs{A}/N^2$.
    By Proposition~\ref{prop:infinite-volume chessboard estimate for product measures} and Theorem~\ref{thm:mesoscopic rectangles are divided by sticks},
    \begin{equation}
        \mu(A\subset\Z^2\setminus\Psi^{b\times b})
        \le\mu(A'\subset\Z^2\setminus\Psi^{b\times b})
        \le(e^{-c_{\eqref{eqn:mesoscopic rectangles are divided by sticks}}\ell_0^{-1}(bN)^2})^{\abs{A'}}
        \le e^{-c_{\eqref{eqn:mesoscopic rectangles are divided by sticks}}\ell_0^{-1}b^2\abs{A}},
    \end{equation}
    as required.
\end{proof}

\begin{proof}[Proof of Corollary~\ref{cor:mesoscopic orientational symmetry breaking}]
    By compactness, there exists a $\Z^2$-invariant Gibbs measure $\mu$, obtained as the weak limit of a sequence of finite-volume Gibbs measures with periodic boundary conditions.
    By Theorem~\ref{thm:mesoscopic orientational order}, 
    \begin{equation}
        \label{eqn:mesoscopic orientational symmetry breaking - mutually exclusive events}
        \mu(\text{exactly one of }E_\ver\text{ and }E_\hor\text{ occurs})=1,
    \end{equation}
    so $\mu(E_\ver)>0$ or $\mu(E_\hor)>0$.
    Without loss of generality, assume that $\mu(E_\ver)>0$.
    Since $E_\ver$ is $b\Z^2$-invariant, the conditioned measure $\mu_\ver(\cdot):=\mu(\cdot \mid E_\ver)$ is a $b\Z^2$-invariant Gibbs measure such that $\mu_\ver(E_\hor)=0$ by~\eqref{eqn:mesoscopic orientational symmetry breaking - mutually exclusive events}.
    Letting $\mu_\hor:=\tau\mu_\ver$ completes the proof.
\end{proof}

\section{Absence of translational symmetry breaking}
\label{sec:absence of translational symmetry breaking}

In this section, we establish the absence of translational order within each of the two phases identified in Corollary~\ref{cor:mesoscopic orientational symmetry breaking}, thus completing the proof of the Heilmann--Lieb conjecture in our setting.
Our argument builds on the mesoscopic characterization of orientational symmetry breaking established in Section~\ref{sec:mesoscopic orientational symmetry breaking} and consists of two main steps, each carried out in a separate subsection.
In Section~\ref{sec:two phases}, we refine our characterization of these phases by showing that, in a phase where mesoscopic \emph{squares} are predominantly divided by vertical sticks, mesoscopic \emph{rectangles} with dimensions $\order{1}\times\order{\ell_0}$ are also likely to be divided by vertical sticks, while those with dimensions $\order{\ell_0}\times\order{1}$ are unlikely to be divided by horizontal sticks, and vice versa.
In Section~\ref{sec:disagreement percolation}, we combine the refined characterization of the horizontal and vertical phases with disagreement percolation to prove decay of correlations estimates within each phase, thereby ruling out translational symmetry breaking.

The proofs in Section~\ref{sec:two phases} are a simplified version of those in~\cite[Section 7]{hadas2025columnar} due to the fact that only two phases are present in our setting, whereas the hard-square model has four.
The proofs in Section~\ref{sec:disagreement percolation} pertaining to the implementation of disagreement percolation are substantially more unique to the monomer-dimer model, where we introduce new ideas to control the propagation of disagreement paths in the presence of second-nearest neighbor interactions in the model.

\subsection{Two phases}
\label{sec:two phases}

In this subsection, we refine our mesoscopic characterization of the horizontal and vertical phases identified in Corollary~\ref{cor:mesoscopic orientational symmetry breaking}. 
We will rely heavily on the notion of strongly percolating sets and their properties and interplay with Peierls-type estimates as imported in Section~\ref{sec:strongly percolating sets}.

\paragraph{Main results}

Recall the integer $N$ fixed after the statement of Theorem~\ref{thm:mesoscopic rectangles are divided by sticks} and the random sets $\Psi^{K\times L}_\ver$ and $\Psi^{K\times L}_\hor$ defined there.
Let $\fb:=\ceil{\frac{c_{\eqref{eqn:mesoscopic rectangles are divided by sticks}\ell_0}}{N}}$.
The first main result of this subsection is the following refinement on the shape of rectangles that are likely to be divided by sticks in each phase.

\begin{theorem}
    \label{thm:two phases}
    There exist universal constants $c,\fa_0>0$ such that, for each integer $\fa\ge\fa_0$, there exists $\beta_0>0$ such that, for all $\beta\ge\beta_0$ and $\fb!\Z^2$-ergodic Gibbs measure, exactly one of $\Psi^{\fa\times\fb}_\ver$ and $\Psi^{\fb\times\fa}_\hor$ is $e^{-c\fa}$-strongly percolating, while the other almost surely has only finite $\Box$-components.
\end{theorem}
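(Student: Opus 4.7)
The plan is to follow the decreasing-induction strategy of \cite[Section 7]{hadas2025columnar}, starting from the mesoscopic dichotomy of Theorem~\ref{thm:mesoscopic orientational order} and progressively refining the shape of the rectangles whose properly divided status is controlled. Throughout, I combine Theorem~\ref{thm:mesoscopic rectangles are divided by sticks} (chessboard bound on rectangles undivided by any stick) with the splitting Lemma~\ref{lem:splitting strongly percolating sets} and the strongly percolating sets machinery of Section~\ref{sec:strongly percolating sets}.

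First, I reduce to one phase as the base case. By the $\fb!\Z^2$-ergodicity of $\mu$ together with Theorem~\ref{thm:mesoscopic orientational order}, exactly one of $\Psi^{\fb\times\fb}_\ver$ and $\Psi^{\fb\times\fb}_\hor$ has an infinite $\Box$-component $\mu$-a.s.; by the $(x,y)\mapsto(y,x)$ symmetry I assume it is the former, the \emph{vertical phase}. A preliminary chessboard-plus-splitting at the square level upgrades this to $e^{-c\fb}$-strong percolation of $\Psi^{\fb\times\fb}_\ver$: Theorem~\ref{thm:mesoscopic rectangles are divided by sticks} applied to $R=\rect{\fb N\times\fb N}{(-1/2,-1/2)}$ with $S=R^-$ and $\fb\asymp \ell_0/N$ produces a chessboard bound $e^{-c\fb}$, which Lemma~\ref{lem:strongly percolating grid events} transfers into a bound on $\peierls{\mu}(\xre{\fb\times\fb}{\overline{F^R_\ver\cup F^R_\hor}})$; and Lemma~\ref{lem:splitting strongly percolating sets} applied to $F^R_\ver,F^R_\hor$ (whose simultaneous occurrence at neighboring grid points is excluded because two long perpendicular sticks must cross in space) separates the two sides, the vertical one winning by the a.s.\ existence of an infinite $\Box$-component.

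I then run the same procedure at the target aspect ratios. Let $R_v=\rect{\fa N\times\fb N}{(-1/2,-1/2)}$ and $R_h=\rect{\fb N\times\fa N}{(-1/2,-1/2)}$. For $\fa\ge\fa_0$ chosen so that both rectangles and their inner counterparts satisfy the size hypotheses of Theorem~\ref{thm:mesoscopic rectangles are divided by sticks}, that theorem gives $\norm{\overline{F^{R_v}_\ver\cup F^{R_v}_\hor}}_{R_v}\le e^{-c\fa}$ and the $R_h$-analogue (after plugging $\fb N\asymp\ell_0$), which Lemma~\ref{lem:strongly percolating grid events} transports to $\peierls{\mu}(\xre{\fa\times\fb}{\overline{F^{R_v}_\ver\cup F^{R_v}_\hor}})\le e^{-c''\fa}$ together with its twin. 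A direct geometric check of Lemma~\ref{lem:splitting strongly percolating sets}'s disjointness hypothesis on $F^{R_v}_\ver$ and $\eta_{(\fa,0)}F^{R_v}_\hor,\eta_{(0,\fb)}F^{R_v}_\hor$ (any long vertical stick in $R_v$ and any long horizontal stick in the neighboring shift are forced to intersect) then yields
\[
\min\bigl\{\peierls{\mu}(\xre{\fa\times\fb}{\overline{F^{R_v}_\ver}}),\,\peierls{\mu}(\xre{\fa\times\fb}{\overline{F^{R_v}_\hor}})\bigr\}\le e^{-c''\fa},
\]
and its analogue for the wide rectangle $R_h$.

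The final step is to identify which term attains each minimum in the vertical phase, and here I expect the main obstacle. The cleaner direction is $\Psi^{\fb\times\fa}_\hor$: a grid point of this set witnesses a horizontal stick of length $\ge\fb N$, which is geometrically incompatible with the dense field of long vertical sticks of length $\ge\fb N$ already obtained from the base case. Passing to the common refinement grid $\fb!\Z^2$ and invoking Lemma~\ref{lem:intersection of strongly percolating sets} on the two aligned strongly percolating sets forces the (empty) joint event to itself be strongly percolating, a contradiction that rules out $\Psi^{\fb\times\fa}_\hor$ and, via ergodicity, promotes it to having only finite $\Box$-components $\mu$-a.s. The analogous resolution for the tall $R_v$ case is more delicate because $\Psi^{\fa\times\fb}_\hor$ involves horizontal sticks of length only $\ge\fa N$, which can in principle coexist with the long vertical sticks. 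I plan to overcome this via a pigeonhole argument: inside any $\fb\times\fb$ square from the infinite cluster of $\Psi^{\fb\times\fb}_\ver$, the designated vertical stick lies in the middle band of $N-2$ tall $\fa\times\fb$ sub-rectangles, and if $\Psi^{\fa\times\fb}_\hor$ were $e^{-c\fa}$-strongly percolating (density $1-O(e^{-c\fa})$ on its grid), at least one of these sub-rectangles would almost surely carry a horizontal stick meeting the vertical one, a contradiction. Packaging this pigeonhole cleanly as an application of Lemma~\ref{lem:intersection of strongly percolating sets} on the $\fb!\Z^2$-refinement, and keeping the constants $c,\fa_0,\beta_0$ uniform across the various grid alignments, are the most delicate parts of the argument.
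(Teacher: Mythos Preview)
Your overall plan is sound through the base case and the chessboard-plus-splitting step at the $\fa\times\fb$ scale; the treatment of the ``easy'' direction (ruling out $\Psi^{\fb\times\fa}_\hor$ once $\Psi^{\fa\times\fb}_\ver$ is known to strongly percolate) is also essentially what the paper does in its Item~3. The gap is in the step you flag as ``more delicate'': ruling out $\Psi^{\fa\times\fb}_\hor$ in the vertical phase by a direct pigeonhole comparison with the $\fb\times\fb$ base case.

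The pigeonhole does not close. The vertical stick in a $\fb N\times\fb N$ square sits at an unknown $x$-position in a band of width $(N-2)\fb$, so to guarantee that \emph{some} horizontal stick from an $\fa N$-wide sub-rectangle crosses it you must control on the order of $\fb/\fa$ grid columns simultaneously, not $N-2$ of them. Strong $e^{-c\fa}$-percolation of $\Psi^{\fa\times\fb}_\hor$ gives, via Lemma~\ref{lem:quantitative Peierls argument}, only that each individual grid point lies in $\Psi^{\fa\times\fb}_\hor$ with probability at least $1-e^{-c\fa}/\epsilon_0$. The union bound over $\fb/\fa\asymp\ell_0/(N\fa)$ columns therefore produces an error of order $\ell_0 e^{-c\fa}$, which for fixed $\fa$ diverges as $\beta\to\infty$. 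Since the theorem demands a universal $\fa_0$ with $\beta_0$ depending on $\fa$, this cannot be repaired by enlarging $\fa$; and Lemma~\ref{lem:intersection of strongly percolating sets} does not help, as you would be intersecting a strongly percolating set with a deterministic row of $\sim\fb/\fa$ points rather than with another strongly percolating set.

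The paper avoids this by the very descending induction you cite at the outset but do not actually implement: it steps from $\fa+1$ down to $\fa$ one unit at a time. At each step, assuming $\Psi^{(\fa+1)\times\fb}_\ver$ is $e^{-c(\fa+1)}$-strongly percolating, one checks that
\[
D^{(\fa+1)\times\fb}_\ver\cap D^{\fa\times\fb}_\hor\cap\eta_{(\fa,0)}D^{\fa\times\fb}_\hor=\emptyset
\]
deterministically: the two $\fa N$-wide rectangles together cover the $(\fa+1)N$-wide one once $\fa\ge N$, so the vertical stick must cross one of the two horizontal sticks. This needs a union bound over only three events, each failing with probability at most $e^{-c\fa_0}/\epsilon_0<1/3$, and the contradiction goes through uniformly in $\beta$.
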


Let $\cP:=\set{\ver,\hor}$.
For each $\fb!$-ergodic Gibbs measure $\mu$, write $\Phase{\mu}:=\pi$ for the unique $\pi\in\cP$ corresponding to the superscript of the strongly percolating set identified in Theorem~\ref{thm:two phases}.
In the second main result of this subsection, using the refined characterization of the vertical and horizontal phases, we clarify how these phases transform under translations and reflections of the square lattice.

\begin{proposition}
    \label{prop:phase symmetries}
    Let $\mu$ be a $\fb!\Z^2$-ergodic Gibbs measure.
    Then,
    \begin{enumerate}
        \item \label{itm:phase symmetries - translation} $\Phase{\mu}=\Phase{\eta_{(1,0)}\mu}=\Phase{\eta_{(0,1)}\mu}$;
        \item \label{itm:phase symmetries - reflection} Let $\tau$ be the reflection $(x,y)\mapsto(y,x)$. 
        Then, $\Phase{\mu}=\hor$ if and only if $\Phase{\tau\mu}=\ver$.
    \end{enumerate}
\end{proposition}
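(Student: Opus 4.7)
The plan divides along the two items of the proposition.

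Item~\ref{itm:phase symmetries - reflection} is essentially geometric and should be short. I would observe that the reflection $\tau$, viewed as acting on dimer configurations by precomposition, exchanges horizontal and vertical dimers (hence horizontal and vertical sticks), and that it sends a rectangle of shape $K \times L$ with corner $(x_0, y_0)$ to one of shape $L \times K$ with corner $(y_0, x_0)$. Chasing through the definitions of $\Psi^{\fa \times \fb}_\ver$ and $\Psi^{\fb \times \fa}_\hor$ and noting that proper division is preserved under this correspondence then yields the identity
\begin{equation*}
    (x, y) \in \Psi^{\fa \times \fb}_\ver(\sigma) \iff (y, x) \in \Psi^{\fb \times \fa}_\hor(\tau \sigma),
\end{equation*}
so the random set $\Psi^{\fa \times \fb}_\ver$ under $\mu$ has the same law, up to the reflection $(x, y) \leftrightarrow (y, x)$ of $\Z^2$, as $\Psi^{\fb \times \fa}_\hor$ under $\tau\mu$. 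Since strong percolation is manifestly invariant under this reflection, and since the same argument applies with the roles of vertical and horizontal exchanged, Theorem~\ref{thm:two phases} then immediately gives Item~\ref{itm:phase symmetries - reflection}.

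For Item~\ref{itm:phase symmetries - translation}, my first step is to note that $\eta_{(1,0)}\mu$ and $\eta_{(0,1)}\mu$ are again $\fb!\Z^2$-ergodic Gibbs measures (since translations commute with $\fb!\Z^2$-translations and preserve the Gibbs specification), so that their phases are well-defined by Theorem~\ref{thm:two phases}. I focus on $\eta_{(1,0)}\mu$; the case of $\eta_{(0,1)}\mu$ is symmetric. The key setup is to define, for each $(s, t) \in \Z^2$, the offset random sets
\begin{equation*}
    \Psi^{\fa \times \fb}_{\ver, (s, t)}(\sigma) := \set{(x, y) \in \Z^2 : \text{a vertical stick of } \sigma \text{ properly divides } \rect{\fa N \times \fb N}{(x\fa + s - 1/2, y\fb + t - 1/2)}}
\end{equation*}
and analogously $\Psi^{\fb \times \fa}_{\hor, (s, t)}$. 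Writing the translation action on configurations as $T_{(1,0)}\sigma := \sigma \circ \eta_{(-1,0)}$, so that $T_{(1,0)}\sigma \sim \eta_{(1, 0)}\mu$ whenever $\sigma \sim \mu$, a direct calculation of how sticks and rectangles transform under translation gives the identities
\begin{equation*}
    \Psi^{\fa \times \fb}_\ver(T_{(1, 0)}\sigma) = \Psi^{\fa \times \fb}_{\ver, (-1, 0)}(\sigma), \qquad \Psi^{\fb \times \fa}_\hor(T_{(1, 0)}\sigma) = \Psi^{\fb \times \fa}_{\hor, (-1, 0)}(\sigma).
\end{equation*}
Thus $\Phase{\eta_{(1, 0)}\mu}$ is determined by the strong-percolation behavior of $\Psi^{\fa \times \fb}_{\ver, (-1, 0)}$ versus $\Psi^{\fb \times \fa}_{\hor, (-1, 0)}$ under $\mu$.

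The main obstacle is to transfer the dichotomy of Theorem~\ref{thm:two phases} to these offset sets, i.e.\ to show that $\Phase{\mu} = \ver$ implies $\Psi^{\fa \times \fb}_{\ver, (-1, 0)}$ is $e^{-c\fa}$-strongly percolating under $\mu$ while $\Psi^{\fb \times \fa}_{\hor, (-1, 0)}$ almost surely has only finite $\Box$-components. My resolution is to observe that the proof of Theorem~\ref{thm:two phases} reduces, via Lemma~\ref{lem:strongly percolating grid events}\ref{itm:strongly percolating grid events - chessboard bound}, to chessboard--Peierls estimates whose inputs are chessboard seminorms $\norm{\cdot}_R$ of $R$-local, centrally reflection-symmetric events; these seminorms depend only on the shape of $R$ and not on its position in $\R^2$, so the identical argument delivers the same strong-percolation bound $e^{-c\fa}$ for each offset set. (Equivalently, one could simply restate and reprove Theorem~\ref{thm:two phases} with a free integer offset parameter carried through the entire argument, with no further modification.) Pushing the resulting offset statement through the distribution identities above yields $\Phase{\eta_{(1, 0)}\mu} = \ver$, and the case $\Phase{\mu} = \hor$ is handled identically, completing Item~\ref{itm:phase symmetries - translation}.
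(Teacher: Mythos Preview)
Your treatment of Item~\ref{itm:phase symmetries - reflection} matches the paper's: it follows directly from the definitions, and you spell out the geometric identification correctly.

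For Item~\ref{itm:phase symmetries - translation}, your route differs from the paper's and, as written, contains a gap. Re-running the proof of Theorem~\ref{thm:two phases} with an integer offset does yield an offset dichotomy---exactly one of $\Psi^{\fa\times\fb}_{\ver,(-1,0)}$ and $\Psi^{\fb\times\fa}_{\hor,(-1,0)}$ is $e^{-c\fa}$-strongly percolating---but it does not by itself tell you \emph{which} one. The descending induction in Lemma~\ref{lem:two phases - intermediate} is seeded by whichever of the two square sets at $\fa=\fb$ happens to percolate, and the offset version of that seed is not a priori tied to the non-offset one. To make the identification you need, you must link the offset and non-offset dichotomies, and the natural way to do that is precisely the non-intersection argument the paper uses directly.

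The paper's proof is in fact shorter: suppose $\Phase{\mu}=\ver$ but $\Phase{\eta_{(1,0)}\mu}=\hor$. Then by Lemma~\ref{lem:quantitative Peierls argument} both $\mu(\Omega\setminus D^{\fa\times\fb}_\ver)$ and $\mu(\Omega\setminus\eta_{(1,0)}D^{\fb\times\fa}_\hor)$ are below $\epsilon<1/3$, so $D^{\fa\times\fb}_\ver\cap\eta_{(1,0)}D^{\fb\times\fa}_\hor\neq\emptyset$; but a vertical stick properly dividing $\rect{\fa N\times\fb N}{(-1/2,-1/2)}$ and a horizontal stick properly dividing $\rect{\fb N\times\fa N}{(1/2,-1/2)}$ must cross, contradicting the non-intersection property. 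This bypasses offset sets entirely and is exactly the missing link your approach would need.
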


The rest of this subsection is dedicated to the proofs of Theorem~\ref{thm:two phases} and Proposition~\ref{prop:phase symmetries}.

\subsubsection{Proof of Theorem~\ref{thm:two phases}}

Theorem~\ref{thm:two phases} will follow from Item~\ref{itm:two phases - intermediate - 3} of the following lemma.

\begin{lemma}
    \label{lem:two phases - intermediate}
    There exist universal constants $c,\fa_0>0$ such that, for each $\beta>0$, integer $\fa_0\le\fa\le\fb$, and $\fb!\Z^2$-ergodic Gibbs measure $\mu$, all of the following hold:
    \begin{enumerate}
        \item \label{itm:two phases - intermediate - 1} One of $\Psi^{\fa\times\fb}_{\ver}$ and $\Psi^{\fa\times\fb}_{\hor}$ is $e^{-c\fa}$-strongly percolating.
        \item \label{itm:two phases - intermediate - 2} One of $\Psi^{\fa\times\fb}_{\ver}$ and $\Psi^{\fb\times\fa}_{\hor}$ is $e^{-c\fa}$-strongly percolating.
        \item \label{itm:two phases - intermediate - 3} There exists a constant $\beta_0=\beta_0(\fa)$ such that for all $\beta\ge\beta_0$, exactly one of $\Psi^{\fa\times\fb}_{\ver}$ and $\Psi^{\fb\times\fa}_{\hor}$ is $e^{-c\fa}$-strongly percolating, while the other almost surely has only finite $\Box$-components.
    \end{enumerate}
\end{lemma}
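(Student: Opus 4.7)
The plan is to adapt the strategy of Section~7 of~\cite{hadas2025columnar} to our setting, which is simplified by the presence of exactly two (rather than four) pure phases. I will treat the three items in order, using the chessboard–Peierls machinery of Section~\ref{sec:strongly percolating sets} as the main tool.

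For Item~\ref{itm:two phases - intermediate - 1}, I would set $R := \rect{\fa N \times \fb N}{(-1/2,-1/2)}$ and let $E_\ver$ (resp.\ $E_\hor$) be the event that a vertical (resp.\ horizontal) stick properly divides $R$. For $\fa \ge \fa_0$ large, $R$ and $S := R^-$ satisfy the hypotheses of Theorem~\ref{thm:mesoscopic rectangles are divided by sticks}, giving $\norm{\Omega \setminus (E_\ver \cup E_\hor)}_R \le e^{-c_1 \ell_0^{-1} \Area{R}} \le e^{-c_2 \fa}$ after using $\fb N \asymp \ell_0$. Since $E_\ver \cup E_\hor$ is invariant under the reflections through the center of $R$, Lemma~\ref{lem:strongly percolating grid events}(\ref{itm:strongly percolating grid events - chessboard bound}) (with $k=K=\fa$, $l=L=\fb$, so $r=1$) gives $\peierls{\mu}(\xre{\fa\times\fb}{\es{E_\ver \cup E_\hor}}) \le e^{-c_2 \fa}$. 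A short geometric check then shows that a vertical stick properly dividing $R$ at grid position $(0,0)$ and a horizontal stick properly dividing $R$ at grid position $(1,0)$ or $(0,1)$ would necessarily cross inside the rectangles' overlap, contradicting the non-intersection of orthogonal sticks. This verifies the exclusivity hypothesis of Lemma~\ref{lem:splitting strongly percolating sets}, whose conclusion yields Item~\ref{itm:two phases - intermediate - 1}.

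For Item~\ref{itm:two phases - intermediate - 2}, if the Item~\ref{itm:two phases - intermediate - 1} option is $\Psi^{\fa\times\fb}_\ver$, I am done. Otherwise it is $\Psi^{\fa\times\fb}_\hor$, corresponding only to horizontal sticks of length $\Theta(\fa)$, and I must upgrade to strong percolation of $\Psi^{\fb\times\fa}_\hor$, corresponding to horizontal sticks of length $\Theta(\fb)$. Repeating Item~\ref{itm:two phases - intermediate - 1}'s argument with the axes swapped (note that $\fb N \times \fa N$ rectangles also satisfy Theorem~\ref{thm:mesoscopic rectangles are divided by sticks}'s hypotheses) gives that one of $\Psi^{\fb\times\fa}_\ver$ or $\Psi^{\fb\times\fa}_\hor$ is $e^{-c_2 \fa}$-strongly percolating. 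To rule out the undesired case where $\Psi^{\fb\times\fa}_\ver$ is the percolating one, I would argue that the simultaneous strong percolation of $\Psi^{\fa\times\fb}_\hor$ and $\Psi^{\fb\times\fa}_\ver$ is impossible: by Lemma~\ref{lem:quantitative Peierls argument}, each $\mu$-almost surely contains an infinite $\Box$-component, which respectively produce infinite families of short horizontal and short vertical sticks; a planar crossing argument, combined with $\fb!\Z^2$-ergodicity, then produces an intersection of a horizontal and a vertical stick, which is a contradiction.

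For Item~\ref{itm:two phases - intermediate - 3}, the exclusion of simultaneous strong percolation of $\Psi^{\fa\times\fb}_\ver$ and $\Psi^{\fb\times\fa}_\hor$ follows from the same crossing argument. To promote ``only one strongly percolates'' to ``the other almost surely has only finite $\Box$-components'', I would revisit the chessboard bound used for Item~\ref{itm:two phases - intermediate - 1}, now for $\beta$ large in terms of $\fa$: the bound from Theorem~\ref{thm:mesoscopic rectangles are divided by sticks} becomes small enough that, via Lemma~\ref{lem:quantitative Peierls argument} applied to the complement of the percolating set, long $\boxtimes$-paths have exponentially small probability; combined with $\fb!\Z^2$-ergodicity, this upgrades to almost-sure absence of infinite $\Box$-components on the non-percolating side. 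The main obstacle will be making the planar crossing argument in Item~\ref{itm:two phases - intermediate - 2} rigorous: converting combinatorial $\Box$-percolation of properly divided rectangles on two different grids into a genuine geometric intersection of orthogonal sticks in the plane requires tracking the stick geometry against the grid overlaps, and this is precisely where the two-phase structure of our model substantively simplifies the corresponding four-phase argument of~\cite{hadas2025columnar}.
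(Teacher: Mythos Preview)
Your Item~\ref{itm:two phases - intermediate - 1} matches the paper's argument (modulo a cosmetic slip: since $E_\ver\cup E_\hor$ is local to a rectangle of dimensions $\fa N\times\fb N$, Lemma~\ref{lem:strongly percolating grid events}\ref{itm:strongly percolating grid events - chessboard bound} is applied with $r=N^2$, not $r=1$).

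The real gap is in Item~\ref{itm:two phases - intermediate - 2}. Your plan is to rule out simultaneous strong percolation of $\Psi^{\fa\times\fb}_\hor$ and $\Psi^{\fb\times\fa}_\ver$ by a crossing argument, but this cannot work directly: the event $D^{\fa\times\fb}_\hor$ forces a horizontal stick spanning only the \emph{short} side (length $\fa N$) of a tall rectangle, with $y$-coordinate in the bulk $[\fb,(N-1)\fb]$, and $D^{\fb\times\fa}_\ver$ forces a vertical stick of length $\fa N$ with $x$-coordinate in $[\fb,(N-1)\fb]$. When $\fa N<\fb$---which is the generic situation, since $\fa$ may equal the fixed constant $\fa_0$ while $\fb\asymp\ell_0\to\infty$---these two short sticks simply do not meet, and no contradiction arises. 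The ``infinite families of short sticks plus planarity plus ergodicity'' variant does not rescue this, because the sticks do not assemble into continuous curves in the plane. The paper's fix is a \emph{descending induction} on $\fa$ starting from the symmetric case $\fa=\fb$: assuming $\Psi^{(\fa+1)\times\fb}_\ver$ is strongly percolating, if Item~\ref{itm:two phases - intermediate - 1} at level $\fa$ produced $\Psi^{\fa\times\fb}_\hor$, one shows $D^{(\fa+1)\times\fb}_\ver\cap D^{\fa\times\fb}_\hor\cap\eta_{(\fa,0)}D^{\fa\times\fb}_\hor\ne\emptyset$ via Lemma~\ref{lem:quantitative Peierls argument}, and here the vertical stick from the first event is \emph{long} (length $\fb N$), forcing an intersection with one of the two short horizontal sticks. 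The step-by-step descent keeps the relevant widths within $1$ of each other, which is exactly what makes the geometry close up.

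For Item~\ref{itm:two phases - intermediate - 3}, the exclusion half does work by the direct Peierls-at-the-origin argument (unlike your Item~\ref{itm:two phases - intermediate - 2} configuration, here both sticks are long). But your route to ``only finite components'' via Lemma~\ref{lem:quantitative Peierls argument} bounds $\boxtimes$-paths in $\Z^2\setminus\Psi^{\fa\times\fb}_\ver$, which is not the set $\Psi^{\fb\times\fa}_\hor$ whose components you must control. The paper instead negates the definition of $e^{-c\fa}$-strongly percolating directly (with the choice $B=\emptyset$) to get $\mu(\Psi^{\fb\times\fa}_\hor\text{ has an infinite }\Box\text{-component})<1$, and then invokes $\fb!\Z^2$-ergodicity to conclude this probability is zero.
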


\begin{proof}
    Define the events
    \begin{equation}
        D^{K\times L}_\pi:=\set{\sigma\in\Omega\mid(0,0)\in\Psi_\pi^{K\times L}(\sigma)}\quad\text{for }\pi\in\cP,
    \end{equation}
    \begin{equation}
        D^{K\times L}:=\set{\sigma\in\Omega\mid(0,0)\in\Psi^{K\times L}(\sigma)}.
    \end{equation}
    By Item~\ref{itm:strongly percolating grid events - chessboard bound} of Lemma~\ref{lem:strongly percolating grid events}, with $R=\rect{\fa N\times\fb N}{(-1/2,-1/2)}$,
    \begin{equation}
        \peierls{\mu}(\Psi^{\fa\times\fb})=\peierls{\mu}(\xre{\fa\times\fb}{\es{D^{\fa\times\fb}}})\le\norm{\Omega\setminus D^{\fa\times\fb}}_{R}^{1/N^2}.
    \end{equation}
    By Theorem~\ref{thm:mesoscopic rectangles are divided by sticks}, 
    \begin{equation}
        \label{eqn:two phases - intermediate - not divided by sticks bound}
        \norm{\Omega\setminus D^{\fa\times\fb}}_{R}
        \le e^{-c_{\eqref{eqn:mesoscopic rectangles are divided by sticks}}\ell_0^{-1}\Area{R}}
        =e^{-c_{\eqref{eqn:mesoscopic rectangles are divided by sticks}}\ell_0^{-1}\fa\fb N^2}.
    \end{equation}
    Using that $\fb\ge\frac{c_{\eqref{eqn:mesoscopic rectangles are divided by sticks}}\ell_0}{N}$, we obtain
    \begin{equation}
        \peierls{\mu}(\Psi^{\fa\times\fb})\le e^{-c^2_{\eqref{eqn:mesoscopic rectangles are divided by sticks}}\fa/N}.
    \end{equation}
    Hence, provided that $c<\frac{c^2_{\eqref{eqn:mesoscopic rectangles are divided by sticks}}}{N}$, and applying Proposition~\ref{prop:splitting strongly percolating sets} with $K=\fa$, $L=\fb$, $E=D^{\fa\times\fb}_\ver$, and $F=D^{\fa\times\fb}_\hor$, we deduce that
    \begin{equation}
        \min\set{\peierls{\mu}(\Psi^{\fa\times\fb}_\ver),\peierls{\mu}(\Psi^{\fa\times\fb}_\hor)}\le \peierls{\mu}(\Psi^{\fa\times\fb}) < e^{-c\fa},
    \end{equation}
    so at least one of $\Psi^{\fa\times\fb}_\ver$ and $\Psi^{\fa\times\fb}_\hor$ is $e^{-c\fa}$-strongly percolating.

    For Item~\ref{itm:two phases - intermediate - 2}, let $\epsilon=e^{-c\fa_0}/\epsilon_0$.
    Suppose, using Item~\ref{itm:two phases - intermediate - 1}, that $\Psi^{\fb\times\fb}_\ver$ is $e^{-c\fb}$-strongly percolating; the other case is analogous.
    We will use descending induction on $\fa$ to show that $\Psi^{\fa\times\fb}_\ver$ is $e^{-c\fa}$-strongly percolating for all $\fa_0\le\fa\le\fb$ (note that the argument in~\cite{hadas2025columnar} requires a step size of $-2$, but here we use a step size of $-1$).
    Thus, suppose that $\Psi^{(\fa+1)\times\fb}_\ver$ is $e^{-c(\fa+1)}$-strongly percolating for some $\fa_0\le\fa<\fb$.
    By Item~\ref{itm:two phases - intermediate - 1}, at least one of $\Psi^{\fa\times\fb}_\ver$ and $\Psi^{\fa\times\fb}_\hor$ is $e^{-c\fa}$-strongly percolating.
    Suppose by contradiction that $\Psi^{\fa\times\fb}_\hor$ is.
    Then, by Lemma~\ref{lem:quantitative Peierls argument},
    \begin{equation}
        \max\set{\mu(\Omega\setminus D^{(\fa+1)\times\fb}_\ver),\mu(\Omega\setminus D^{\fa\times\fb}_\hor),\mu(\Omega\setminus \eta_{(\fa,0)}D^{\fa\times\fb}_\hor)}
        \le \frac{e^{-c\fa_0}}{\epsilon_0}
        \le \epsilon.
    \end{equation}
    Thus, provided that $\fa_0\ge N$ and $\epsilon<\frac{1}{3}$, this implies that the event $D^{(\fa+1)\times\fb}_\ver\cap D^{\fa\times\fb}_\hor\cap \eta_{(\fa,0)}D^{\fa\times\fb}_\hor$ is non-empty, contradicting the non-intersecting property of sticks.
    Therefore, $\Psi^{\fa\times\fb}_\ver$ is $e^{-c\fa}$-strongly percolating, completing the induction step and the proof of Item~\ref{itm:two phases - intermediate - 2}.

    Lastly, for Item 3, suppose, using Item~\ref{itm:two phases - intermediate - 2}, that $\Psi^{\fa\times\fb}_\ver$ is $e^{-c\fa}$-strongly percolating; the other case is analogous.
    Suppose by contradiction that $\Psi^{\fb\times\fa}_\hor$ is also $e^{-c\fa}$-strongly percolating.
    Then, by Lemma~\ref{lem:quantitative Peierls argument},
    \begin{equation}
        \max\set{\mu(\Omega\setminus D^{\fa\times\fb}_\ver),\mu(\Omega\setminus D^{\fb\times\fa}_\hor)}\le\epsilon<1/3.
    \end{equation}
    However, this implies that the event $D^{\fa\times\fb}_\ver\cap D^{\fb\times\fa}_\hor$ is non-empty, contradicting the non-intersecting property of sticks.
    Hence, $\Psi^{\fb\times\fa}_\hor$ is not $e^{-c\fa}$-strongly percolating.
    Recognizing that $e^{-c\fa}\le\epsilon\epsilon_0<\epsilon_0$, negating the definition of $e^{-c\fa}$-strongly percolating sets with the deterministic set $B=\emptyset$, we get that 
    \begin{equation}
        \mu(\Psi^{\fb\times\fa}_\hor\text{ contains an infinite $\Box$-component})<1.
    \end{equation}
    Since $\mu$ is $\fb!\Z^2$-ergodic, we conclude that $\Psi^{\fb\times\fa}_\hor$ almost surely has finite $\Box$-components.
\end{proof}

\subsubsection{Proof of Proposition~\ref{prop:phase symmetries}}

Item~\ref{itm:phase symmetries - reflection} follows from the definition of the random sets $\Psi^{\fa\times\fb}_\ver$ and $\Psi^{\fb\times\fa}_\hor$.
For Item~\ref{itm:phase symmetries - translation}, we will only prove the identity $\Phase{\mu}=\Phase{\eta_{(1,0)}\mu}$ in the case that $\Phase{\mu}=\ver$, as the remaining cases are analogous.
It suffices to rule out that $\Phase{\eta_{(1,0)}\mu}=\hor$.
Suppose by contradiction that this is the case.
Then, by Lemma~\ref{lem:quantitative Peierls argument},
\begin{equation}
    \max\set{\mu(\Omega\setminus D^{\fa\times\fb}_\ver),\mu(\Omega\setminus\eta_{(1,0)}D^{\fb\times\fa}_\hor)}<\epsilon.
\end{equation}
However, recalling that $\epsilon<1/3$, this implies that the event $D^{\fa\times\fb}_\ver\cap \eta_{(1,0)}D^{\fb\times\fa}_\hor$ is non-empty, contradicting the non-intersecting property of sticks.
This completes the proof of the proposition.

\subsection{Disagreement percolation}
\label{sec:disagreement percolation}

In this subsection, we prove the uniqueness of Gibbs measures exhibiting each phase identified in Section~\ref{sec:two phases}. 
As a consequence, we rule out translational symmetry breaking and obtain decay of correlations estimates within each phase, completing the proof of Items~\ref{itm:main - invariance and extremality} and~\ref{itm:main - decay of correlations} of Theorem~\ref{thm:main}.
The main result of this subsection is as follows.

\begin{lemma}
    \label{lem:unique ergodic Gibbs measure for each phase}
    For each $\pi\in\cP$, there exists a unique $\fb!\Z^2$-ergodic Gibbs measure $\mu_\pi$ such that $\Phase{\mu_\pi}=\pi$. 
    Moreover,
    \begin{enumerate}
        \item \label{itm:unique ergodic Gibbs measure for each phase - extremal} for each $\pi\in\cP$, $\mu_\pi$ is extremal;
        \item \label{itm:unique ergodic Gibbs measure for each phase - translation invariance} for each $\pi\in\cP$, $\mu_\pi$ is $\Z^2$-invariant;
        \item \label{itm:unique ergodic Gibbs measure for each phase - reflection} $\mu_\hor$ is obtained from $\mu_\ver$ by applying the reflection $(x,y)\mapsto(y,x)$;
        \item \label{itm:unique ergodic Gibbs measure for each phase - decomposition} every periodic Gibbs measure is a convex combination of $\mu_\ver$ and $\mu_\hor$.
    \end{enumerate}
\end{lemma}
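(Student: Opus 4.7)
The plan is to establish the uniqueness assertion via the \emph{disagreement percolation} method of van den Berg~\cite{van1993uniqueness}, after which Items~\ref{itm:unique ergodic Gibbs measure for each phase - extremal}--\ref{itm:unique ergodic Gibbs measure for each phase - decomposition} follow by standard arguments. Fix $\pi\in\cP$, say $\pi=\ver$, and let $\mu_1,\mu_2$ be two $\fb!\Z^2$-ergodic Gibbs measures with $\Phase{\mu_i}=\ver$. Sample $(\sigma,\sigma')\sim\mu_1\otimes\mu_2$ and consider the disagreement set $D(\sigma,\sigma'):=\{e\in\V:\sigma(e)\ne\sigma'(e)\}$ as a subset of $L^\ddag(\Z^2)$. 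Since the monomer-dimer model is a Markov random field on $\V$ with respect to $\ddag$-connectivity, a van den Berg-type argument reduces $\mu_1=\mu_2$ (on local events) to showing that, $\mu_1\otimes\mu_2$-almost surely, the $\ddag$-component of $D$ containing any prescribed edge $e_0\in\V$ is finite.

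To achieve this, I would introduce, for each mesoscopic rectangle $R$ of dimensions $\fa N\times\fb N$, a \emph{joint} sealing event $E_R\subset\Omega^2$ requiring that (i) a vertical stick properly divides $R$ in both $\sigma$ and $\sigma'$, and (ii) an additional compatibility condition binds $\sigma$ and $\sigma'$ along the dividing stick so as to forbid every $\ddag$-path in $D$ from crossing $R$. Condition (ii) is necessary because the broken-link potential~\eqref{eqn:broken-link potential} couples $\ddag$-adjacent edges, meaning that a stick of agreement in each configuration separately does not necessarily yield a $\ddag$-barrier in the joint setting. The plan is to show that the random set $\xre{\fa\times\fb}{\es{E_R}}\subset\Z^2$ is $e^{-c\fa}$-strongly percolating under $\mu_1\otimes\mu_2$. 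This is obtained by bounding its $R$-chessboard seminorm via Proposition~\ref{prop:infinite-volume chessboard estimate for product measures}, which reduces the joint seminorm to single-configuration seminorms; condition (i) is controlled using~\eqref{eqn:two phases - intermediate - not divided by sticks bound} as in the proof of Theorem~\ref{thm:two phases}, while condition (ii) is handled by a direct chessboard estimate. Item~\ref{itm:strongly percolating grid events - chessboard bound} of Lemma~\ref{lem:strongly percolating grid events} together with Lemma~\ref{lem:intersection of strongly percolating sets} then convert these bounds into strong percolation of sealed rectangles, and Lemma~\ref{lem:quantitative Peierls argument} yields the almost-sure finiteness of the $\ddag$-component of $D$ at $e_0$.

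Once uniqueness is established, the remaining items are routine. For Item~\ref{itm:unique ergodic Gibbs measure for each phase - extremal}, an application of the $\fb!\Z^2$-ergodic decomposition to any Gibbs measure absolutely continuous with respect to $\mu_\pi$ yields $\fb!\Z^2$-ergodic Gibbs components all of phase $\pi$ (phase being defined via almost-sure percolation, hence hereditary along the decomposition), and thus all equal to $\mu_\pi$ by uniqueness, forcing extremality. Item~\ref{itm:unique ergodic Gibbs measure for each phase - translation invariance} follows because $\eta_{(1,0)}\mu_\pi$ and $\eta_{(0,1)}\mu_\pi$ are also $\fb!\Z^2$-ergodic Gibbs measures with phase $\pi$ by Item~\ref{itm:phase symmetries - translation} of Proposition~\ref{prop:phase symmetries}, hence equal $\mu_\pi$. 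Item~\ref{itm:unique ergodic Gibbs measure for each phase - reflection} is immediate from Item~\ref{itm:phase symmetries - reflection} of Proposition~\ref{prop:phase symmetries} and uniqueness. Item~\ref{itm:unique ergodic Gibbs measure for each phase - decomposition} follows by applying the same $\fb!\Z^2$-ergodic decomposition to any periodic Gibbs measure: each component is a Gibbs measure of some phase by Theorem~\ref{thm:two phases}, hence equals $\mu_\ver$ or $\mu_\hor$ by uniqueness.

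The main obstacle is the precise formulation of condition (ii). It must be simultaneously (a) strong enough that, combined with condition (i), no $\ddag$-disagreement path can cross $R$; (b) $R$-local and invariant under the reflections generating $T^R$, so that Item~\ref{itm:strongly percolating grid events - chessboard bound} of Lemma~\ref{lem:strongly percolating grid events} applies; and (c) whose chessboard seminorm under $\mu_1\otimes\mu_2$ is small enough, after combining with the estimate from condition (i) via Proposition~\ref{prop:infinite-volume chessboard estimate for product measures}, to deliver the $e^{-c\fa}$ bound needed for Lemma~\ref{lem:quantitative Peierls argument}. In contrast to Hadas--Peled's hard-square setting, where the nearest-neighbor nature of the interaction allows sealing to be defined independently for each configuration, here the $\ddag$-range of the broken-link interaction requires a genuinely joint condition involving new combinatorial estimates on pairs of dimer configurations near the dividing stick.
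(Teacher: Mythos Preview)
Your overall strategy matches the paper's: use van den Berg's disagreement percolation (Theorem~\ref{thm:disagreement percolation}) together with a mesoscopic ``sealing'' event on the product space, and deduce Items~\ref{itm:unique ergodic Gibbs measure for each phase - extremal}--\ref{itm:unique ergodic Gibbs measure for each phase - decomposition} from uniqueness plus Proposition~\ref{prop:phase symmetries}. Your identification of the need for a genuinely \emph{joint} compatibility condition (your condition~(ii)) is also correct, and corresponds to the paper's event $\Sigma_2$.

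However, there is a real gap in how you propose to control condition~(i). You write that ``condition~(i) is controlled using~\eqref{eqn:two phases - intermediate - not divided by sticks bound} as in the proof of Theorem~\ref{thm:two phases},'' but that bound concerns $\norm{\Omega\setminus D^{\fa\times\fb}}_R$, the seminorm of \emph{no stick of either orientation} properly dividing $R$. The seminorm you actually need, $\norm{\Omega\setminus D^{\fa\times\fb}_\ver}_R$, is \emph{not small}: disseminating the event ``no vertical stick properly divides $R$'' admits the horizontally close-packed configuration with weight~$1$, so this seminorm is essentially~$1$. The chessboard estimate is phase-blind and cannot by itself see that $\Phase{\mu_i}=\ver$. (Relatedly, Proposition~\ref{prop:infinite-volume chessboard estimate for product measures} does not ``reduce the joint seminorm to single-configuration seminorms''; it bounds a product-measure expectation by a seminorm computed under $\otimes_i\mu^\per_{\Lambda;\beta}$, which is still phase-symmetric.)

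The paper resolves this by a three-step bootstrap. The phase information enters \emph{only} through Theorem~\ref{thm:two phases}, which says $\Psi^{\fa\times\fb}_\ver$ is $e^{-c\fa}$-strongly percolating under each $\mu_i$; this is lifted to $\mu_1\otimes\mu_2$ via Lemma~\ref{lem:strong percolation under independent coupling} and used to establish a weak ``friendly environment'' event $\Sigma_0$ (Lemma~\ref{lem:sealed rectangles - friendly environment}). One then shows via chessboard that $\Sigma_0\setminus\Sigma_1$ is rare (Lemma~\ref{lem:sealed rectangles - no horizontal dimers}) and that $(\Sigma_1(\sigma)\cap\Sigma_1(\sigma'))\setminus\Sigma_2$ is rare (Lemma~\ref{lem:sealed rectangles - sealing patterns}), where $\Sigma_1$ asks only that the central strip contain no horizontal dimers and $\Sigma_2$ asks for coincident vacancies or vertical dimers in each column. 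The final sealing event is $\Sigma=\Sigma_1(\sigma)\cap\Sigma_1(\sigma')\cap\Sigma_2$, and Proposition~\ref{prop:sealed rectangles strongly percolate} assembles the three lemmas via Lemma~\ref{lem:intersection of strongly percolating sets}. Note in particular that the paper's sealing condition does \emph{not} require a vertical stick to properly divide $R$ in either configuration; requiring only $\Sigma_1$ is what makes the chessboard step (conditional on $\Sigma_0$) tractable.
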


The proof of Lemma~\ref{lem:unique ergodic Gibbs measure for each phase} is by implementing the disagreement percolation method introduced by van den Berg in~\cite{van1993uniqueness}.
We quote the following formulation of the method from~\cite{hadas2025columnar}.

\begin{theorem}[{Disagreement percolation,~\cite[Theorem 8.2]{hadas2025columnar}}]
    \label{thm:disagreement percolation}
    Let $\mu,\mu'$ be Gibbs measures of a Markov random field. 
    Let $\sigma,\sigma'$ be independent samples from $\mu$ and $\mu'$, respectively.
    If 
    \begin{equation}
        \label{eqn:disagreement percolation - no infinite component}
        (\mu\otimes\mu')(\Delta_{\sigma,\sigma'}\text{ has an infinite connected component})=0,
    \end{equation}
    then the following hold:
    \begin{enumerate}
        \item $\mu=\mu'$ and $\mu$ is extremal;
        \item Let $f,g:\Omega\to[-1,1]$ be local functions with support on $A,B\subset\V$, respectively.
        Then,
        \begin{equation}
            \Cov_\mu(f,g)\le 2(\mu\otimes\mu')(A\text{ and }B\text{ are connected by a path in }\Delta_{\sigma,\sigma'}).
        \end{equation}
    \end{enumerate}
\end{theorem}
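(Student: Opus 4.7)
The plan is to follow the classical proof of van den Berg's disagreement percolation theorem, specialized to the $\ddag$-Markov structure of the monomer-dimer model. Fix bounded local $f,g$ with finite supports $A,B\subset\V$. For a pair $(\sigma,\sigma')\in\Omega\times\Omega$, introduce the random set
\begin{equation*}
  D=D_A(\sigma,\sigma'):=A\cup\bigl\{v\in\V\mid v\text{ is }\ddag\text{-connected to }A\text{ through }\Delta_{\sigma,\sigma'}\bigr\}.
\end{equation*}
By hypothesis~\eqref{eqn:disagreement percolation - no infinite component}, $D$ is $(\mu\otimes\mu')$-a.s.\ finite, and by construction its outer $\ddag$-boundary $\partial^\ddag D$ avoids $\Delta_{\sigma,\sigma'}$, so that $\sigma\equiv\sigma'$ on $\partial^\ddag D$. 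A standard exploration argument shows that $D$ is a stopping set: for each admissible finite $d$, the event $\{D=d\}$ is measurable with respect to $(\sigma,\sigma')|_{d\cup\partial^\ddag d}$.

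The heart of the proof is a disintegration identity, which I would establish by combining the stopping-set property with the $\ddag$-Markov property of the model and the DLR equations for $\mu$ and $\mu'$: for each admissible $d$, conditional on $\{D=d\}$ and on $(\sigma,\sigma')|_{d^c}$, the law of $(\sigma,\sigma')|_d$ under $\mu\otimes\mu'$ is
\begin{equation*}
  \gamma_d(\,\cdot\,\mid\eta)\otimes\gamma_d(\,\cdot\,\mid\eta),\qquad\text{where }\eta:=\sigma|_{\partial^\ddag d}=\sigma'|_{\partial^\ddag d}
\end{equation*}
and $\gamma_d$ is the Gibbs specification on $d$. Both conclusions of the theorem then follow easily. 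Since $A\subseteq D$ by construction, conditioning $\mu(f)=\mathbb E_{\mu\otimes\mu'}[f(\sigma)]$ on the above $\sigma$-algebra and applying the identity yields $\mu(f)=\mu'(f)$, hence $\mu=\mu'$. For the covariance bound, rewrite
\begin{equation*}
  \mathrm{Cov}_\mu(f,g)=\mathbb E_{\mu\otimes\mu'}\bigl[(f(\sigma)-f(\sigma'))\,g(\sigma)\bigr];
\end{equation*}
on the event that $A$ and $B$ are not $\ddag$-connected through $\Delta_{\sigma,\sigma'}$, one has $D\cap B=\emptyset$, so $g(\sigma)$ is measurable with respect to the conditioning while $f(\sigma|_A)-f(\sigma'|_A)$ has conditional mean zero by the i.i.d.\ factorization above. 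Only the complementary event contributes, and the trivial bounds $|f(\sigma)-f(\sigma')|\le 2$ and $|g|\le 1$ yield the stated factor of $2$.

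Extremality of $\mu$ follows by applying the covariance bound with $\mu'=\mu$ to $f$ local on $A$ and to $\mathbb E_\mu[f\mid\mathcal F_{\Lambda^c}]$, which by the $\ddag$-Markov property is a local function supported in the outer $\ddag$-neighborhood $\partial^\ddag\Lambda$; as $\Lambda\uparrow\V$ the right-hand side vanishes because the $\ddag$-cluster of $A$ in $\Delta_{\sigma,\sigma}$ is a.s.\ finite, so the tower-property identity $\mathrm{Var}_\mu(\mathbb E_\mu[f\mid\mathcal F_{\Lambda^c}])=\mathrm{Cov}_\mu(f,\mathbb E_\mu[f\mid\mathcal F_{\Lambda^c}])$ forces $\mathbb E_\mu[f\mid\mathcal F_{\Lambda^c}]\to\mu(f)$ in $L^2(\mu)$ and hence triviality of the tail $\sigma$-algebra. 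The main obstacle is the rigorous justification of the disintegration identity, since $D$ depends globally on both samples through the full disagreement structure rather than on local data alone. The standard remedy is to fix a candidate $d$ and boundary trace $\eta$, write $(\mu\otimes\mu')(\{D=d\}\cap\cdot)$ as an integral using DLR for each marginal, and use the $\ddag$-Markov property to collapse the boundary dependence of the specification kernels from $(\sigma,\sigma')|_{d^c}$ down to the single configuration $\eta$; once this bookkeeping is in place, the rest of the theorem follows routinely.
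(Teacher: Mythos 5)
The paper does not supply a proof of this statement: it is imported verbatim as \cite[Theorem~8.2]{hadas2025columnar}, so there is no in-paper argument to compare against. Your blind reconstruction follows the standard van den Berg disagreement-percolation strategy and reaches the right conclusions, but there is one genuine flaw in the stated disintegration identity that, as written, is false and worth fixing.

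You claim that, conditional on $\{D=d\}$ and on $(\sigma,\sigma')|_{d^c}$, the law of $(\sigma,\sigma')|_d$ under $\mu\otimes\mu'$ is $\gamma_d(\cdot\mid\eta)\otimes\gamma_d(\cdot\mid\eta)$. This cannot be correct, because $\{D=d\}$ is \emph{not} measurable with respect to $(\sigma,\sigma')|_{d^c}$. The event $\{D=d\}$ decomposes as the intersection of an exterior event (agreement of $\sigma,\sigma'$ on $\partial^{\ddag}d$, which is $d^c$-measurable) and an interior event (the $\ddag$-disagreement cluster of $A$ inside $d$ fills out $d$, which is $d$-measurable). Conditioning on $(\sigma,\sigma')|_{d^c}$ with agreement on $\partial^{\ddag}d$ indeed reduces, via DLR and the $\ddag$-Markov property, to the product kernel $\gamma_d(\cdot\mid\eta)\otimes\gamma_d(\cdot\mid\eta)$ on $d$; but the \emph{further} conditioning on $\{D=d\}$ restricts this product to the interior event, so the resulting conditional law is $\gamma_d(\cdot\mid\eta)\otimes\gamma_d(\cdot\mid\eta)$ conditioned on the interior event, not the plain product. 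In particular, $(\sigma|_d,\sigma'|_d)$ are \emph{not} conditionally independent. What rescues your argument is that the interior event is invariant under swapping $\sigma\leftrightarrow\sigma'$, so the conditional law is still \emph{exchangeable}; this is all that is needed to conclude $\mathbb E[f(\sigma)\mid\cdot]=\mathbb E[f(\sigma')\mid\cdot]$ and that $f(\sigma)-f(\sigma')$ has conditional mean zero, from which both $\mu=\mu'$ and the covariance bound follow exactly as you outline. You should replace ``i.i.d.\ factorization'' with ``exchangeability of the conditioned product measure'' to make the step correct.

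Two smaller points: the argument for extremality contains a typo, $\Delta_{\sigma,\sigma}$ (which is empty) should read $\Delta_{\sigma,\sigma'}$ with $\sigma,\sigma'$ both sampled from $\mu$; and in the covariance step one should restrict without loss of generality to $A\cap B=\emptyset$ (otherwise the bound is trivial) before asserting that failure of $\ddag$-connection through $\Delta_{\sigma,\sigma'}$ forces $D\cap B=\emptyset$.
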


Recall from Section~\ref{sec:basic definitions} that the monomer-dimer model is a Markov random field under the $\ddag$-connectivity on $\V$ introduced there.
To verify the criterion~\eqref{eqn:disagreement percolation - no infinite component} of Theorem~\ref{thm:disagreement percolation}, we will prove the following lemma.

\begin{lemma}
    \label{lem:disagreement percolation decay}
    There exist universal constants $C,c>0$ such that, if $\mu,\mu'$ are $\fb!\Z^2$-ergodic Gibbs measures with $\Phase{\mu}=\Phase{\mu'}=\ver$ and $\sigma,\sigma'$ are independent samples from $\mu$ and $\mu'$, respectively, then, for each $A,B\subset\V$,
    \begin{equation}
        (\mu\otimes\mu')(A\text{ and }B\text{ are connected by a $\ddag$-path in }\Delta_{\sigma,\sigma'})\le \sum_{u\in A}\sup_{v\in B}\alpha_1(u,v),
    \end{equation}
    where 
    \begin{equation}
        \alpha_1(u,v):=C\exp{-c\abs{x_u-x_v}-c\ell_0^{-1}\abs{y_u-y_v}}.
    \end{equation}
\end{lemma}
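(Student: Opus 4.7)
The plan is to follow the disagreement percolation strategy sketched in the excerpt, implementing the Hadas--Peled framework for $(\mu \otimes \mu')$ but adapted to accommodate the $\ddag$-connectivity that the three-body broken-link potential forces on us. Write $(\sigma,\sigma')$ for a joint sample. The target estimate is structurally a path-counting bound, so the proof naturally splits into (i) designing a joint sealing event whose occurrence locations strongly percolate, and (ii) showing that every disagreement $\ddag$-path from $A$ to $B$ must pay an exponential cost along its horizontal extent and a $\ell_0^{-1}$-rate cost along its vertical extent.

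First, I would define a notion of \emph{sealed rectangle} as an $R$-local event on $\Omega^2$ for a mesoscopic rectangle $R$ of dimensions of order $\fa \times \fb$: the event requires that in both marginals $\sigma$ and $\sigma'$, vertical sticks properly divide $R$ and, in addition, the leftmost and rightmost columns of $R$ contain no horizontal dimers or broken links in either sample. The joint (rather than marginal) formulation is essential because, under $\ddag$-connectivity, a broken link in $\sigma$ adjacent to a column of vertical dimers in $\sigma'$ can still carry disagreement across; demanding the absence of horizontal dimers and broken links in both configurations along the sealing strips closes this loophole. Once a rectangle is sealed in this sense, a $\ddag$-path in $\Delta_{\sigma,\sigma'}$ cannot traverse it horizontally, since the vertical stick in either $\sigma$ or $\sigma'$ forces agreement across the interior columns.

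Second, I would show that the set of lower-left corners of sealed rectangles is $e^{-c\fa}$-strongly percolating under $\mu \otimes \mu'$. Apply Lemma~\ref{lem:intersection of strongly percolating sets} to reduce this to strong percolation of each constituent event; the proper-division component is controlled by Theorem~\ref{thm:two phases} together with Lemma~\ref{lem:strong percolation under independent coupling} applied to each marginal, and the absence of horizontal defects in the sealing strips is controlled via Item~\ref{itm:strongly percolating grid events - chessboard bound} of Lemma~\ref{lem:strongly percolating grid events} combined with Proposition~\ref{prop:infinite-volume chessboard estimate for product measures}. The chessboard seminorm of an $R$-local joint event factors over the two marginals, so the requisite bound reduces to the single-measure computation, where the probability of even a single horizontal dimer in the strip is $\order{e^{-\beta a}}$ by Corollary~\ref{cor:local defect chessboard seminorms}. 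An analogous construction with roles reversed gives sealed rectangles of shape $\fb \times \fa$ that block vertical propagation, but at a much weaker rate scaling with $\ell_0^{-1}$ (reflecting that within the vertical phase, vertical crossings are cheap).

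Third, I would translate the strong percolation into the desired path estimate. Fix $u \in A$ and $v \in B$. A disagreement $\ddag$-path from $u$ to $v$ projects onto the $\fa \times \fb$ mesoscopic grid, and at least $\Omega(|x_u - x_v|/\fa)$ distinct grid columns between $u$ and $v$ must contain a mesoscopic cell that the path crosses horizontally and hence is \emph{not} sealed. By Lemma~\ref{lem:quantitative Peierls argument} applied to the strongly percolating set of sealed cells, the probability that a specified $\boxtimes$-path in the grid contains $d$ unsealed cells is at most $(e^{-c\fa}/\epsilon_0)^d$. Summing over the polynomially many candidate coarse-grained paths from $u$ to $v$ gives the factor $e^{-c|x_u - x_v|}$; an identical argument with the transposed sealed rectangles gives the factor $e^{-c\ell_0^{-1}|y_u - y_v|}$. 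Multiplying and summing over $u \in A$ and the supremum over $v \in B$ yields the lemma.

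The main obstacle, as flagged in the introduction, is not the path-counting step but engineering the joint sealed event so that its chessboard seminorm in the product measure is small \emph{and} its occurrence genuinely blocks $\ddag$-connectivity in $\Delta_{\sigma,\sigma'}$. The three-body interactions mean that a single broken link in one sample against a stick in the other is enough to propagate disagreement one step further than in the hard-square model, so the sealing strips must be thick enough and the joint event must forbid horizontal defects in \emph{both} marginals simultaneously. This is precisely where Proposition~\ref{prop:infinite-volume chessboard estimate for product measures} becomes indispensable: it is what allows chessboard bounds on events of the product measure to be harvested from marginal chessboard computations on each periodic Gibbs measure.
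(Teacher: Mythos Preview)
Your proposal captures the broad architecture—define a joint sealing event, prove strong percolation, convert to a path bound—but there is a genuine gap in the sealing mechanism, and it is exactly the place where the paper's argument is most delicate.

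The critical missing ingredient is a mechanism to block \emph{vertical} propagation of disagreement. Your sealed-rectangle event forbids horizontal dimers and broken links in both samples along the sealing strips; this does prevent a $\ddag$-path from crossing those strips horizontally. But within the vertical phase, two independent columns that are each fully packed with vertical dimers will, with probability bounded away from zero, be offset by one unit relative to each other; in that case every vertical edge in the column lies in $\Delta_{\sigma,\sigma'}$, and a disagreement $\ddag$-path runs the full height of the column unimpeded. Your proposed cure—a ``transposed'' construction with $\fb\times\fa$ rectangles—cannot work: in the vertical phase, horizontal sticks do not properly divide such rectangles (Theorem~\ref{thm:two phases} says $\Psi^{\fb\times\fa}_\hor$ has only finite components), so the analogous event does not strongly percolate at any rate.

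The paper resolves this with a genuinely joint event $\Sigma_2$: every $1\times N\fc$ sub-column above and below the rectangle must contain a \emph{coincident} vacancy or vertical dimer, i.e.\ a site where $\sigma$ and $\sigma'$ agree. Such a coincidence halts any vertical disagreement $\ddag$-path (Proposition~\ref{prop:bounded disagreement components in sealed rectangles}). Proving that $\Sigma_2$ is typical is the content of Lemma~\ref{lem:sealed rectangles - sealing patterns}, which is where the product chessboard estimate (Proposition~\ref{prop:infinite-volume chessboard estimate for product measures}) is actually indispensable—not for the horizontal-defect bound you cite, which factors over the marginals, but for bounding the probability that two \emph{independent} vertical columns fail to synchronize anywhere over a height $N\fc$. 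With this single tall-and-thin rectangle (shape $N\fa\times N\fc$ with $\fc$ of order $\ell_0$), the anisotropic decay falls out of the aspect ratio when projecting to the mesoscopic grid and applying Lemma~\ref{lem:quantitative Peierls argument}; no second rectangle type is needed, and no explicit sum over coarse-grained paths (which is exponential, not polynomial, in length) is performed.

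A secondary issue: your horizontal-blocking claim (``the vertical stick in either $\sigma$ or $\sigma'$ forces agreement'') is not quite right either. A vertical stick in $\sigma$ alone does not force agreement if $\sigma'$ has horizontal dimers at that location. The paper instead requires the stronger event $\Sigma_1$ that \emph{all} dimers in the central rectangle are vertical in each sample, which does block horizontal $\ddag$-crossings throughout.
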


Once Lemma~\ref{lem:disagreement percolation decay} is established, Lemma~\ref{lem:unique ergodic Gibbs measure for each phase} readily follows.

\begin{proof}[Proof of Lemma~\ref{lem:unique ergodic Gibbs measure for each phase}]
    We first prove the existence of a unique $\fb!\Z^2$-ergodic Gibbs measure $\mu$ with $\Phase{\mu}=\ver$. 
    Suppose that $\mu$ and $\mu'$ are two such measures, and let $\sigma,\sigma'$ be independent samples from $\mu$ and $\mu'$, respectively.
    By Lemma~\ref{lem:disagreement percolation decay}, the condition~\eqref{eqn:disagreement percolation - no infinite component} of Theorem~\ref{thm:disagreement percolation} is satisfied, so $\mu=\mu'$ and $\mu$ is extremal.

    Now, suppose that $\mu$ is a $\fb!\Z^2$-ergodic Gibbs measure with $\Phase{\mu}=\hor$.
    By Item~\ref{itm:phase symmetries - reflection} of Proposition~\ref{prop:phase symmetries}, with $\tau$ being the reflection $(x,y)\mapsto(y,x)$, $\Phase{\tau\mu}=\ver$.
    By the uniqueness and extremality of $\mu_\ver$ proven above, $\mu=\tau\mu_\ver$ is also unique and extremal.
    
    So far, we have proven Items~\ref{itm:unique ergodic Gibbs measure for each phase - extremal} and~\ref{itm:unique ergodic Gibbs measure for each phase - reflection}. 
    Item~\ref{itm:unique ergodic Gibbs measure for each phase - translation invariance} follows from Item~\ref{itm:phase symmetries - translation} of Proposition~\ref{prop:phase symmetries} and the uniqueness of $\mu_\ver$ and $\mu_\hor$ proven above.

    We now prove Item~\ref{itm:unique ergodic Gibbs measure for each phase - decomposition}.
    Let $\mu$ be a periodic Gibbs measure and $\cL_0\subset\Z^2$ be a full-rank sublattice such that $\mu$ is $\cL_0$-invariant.
    Let $\cL:=\cL_0\cap\fb!\Z^2$.
    Then, $\mu$ is $\cL$-invariant, so $\mu$ is a convex combination of $\cL$-ergodic Gibbs measures by the ergodic decomposition theorem~\cite[Theorem (14.17)]{georgii2011gibbs}.
    Thus, it suffices to show that each $\cL$-ergodic Gibbs measure is either $\mu_\ver$ or $\mu_\hor$.
    To this end, let $\nu$ be an $\cL$-ergodic Gibbs measure.
    Consider the $\fb!\Z^2$-invariant measure 
    \begin{equation}
        \tilde{\nu}:=\frac{1}{\abs{\fb!\Z^2/\cL}}\sum_{\eta\in\fb!\Z^2/\cL}\eta\nu.
    \end{equation}
    By the ergodic decomposition theorem and the uniqueness proven above, $\tilde{\nu}$ is a convex combination of $\mu_\ver$ and $\mu_\hor$.
    However, noting that $\mu_\ver$ and $\mu_\hor$ are $\cL$-ergodic by extremality and $\eta\nu$ is $\cL$-ergodic for each $\eta\in\fb!\Z^2/\cL$ since the latter translation preserves $\cL$-ergodicity, we now have two $\cL$-ergodic decompositions of $\tilde{\nu}$.
    By the uniqueness of the decomposition, we conclude that each $\eta\nu$ is one of $\mu_\ver$ or $\mu_\hor$.
    In particular, $\nu$ is one of $\mu_\ver$ or $\mu_\hor$, which completes the proof.
\end{proof}

The rest of the subsection is dedicated to the proof of Lemma~\ref{lem:disagreement percolation decay}.

\subsubsection{Sealed rectangles}

We now leverage the strong percolation characterization of \emph{a} vertical phase to construct \textbf{sealed rectangles}, which are rectangles in which a joint configuration $(\sigma,\sigma')$ sampled from $\mu\otimes\mu'$, where $\Phase{\mu}=\Phase{\mu'}=\ver$, exhibits desirable properties that rule out the presence of large disagreement $\ddag$-components nearby.
Due to the presence of non-nearest neighbor interactions in the monomer-dimer model, we require a different definition of sealed rectangles from that in~\cite{hadas2025columnar}.

Define the events $\Sigma_0,\Sigma_1\subset\Omega$ and $\Sigma_2\subset\Omega\times\Omega$ as follows:
\begin{enumerate}
    \item $\sigma\in\Sigma_0$ if and only if every $N\fa\times 1$ rectangle contained in 
    \begin{equation}
        \rect{N\fa\times 3N\fc}{(-N\fa-1/2,-N\fc-1/2)}\cup\rect{N\fa\times 3N\fc}{(N\fa-1/2,-N\fc-1/2)}
    \end{equation}
    intersects a vertical dimer in $\sigma$.
    \item $\sigma\in\Sigma_1$ if and only if every dimer in $\sigma$ that intersects $\rect{N\fa\times 3N\fc}{(-1/2,-N\fc-1/2)}$ is vertical.
    \item $(\sigma,\sigma')\in\Sigma_2$ if and only if every $1\times N\fc$ rectangle contained in 
    \begin{equation}
        \rect{N\fa\times N\fc}{(-1/2,-N\fc-1/2)}\cup\rect{N\fa\times N\fc}{(-1/2,N\fc-1/2)}
    \end{equation}
    contains a \emph{coincident} vacancy or vertical dimer in $\sigma$ and $\sigma'$, as illustrated in Figure~\ref{fig:sealing patterns}.
\end{enumerate}
By an abuse of notation, for $i\in\set{0,1}$, we write 
\begin{equation}
    \Sigma_i(\sigma):=\set{(\sigma,\sigma')\in\Omega\times\Omega\mid\sigma\in\Sigma_i}
    \text{ and }\Sigma_i(\sigma'):=\set{(\sigma,\sigma')\in\Omega\times\Omega\mid\sigma'\in\Sigma_i}.
\end{equation}
We say that $\mathrm{R}_{N\fa\times N\fc}$ is \textbf{sealed} in $(\sigma,\sigma')$ if the event 
\begin{equation}
    \Sigma:=\Sigma_1(\sigma)\cap\Sigma_1(\sigma')\cap\Sigma_2
\end{equation}
holds.
We note that while the event $\Sigma_0$ does not enter the definition of a sealed rectangle, it will serve as \emph{the} entry point for the strong percolation characterization of the vertical phase and constitute the very basis for proving the prevalence of sealed rectangles in the vertical phase.
Indeed, the definition of $\Sigma_0$ appears much weaker than that of proper division by a vertical stick; a good intuitive way to to think about $\Sigma_0$ is that it merely sets up a \emph{favorable environment} for the prevalence of vertical dimers in $\rect{N\fa\times 3N\fc}{(-1/2,-N\fc-1/2)}$.

\begin{figure}
    \centering
    \begin{subfigure}[t]{0.3\columnwidth}
        \centering
        \includegraphics{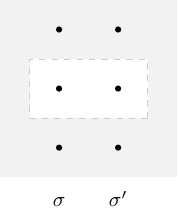}
        \caption{}
    \end{subfigure}
    ~
    \begin{subfigure}[t]{0.3\columnwidth}
        \centering
        \includegraphics{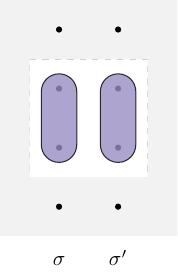}
        \caption{}
    \end{subfigure}
    \caption{Provided that a $1\times N\fc$ rectangle intersects only vertical dimers in both $\sigma$ and $\sigma'$, no disagreement $\ddag$-path in $\Delta_{\sigma,\sigma'}$ can cross a coincident vacancy or vertical dimer in that rectangle}
    \label{fig:sealing patterns}
\end{figure}

We will prove the prevalence of sealed rectangles in the vertical phase using the following three lemmas, respectively establishing the prevalence of $\Sigma_0$ in the vertical phase, that of $\Sigma_1$ ``given $\Sigma_0$,'' and finally that of $\Sigma_2$ ``given $\Sigma_1$ for both configurations.''

\begin{lemma}
    \label{lem:sealed rectangles - friendly environment}
    There exist constants $\beta_0,c>0$ such that, for all $\beta\ge\beta_0$ and $\fb!$-ergodic Gibbs measures $\mu$ with $\Phase{\mu}=\ver$,
    \begin{equation}
       \peierls{\mu}(\xre{3N\fa\times 3N\fc}{\es{\Sigma_0}})\le\frac{9N^2\fc}{\fb}e^{-c\fa}.
    \end{equation}
\end{lemma}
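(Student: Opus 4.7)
The plan is to derive the bound from the strong percolation of $\Psi^{\fa \times \fb}_\ver$, guaranteed by Theorem~\ref{thm:two phases} since $\Phase{\mu} = \ver$, via Item~\ref{itm:strongly percolating grid events - subgrid over grid} of Lemma~\ref{lem:strongly percolating grid events}. The first step is to reduce $\Sigma_0$ to a proper-division event. Observe that if a $\fa \times \fb$ grid square at $(x_g, y_g)$ is in $\Psi^{\fa \times \fb}_\ver$, then the corresponding $N\fa \times N\fb$ region is properly divided by a vertical stick which, by definition, extends through the full height of the region and lies at a column in its bulk in the $x$-direction, thereby placing a vertical dimer adjacent to every integer row within the square's $y$-range. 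I therefore define the auxiliary event $\tilde{\Sigma}_0$ to be the event that every $\fa \times \fb$ grid square lying in a $3N\fa \times 3N\fc$ bounding box around the origin (which covers $R_L \cup R_R$ from the definition of $\Sigma_0$, modulo a constant-sized buffer to accommodate the divisibility of $3N\fc$ by $\fb$ and edge effects) is properly divided. The key property is $\tilde{\Sigma}_0 \subset \Sigma_0$, since proper division of every grid square inside the box produces vertical sticks covering every integer row in $R_L$ and $R_R$. Monotonicity of $\peierls{\mu}$ in the underlying event then gives
\begin{equation*}
    \peierls{\mu}(\xre{3N\fa \times 3N\fc}{\es{\Sigma_0}}) \le \peierls{\mu}(\xre{3N\fa \times 3N\fc}{\es{\tilde{\Sigma}_0}}).
\end{equation*}

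To bound the right-hand side, I apply Item~\ref{itm:strongly percolating grid events - subgrid over grid} of Lemma~\ref{lem:strongly percolating grid events} with $k = \fa$, $l = \fb$, $K = 3N\fa$, and $L$ equal to $3N\fc$ (or the nearest larger multiple of $\fb$). Up to a translation of the rescaled grid that does not affect $\peierls$, the rescaled random set $\xre{3N\fa \times 3N\fc}{\es{\tilde{\Sigma}_0}}$ coincides with the intersection set $\Psi'$ of that item, where $E$ is the single-square proper-division event so that $\Psi = \Psi^{\fa \times \fb}_\ver$. Consequently,
\begin{equation*}
    \peierls{\mu}(\xre{3N\fa \times 3N\fc}{\es{\tilde{\Sigma}_0}}) \le \frac{KL}{\fa\fb}\,\peierls{\mu}(\Psi^{\fa \times \fb}_\ver) \le \frac{9N^2\fc}{\fb}\,e^{-c\fa}
\end{equation*}
by Theorem~\ref{thm:two phases}, after absorbing any rounding adjustments into the universal constant $c$.

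The main technical subtlety is making the bounding box of $\tilde{\Sigma}_0$ neither too large (so that the ratio $KL/(\fa\fb)$ remains $9N^2\fc/\fb$ up to universal constants) nor too small (so that every integer row in $R_L$ and $R_R$ is traversed by the $y$-range of a grid square whose stick's $x$-column is guaranteed to land in $R_L$'s or $R_R$'s $x$-range). The delicate point is that the stick's precise column is not determined by the grid square alone; ensuring it lands within the required $x$-range relies on the observation that if the $N\fa \times N\fb$ square is fully contained in $R_L$ (respectively, $R_R$) in the $x$-direction, then its bulk---and hence its stick---is as well. This restricts the set of witnessing $x_g$ values for each horizontal strip, necessitating a careful choice of the box, but it does not affect the leading order of the estimate.
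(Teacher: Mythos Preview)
Your approach is essentially the same as the paper's: define an auxiliary event $\tilde{\Sigma}_0=\bigcap_{\eta\in H}\eta D^{\fa\times\fb}_\ver$ (an intersection of translated proper-division events), note the inclusion $\tilde{\Sigma}_0\subset\Sigma_0$, and apply Item~\ref{itm:strongly percolating grid events - subgrid over grid} of Lemma~\ref{lem:strongly percolating grid events} together with Theorem~\ref{thm:two phases}. You are in fact somewhat more careful than the paper about the divisibility and alignment issues (the paper simply asserts the inclusion and the value $r=9N^2\fc/\fb$ without discussing rounding or the need to translate the grid), so your extra paragraphs are not wasted, but the underlying argument is identical.
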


\begin{proof}
    By Item~\ref{itm:strongly percolating grid events - subgrid over grid} of Lemma~\ref{lem:strongly percolating grid events}, 
    \begin{equation}
        \peierls{\mu}(\xre{3N\fa\times 3N\fc}{\es{\cap_{\eta\in H}\eta D^{\fa\times\fb}_\ver}})
        \le\frac{9N^2\fc}{\fb}\peierls{\mu}(\xre{\fa\times\fb}{\es{D^{\fa\times\fb}_\ver}})
        =\frac{9N^2\fc}{\fb}\peierls{\mu}(\Psi^{\fa\times\fb}_\ver)
        \le \frac{9N^2\fc}{\fb}e^{-c_{\ref{thm:two phases}}\fa}.
    \end{equation}
    Noting that $\cap_{\eta\in H}\eta D^{\fa\times\fb}_\ver\subset\Sigma_0$ completes the proof.
\end{proof}

\begin{lemma}
    \label{lem:sealed rectangles - no horizontal dimers}
    There exists constant $\beta_0,C>0$ such that, for all $\beta\ge\beta_0$ and periodic Gibbs measures $\mu$,
    \begin{equation}
        \peierls{\mu}(\xre{3N\fa\times 5N\fc}{\es{\Sigma_1\cup\Sigma_0^c}})\le 
        C(N\fa)^3\left[e^{-\beta a}+N\fc\max\set{e^{-\beta(\chempot+2a)},e^{-3\beta a}}\right].
    \end{equation}
\end{lemma}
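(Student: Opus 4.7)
The plan is to bound the chessboard seminorm $\norm{\Sigma_0 \cap \Sigma_1^c}_R$ over a rectangle $R$ of shape $3N\fa \times 5N\fc$ centered on the common center of the supports of $\Sigma_0$ and $\Sigma_1$, and then to apply Item~\ref{itm:strongly percolating grid events - chessboard bound} of Lemma~\ref{lem:strongly percolating grid events} to obtain the claimed Peierls estimate. Both events are invariant under the axial reflections through the center of $R$---the former by symmetric construction, and the latter because these reflections preserve dimer orientations---so the symmetry hypothesis is met. I also expect Lemma~\ref{lem:recursive chessboard estimate for off-grid events} to be useful when factorizing through sub-rectangles that are not aligned with $G^R$.

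The central observation is that if $\sigma \in \Sigma_0 \cap \Sigma_1^c$, then $\Sigma_1^c$ supplies a horizontal dimer $e^* \in \sigma$ intersecting the inner rectangle, while $\Sigma_0$ supplies vertical dimers $e_L, e_R \in \sigma$ in the same row as $e^*$ in the left and right flanking regions. Since a horizontal dimer flanked by two vertical dimers in a single row cannot coexist with them without intermediate defects, I would carry out a case analysis according to the minimal defect cluster separating $e^*$ from $e_L$ and $e_R$: (i) two broken links (one on either side of $e^*$) when the vertical dimers sit immediately adjacent to $e^*$, matching the cost $e^{-\beta a}$; (ii) two broken links plus two vacancies when the transitions must pass through isolated vacant vertices, matching the cost $e^{-\beta(\chempot + 2a)}$; and (iii) six broken links when the transition instead proceeds through an extended chain of horizontal dimers terminating before reaching $e_L$ or $e_R$, matching the cost $e^{-3\beta a}$.

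I would decompose $\Sigma_0 \cap \Sigma_1^c$ by a union bound over the $x$-coordinates of $e^*, e_L, e_R$, yielding the $(N\fa)^3$ prefactor, and, in cases (ii)--(iii), over the $y$-coordinate of the shared row, yielding the additional $N\fc$ factor appearing in the sub-leading term. For each specific configuration, the recursive chessboard estimate (Lemma~\ref{lem:recursive chessboard estimate}) factorizes its chessboard seminorm into a product of single-defect chessboard seminorms, each bounded via Corollary~\ref{cor:local defect chessboard seminorms}, producing respectively $e^{-\beta a}$, $e^{-\beta(\chempot+2a)}$, and $e^{-3\beta a}$. In case~(i), the $y$-summation is absorbed because the minimal two-broken-link cluster is parameterized only by the relative $x$-alignment of $e^*$ with $e_L, e_R$, so that distinct choices of $y^*$ yield the same chessboard-disseminated event. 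Summing the bounds from all three cases and invoking Lemma~\ref{lem:strongly percolating grid events} yields the stated Peierls estimate.

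The main obstacle I anticipate is the combinatorial case analysis underpinning case~(iii): showing that any configuration containing $e^*, e_L, e_R$ in a single row but admitting neither the minimal two-broken-link transition nor the two-vacancy alternative necessarily contains at least six broken links. This requires a careful enumeration of how chains of horizontal dimers can be embedded between $e^*$ and the flanking vertical dimers, and a bookkeeping of the broken links unavoidably forced at both ends of each chain, which is the most intricate geometric step of the proof.
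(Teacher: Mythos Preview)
Your overall architecture is right: bound $\norm{\Sigma_0\setminus\Sigma_1}_S$ via Item~\ref{itm:strongly percolating grid events - chessboard bound} of Lemma~\ref{lem:strongly percolating grid events}, union bound over the location of the offending horizontal dimer, locate forced defects near the horizontal-to-vertical transitions, and feed those into Corollary~\ref{cor:local defect chessboard seminorms} via a recursive chessboard estimate. But your handling of case~(i) contains a genuine gap.

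The claim that ``distinct choices of $y^*$ yield the same chessboard-disseminated event'' is false. Dissemination is by $T^S$ with $S$ of height $5N\fc$, so different rows $y^*$ in the $3N\fc$-tall inner strip land in distinct reflection orbits; the union bound over $y^*$ really does cost a factor $N\fc$. If you only locate two broken links in case~(i), your bound for that case becomes $O((N\fa)^3 N\fc\, e^{-\beta a})$, and since $\fc$ is on the scale of $\ell_0$ this does \emph{not} go to zero as $\beta\to\infty$---the lemma would fail.

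The paper avoids this by a different case split. It does \emph{not} partition by the type of transition (broken link only / vacancy / extended chain), but by the row of the horizontal dimer $e$. For the two extreme rows $y_e\in\{-N\fc,\,2N\fc-1\}$ it settles for two broken links, which is harmless because there are only $O(N\fa)$ such $e$, giving the $(N\fa)^3e^{-\beta a}$ term with no $N\fc$. For all \emph{interior} rows it works inside a $2\times4$ block $R$ with a carefully chosen grid (so that Lemma~\ref{lem:recursive chessboard estimate for off-grid events} applies), and shows by local case analysis that each endpoint of the horizontal $\ddag$-component forces at least a broken link plus a vacancy, or a broken link plus two further defects, inside its $2\times4$ block. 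This is precisely what yields $e^{-\beta(\chempot+2a)}$ and $e^{-3\beta a}$ per configuration, absorbing the $N\fc$ from the row sum. In particular, your case~(i)---vertical dimers directly abutting the horizontal chain---corresponds to the paper's sub-case~(c), which produces \emph{three} defects per side (hence $e^{-3\beta a}$), not one; the extra defects come from the vertical dimer's own broken links and the geometry of the $2\times4$ block, which is the ``intricate geometric step'' you flagged but misattributed to case~(iii). Your case~(iii) as stated (a separate ``extended chain'' regime) does not appear; the chain length is already absorbed by passing to the $\ddag$-component endpoints.
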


\begin{proof}
    Denote $S=\rect{3N\fa\times5N\fc}{(-N\fa-1/2,-2N\fc-1/2)}$.
    By Item~\ref{itm:strongly percolating grid events - chessboard bound} of Lemma~\ref{lem:strongly percolating grid events},
    \begin{equation}
        \peierls{\mu}(\xre{3N\fa\times 5N\fc}{\es{\Sigma_1\cup\Sigma_0^c}})\le\norm{\Sigma_0\setminus\Sigma_1}_S,
    \end{equation}
    so it suffices to bound $\norm{\Sigma_0\setminus\Sigma_1}_S$.
    Since every $\sigma\in\Sigma_0\setminus\Sigma_1$ contains a horizontal dimer intersecting $\rect{N\fa\times 3N\fc}{(-1/2,-N\fc-1/2)}$, we will apply a union bound over the possible locations of such a horizontal dimer and estimate the $R$-chessboard seminorm of each resulting event by locating broken links and vacancies that it forces to exist in $\sigma$.
    While our strategy for locating these defects is similar to that in Proposition~\ref{prop:defect chasing}, a technicality prevents us from directly applying Lemma~\ref{lem:infinite-volume recursive chessboard estimate for product measures} and Corollary~\ref{cor:local defect chessboard seminorms} to bound the resulting chessboard seminorms.
    Namely, we need to ensure that the existence of the defects that we find can be framed as the intersection of transformations of $R$-local events under $T^R$ for a common rectangle $R$.
    It turns out that the assumptions of Lemma~\ref{lem:infinite-volume recursive chessboard estimate for product measures} are too stringent for this purpose, and we will instead resort to Lemma~\ref{lem:recursive chessboard estimate for off-grid events}.
    
    For each horizontal $e\in\V$ intersecting $\rect{N\fa\times 3N\fc}{(-1/2,-N\fc-1/2)}$, define the event
    \begin{equation}
        E_{e}:=\set{\sigma\in\Sigma_0\setminus\Sigma_1\mid \sigma(e)=1}.
    \end{equation}
    Our goal is to use Lemma~\ref{lem:recursive chessboard estimate for off-grid events} to bound $\norm{E_e}_S$.

    First, suppose that $y_e\in\set{-N\fc,2N\fc-1}$, i.e., $e$ is located at the top or bottom side of $\rect{N\fa\times 3N\fc}{(-1/2,-N\fc-1/2)}$.
    Consider the $\ddag$-component $\gamma$ of horizontal dimers in $\sigma$ that contains $e$.
    The horizontal edges of $\Z^2$ immediately to the left and right of $\gamma$ must be broken links of $\sigma$.
    Moreover, by the assumption that $\sigma\in\Sigma_0$, both broken links are located in $\rect{3N\fa\times 1}{(-N\fa-1/2,y_e-1/2)}$.
    Observing that (a) the $2\times 1$ rectangles containing these broken links share a grid that is invariant under $T^R$ and that (b) the event that the horizontal edge at the center of a $2\times1$ rectangle is a broken link is local to that rectangle and reflection-invariant, we deduce from Lemma~\ref{lem:recursive chessboard estimate for off-grid events} and Corollary~\ref{cor:local defect chessboard seminorms} that
    \begin{equation}
        \norm{E_e}_S\le 36(3N\fa+1)^2e^{-\beta a}.
    \end{equation}
    
    Next, suppose that $e$ is not at the top or bottom of $\rect{N\fa\times 3N\fc}{(-1/2,-N\fc-1/2)}$.
    Let $R\subset S$ be the unique $2\times 4$ rectangle with corners in $[(-1/2+2\Z)\times (-1/2+4\Z)]\cup[(1/2+2\Z)\times(3/2+4\Z)]$ that intersects the horizontal edges of $\Z^2$ immediately above and below $e$ on its right side.
    The requirement on the corners of $R$ ensures that $G^R$ is invariant under $T^S$.
    Consider again the $\ddag$-component $\gamma$ of horizontal dimers in $\sigma$ that contains $e$.
    As before, the horizontal edges of $\Z^2$ immediately to the left and right of $\gamma$ are broken links of $\sigma$.
    Denote by $u$ and $v$ the other endpoints of the left and right broken links.
    We split into three cases.
    \begin{enumerate}
        \item Both $u$ and $v$ are vacancies.
        Thus, there are two $2\times4$ rectangles contained in $S$ and sharing a grid with $R$, each containing at least one vacancy and one broken link.
        Noting as in Corollary~\ref{cor:local defect chessboard seminorms} that there exists a universal constant $C_1>0$ such that
        \begin{equation}
            \norm{R\text{ contains at least one vacancy and one broken link}}_{R}\le C_1e^{-\frac{1}{2}\beta(\chempot+2a)},
        \end{equation}
        we conclude this case using Lemma~\ref{lem:recursive chessboard estimate for off-grid events} by summing over the $\order{(N\fa)^2}$ possible locations of the two rectangles.
        \item One of $u$ and $v$ is a vacancy, while the other is incident to a vertical dimer.
        Suppose that $u$ is the vacancy; the other case is analogous.
        Consider the $2\times 4$ rectangle contained in $S$ and sharing a grid with $R$ that contains $v$.
        By a case analysis, this rectangle contains at least one other broken link or vacancy. 
        Again, we find as in Corollary~\ref{cor:local defect chessboard seminorms} that there exists a universal constant $C_2>0$ such that
        \begin{equation}
            \begin{multlined}
                \left\lVert R\text{ contains at least two vacancies and one broken link or at least}\right. \\
                \left.\text{one vacancy and two broken links}\right\rVert_{R}\le C_2 e^{-\frac{1}{2}\beta(\chempot+2a)}\max\set{e^{-\frac{1}{2}\beta a},e^{-\frac{1}{2}\beta(\chempot+a)}}
            \end{multlined}
        \end{equation}
        and conclude the case as before using Lemma~\ref{lem:recursive chessboard estimate for off-grid events}.
        \item Both $u$ and $v$ are incident to vertical dimers.
        By a case analysis, each $2\times 4$ rectangle in question contains, in addition to the broken link already identified, at least two other defects (vacancies or broken links).
        Again, we find as in Corollary~\ref{cor:local defect chessboard seminorms} that there exists a universal constant $C_3>0$ such that
        \begin{equation}
            \begin{multlined}
                \left\lVert R\text{ contains at least one broken link and two other defects}\right\rVert_{R}
                \\
                \le C_3 e^{-\frac{1}{2}\beta a}\max\set{e^{-\beta a},e^{-\beta(\chempot+a)}}
            \end{multlined}
        \end{equation}
        and conclude the case as before using Lemma~\ref{lem:recursive chessboard estimate for off-grid events}.
    \end{enumerate}

    We deduce the lemma by a union bound, as described at the beginning of the proof.
\end{proof}

\begin{lemma}
    \label{lem:sealed rectangles - sealing patterns}
    There exist constants $\beta_0,c>0$ such that, for all $\beta\ge\beta_0$ and periodic Gibbs measures $\mu,\mu'$,
    \begin{equation}
        \begin{multlined}
            \peierls{\mu\otimes\mu'}(\xre{N\fa\times 3N\fc}{\es{\Sigma_2\cup(\Sigma_1(\sigma)\cap\Sigma_1(\sigma'))^c}})
            \le 2N\fa\left[e^{-c\ell_0^{-1}N\fc}\right.
            \\
            \left.+4e^{-\frac{1}{2}\beta(2\chempot+3a)}+2(N\fc-3)e^{-2\beta(\chempot+2a)}+8e^{-\frac{1}{2}\beta(2\chempot+5a)}+4(N\fc-3)e^{-\beta(\chempot+3a)}\right].
        \end{multlined}
    \end{equation}
\end{lemma}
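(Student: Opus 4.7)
The plan is to follow the structure of Lemma~\ref{lem:sealed rectangles - no horizontal dimers}, with the product-measure chessboard estimate of Proposition~\ref{prop:infinite-volume chessboard estimate for product measures} playing the role of the ordinary chessboard estimate. Set $R:=\rect{N\fa\times 3N\fc}{(-1/2,-N\fc-1/2)}$, which is the region on which $\Sigma_1$ is defined and which contains the two $N\fa\times N\fc$ strips appearing in the definition of $\Sigma_2$. The event $E:=\Sigma_2\cup(\Sigma_1(\sigma)\cap\Sigma_1(\sigma'))^c$, viewed as a subset of $\Omega\times\Omega$, is $R$-local and invariant under reflections through the horizontal and vertical lines passing through the center of $R$. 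By the product-measure analog of Item~\ref{itm:strongly percolating grid events - chessboard bound} of Lemma~\ref{lem:strongly percolating grid events}, obtained by repeating the proof with Proposition~\ref{prop:infinite-volume chessboard estimate for product measures} in place of Proposition~\ref{prop:chessboard estimate}, I obtain $\peierls{\mu\otimes\mu'}(\xre{N\fa\times 3N\fc}{\es{E}})\le\norm{\Sigma_1(\sigma)\cap\Sigma_1(\sigma')\cap\Sigma_2^c}_R$.

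Next, write $\Sigma_2^c$ as the union, over the $2N\fa$ possible positions $c$ of a $1\times N\fc$ sub-rectangle in the top and bottom halves of the $\Sigma_2$-region, of the event $B_c$ that this sub-rectangle contains neither a coincident vertical dimer nor a coincident vacancy in $(\sigma,\sigma')$. By subadditivity of the chessboard seminorm, it suffices to bound $\norm{\Sigma_1(\sigma)\cap\Sigma_1(\sigma')\cap B_c}_R$ uniformly in $c$; this accounts for the overall prefactor $2N\fa$.

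For each fixed $c$, reduce to a joint one-dimensional problem on the column. Given $\Sigma_1$ for both configurations, each of $\sigma$ and $\sigma'$ restricted to the column is a one-dimensional monomer-dimer configuration (only vertical dimers and vacancies), and $B_c$ forces the dimer sets and vacancy sets of $\sigma$ and $\sigma'$ in the column to be pairwise disjoint. I would then bound the chessboard seminorm by a case split over the joint configuration. The ``offset-packing'' family---in which $\sigma$ and $\sigma'$ occupy opposite-parity edges of the column with single-vacancy defects at opposite endpoints---contributes the dominant term $e^{-c\ell_0^{-1}N\fc}$: this follows by comparing the constrained one-dimensional joint partition function to the periodic one using Proposition~\ref{prop:partition function vacant boundary} and the eigenvalue expansions of Section~\ref{sec:eigenvalue expansions}, which yield a spectral gap of order $\ell_0^{-1}$ per column edge. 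The remaining four cases correspond to specific local defect patterns, whose exponents decode to joint vacancy-and-broken-link counts $(v,b)=(2,1),(4,4),(2,3),(2,4)$ via the identity $\tfrac{v}{2}(\chempot+a)+\tfrac{b}{2}a$; each such pattern is bounded by applying Corollary~\ref{cor:local defect chessboard seminorms} to its individual defects, and the combinatorial factors $4,\,2(N\fc-3),\,8,\,4(N\fc-3)$ enumerate the boundary or interior positions and orientations of the corresponding pattern.

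The main obstacle is the enumeration step in the third paragraph: one must verify that, apart from the offset-packing family, every joint configuration satisfying $B_c$ contains one of the four listed defect patterns, and that the list is exhaustive. The case analysis is driven by the two rigid constraints that $\sigma$ and $\sigma'$ have disjoint dimer sets and disjoint vacancy sets on the column, which together force a near-alternation between the dimers of $\sigma$ and $\sigma'$; any local deviation from alternation introduces a specific constellation of vacancies and broken links. The distinction between boundary and interior defects---responsible for the $O(1)$ versus $O(N\fc)$ combinatorial factors---arises from whether the deviation is forced to occur at the two endpoints of the column (giving $O(1)$ positions) or can be placed anywhere in the interior (contributing the summation $O(N\fc)$).
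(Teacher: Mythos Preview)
Your overall architecture matches the paper's: reduce to the product-measure chessboard seminorm $\norm{(\Sigma_1(\sigma)\cap\Sigma_1(\sigma'))\setminus\Sigma_2}_S$ with $S=\rect{N\fa\times 3N\fc}{(-1/2,-N\fc-1/2)}$, union-bound over the $2N\fa$ column positions, and analyze a single $1\times N\fc$ column $R$. Your reverse-engineering of the four exponents into $(v,b)$-counts is correct.

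The column decomposition, however, is not the one the paper uses, and you leave it unspecified at precisely the step you flag as the obstacle. The paper's split is $E\subset E_1\cup E_2\cup E_2'\cup E_3\cup E_3'$, where $E_1$ is the event that $R$ is fully packed by vertical dimers in \emph{both} $\sigma$ and $\sigma'$ (no vacancies at all---your ``single-vacancy defects at opposite endpoints'' is a misdescription), $E_2$ is the event that $\sigma$ has two \emph{adjacent} vacancies in $R$, and $E_3$ is the event that some vacancy of $\sigma$ in $R$ is adjacent to a vacancy of $\sigma'$; $E_2',E_3'$ swap roles. Exhaustiveness is then one line: outside $E_1$, say $\sigma$ has a vacancy; if its vacancies are all isolated, then avoiding a coincident dimer or vacancy forces $\sigma'$ to place a vacancy adjacent to one of them (Figure~\ref{fig:leaky patterns}). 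This is the missing idea in your enumeration sketch.

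The seminorm bounds are also obtained differently. Rather than invoking Corollary~\ref{cor:local defect chessboard seminorms} on individual defects, the paper bounds $\norm{E_{2,i}}_R$ and $\norm{E_{3,i\pm}}_R$ directly via the one-dimensional transfer matrix (Proposition~\ref{prop:partition function vacant boundary}), splitting by the position $i$ of the vacancy pair; the boundary cases $i\in\{1,N\fc-1\}$ versus interior $2\le i\le N\fc-2$ are what generate the coefficients $4,\,2(N\fc-3),\,8,\,4(N\fc-3)$ after doubling for the primed events. The $E_1$ bound comes from counting the $2^{n!}$ fully-packed vertical configurations per coordinate against the periodic partition function of Proposition~\ref{prop:partition function periodic boundary}. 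Your route through Corollary~\ref{cor:local defect chessboard seminorms} would recover the exponential rates but not the stated constants (the broken-link bound carries a factor of $6$).
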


\begin{proof}
    Denote $S=\rect{N\fa\times3N\fc}{(-1/2,-N\fc-1/2)}$.
    By Lemma~\ref{lem:strongly percolating grid events - chessboard bound for product measures},
    \begin{equation}
        \peierls{\mu\otimes\mu'}(\xre{N\fa\times 3N\fc}{\es{\Sigma_2\cup(\Sigma_1(\sigma)\cap\Sigma_1(\sigma'))^c}})\le\norm{(\Sigma_1(\sigma)\cap\Sigma_1(\sigma'))\setminus\Sigma_2}_{S}.
    \end{equation}
    With the goal of applying a union bound to the RHS, we fix a $1\times N\fc$ rectangle 
    \begin{equation}
        R\subset\rect{N\fa\times N\fc}{(-1/2,-N\fc-1/2)}\cup\rect{N\fa\times N\fc}{(-1/2,N\fc-1/2)}
    \end{equation} 
    and study the event $E\subset\Sigma_1(\sigma)\cap\Sigma_1(\sigma')$ that $R$ does not contain any sealing pattern.
    We will write $E\subset E_1\cup E_{2}\cup E_2'\cup E_3\cup E_3'$ and then estimate $\norm{E}_{R}$ by a separate union bound; in the definition of the events below, we will always assume implicitly that $R$ intersects only vertical dimers in both $\sigma$ and $\sigma'$.

    \begin{figure}
        \centering
        \begin{subfigure}[t]{0.3\columnwidth}
            \centering
            \includegraphics{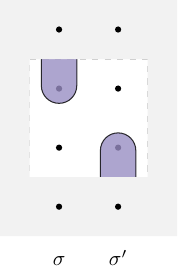}
            \caption{}
        \end{subfigure}
        ~
        \begin{subfigure}[t]{0.3\columnwidth}
            \centering
            \includegraphics{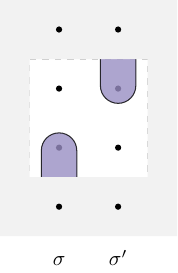}
            \caption{}
        \end{subfigure}
        \caption{Provided that a $1\times N\fc$ rectangle (a) intersects only vertical dimers in both $\sigma$ and $\sigma'$, (b) is not fully packed with dimers in either $\sigma$ or $\sigma'$, and (c) does not contain two adjacent vacancies in $\sigma$, the only way to rule out the existence of coincident vacancies and vertical dimers in the rectangle is to have a vacancy of $\sigma$ adjacent to a vacancy of $\sigma'$}
        \label{fig:leaky patterns}
    \end{figure}

    Define $E_1$ as the event that $R$ is fully packed (by vertical dimers) in both $\sigma$ and $\sigma'$.
    Otherwise, without loss of generality, suppose that $\sigma$ contains a vacancy.
    Define $E_2$ as the event that $\sigma$ has two adjacent vacancies in $R$. 
    Otherwise, $\sigma$ has only isolated vacancies in $R$.
    By a case analysis, we conclude that, to rule out the existence of coincident vacancies and vertical dimers, there must exist a vacancy of $\sigma'$ that is adjacent to an isolated vacancy of $\sigma$, as illustrated in Figure~\ref{fig:leaky patterns}.
    We define $E_3$ as the event that there exists a vacancy of $\sigma$ in $R$ that is adjacent to another of $\sigma'$ in $R$.
    Similarly, we define $E_2'$ and $E_3'$ by swapping the roles of $\sigma$ and $\sigma'$.
    \begin{claim}
        There exist constants $\beta_0,c>0$ such that, for all $\beta\ge\beta_0$,
        \begin{enumerate}
            \item \label{itm:E1 chessboard seminorm} $\norm{E_1}_{R}\le e^{-c\ell_0^{-1}N\fc}$,
            \item \label{itm:E2 chessboard seminorm} $\norm{E_2}_{R}\le 2e^{-\frac{1}{2}\beta(2\chempot+3a)}+(N\fc-3)e^{-2\beta(\chempot+2a)}$,
            \item \label{itm:E3 chessboard seminorm} $\norm{E_3}_{R}\le 4e^{-\frac{1}{2}\beta(2\chempot+5a)}+2(N\fc-3)e^{-\beta(\chempot+3a)}$.
        \end{enumerate}
        By symmetry, the same bounds hold for $E_2'$ and $E_3'$.
    \end{claim}

    \begin{proof}
        Item~\ref{itm:E1 chessboard seminorm} follows from Proposition~\ref{prop:partition function periodic boundary} and the fact that there exist exactly $2^{n!}$ configurations fully packed with vertical dimers in an $n!\times n!$ rectangle with periodic boundary conditions, all having weight $1$: 
        \begin{equation}
            \norm{E_1}_{R}\le\limsup_{n\to\infty}\left[\frac{2^{2n!}}{(1+\frac{1}{4}\ell_0^{-1})^{2(n!)^2}}\right]^{\frac{N\fc}{(n!)^2}}=(1+\ell_0^{-1}/4)^{-2N\fc}.
        \end{equation}
        
        For Item~\ref{itm:E2 chessboard seminorm}, we define, for $1\le i\le N\fc-1$, the event $E_{2,i}$ that $\sigma$ has vacancies at the $i$-th and $(i+1)$-th sites of $R$.
        Then,
        \begin{equation}
            \norm{E_{2,i}}_{R}
            \le\limsup_{n\to\infty}\left[(\bra{0}T\ket{0})^2\bra{0}T^{2i-1}\ket{0}\bra{0}T^{2N\fc-2i-1}\ket{0}\right]^{\frac{n!}{2N\fc}\frac{N\fc}{n!}}.
        \end{equation}
        Using Proposition~\ref{prop:partition function vacant boundary}, we have that
        \begin{equation}
            \norm{E_{2,i}}_{R}\le\begin{cases}
                e^{-\frac{1}{2}\beta(2\chempot+3a)} & i=1,N\fc-1 \\
                e^{-2\beta(\chempot+2a)} & 2\le i\le N\fc-2
            \end{cases}.
        \end{equation}
        We conclude the proof by a union bound over $i$.

        The proof of Item~\ref{itm:E3 chessboard seminorm} is very similar.
        Define, for $1\le i\le N\fc-1$, the event $E_{3,i+}$ that $\sigma$ has a vacancy at the $i$-th site of $R$ and $\sigma'$ has a vacancy at the $(i+1)$-th site of $R$, and similarly the event $E_{3,i-}$ for $2\le i\le N\fc$.
        Then,
        \begin{equation}
            \norm{E_{3,i+}}_{R}
            \le\limsup_{n\to\infty}\left[\bra{0}T^{2i-1}\ket{0}\bra{0}T^{2N\fc-2i+1}\ket{0}\bra{0}T^{2i+1}\ket{0}\bra{0}T^{2N\fc-2i-1}\ket{0}\right]^{\frac{n!}{2N\fc}\frac{N\fc}{n!}}
        \end{equation}
        Again, using Proposition~\ref{prop:partition function vacant boundary}, we have that
        \begin{equation}
            \norm{E_{3,i+}}_{R}\le\begin{cases}
                e^{-\frac{1}{2}\beta(2\chempot+5a)} & i=1,N\fc-1 \\
                e^{-\beta(\chempot+3a)} & 2\le i\le N\fc-2
            \end{cases}.
        \end{equation}
        Analogous bounds hold for $\norm{E_{3,i-}}_{R}$.
        We conclude the proof by a union bound.
    \end{proof}

    The proof of the lemma is complete after applying Lemma~\ref{lem:infinite-volume recursive chessboard estimate for product measures} (noting that the corners of $S$ are on $G^R$) and a union bound over the events $E_1,E_2,E_2',E_3,E_3'$ and the $2N\fa$ choices of $R$.
\end{proof}

We are now ready to prove the prevalence of sealed rectangles in the vertical phase.

\begin{proposition}
    \label{prop:sealed rectangles strongly percolate}
    Suppose that $3a>\chempot$.
    For all $\epsilon>0$, there exist constants $\fa_0,\beta_0=\beta_0(\fa_0)$ such that, for all $\fa\ge\fa_0$ and $\beta\ge\beta_0$, the random set $\xre{N\fa\times N\fc}{\es{\Sigma}}$ is $\epsilon$-strongly percolating.
\end{proposition}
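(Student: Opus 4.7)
The plan is to realize the sealed-rectangle event $\Sigma=\Sigma_1(\sigma)\cap\Sigma_1(\sigma')\cap\Sigma_2$ as (a subset of) the intersection of five simpler ``conditional'' events whose corresponding random sets are individually strongly percolating by Lemmas~\ref{lem:sealed rectangles - friendly environment},~\ref{lem:sealed rectangles - no horizontal dimers}, and~\ref{lem:sealed rectangles - sealing patterns}. The driving set-theoretic inclusion is
\begin{equation}
    \Sigma_0(\sigma)\cap\Sigma_0(\sigma')\cap(\Sigma_1\cup\Sigma_0^c)(\sigma)\cap(\Sigma_1\cup\Sigma_0^c)(\sigma')\cap\bigl[\Sigma_2\cup(\Sigma_1(\sigma)\cap\Sigma_1(\sigma'))^c\bigr]\subset\Sigma,
\end{equation}
encoding the intended chain of implications: $\Sigma_0$ forces $\Sigma_1$ separately for each of $\sigma,\sigma'$, and $\Sigma_1(\sigma)\cap\Sigma_1(\sigma')$ in turn forces $\Sigma_2$.

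I will first lift the single-configuration bounds of Lemmas~\ref{lem:sealed rectangles - friendly environment} and~\ref{lem:sealed rectangles - no horizontal dimers} to the product space $(\Omega^2,\mu\otimes\mu')$ by applying Lemma~\ref{lem:strong percolation under independent coupling} to $\sigma$ and to $\sigma'$ separately; together with Lemma~\ref{lem:sealed rectangles - sealing patterns}, which already lives on the product space, this produces five random sets defined at the three distinct native scales $3N\fa\times 3N\fc$, $3N\fa\times 5N\fc$, and $N\fa\times 3N\fc$, each with an explicit Peierls bound. I will then invoke Item~\ref{itm:strongly percolating grid events - grid over subgrid} of Lemma~\ref{lem:strongly percolating grid events}---applicable since $N\fa$ and $N\fc$ divide each of the three native scales---to transport all five bounds down to the common grid $N\fa\times N\fc$. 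Each lifting pays a nested-root penalty of the form $p\mapsto r^{1/r}p^{1/r^2}$ (with $r=9,15,3$ for the three native scales, respectively), but this remains arbitrarily small provided the input $p$ is sufficiently small. Denoting the resulting lifted random sets by $\Psi_1,\dots,\Psi_5$, the inclusion above gives
\begin{equation}
    \bigcap_{i=1}^{5}\Psi_i\subset\xre{N\fa\times N\fc}{\es\Sigma},
\end{equation}
so that Lemma~\ref{lem:intersection of strongly percolating sets} converts the individual bounds into a bound of the form $5\sqrt[5]{M}$ on $\peierls{\mu\otimes\mu'}(\xre{N\fa\times N\fc}{\es\Sigma})$, where $M$ is the maximum of the lifted individual bounds.

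The parameter choices then proceed in the order dictated by the statement: given $\epsilon>0$, fix $\fa_0$ so large that the Lemma~\ref{lem:sealed rectangles - friendly environment} input $9N^2\fc\fb^{-1}e^{-c\fa_0}$, after the nested-root reduction to scale $N\fa\times N\fc$, is below $(\epsilon/5)^5$; then, for each $\fa\ge\fa_0$, fix $\beta_0=\beta_0(\fa_0)$ so large that the four remaining lifted bounds (polynomial in $\fa,\fc$, exponentially small in $\beta$) are likewise below the same threshold. The principal subtle point, and the place where the standing hypothesis $a>\chempot/3$ enters, is verifying that the bound in Lemma~\ref{lem:sealed rectangles - no horizontal dimers} actually decays in $\beta$: its most delicate term is $N\fc\,e^{-3\beta a}\sim\ell_0\,e^{-3\beta a}=e^{\beta(\chempot-3a)/2}$, which decays precisely under the hypothesis $a>\chempot/3$, while the term $N\fc\,e^{-\beta(\chempot+2a)}\sim e^{-\beta(\chempot+a)/2}$ decays thanks to $\chempot+a>0$. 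An analogous verification for Lemma~\ref{lem:sealed rectangles - sealing patterns} is straightforward, since every summand there is the product of a positive polynomial in $\fa,\fc$ with an exponential in $\beta$ whose exponent is negative under $\chempot+a>0$. With these decay checks in hand, the hierarchy $\epsilon\to\fa_0\to\beta_0$ yields $5\sqrt[5]{M}<\epsilon$, completing the proof. The main obstacle is thus bookkeeping---the three input scales are distinct, the reductive transformation amplifies tolerances through nested roots, and the parameter dependencies $\fa_0=\fa_0(\epsilon)$ and $\beta_0=\beta_0(\fa_0)$ must be threaded consistently through all five bounds.
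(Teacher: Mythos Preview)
Your proposal is correct and follows essentially the same route as the paper. The only organizational difference is that the paper first groups the five events into three---pairing $\Sigma_0(\sigma)$ with $\Sigma_0(\sigma')$ and $(\Sigma_1\cup\Sigma_0^c)(\sigma)$ with $(\Sigma_1\cup\Sigma_0^c)(\sigma')$---applies Lemma~\ref{lem:intersection of strongly percolating sets} with $k=3$, and then splits each pair via a second application of Lemma~\ref{lem:intersection of strongly percolating sets} together with Lemma~\ref{lem:strong percolation under independent coupling}; you instead flatten to a five-way intersection from the outset, which yields slightly different numerical constants but is otherwise equivalent. The subsequent steps---lifting single-configuration bounds to the product measure, transporting all bounds to the common $N\fa\times N\fc$ grid via Item~\ref{itm:strongly percolating grid events - grid over subgrid} of Lemma~\ref{lem:strongly percolating grid events}, and verifying that the Lemma~\ref{lem:sealed rectangles - no horizontal dimers} bound decays in $\beta$ precisely under $a>\chempot/3$---match the paper exactly.
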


\begin{proof}
    Let $\epsilon>0$.
    By intersecting $\Sigma$ with $\Sigma_0(\sigma)\cap\Sigma_0(\sigma')$ and inserting trivial identities, we get that
    \begin{equation}
        \begin{multlined}
            \Sigma=\Sigma_1(\sigma)\cap\Sigma_1(\sigma')\cap\Sigma_2
            \supset\left[\Sigma_0(\sigma)\cap\Sigma_0(\sigma')\right]
            \\
            \cap\left[(\Sigma_1(\sigma)\cap\Sigma_1(\sigma'))\cup(\Sigma_0(\sigma)\cap\Sigma_0(\sigma'))^c\right]
            \cap\left[\Sigma_2\cup(\Sigma_1(\sigma)\cap\Sigma_1(\sigma'))^c\right].
        \end{multlined}
    \end{equation}
    Hence, by Lemma~\ref{lem:intersection of strongly percolating sets},
    \begin{equation}
        \begin{multlined}
            \peierls{\mu\otimes\mu'}(\xre{N\fa\times N\fc}{\es{\Sigma}})
            \le3\max\left\{
            \peierls{\mu\otimes\mu'}(\xre{N\fa\times N\fc}{\es{\Sigma_0(\sigma)\cap\Sigma_0(\sigma')}}),
            \right.
            \\
            \left.
            \peierls{\mu\otimes\mu'}(\xre{N\fa\times N\fc}{\es{(\Sigma_1(\sigma)\cap\Sigma_1(\sigma'))\cup(\Sigma_0(\sigma)\cap\Sigma_0(\sigma'))^c}}),
            \right.
            \\
            \left.
            \peierls{\mu\otimes\mu'}(\xre{N\fa\times N\fc}{\es{\Sigma_2\cup(\Sigma_1(\sigma)\cap\Sigma_1(\sigma'))^c}})
            \right\}^{1/3}.
        \end{multlined}
    \end{equation}
    Using Lemmas~\ref{lem:intersection of strongly percolating sets} and~\ref{lem:strong percolation under independent coupling}, we further bound
    \begin{equation}
        \begin{multlined}
            \peierls{\mu\otimes\mu'}(\xre{N\fa\times N\fc}{\es{\Sigma_0(\sigma)\cap\Sigma_0(\sigma')}})
            =\peierls{\mu\otimes\mu'}(\xre{N\fa\times N\fc}{\es{\Sigma_0(\sigma)}}\cap \xre{N\fa\times N\fc}{\es{\Sigma_0(\sigma')}})
            \\
            \le 2\max\set{\peierls{\mu\otimes\mu'}(\xre{N\fa\times N\fc}{\es{\Sigma_0(\sigma)}}),\peierls{\mu\otimes\mu'}(\xre{N\fa\times N\fc}{\es{\Sigma_0(\sigma')}})}^{1/2}
            \\
            \le 2\max\set{\peierls{\mu}(\xre{N\fa\times N\fc}{\es{\Sigma_0}}),\peierls{\mu'}(\xre{N\fa\times N\fc}{\es{\Sigma_0}})}^{1/2}.
        \end{multlined}
    \end{equation}
    \begin{equation}
        (\Sigma_1(\sigma)\cap\Sigma_1(\sigma'))\cup(\Sigma_0(\sigma)\cap\Sigma_0(\sigma'))^c
        \supset
        \left[\Sigma_1(\sigma)\cup\Sigma_0(\sigma)^c\right]\cap\left[\Sigma_1(\sigma')\cup\Sigma_0(\sigma')^c\right]
    \end{equation}
    we bound
    \begin{equation}
        \begin{multlined}
            \peierls{\mu\otimes\mu'}(\xre{N\fa\times N\fc}{\es{(\Sigma_1(\sigma)\cap\Sigma_1(\sigma'))\cup(\Sigma_0(\sigma)\cap\Sigma_0(\sigma'))^c}})
            \\
            \le 2\max\set{\peierls{\mu}(\xre{N\fa\times N\fc}{\es{\Sigma_1\cup\Sigma_0}}),\peierls{\mu'}(\xre{N\fa\times N\fc}{\es{\Sigma_1\cup\Sigma_0}})}^{1/2}.
        \end{multlined}
    \end{equation}
    
    Let $\delta>0$.
    For all large $\fa$, there exists $\beta_0$ such that for all $\beta\ge\beta_0$, the bounds of Lemmas~\ref{lem:sealed rectangles - friendly environment},~\ref{lem:sealed rectangles - no horizontal dimers}, and~\ref{lem:sealed rectangles - sealing patterns} are all less than $\delta$; in the second case, we use the assumption that $3a>\chempot$.
    By Item~\ref{itm:strongly percolating grid events - grid over subgrid} of Lemma~\ref{lem:strongly percolating grid events}, we get that
    \begin{align}
        \max\set{\peierls{\mu}(\xre{N\fa\times N\fc}{\es{\Sigma_0}}),\peierls{\mu'}(\xre{N\fa\times N\fc}{\es{\Sigma_0}})}
        {}&\le\sqrt[9]{9\sqrt[9]{\delta}}, \\
        \max\set{\peierls{\mu}(\xre{N\fa\times N\fc}{\es{\Sigma_1\cup\Sigma_0}}),\peierls{\mu'}(\xre{N\fa\times N\fc}{\es{\Sigma_1\cup\Sigma_0}})}
        {}&\le\sqrt[15]{15\sqrt[15]{\delta}}, \\
        \peierls{\mu\otimes\mu'}(\xre{N\fa\times N\fc}{\es{\Sigma_2\cup(\Sigma_1(\sigma)\cap\Sigma_1(\sigma'))^c}})
        {}&\le\sqrt[3]{3\sqrt[3]{\delta}}.
    \end{align}
    The proof is complete by taking $\delta$ sufficiently small.
\end{proof}

\subsubsection{Bounding the disagreement components}

We are almost ready to prove Lemma~\ref{lem:disagreement percolation decay}.
The final ingredient is the following deterministic statement that clarifies the sense in which sealed rectangles rule out the presence of large disagreement $\ddag$-components nearby.

\begin{proposition}
    \label{prop:bounded disagreement components in sealed rectangles}
    Let $(\sigma,\sigma')\in\Omega\times\Omega$. 
    Suppose that $S=\rect{N\fa\times N\fc}{(N\fa x_0-1/2,N\fc y_0-1/2)}$ is sealed in $(\sigma,\sigma')$.
    Let $e=(x_e,y_e)\in\V\cap S$.
    If $e\in\Delta_{\sigma,\sigma'}$, then the $\ddag$-component of $e$ in $\Delta_{\sigma,\sigma'}$ is contained in $\set{x_e}\times(N\fc(y_0-1)-1/2,N\fc(y_0+2)-1/2)$.
\end{proposition}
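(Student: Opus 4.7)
After translating so $(x_0,y_0)=(0,0)$, write $E:=\rect{N\fa\times 3N\fc}{(-1/2,-N\fc-1/2)}$ for the extended rectangle appearing in $\Sigma_1$. The plan is to show by induction on the $\ddag$-path length that every disagreement edge in the $\ddag$-component $C$ of $e$ is vertical, lies on the column $x=x_e$, and has $y$-coordinate strictly inside $(-N\fc-1/2,2N\fc-1/2)$. The key input is a pair of blocks $B_\pm\subset 1/2+\Z$ of forbidden (non-disagreement) $y$-coordinates on column $x_e$, extracted from $\Sigma_2$.

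Applying $\Sigma_2$ to the $1\times N\fc$ subrectangle at column $x_e$ of the top strip $\rect{N\fa\times N\fc}{(-1/2,N\fc-1/2)}$, one obtains either a coincident vacancy at $(x_e,v_y)$ with $v_y\in\{N\fc,\dots,2N\fc-1\}$ or a coincident vertical dimer at $(x_e,\ell)$ with $\ell\in\{N\fc+1/2,\dots,2N\fc-3/2\}$. These yield $B_+=\{v_y\pm 1/2\}$ (since $\sigma=\sigma'=0$ on the two edges incident to $v$) or $B_+=\{\ell,\ell\pm 1\}$ (since $\sigma(x_e,\ell)=\sigma'(x_e,\ell)=1$ and no further dimer in either configuration can share an endpoint with $(x_e,\ell)$) respectively. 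In both cases $B_+$ consists of two or three consecutive positions of $1/2+\Z$ with $N\fc-1/2\le\min B_+$ and $\max B_+\le 2N\fc-1/2$; the symmetric block $B_-$ from the bottom strip satisfies $\min B_-\ge -2N\fc+1/2$ and $\max B_-\le -1/2$.

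The base case $e_0=e$ of the induction is vertical at $x_e\in\Z$ because $\Sigma_1$ forbids horizontal dimers meeting $E$, and the strictness $y_e\in(\max B_-,\min B_+)$ follows from $y_e\in\{-1/2,\dots,N\fc-1/2\}$ together with the fact $e\notin B_\pm$ (since $e\in\Delta_{\sigma,\sigma'}$). For the inductive step, the $\ddag$-neighbors of $e_{i-1}=(x_e,y_{i-1})$ split into four horizontal edges at columns $x_e\pm 1/2$ and four vertical edges at $(x_e,y_{i-1}\pm 1),(x_e,y_{i-1}\pm 2)$; the horizontal ones meet $E$ by the inductive bound $y_{i-1}\in[-N\fc+1/2,2N\fc-3/2]$ and are non-disagreement by $\Sigma_1$, so $e_i$ is vertical at $x_e$ with $y_i-y_{i-1}\in\{\pm 1,\pm 2\}$. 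Since $\{\min B_+,\min B_++1\}\subset B_+$ by the consecutive structure of $B_+$, the constraint $y_i\le y_{i-1}+2\le\min B_++1$ together with $y_i\notin B_+$ forces $y_i\le\min B_+-1$; the symmetric bound $y_i\ge\max B_-+1$ closes the induction.

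The main obstacle is the second-nearest-neighbor $\ddag$-connectivity, which allows a column step of size $2$ and thus prevents any single blocked position from sealing the column. The coincident-vacancy and coincident-dimer clauses of $\Sigma_2$ are calibrated precisely to supply the two- or three-element consecutive block in $1/2+\Z$ that is the minimal size capable of halting such a jump, and it is essential that the sealing pattern be carried by both $\sigma$ and $\sigma'$ \emph{simultaneously} so that the column edges neighboring it become non-disagreement rather than merely constrained in one of the two configurations.
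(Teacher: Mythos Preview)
Your proof is correct and follows the same approach as the paper, which tersely notes that $\Sigma_1$ confines the $\ddag$-component to a single vertical column while the coincident pattern from $\Sigma_2$ terminates the path (leaving the step-size-$2$ blocking argument to Figure~\ref{fig:sealing patterns}); you have rigorously spelled out precisely this blocking mechanism via the blocks $B_\pm$. One minor arithmetic slip: the symmetric bound for $B_-$ should read $\min B_-\ge -N\fc-1/2$ rather than $-2N\fc+1/2$, which is what is actually needed (via $\max B_-\ge\min B_-$) to conclude the containment in $(-N\fc-1/2,2N\fc-1/2)$.
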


\begin{proof}
    Since $\sigma,\sigma'\in\eta_{(N\fa x_0,N\fc y_0)}\Sigma_1$, the $\ddag$-component of $e$ in $\Delta_{\sigma,\sigma'}$ is a vertical $\ddag$-path.
    This $\ddag$-path terminates at the first instance of a coincident vacancy or vertical dimer in $\sigma$ and $\sigma'$, one of which exists in both $\rect{N\fa\times N\fc}{(N\fa x_0-1/2,N\fc(y_0-1)-1/2)}$ and $\rect{N\fa\times N\fc}{(N\fa x_0-1/2,N\fc(y_0+1)-1/2)}$ as $(\sigma,\sigma')\in\eta_{(N\fa x_0,N\fc y_0)}\Sigma_2$.
\end{proof}

We now deduce Lemma~\ref{lem:disagreement percolation decay} from Propositions~\ref{prop:sealed rectangles strongly percolate} and~\ref{prop:bounded disagreement components in sealed rectangles}.

\begin{proof}[Proof of Lemma~\ref{lem:disagreement percolation decay}]
    Let $C,c>0$ be constants to be determined at the end of the proof.
    Recall the identification $\V\equiv[(1/2+\Z)\times\Z]\cup[\Z\times(1/2+\Z)]$.
    Define $f:\V\to\Z^2$ by $f(x,y):=(\floor{\frac{x}{N\fa}},\floor{\frac{y}{N\fc}})$.
    Let $\sigma,\sigma'$ be as in the statement of the lemma.
    Define $\Pi:=\xre{N\fa\times N\fc}{\es{\Sigma}}$; note that this random set depends on $(\sigma,\sigma')$ but we omit this from the notation for brevity.
    We will rely on the following claim that long disagreement paths in $\Delta_{\sigma,\sigma'}$ induce long sequences of non-sealed rectangles.

    \begin{claim}
        \label{clm:disagreement paths induce box paths}
        Let $u,v\in\V$.
        Suppose that $u-v\not\in\set{0}\times[-4N\fc,4N\fc]$ and that $u,v$ are connected by a $\ddag$-path in $\Delta_{\sigma,\sigma'}$.
        Then, $f(u),f(v)$ are connected by a $\boxtimes$-path in $\Z^2\setminus\Pi$.
    \end{claim}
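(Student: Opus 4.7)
The plan is to show that the sequence of rectangle indices $f(e_0), f(e_1), \ldots, f(e_m)$ obtained from any $\ddag$-path $u = e_0, \ldots, e_m = v$ in $\Delta_{\sigma,\sigma'}$ is a $\boxtimes$-path in $\Z^2 \setminus \Pi$. Two facts must be verified: that consecutive indices are $\boxtimes$-adjacent or equal, and that no index lies in $\Pi$. The first is a geometric triviality: any pair of $\ddag$-adjacent dimers in $\V$ is at $\ell_\infty$-distance at most $2$ (the $\ddag$-relation allows either sharing a $\Z^2$-vertex or being colinear with exactly one edge between them), so as long as $\min(N\fa, N\fc) \ge 3$ the floor map $f$ can change by at most one in each coordinate between $e_i$ and $e_{i+1}$.

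For the second fact, I will first check by a short case analysis on the orientation of $e_i$ that $f(e_i) = (x_0, y_0)$ implies $e_i \in \V \cap \rect{N\fa \times N\fc}{(N\fa x_0 - 1/2, N\fc y_0 - 1/2)}$. If this index were in $\Pi$, that rectangle would be sealed in $(\sigma, \sigma')$, and Proposition~\ref{prop:bounded disagreement components in sealed rectangles} would force the entire $\ddag$-component of $e_i$ in $\Delta_{\sigma,\sigma'}$, which contains both $u$ and $v$, into the slab $\{x_{e_i}\} \times (N\fc(y_0 - 1) - 1/2, N\fc(y_0 + 2) - 1/2)$. This would yield $x_u = x_v$ and $|y_u - y_v| < 3N\fc \le 4N\fc$, hence $u - v \in \{0\} \times [-4N\fc, 4N\fc]$, contradicting the standing hypothesis on $u - v$. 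Therefore $f(e_i) \in \Z^2 \setminus \Pi$ for every $i$, and combining with the first fact the $f(e_i)$ form the desired $\boxtimes$-path from $f(u)$ to $f(v)$.

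I do not anticipate any real obstacle: all the substantive work has already been done in Proposition~\ref{prop:bounded disagreement components in sealed rectangles}, which packages the joint sealing event into a one-dimensional confinement of disagreement components, and the slab height $3N\fc$ is matched to the hypothesis width $4N\fc$ on $u - v$ with room to spare. The only mildly fiddly point is the boundary behavior of the floor map $f$, which is a routine check given $\V \equiv [(1/2 + \Z) \times \Z] \cup [\Z \times (1/2 + \Z)]$.
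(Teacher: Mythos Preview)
Your proposal is correct and follows essentially the same approach as the paper's own proof: take a $\ddag$-path from $u$ to $v$, note that $f$ sends $\ddag$-adjacent vertices to equal or $\boxtimes$-adjacent grid points, and argue by contradiction via Proposition~\ref{prop:bounded disagreement components in sealed rectangles} that no vertex along the path can map into $\Pi$. Your write-up is slightly more explicit about the boundary behavior of the floor map and the verification that $e_i$ actually lies in the rectangle indexed by $f(e_i)$, but the structure and key idea are identical.
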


    \begin{proof}
        Let $P$ be a $\ddag$-path in $\Delta_{\sigma,\sigma'}$ connecting $u$ and $v$.
        It suffices to show that $f(w)\in\Z^2\setminus\Pi$ for all $w\in P$.
        Indeed, $\set{f(w)\mid w\in P}$ will then contain the desired $\boxtimes$-path, since $f$ maps $\ddag$-adjacent vertices of $\V$ to either the same or $\boxtimes$-adjacent vertices of $\Z^2$.
        By contradiction, suppose that there exists $w=(x_w,y_w)\in P$ such that $f(w)\in\Pi$.
        Then, by Proposition~\ref{prop:bounded disagreement components in sealed rectangles}, the $\ddag$-component of $w$ in $\Delta_{\sigma,\sigma'}$, which includes $u$ and $v$, is contained in $\set{x_w}\times[y_w-2N\fc,y_2+2N\fc]$.
        This contradicts the assumption that $u-v\not\in\set{0}\times[-4N\fc,4N\fc]$.
    \end{proof}
    
    Let $u=(x_u,y_u)\in A$.
    By a union bound, it suffices to show that
    \begin{equation}
        (\mu\otimes\mu')(\{u\}\text{ and }B\text{ are connected by a $\ddag$-path in }\Delta_{\sigma,\sigma'})\le \sup_{v\in B}\alpha_1(u,v),
    \end{equation}
    Suppose that $u$ is connected by a $\ddag$-path in $\Delta_{\sigma,\sigma'}$ to some $v=(x_v,y_v)\in B$.
    If there exists such a $v$ with $u-v\in\set{0}\times[-4N\fc,4N\fc]$, then
    \begin{equation}
        \alpha_1(u,v)\ge C\exp{-4c\ell_0^{-1}N\fc}.
    \end{equation}
    Otherwise, by Claim~\ref{clm:disagreement paths induce box paths}, $f(u)$ is connected to $f(v)$ by a $\boxtimes$-path in $\Z^2\setminus\Pi$.
    By Proposition~\ref{prop:sealed rectangles strongly percolate} and Lemma~\ref{lem:quantitative Peierls argument}, the probability of this event is bounded above by
    \begin{equation}
        (\epsilon/\epsilon_0)^{1+\inf_{v\in B}\norm{f(v)-f(u)}_\infty}
        \le (\epsilon/\epsilon_0)^{\frac{1}{2}\inf_{v\in B}\left(\frac{\abs{x_u-x_v}}{N\fa}+\frac{\abs{y_u-y_v}}{N\fc}\right)}
        =\sup_{v\in B}(\epsilon/\epsilon_0)^{\frac{1}{2}\left(\frac{\abs{x_u-x_v}}{N\fa}+\frac{\abs{y_u-y_v}}{N\fc}\right)}.
    \end{equation}
    Choosing $C\ge 1,c>0$ such that 
    \begin{equation}
        C\exp{-4c\ell_0^{-1}N\fc}\ge 1,\quad(\epsilon/\epsilon_0)^{\frac{1}{2N\fa}}\le e^{-c},\quad\text{and}\quad(\epsilon/\epsilon_0)^{\frac{1}{2N\fc}}\le e^{-c\ell_0^{-1}}
    \end{equation}
    completes the proof.
\end{proof}

\section{Microscopic characterization of nematic order}
\label{sec:microscopic characterization of nematic order}

In this section, following the strategy in~\cite[Section 9]{hadas2025columnar}, we derive a microscopic characterization of nematic order in terms of the probability of observing a vertical dimer versus a horizontal one within the vertical phase, thus establishing Item~\ref{itm:main - dimer probabilities} of Theorem~\ref{thm:main}.

First, we estimate the probability of observing a horizontal dimer in the vertical phase.
Much of the work involved is due to bridging the gap between the \emph{microscopic} event that a horizontal dimer is present at a given position and the \emph{mesoscopic} characterization of the vertical phase that vertically properly divided rectangles strongly percolate.

\begin{theorem}
    \label{thm:horizontal dimers are unlikely}
    Suppose that $\beta>\beta_0$.
    If $e\in\V$ is horizontal, then
    \begin{equation}
        \mu_\ver(\sigma(e)=1)=\order{e^{-\beta a}}.
    \end{equation}
\end{theorem}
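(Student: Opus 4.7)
The plan is to upper-bound $\mu_\ver(\sigma(e)=1)$ by the failure probability of a suitably translated instance of the sealing event $\Sigma_1$. Recall from Section~\ref{sec:disagreement percolation} that $\Sigma_1$ requires every dimer intersecting $\rect{N\fa\times 3N\fc}{(-1/2,-N\fc-1/2)}$ to be vertical. Since $\mu_\ver$ is $\Z^2$-invariant by Item~\ref{itm:unique ergodic Gibbs measure for each phase - translation invariance} of Lemma~\ref{lem:unique ergodic Gibbs measure for each phase}, after a harmless translation I may assume the midpoint of $e$ lies in the $\Sigma_1$-region, so that $\sigma(e)=1$ forces $\sigma\notin\Sigma_1$. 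Decomposing according to whether the favorable-environment event $\Sigma_0$ holds then gives
\begin{equation*}
\mu_\ver(\sigma(e)=1)\le\mu_\ver(\Sigma_1^c)\le\mu_\ver(\Sigma_0\cap\Sigma_1^c)+\mu_\ver(\Sigma_0^c).
\end{equation*}

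For the first summand, applying the single-factor case of Lemma~\ref{lem:infinite-volume chessboard estimate for product measures - single function} yields $\mu_\ver(\Sigma_0\cap\Sigma_1^c)\le\norm{\Sigma_0\setminus\Sigma_1}_S$ for $S$ the rectangle used in the proof of Lemma~\ref{lem:sealed rectangles - no horizontal dimers}. I would then reuse the chessboard seminorm calculation from that proof verbatim; it establishes
\begin{equation*}
\norm{\Sigma_0\setminus\Sigma_1}_S\le C(N\fa)^3\bigl[e^{-\beta a}+N\fc\max\{e^{-\beta(\chempot+2a)},e^{-3\beta a}\}\bigr],
\end{equation*}
which collapses to $\order{e^{-\beta a}}$ under the standing assumptions $\chempot+a>0$ and $a>\chempot/3$.

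For the second summand, I apply Lemma~\ref{lem:quantitative Peierls argument} at $d=0$ to the random set $\Psi:=\xre{3N\fa\times 3N\fc}{\es{\Sigma_0}}$, converting strong percolation into the single-site estimate $\mu_\ver(\Sigma_0^c)\le\peierls{\mu_\ver}(\Psi)/\epsilon_0$. Lemma~\ref{lem:sealed rectangles - friendly environment} then controls the right-hand side by $(9N^2\fc/(\epsilon_0\fb))e^{-c\fa}$, and since $\fb$ is of order $\ell_0=e^{\beta(\chempot+3a)/2}$, the hypothesis $\chempot+a>0$ makes this bound $o(e^{-\beta a})$. Combining the two summands yields the claim.

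The main subtlety is avoiding a circular union bound: the naive estimate $\mu_\ver(\Sigma_1^c)\le\sum_{e'}\mu_\ver(\sigma(e')=1)$ over the horizontal edges $e'$ in the $\Sigma_1$-region would, by translation invariance, only return the quantity we are trying to bound. The decomposition through $\Sigma_0$ sidesteps this by reducing the multi-site failure event to the chessboard-tractable joint event $\Sigma_0\cap\Sigma_1^c$, where the presence of the favorable environment forces enough structural rigidity on nearby dimers to produce the two broken links that flank any horizontal dimer and thereby deliver the $e^{-\beta a}$ penalty.
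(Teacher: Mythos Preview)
Your decomposition does not deliver the claimed rate, and the gap is not a technicality but the crux of the theorem.

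The quantity $\fc$ is not bounded in $\beta$: it must be of order $\ell_0$ (this is forced by Lemma~\ref{lem:sealed rectangles - sealing patterns}, where the term $e^{-c\ell_0^{-1}N\fc}$ has to be small, and by the proof of Lemma~\ref{lem:disagreement percolation decay}, where the vertical decay rate $c\ell_0^{-1}$ comes from $(\epsilon/\epsilon_0)^{1/(2N\fc)}$). Consequently your second summand is only
\[
\mu_\ver(\Sigma_0^c)\;\le\;\frac{9N^2\fc}{\epsilon_0\fb}\,e^{-c\fa}\;=\;O(e^{-c\fa}),
\]
a \emph{fixed} small constant: $\fa$ is chosen once and for all before $\beta_0$ (Theorem~\ref{thm:two phases}), so $e^{-c\fa}$ does not improve with $\beta$ and is certainly not $O(e^{-\beta a})$. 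Your first summand has the same problem: with $N\fc\sim\ell_0$ the bracketed term in Lemma~\ref{lem:sealed rectangles - no horizontal dimers} becomes $e^{-\beta a}+O(e^{-\frac12\beta(\chempot+a)})$ or $e^{-\beta a}+O(e^{-\frac12\beta(3a-\chempot)})$, and neither correction is $O(e^{-\beta a})$ outside the degenerate case $\chempot=a$.

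The paper's argument extracts the $e^{-\beta a}$ factor \emph{before} invoking any phase information. One records the positions $x_-<x_e<x_+$ of the two broken links flanking the maximal horizontal chain through $e$; the infinite-volume chessboard estimate on these two broken links already yields $\mu_\ver(J)\le 36 e^{-\beta a}$ for the event $J=\{\sigma(e)=1,\ X_\pm=x_\pm\}$. The vertical-phase hypothesis enters only to make the sum over $x_\pm$ converge: on $J$ the segment $[x_-,x_+]\times\{y_e\}$ blocks roughly $(x_+-x_-)/\fa$ sites of $\Psi_\ver^{\fa\times\fb}$, and a conditional Peierls argument (conditioning on an event of probability $\sim e^{-\beta a}$ containing $J$, as in Lemma~\ref{lem:conditional chessboard estimate}) converts this into exponential decay in $x_+-x_-$. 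The point you are missing is that the phase characterization gives a fixed small constant, useful for summability but never for the rate; the rate has to come from a direct chessboard bound on microscopic defects that are forced by $\sigma(e)=1$ alone.
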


\begin{proof}
    Let $e=(x_e,y_e)\in\V$ be horizontal.
    Define the random variables
    \begin{align}
        X_-(\sigma){}&:=\max\set{x\in 1/2+\Z\mid x<x_e\text{ and }(x,y)\text{ is a broken link of }\sigma}, \\
        X_+(\sigma){}&:=\min\set{x\in 1/2+\Z\mid x>x_e\text{ and }(x,y)\text{ is a broken link of }\sigma}.
    \end{align}
    We will prove that there exist universal constants $C,c>0$ such that the following holds: for all $x_-,x_+\in (x_e+1)+2\Z$ with $x_-<x_e<x_+$, the event 
    \begin{equation}
        J:=\set{\sigma\in\Omega\mid \sigma(e)=1,X_-(\sigma)=x_-,X_+(\sigma)=x_+}
    \end{equation}
    satisfies
    \begin{equation}
        \label{eqn:horizontal dimers are unlikely - fixed broken links}
        \mu_\ver(J)\le Ce^{-\beta a}e^{-c(x_+-x_-)}.
    \end{equation}
    Once this is proven, the theorem follows by summing over all $x_-,x_+$.

    Fix $x_-,x_+\in (x_e+1)+2\Z$ with $x_-<x_e<x_+$.
    Define the segment $s:=[x_-,x_+]\times\set{y_e}$.
    Note that no vertical stick intersects with $s$.
    Define
    \begin{equation}
        A_0:=\set{(x,\floor{y_e/\fb})\mid x\in\Z, x_-\le\fa x-1/2 <\fa x+N\fa-1/2\le x_+}.
    \end{equation}
    For all $x,y\in\Z$, if the rectangle $\rect{\fa N\times\fb N}{(\fa x-1/2,\fb y-1/2)}$ is divided by $s$, then $(x,y)\not\in\Psi^{\fa\times\fb}_\ver$.
    Thus, $A_0\cap\Psi^{\fa\times\fb}_\ver(\sigma)=\emptyset$ for all $\sigma\in J$.
    Note that
    \begin{equation}
        \label{eqn:horizontal dimers are unlikely - size of A0}
        \abs{A_0}\ge\frac{x_+-x_-}{\fa}-N-1.
    \end{equation}

    By Item~\ref{itm:broken link chessboard seminorm} of Corollary~\ref{cor:local defect chessboard seminorms}, Lemma~\ref{lem:infinite-volume recursive chessboard estimate for product measures} (noting that the $2\times 1$ rectangles bounding the broken links $(x_-,y_e)$ and $(x_+,y_e)$ share a grid) and Proposition~\ref{prop:infinite-volume chessboard estimate for product measures}, $\mu_\ver(J)\le 36e^{-\beta a}$.
    Let $\epsilon_1<\epsilon_0$ and $\tilde{\Omega}$ be an event such that $J\subset\tilde{\Omega}$ and $\mu_\ver(\tilde{\Omega})=36e^{-\beta a}\epsilon_1^{-4N^2}$.
    Define $\tilde{\mu}(\cdot):=\mu_\ver(\cdot\mid\tilde{\Omega})$.
    Define the random set 
    \begin{equation}
        \label{eqn:horizontal dimers are unlikely - definition of Theta}
        \Theta(\sigma):=\begin{cases}
            \emptyset & \text{if }\sigma\in J \\
            \Z^2 & \text{otherwise}
        \end{cases}.
    \end{equation}
    We will prove that 
    \begin{equation}
        \label{eqn:horizontal dimers are unlikely - strong percolation under conditioned measure}
        \peierls{\tilde{\mu}}(\Theta\cup\Psi^{\fa\times\fb}_\ver)\le\epsilon_1.
    \end{equation}
    Once this is done, by Lemma~\ref{lem:quantitative Peierls argument}, we have that
    \begin{equation}
        \tilde{\mu}(J)
        \le\tilde{\mu}(A_0\cap(\Theta\cup\Psi^{\fa\times\fb}_\ver)=\emptyset)
        \le\left(\frac{\peierls{\tilde{\mu}}(\Theta\cup\Psi^{\fa\times\fb}_\ver)}{\epsilon_0}\right)^{\abs{A_0}}
        \le(\epsilon_1/\epsilon_0)^{\abs{A_0}},
    \end{equation}
    from which~\eqref{eqn:horizontal dimers are unlikely - fixed broken links} follows after recalling~\eqref{eqn:horizontal dimers are unlikely - size of A0}:
    \begin{equation}
        \mu_\ver(J)
        =\mu_\ver(\tilde{\Omega})\tilde{\mu}(J)
        \le 36e^{-\beta a}\epsilon_1^{-4N^2}(\epsilon_1/\epsilon_0)^{\frac{x_+-x_-}{\fa}-N-1}.
    \end{equation}

    To prove~\eqref{eqn:horizontal dimers are unlikely - strong percolation under conditioned measure}, we use the following conditional version of Item~\ref{itm:strongly percolating grid events - chessboard bound} of Lemma~\ref{lem:strongly percolating grid events}, adapted from~\cite[(9.3)]{hadas2025columnar}.
    We note that~\cite[(9.3)]{hadas2025columnar} and its proof in~\cite{hadas2025columnar} contain a small error that is easily fixed, as we do here, by replacing $\lambda^{-1/4}$ with $\epsilon_1^{N^2}$ and modifying the proof accordingly.
    \begin{lemma}
        \label{lem:conditional chessboard estimate}
        Under the setting of Item~\ref{itm:strongly percolating grid events - chessboard bound} of Lemma~\ref{lem:strongly percolating grid events},
        \begin{equation}
            \peierls{\tilde{\mu}}(\Theta\cup\xre{k\times l}\es{E})
            \le\sqrt[r]{\max\set{\epsilon_1^{N^2},\norm{\Omega\setminus E}_R}}.
        \end{equation}
    \end{lemma}

    \begin{proof}
        Let $\epsilon:=\max\set{\epsilon_1^{N^2},\norm{\Omega\setminus E}_R}$.
        By Corollary~\ref{cor:strong percolation of complement of rare set}, it suffices to show that the random set $B:=\Theta^c\cap\xre{k\times l}{\es{E^c}}$ is $\sqrt[r]{\epsilon}$-rare with respect to $\tilde{\mu}$.
        Let $A\subset\Z^2$ be finite.
        By the pigeonhole principle, there exists $\eta=\eta_{(x_0,y_0)}\in H$ such that, with $G:=(\frac{K}{k}\Z+\frac{x_0}{k})\times(\frac{L}{l}\Z+\frac{y_0}{l})$, $\abs{A\cap G}\ge\abs{A}/r$.
        We will prove that the random set $\Theta^c\cap\xre{K\times L}\es{E^c}$ is $\epsilon$-rare with respect to $\eta\tilde{\mu}$.
        Once this is done, it follows by the bijection $m:\Z^2\to G$, $(x,y)\mapsto(\frac{K}{k}x+\frac{x_0}{k},\frac{L}{l}y+\frac{y_0}{l})$, that $B\cap G$ is $\epsilon$-rare with respect to $\tilde{\mu}$, so that
        \begin{equation}
            \tilde{\mu}(A\subset B)\le\tilde{\mu}(A\cap G\subset B\cap G)
            \le\epsilon^{\abs{A\cap G}}
            \le\epsilon^{\abs{A}/r}.
        \end{equation}

        We now prove that $\Theta^c\cap\xre{K\times L}\es{E^c}$ is $\epsilon$-rare with respect to $\eta\tilde{\mu}$.
        Let $A\subset\Z^2$ be finite. 
        Then, 
        \begin{equation}
            \eta\tilde{\mu}(A\subset\Theta^c\cap\xre{K\times L}\es{E^c})
            =\frac{\eta\mu_\ver(A\subset\Theta^c\cap\xre{K\times L}\es{E^c})}{\mu_\ver(\tilde{\Omega})}
            =\frac{\eta\mu_\ver(J\cap\set{A\subset\xre{K\times L}\es{E^c}})}{\mu_\ver(\tilde{\Omega})}.
        \end{equation}
        We bound the numerator as in the proof of Lemma~\ref{lem:sealed rectangles - no horizontal dimers} by splitting into the following cases:
        \begin{enumerate}
            \item Neither broken link defining $J$ crosses the grid of $\mathrm{R}_{K\times L}$.
            By Proposition~\ref{prop:infinite-volume chessboard estimate for product measures} applied to $R$-local events, accounting for the invariance of $E$ under reflections as in the proof of Lemma~\ref{lem:recursive chessboard estimate for off-grid events}, and using Lemma~\ref{lem:recursive chessboard estimate for off-grid events} and Corollary~\ref{cor:local defect chessboard seminorms} to treat the broken links, we get that
            \begin{equation}
                \eta\mu_\ver(J\cap\set{A\subset\xre{K\times L}\es{E^c}})\le (6e^{-\frac{1}{2}\beta a})^2\norm{\Omega\setminus E}_R^{\max\set{\abs{A}-2,0}}.
            \end{equation}
            \item At least one broken link defining $J$ crosses the grid of $\mathrm{R}_{K\times L}$.
            Since both broken links in question are horizontal, one of $\mathrm{R}_{3K\times L}$, $\eta{(K,0)}\mathrm{R}_{3K\times L}$ and $\eta_{(2K,0)}\mathrm{R}_{3K\times L}$ is such that neither broken link crosses its grid.
            Denote this rectangle by $S$.
            By Proposition~\ref{prop:infinite-volume chessboard estimate for product measures} applied to $S$-local events, and using Lemma~\ref{lem:recursive chessboard estimate for off-grid events} and Corollary~\ref{cor:local defect chessboard seminorms} to treat the broken links and Lemma~\ref{lem:infinite-volume recursive chessboard estimate for product measures} for the remaining incidences of $E^c$, we get that
            \begin{equation}
                \label{eqn:conditional chessboard estimate - case 2 bound}
                \eta\mu_\ver(J\cap\set{A\subset\xre{K\times L}\es{E^c}})\le (6e^{-\frac{1}{2}\beta a})^2\norm{\Omega\setminus E}_R^{\max\set{\abs{A}-4,0}}.
            \end{equation}
        \end{enumerate}
        In both cases, the bound~\eqref{eqn:conditional chessboard estimate - case 2 bound} holds.
        Recalling that $\tilde{\Omega}$ is chosen such that $\mu_\ver(\tilde{\Omega})=36e^{-\beta a}\epsilon_1^{-4N^2}$, we conclude that
        \begin{equation}
            \eta\tilde{\mu}(A\subset\Theta^c\cap\xre{K\times L}\es{E^c})
            \le \epsilon_1^{4N^2}\norm{\Omega\setminus E}_R^{\max\set{\abs{A}-4,0}}
            \le\epsilon^{\abs{A}},
        \end{equation}
        as required.
    \end{proof}

    By Lemma~\ref{lem:conditional chessboard estimate}, recalling~\eqref{eqn:two phases - intermediate - not divided by sticks bound} in the proof of Item~\ref{itm:two phases - intermediate - 1} of Lemma~\ref{lem:two phases - intermediate},
    \begin{equation}
        \peierls{\tilde{\mu}}(\Theta\cup\Psi^{\fa\times\fb})
        =\peierls{\tilde{\mu}}(\Theta\cup\xre{\fa\times\fb}\es{D^{\fa\times\fb}})
        \le\sqrt[N^2]{\max\set{\epsilon_1^{N^2},\norm{\Omega\setminus D^{\fa\times\fb}}_R}}
        =\epsilon_1,
    \end{equation}
    after taking $\beta_0$ and $\fa$ sufficiently large.
    Thus, under $\tilde{\mu}$, there exists an $\epsilon_1$-rare random set $B\subset\Z^2$ such that, almost surely, $\Theta\cup\Psi^{\fa\times\fb}$ contains the unique, infinite $\Box$-component of $\Z^2\setminus B$, which we denote by $I$.
    Hence, under $\tilde{\mu}$, $I\subset\Theta\cup\Psi^{\fa\times\fb}$ is almost surely defined and $\epsilon_1$-strongly percolating.

    We conclude the proof of~\eqref{eqn:horizontal dimers are unlikely - strong percolation under conditioned measure} by showing that, $\tilde{\mu}$-almost surely, $I\subset\Theta\cup\Psi^{\fa\times\fb}_\ver$.
    Let $\sigma\in I$, and recall from~\eqref{eqn:horizontal dimers are unlikely - definition of Theta} the definition of $\Theta$.
    If $\sigma\in I\setminus J$, then $\Theta(\sigma)=\Z^2\supset I(\sigma)$ as long as $I(\sigma)$ is defined, which occurs $\tilde{\mu}$-almost surely.
    Otherwise, $\sigma\in J$, and $I(\sigma)$ is $\Box$-connected and contained in $\Psi^{\fa\times\fb}(\sigma)$ $\tilde{\mu}$-almost surely.
    Since $\Psi^{\fa\times\fb}_\ver$ is $\epsilon_1$-strongly percolating under $\mu_\ver$, so is it under $\tilde{\mu}$.
    Thus, $I\cap\Psi^{\fa\times\fb}_\ver\ne\emptyset$ $\tilde{\mu}$-almost surely.
    However, in the case that $I(\sigma)\subset\Psi^{\fa\times\fb}(\sigma)$, the non-intersecting property of sticks forces $I(\sigma)\subset\Psi^{\fa\times\fb}_\ver(\sigma)$ for $\tilde{\mu}$-almost all $\sigma\in J$.
    This completes the proof.
\end{proof}

Second, we estimate the probability of observing a vertical dimer in the vertical phase.

\begin{theorem}
    Suppose that $\beta>\beta_0$.
    If $e\in\V$ is vertical, then
    \begin{equation}
        \mu_\ver(\sigma(e)=1)=\frac{1}{2}-\order{e^{-\frac{1}{2}\beta a}}.
    \end{equation}
\end{theorem}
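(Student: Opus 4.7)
The plan is to combine a deterministic occupancy identity at a vertex, the $\Z^2$-invariance of $\mu_\ver$ (Item~\ref{itm:unique ergodic Gibbs measure for each phase - translation invariance} of Lemma~\ref{lem:unique ergodic Gibbs measure for each phase}), the bound from Theorem~\ref{thm:horizontal dimers are unlikely} on horizontal dimer probabilities, and a columnar-decomposition bound on the vacancy probability. First, fix a vertex $v\in\Z^2$ with its four incident edges $e_N,e_S,e_E,e_W$. The hard-core constraint yields the deterministic identity
\begin{equation}
\sigma(e_N)+\sigma(e_S)+\sigma(e_E)+\sigma(e_W)+\indicator{v\text{ is a vacancy}}=1.
\end{equation}
Taking $\mu_\ver$-expectations and invoking $\Z^2$-invariance to equate $\mu_\ver(\sigma(e')=1)$ across all vertical edges $e'$, and likewise across all horizontal edges, I obtain $2p_v+2p_h+q=1$, where $p_v=\mu_\ver(\sigma(e)=1)$, $p_h$ is the analogous horizontal-edge probability, and $q=\mu_\ver(v\text{ is a vacancy})$. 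Rearranging gives $p_v=1/2-p_h-q/2$. Theorem~\ref{thm:horizontal dimers are unlikely} supplies $p_h=\order{e^{-\beta a}}=\order{e^{-\frac{1}{2}\beta a}}$, so the remaining task is to show $q=\order{e^{-\frac{1}{2}\beta a}}$.

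To bound $q$, I exploit the columnar structure of the vertical phase. Set $v_j:=v+(0,j-1)$ and $V_k:=\cap_{j=1}^k\set{v_j\text{ is vacant}}$. If $v$ is a vacancy, then by the hard-core constraint there is a unique $k\ge 1$ such that $V_k$ holds while $v_{k+1}$ is covered. Since $v_k$ is vacant, any vertical dimer at $v_{k+1}$ must point upward from $v_{k+1}$ to $v_{k+2}$, rendering the edge $v_k v_{k+1}$ a broken link (event $B_k$); the only alternative is that $v_{k+1}$ is covered by a horizontal dimer (event $H_k$). Consequently, $\set{v\text{ is a vacancy}}\subset\bigcup_{k\ge 1}(V_k\cap(H_k\cup B_k))$. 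By Lemma~\ref{lem:infinite-volume recursive chessboard estimate for product measures} applied to $k$ disjoint unit-square vacancy events and the infinite-volume chessboard estimate (Proposition~\ref{prop:infinite-volume chessboard estimate for product measures} with $k=1$), I get $\mu_\ver(V_k)\le (e^{-\frac{1}{2}\beta(\chempot+a)})^k$. Combining this with $\mu_\ver(B_k)\le 6e^{-\frac{1}{2}\beta a}$ from Corollary~\ref{cor:local defect chessboard seminorms} and $\mu_\ver(H_k)\le Ce^{-\beta a}$ from Theorem~\ref{thm:horizontal dimers are unlikely}, taking the minimum with $\mu_\ver(V_k)$ in each case and splitting the resulting geometric sums at the crossover index $k^*:=\lceil a/(\chempot+a)\rceil$, I obtain $\sum_k\mu_\ver(V_k\cap B_k)=\order{e^{-\frac{1}{2}\beta a}}$ and $\sum_k\mu_\ver(V_k\cap H_k)=\order{e^{-\beta a}}$, with implicit constants depending on $\chempot,a$ through $k^*$ and the convergence of the tails guaranteed by $\chempot+a>0$. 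Therefore $q=\order{e^{-\frac{1}{2}\beta a}}$ and $p_v=1/2-\order{e^{-\frac{1}{2}\beta a}}$.

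The main obstacle lies in the bound on $q$: the direct one-site chessboard estimate $q\le e^{-\frac{1}{2}\beta(\chempot+a)}$ is sharp enough only when $\chempot\ge 0$, and for $-a<\chempot<0$ (still within the allowed parameter regime) it falls short of the target $e^{-\frac{1}{2}\beta a}$. The columnar decomposition above resolves this by forcing any isolated vacancy to incur an additional $e^{-\frac{1}{2}\beta a}$ penalty from either a neighboring broken link or a horizontal dimer at the end of the vertical vacancy run; combined with the geometric decay of $\mu_\ver(V_k)$ in $k$, this yields the desired rate uniformly across the parameter regime, and the remaining calculation is a routine geometric-series estimate.
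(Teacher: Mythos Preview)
Your argument follows the same outline as the paper's: the occupancy identity at a vertex, $\Z^2$-invariance of $\mu_\ver$, the bound on the horizontal-dimer probability from Theorem~\ref{thm:horizontal dimers are unlikely}, and a bound on the vacancy probability $q$. The difference lies in this last step. The paper simply invokes the one-site chessboard bound $q\le e^{-\frac{1}{2}\beta(\chempot+a)}$ from Corollary~\ref{cor:local defect chessboard seminorms} and declares the result $\order{e^{-\frac{1}{2}\beta a}}$. As you correctly observe, this is only $\order{e^{-\frac{1}{2}\beta a}}$ when $\chempot\ge 0$; for $-a<\chempot<0$ (which is permitted by $\chempot+a>0$, $a>\chempot/3$) the ratio $e^{-\frac{1}{2}\beta(\chempot+a)}/e^{-\frac{1}{2}\beta a}=e^{-\frac{1}{2}\beta\chempot}\to\infty$ as $\beta\to\infty$. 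Your columnar decomposition---tracking the vertical run of vacancies above $v$ until it terminates at a covered vertex, which then forces either a broken link or a horizontal dimer---extracts the missing $e^{-\frac{1}{2}\beta a}$ factor and hence covers the full parameter range, effectively patching an oversight in the paper's proof as written.

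One inessential slip: with the single crossover $k^*=\lceil a/(\chempot+a)\rceil$, the tail $\sum_{k\ge k^*}\mu_\ver(V_k)$ is only $\order{e^{-\frac{1}{2}\beta a}}$, so your $\sum_k\mu_\ver(V_k\cap H_k)$ comes out $\order{e^{-\frac{1}{2}\beta a}}$ rather than the claimed $\order{e^{-\beta a}}$ (the latter would require splitting at $\lceil 2a/(\chempot+a)\rceil$ instead). This does not affect the final conclusion $q=\order{e^{-\frac{1}{2}\beta a}}$.
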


\begin{proof}
    Let $e\equiv(x,y)\in\V$ be vertical.
    Consider the events
    \begin{align}
        E_1{}&:=\set{\sigma(x,y)+\sigma(x,y+1)=1}, \\
        E_2{}&:=\set{\sigma(x-1/2,y+1/2)+\sigma(x+1/2,y+1/2)=1}, \\
        E_3{}&:=\set{(x,y+1/2)\text{ is a vacancy of }\sigma}.
    \end{align}
    Note that these events partition $\Omega$.
    By Theorem~\ref{thm:horizontal dimers are unlikely}, $\mu_\ver(E_2)\le 2C_{\ref{thm:horizontal dimers are unlikely}}e^{-\beta a}$.
    By Proposition~\ref{prop:infinite-volume chessboard estimate for product measures} and Item~\ref{itm:vacancy chessboard seminorm} of Corollary~\ref{cor:local defect chessboard seminorms}, $\mu_\ver(E_3)\le e^{-\frac{1}{2}\beta(\chempot+a)}$.
    Hence, 
    \begin{equation}
        \mu_\ver(E_1)=1-\mu_\ver(E_2)-\mu_\ver(E_3)=1-\order{e^{-\frac{1}{2}\beta a}}.
    \end{equation}
    Since $\mu_\ver$ is $\Z^2$-invariant, $\mu_\ver(E_1)=2\mu_\ver(\sigma(e)=1)$, which implies the theorem.
\end{proof}

\subsection*{Acknowledgments}

The author thanks Ian Jauslin and Ron Peled for suggesting the problem and the latter in particular for many enlightening discussions, including for pointing out that the chessboard estimate applies to independent couplings of reflection-positive measures.
The author also thanks Minhao Bai and Anupam Nayak for helpful discussions.
The author is supported by an SAS Fellowship at Rutgers University.

\subsection*{Data availability}
No datasets were generated or analyzed during the current study.

\section*{Declarations}

\subsection*{Conflict of interest}
The author has no relevant financial or non-financial interests to disclose.

\bibliographystyle{plain}
\nocite{*}
\bibliography{bibliography}

\end{document}